\newcommand{\CenteredVcell}[2]{\begin{centering}\begin{sideways}\parbox[c]{#1}{\begin{centering}{\hfill #2 \hfill\textcolor{white}{.}}\end{centering}}\end{sideways}\end{centering}}
\crefname{theorem}{theorem}{Theorems}
\Crefname{Theorem}{Theorem}{Theorems}
\newaliascnt{lemma}{theorem}
\newtheorem{lemma}[lemma]{Lemma}
\crefname{lemma}{lemma}{lemmas}
\Crefname{Lemma}{Lemma}{Lemmas}
\newaliascnt{proposition}{theorem}
\newtheorem{proposition}[proposition]{Proposition}
\crefname{proposition}{proposition}{propositions}
\Crefname{Proposition}{Proposition}{Propositions}
\newaliascnt{remark}{theorem}
\crefname{remark}{remark}{remarks}
\Crefname{Remark}{Remark}{Remarks}
\crefname{figure}{figure}{figures}
\Crefname{Figure}{Figure}{Figures}
\newtheorem{assumption}{\textbf{H}\hspace{-3pt}}
\Crefname{assumption}{\textbf{H}\hspace{-3pt}}{\textbf{H}\hspace{-3pt}}
\crefname{assumption}{\textbf{H}}{\textbf{H}}
\Crefname{assumptionG}{\textbf{G}\hspace{-3pt}}{\textbf{G}\hspace{-3pt}}
\crefname{assumptionG}{\textbf{G}}{\textbf{G}}
\Crefname{assumptionA}{\textbf{A}\hspace{-3pt}}{\textbf{A}\hspace{-3pt}}
\crefname{assumptionA}{\textbf{A}}{\textbf{A}}
\newcommand{\ceilLigne}[1]{\lceil #1 \rceil}
\newcommand{\itgamma}{\ceilLigne{1/\delta}}
\def\vx{x}
\def\posteriorw{\pi}
\def\posteriorwepsdelta{\pi_{\vareps, \delta}}
\def\posteriorwepsdeltac{\pi_{\vareps, \delta}}
\def\posteriorweps{\pi_{\vareps}}
\def\prior{\mu}
\def\rmf{\mathrm{f}}
\def\constanteM{\upkappa}
\def\bfX{\mathbf{X}}
\def\bbfX{\bar{\mathbf{X}}}
\def\bfB{\mathbf{B}}
\def\MAP{\mathrm{MAP}}
\def\dim{d}
\def\dimY{m} %
\def\Tg{\mathcal{T}_{\gamma}}
\newcommand{\cball}[2]{\overline{\operatorname{B}}(#1,#2)}
\def\bfDd{\mathbf{D}_{\mathrm{d}}}
\newcommand{\tup}[1]{\textup{#1}}
\newcommand{\mtt}{\mathtt{m}}
\newcommand{\prox}{\operatorname{prox}}
\newcommand{\Kker}{\mathrm{K}}
\newcommand{\Rker}{\mathrm{R}}
\newcommand{\Qker}{\mathrm{Q}}
\newcommand{\Pker}{\mathrm{P}}
\newcommand{\bvareps}{\vareps_0}
\def\w{{w}}
\newcommandx{\norm}[2][1=]{\ifthenelse{\equal{#1}{}}{\left\Vert #2 \right\Vert}{\left\Vert #2 \right\Vert^{#1}}}
\newcommandx{\normLigne}[2][1=]{\ifthenelse{\equal{#1}{}}{\Vert #2 \Vert}{\Vert #2\Vert^{#1}}}
\def\msa{\mathsf{A}}
\def\msk{\mathsf{K}}
\def\msc{\mathsf{C}}
\def\msu{\mathsf{U}}
\def\msx{\mathsf{X}}
\def\msy{\mathsf{Y}}
\def\mcbb{\mathcal{B}}  %
\newcommand{\mcb}[1]{\mathcal{B}(#1)}
\def\mcy{\mathcal{Y}}
\def\mcx{\mathcal{X}}
\def\rset{\mathbb{R}}
\def\nset{\mathbb{N}}
\def\nsets{\mathbb{N}^*}
\def\rmd{\mathrm{d}}
\def\rme{\mathrm{e}}
\def\rmc{\mathrm{C}}
\def\rma{\mathrm{a}}
\def\rmA{\mathrm{A}}
\newcommand{\cc}{\mathrm{c}}
\newcommand{\R}{\mathbb R}
\newcommandx{\functionspace}[2][1=+]{\mathbb{F}_{#1}(#2)}
\newcommand{\argmax}{\operatorname*{arg\,max}}
\newcommand{\argmin}{\operatorname*{arg\,min}}
\newcommandx{\VarDeux}[3][3=]{\operatorname{Var}^{#3}_{#1}\left\{#2 \right\}}
\newcommand{\1}{\mathbbm{1}}
\newcommand{\LeftEqNo}{\let\veqno\@@leqno}
\newcommand{\floor}[1]{\left\lfloor #1 \right\rfloor}
\newcommand{\N}{\ensuremath{\mathbb{N}}}
\newcommand{\PE}{\mathbb{E}}
\newcommand{\abs}[1]{\left\vert #1 \right\vert}
\newcommand{\absLigne}[1]{\vert #1 \vert}
\newcommand{\tvnorm}[1]{\left \| #1 \right \|_{\mathrm{TV}}}
\newcommand{\tvnormsq}[1]{\left \| #1 \right \|_{\mathrm{TV}}^{1/2}}
\newcommand{\tvnormLigne}[1]{\| #1 \|_{\mathrm{TV}}}
\newcommandx{\VnormLigne}[2][1=V]{\| #2 \|_{#1}}
\newcommandx{\Vnorm}[2][1=V]{\left\| #2 \right\|_{#1}}
\newcommand{\parenthese}[1]{\left(#1 \right)}
\newcommand{\parentheseLigne}[1]{(#1 )}
\newcommand{\parentheseDeux}[1]{\left[ #1 \right]}
\newcommand{\parentheseDeuxLigne}[1]{[ #1 ]}
\newcommand{\defEns}[1]{\left\lbrace #1 \right\rbrace }
\newcommand{\defEnsLigne}[1]{\lbrace #1 \rbrace }
\newcommand{\ps}[2]{\left\langle#1,#2 \right\rangle}
\newcommand{\proba}[1]{\mathbb{P}\left( #1 \right)}
\newcommand{\probasq}[1]{\mathbb{P}^{1/2}\left( #1 \right)}
\newcommand{\probaLigne}[1]{\mathbb{P}( #1 )}
\newcommandx\probaMarkovTilde[2][2=]
\newcommand{\expe}[1]{\PE \left[ #1 \right]}
\newcommand{\expesq}[1]{\PE^{1/2} \left[ #1 \right]}
\newcommand{\expeLignesq}[1]{\PE^{1/2} [ #1 ]}
\newcommand{\expeLigne}[1]{\PE [ #1 ]}
\def\eqsp{\;}
\newcommand{\coint}[1]{\left[#1\right)}
\newcommand{\ocint}[1]{\left(#1\right]}
\newcommand{\ooint}[1]{\left(#1\right)}
\newcommand{\ccint}[1]{\left[#1\right]}
\newcommand{\ocintLigne}[1]{(#1]}
\newcommandx{\weight}[2][2=n]{\omega_{#1,#2}^N}
\newcommand{\ball}[2]{\operatorname{B}(#1,#2)}
\newcommand{\diameter}{\operatorname{diam}}
\newcommandx\sequence[3][2=,3=]
\newcommandx\sequenceD[3][2=,3=]
\newcommandx{\sequencen}[2][2=n\in\N]{\ensuremath{\{ #1_n, \eqsp #2 \}}}
\newcommandx\sequenceDouble[4][3=,4=]
\newcommandx{\sequencenDouble}[3][3=n\in\N]{\ensuremath{\{ (#1_{n},#2_{n}), \eqsp #3 \}}}
\newcommand{\wrt}{w.r.t.}
\def\eg{e.g.}
\newcommand{\opnorm}[1]{{\left\vert\kern-0.25ex\left\vert\kern-0.25ex\left\vert #1 
    \right\vert\kern-0.25ex\right\vert\kern-0.25ex\right\vert}}
\def\generator{\mathcal{A}}
\def\Id{\operatorname{Id}}
\newcommandx{\CPE}[3][1=]{{\mathbb E}_{#1}\left[#2 \left \vert #3 \right. \right]} %
\newcommandx{\CPELigne}[3][1=]{{\mathbb E}_{#1}[#2  \vert #3 ]} %
\newcommandx{\CPVar}[3][1=]{\mathrm{Var}^{#3}_{#1}\left\{ #2 \right\}}
\newcommand{\CPP}[3][]
{\ifthenelse{\equal{#1}{}}{{\mathbb P}\left(\left. #2 \, \right| #3 \right)}{{\mathbb P}_{#1}\left(\left. #2 \, \right | #3 \right)}}
\newcommandx{\osc}[2][1=]{\mathrm{osc}_{#1}(#2)}
\def\Id{\operatorname{Id}}
\def\domain{\mathrm{D}}
\def\w{w}
\def\bgamma{\bar{\gamma}}
\newcommand{\ensemble}[2]{\left\{#1\,:\eqsp #2\right\}}
\newcommand{\ensembleLigne}[2]{\{#1\,:\eqsp #2\}}
\def\diag{\Delta_{\rset^d}}
\newcommand\coupling[2]{\Gamma(\mu,\nu)}
\newcommand{\complementary}{\mathrm{c}}
\renewcommand{\geq}{\geqslant}
\renewcommand{\leq}{\leqslant}
\def\diam{\mathrm{diam}}
\def\Leb{\mathrm{Leb}}
\def\vareps{\varepsilon}
\def\Phibf{\mathbf{\Phi}}
\newcommandx{\KLbig}[2]{\mathrm{KL}\left( #1 \middle| #2 \right)}
  \newcommand \rmP {\mathrm{P}}
    \newcommand \rmm {\mathrm{m}}
 \newcommand \eps {\varepsilon}
\newcommand \ie {{\em i.e. }}
\newcommand \KL {\mathrm{KL}}
\newcommand \KLsqrt {\mathrm{KL}^{1/2}}
\def\pnpula{PnP-ULA}
\def\pnpsgd{PnP-SGD}
\def\Mtt{\mathtt{M}}
\def\Ktt{\mathtt{K}}
\def\Ltt{\mathtt{L}}
\def\rmm{\mathrm{m}}
\newcommand{\rref}[1]{\textup{\Cref{#1}}}
\def\bdelta{\bar{\delta}}
\newcommand{\wassersteinD}[1][1=\distance]{\mathbf{W}_{#1}}
\def\Pens{\mathscr{P}}
\def\bfDd{\mathbf{D}_{\mathrm{d}}}
\def\bfDc{\mathbf{D}_{\mathrm{c}}}
\def\floor#1{\lfloor #1 \rfloor}
\def\1{\bm{1}}
\def\eps{{\epsilon}}
\def\rmA{{\mathbf{A}}}
\def\rmP{{\mathbf{P}}}
\def\vx{{{x}}}  
\def\vy{{{y}}}
\DeclareMathAlphabet{\mathsfit}{\encodingdefault}{\sfdefault}{m}{sl}
\SetMathAlphabet{\mathsfit}{bold}{\encodingdefault}{\sfdefault}{bx}{n}
\newcommand{\Fdata}{\ensuremath{F}}
\newcommand{\Gprior}{\ensuremath{U}}
\newlength{\tempheight}
\newlength{\tempwidth}
\newcommand{\rowname}[1]%
{\rotatebox{90}{\makebox[\tempheight][c]{\textbf{#1}}}}
\newcommand{\columnname}[1]%
{\makebox[\tempwidth][c]{\textbf{#1}}}
\title{Bayesian imaging using Plug \& Play priors: when Langevin meets Tweedie~\thanks{VDB was partially supported by EPSRC grant EP/R034710/1. RL was partially supported by grants from Région Ile-De-France. AD acknowledges support  of the Lagrange Mathematical and Computing Research Center. MP acknowledges support by EPSRC grant EP/T007346/1. JD and AA acknowledge support from the French Research Agency through the PostProdLEAP project (ANR-19-CE23-0027-01). Computer experiments for this work ran on a Titan Xp GPU donated by NVIDIA, as well as on HPC resources from GENCI-IDRIS (Grant 2020-AD011011641).}}
\author{
	\hspace{1.5cm} Rémi Laumont~\footnotemark[2]~\footnotemark[3]%
	\footnote{These authors contributed equally}%
	\footnote{Université de Paris, MAP5 UMR 8145, F-75006 Paris, France}%
	\and Valentin De Bortoli~\footnotemark[2]~\footnote{Department of Statistics
		University of Oxford
		24-29 St Giles
		OX1 3LB, Oxford
		United Kingdom}%
	\and \newline Andr\'{e}s Almansa~\footnotemark[3]
	\and Julie Delon~\footnotemark[3]~\footnotemark[4]%
	\footnote{Institut Universitaire de France (IUF)}
	\and Alain Durmus~\footnote{Centre Borelli, UMR 9010, 
		\'Ecole Normale Supérieure Paris-Saclay}%
	\and Marcelo Pereyra~\footnote{School of Mathematical and Computer Sciences, Heriot-Watt University \& Maxwell Institute for Mathematical Sciences, Edinburgh, United Kingdom}
}
\begin{document}
	\maketitle

	\begin{abstract}
		Since the seminal work of Venkatakrishnan et
		al.~\cite{venkatakrishnan2013plug} in 2013, \emph{Plug \& Play} (PnP) methods have
		become ubiquitous in Bayesian imaging. These methods derive estimators for inverse problems in imaging by combining an explicit likelihood function with a prior that is implicitly defined by an image denoising algorithm. 
		In the case of optimisation schemes, some recent
		works guarantee the convergence to a fixed point, albeit not necessarily a maximum-a-posteriori Bayesian
		estimate. In the case of Monte Carlo sampling schemes for general Bayesian computation, to the best of our knowledge there
		is no known proof of convergence. Algorithm convergence issues aside, there are important open questions
		regarding whether the underlying Bayesian models and estimators are well
		defined, well-posed, and have the basic regularity properties required to
		support efficient Bayesian computation schemes. This paper develops theory for Bayesian analysis and computation with PnP priors. We introduce  PnP-ULA
		(Plug \& Play Unadjusted Langevin Algorithm) for Monte Carlo sampling and minimum mean squared error
		estimation. Using recent results on the quantitative convergence of Markov
		chains, we establish detailed convergence guarantees for this algorithm
		under realistic assumptions on the denoising operators used, with special
		attention to denoisers based on deep neural networks.  We also show that these
		algorithms approximately target a decision-theoretically optimal Bayesian
		model that is well-posed and meaningful from a frequentist viewpoint. PnP-ULA is demonstrated on several canonical problems such as image deblurring and inpainting, where it is used for point estimation as well as for uncertainty visualisation
		and quantification. 

	\end{abstract}

	\section{Introduction}
	
	\subsection{Bayesian inference in imaging inverse problems}
Most inverse problems in imaging aim at reconstructing an unknown image
$\vx \in \mathbb{R}^d$ from a degraded observation $\vy \in \rset^\dimY$ under some assumptions on their relationship. For
example, many works consider observation models of the form
$\vy = \rmA(\vx) + n$, where $\rmA: \ \rset^d \to \rset^\dimY$ is a degradation
operator modelling deterministic instrumental aspects of the observation
process, and $n$ is an unknown (stochastic) noise term taking values in
$\rset^\dimY$. The operator $\rmA$ can be known or not, and is usually assumed to be
linear (e.g., $\rmA$ can represent blur, missing pixels, a projection,
etc.). %

The estimation of $\vx$ from $\vy$ is usually ill-posed or
ill-conditioned\footnote{That is, either the estimation problem does not admit a
  unique solution, or there exists a unique solution but it is not Lipschitz
  continuous w.r.t. to perturbations in the data $y$.} and additional
assumptions on the unknown $\vx$ are required in order to deliver meaningful
estimates. The Bayesian statistical paradigm provides a natural framework to
regularise such estimation problems. The relationship between $\vx$ and $\vy$ is
described by a statistical model with likelihood function $p(\vy|\vx)$, and the
knowledge about $\vx$ is encoded by the \emph{prior} distribution for $\vx$,
typically specified via a density function $p(\vx)$ %
or by its potential $U(\vx)= -\log p(\vx)$. Similarly, in some cases the
likelihood $p(\vy|\vx)$ is specified via the potential
$\Fdata(\vx,\vy) = -\log p(\vy|\vx)$. %
The likelihood and prior define the joint distribution with density
$p(\vx,\vy)=p(\vy|\vx)p(\vx)$, from which we derive the \emph{posterior}
distribution with density $p(\vx|\vy)$ where for any $x \in \rset^d, y \in \rset^\dimY$
\begin{equation}
  \textstyle{p(x|y) = p(y|x)p(x) / \int_{\rset^d} p(y|\tilde{x})p(\tilde{x}) \rmd \tilde{x} \eqsp ,}
\label{eq:posterior}
\end{equation}
which underpins all inference about $\vx$ given the observation $\vy$. Most imaging methods seek to derive estimators reaching some 
 kind of consensus between prior and likelihood, as for instance the Minimum Mean Square Error (MMSE) or Maximum A Posteriori
(MAP) estimators
\begin{eqnarray}
\label{eq:MAP}
\hat{\vx}_{\textsc{map}} &=& \textstyle{\argmax_{x \in \rset^d} p(\vx|\vy) \! =\! \argmin_{x \in \rset^d} \left \{\Fdata(\vx,\vy) + \Gprior(\vx)\right \} \eqsp ,}  \\
\label{eq:MMSE}
\hat{\vx}_{\textsc{mmse}} &=&  \textstyle{\argmin_{u \in \rset^d} \CPELigne{\|x - u\|^2}{\vy} = \CPELigne{x}{\vy} = \int_{\rset^d} \tilde{x} p(\tilde{x}|y) \textrm{d}\tilde{x} \eqsp .}
\end{eqnarray}

The quality of the inference about $\vx$ given $\vy$ depends on how accurately
the specified prior represents the true marginal distribution for $\vx$. Most
works in the Bayesian imaging literature consider relatively simple priors
promoting sparsity in transformed domains or piece-wise regularity (e.g.,
involving the $\ell_1$ norm or the total-variation pseudo-norm
\cite{Rudin1992,Chambolle04,Louchet2013,Pereyra2016}), Markov random fields
\cite{MRF-MIT-2011}, or learning-based priors like patch-based Gaussian or
Gaussian mixture models \cite{Zoran2011,yu2011solving, Aguerrebere2014b,
  Teodoro2018scene,houdard2018high}. Special attention is given in the
literature to models that have specific factorisation structures or that are
log-concave, as this enables the use of Bayesian computation algorithms that
scale efficiently to high-dimensions and which have detailed convergence
guarantees, \cite{Pereyra2016, durmus2018efficient,
  repetti_pereyra_2019,girolami2011riemann,chen2014stochastic}.%
\subsection{Bayesian computation in imaging inverse problems}
There is a vast literature on Bayesian computation methodology for models
related to imaging sciences (see, e.g., \cite{pereyra2015survey}). Here, we
briefly summarise efficient high-dimensional Bayesian computation strategies
derived from the Langevin stochastic differential equation (SDE)

\begin{equation}\label{LangevinSDE_intro}
\begin{split}
\textrm{d}\bfX_t &= \nabla \log p(\bfX_t|\vy) + \sqrt{2}\textrm{d}\bfB_t  = \nabla \log p(y|\bfX_t) + \nabla \log p(\bfX_t) + \sqrt{2}\textrm{d}\bfB_t \eqsp ,
\end{split}
\end{equation}
where $(\bfB_t)_{t\geq 0}$ is a $d$-dimensional Brownian motion. When $p(x|y)$
is proper and smooth, with $\vx \mapsto \nabla \log p(\vx|\vy)$
Lipschitz continuous\footnote{That is, there exists $L \geq 0$ such that for any
  $x_1, x_2 \in \mathbb{R}^d$,
  $\|\nabla \log p(x_1|y) - \nabla \log p(x_2|y)\| \leq L\|x_1 - x_2\|$}, then,
for any initial condition $\bfX_0 \in \rset^d$, the SDE
\eqref{LangevinSDE_intro} has a unique strong solution $(\bfX_t)_{t\geq 0}$ that
admits the posterior of interest $p(x|y)$ as unique stationary density
\cite{roberts1996exponential}. In addition, for any initial condition
$\bfX_0 \in \rset^d$ the distribution of $\bfX_t$ converges towards the
posterior distribution in total variation. %
Although solving \eqref{LangevinSDE_intro} in continuous time is generally not
possible, we can use discrete time approximations of \eqref{LangevinSDE_intro}
to generate samples that are approximately distributed according to $p(x|y)$. A
natural choice is the Unadjusted Langevin algorithm (ULA) Markov chain
$(X_{k})_{k\geq 0}$ obtained from an Euler-Maruyama discretisation of
\eqref{LangevinSDE_intro}, given by $X_0 \in \rset^d$ and the following
recursion for all $k \in \nset$
\begin{equation}\label{ULA_intro}
X_{k+1} = X_{k} + \delta \nabla \log p(y|{X}_k) + \delta \nabla \log p({X}_k) + \sqrt{2\delta} Z_{k+1} \eqsp , 
\end{equation}
where $\ensembleLigne{Z_k}{k \in \nset}$ is a family of i.i.d Gaussian random
variables with zero mean and identity covariance matrix and
$\delta > 0$ is a step-size which controls a
trade-off between asymptotic accuracy and convergence speed
\cite{dalalyan2014theoretical,durmus2017nonasymp}. The approximation error
involved in discretising \eqref{LangevinSDE_intro} can be asymptotically removed
at the expense of additional computation by combining \eqref{ULA_intro} with a
Metropolis-Hastings correction step, leading to the so-called
Metropolis-adjusted Langevin Algorithm (MALA)
\cite{roberts1996exponential}.

When the prior density $p(x)$ is log-concave but not smooth, one can still use ULA by approximating the gradient
of $U(x) = -\log p(x)$
in \eqref{ULA_intro}
by the gradient of the smooth Moreau-Yosida envelope %
$U_\lambda(x)$,
given for any $x \in \mathbb{R}^d$ and $\lambda > 0$ by
$\nabla U_\lambda(x) = \frac{1}{\lambda}(x - \prox_U^\lambda(x))$. \footnote{Recall: The Moreau-Yosida envelope is defined as $U_\lambda(x) = \inf_{\tilde{x}} U(\tilde{x}) + \frac{1}{2\lambda}\|x-\tilde{x}\|^2$
and the proximal operator is defined as 
$\prox_U^\lambda (x) = \argmin_{\tilde{x} \in \mathbb{R}^d} U(\tilde{x}) + \frac{1}{2\lambda}\|x-\tilde{x}\|_2^2$.}
For example, one could use the Moreau-Yosida ULA \cite{durmus2018efficient}, given by $X_0 \in \rset^d$ and the following
recursion for all $k \in \nset$
\begin{equation}\label{intro_MYULA}
X_{k+1} = X_{k} + \delta \nabla \log p(y|{X}_k) + \frac{\delta}{\lambda} \left[\prox_U^\lambda({X}_k)-{X}_k\right] + \sqrt{2\delta} Z_{k+1} \eqsp.
\end{equation}
Notice that $\prox_U^\lambda$ is equivalent to MAP denoising under the prior $p(x)$, for additive white Gaussian noise with noise variance $\lambda$. The \emph{Plug \& Play} ULA methods studied in this paper are closely related to \eqref{intro_MYULA}, with a state-of-the-art Gaussian denoiser ``plugged'' in lieu of $\prox_U^\lambda$.
However, instead of approximating $\nabla U$ via a Moreau-Yosida envelope as above, we use Tweedie's identity \eqref{eq:Tweedie} relating $\nabla U$ to an MMSE denoiser (see Section~\ref{sec:BayesianPnP}).

\subsection{Machine learning and Plug \& Play approaches in imaging inverse problems}
In an apparently different direction, machine learning approaches have recently
gained a considerable importance in the field of imaging inverse problems,
particularly strategies based on deep neural networks. Indeed, neural networks
can be trained as regressors to learn the function
$\vy \mapsto \hat{\vx}_{\textsc{mmse}}$ empirically from a huge dataset of
examples $\{\vx^\prime_i,\vy^\prime_i\}_{i=1}^N$, where $N \in \nset$ is the
size of the training dataset. Many recent works on the topic report
unprecedented accuracy. This training can be
agnostic~\cite{dong2014learning,zhang2017beyond,zhang2018ffdnet,
  gharbi2016deep,schwartz2018deepisp,gao2019dynamic} or exploit the knowledge of
$\rmA$ in the network architecture via unrolled optimization
techniques~\cite{gregor2010learning,Chen2017,diamond2017unrolled,gilton2019neumann}. However,
solutions encoded by end-to-end neural networks are mostly problem specific and
not easily adapted to reflect changes in the problem (e.g., in instrumental
settings). There also exist concerns regarding the stability of such approaches
for general reconstruction problem \cite{antun2020instabilities,antun2021can}.

A natural
strategy to reconcile the strengths of the Bayesian paradigm and neural networks
is provided by \emph{Plug \& Play} approaches.  These data-driven regularisation
approaches learn an implicit representation of the prior density $p(x)$ (or its
potential $U(x) = - \log p(x)$) while keeping an explicit likelihood density,
which is usually assumed to be known and calibrated
\cite{arridge_maass_oktem_schonlieb_2019}. More precisely, using a denoising
algorithm $D_\varepsilon$, \emph{Plug \& Play} approaches seek to derive an
approximation of the gradient $\nabla\Gprior$ (called the Stein score)
\cite{Bigdeli2017,Bigdeli2017a} or $\prox_\Gprior$
\cite{meinhardt2017learning,Zhang2017,chan2017plug,kamilov2017plug,ryu2019plug},
which can for instance been used within an iterative minimisation scheme to approximate
$\hat{\vx}_{\MAP}$, or within a Monte Carlo sampling scheme to approximate
$\hat{\vx}_{\textsc{mmse}}$~\cite{Alain2012, guo2019agem,
  kadkhodaie2020solving}. To the best of our knowledge, the idea of leveraging a denoising algorithm to approximate the score $\nabla\Gprior$ within a iterative Monte Carlo scheme was first proposed in the seminal paper \cite{Alain2012} in the context of generative modelling with denoising auto-encoders, where the authors present a Monte Carlo scheme that can be viewed as an approximate \emph{Plug \& Play} MALA. This scheme was recently combined with an expectation maximisation approach and applied to Bayesian inference for inverse problems in imaging in \cite{guo2019agem}. Similarly, the recent work \cite{kadkhodaie2020solving} proposes to solve imaging inverse problems by using a \emph{Plug \& Play} stochastic gradient strategy that has close connections to an unadjusted version of the MALA scheme of \cite{Alain2012}. While these approaches have shown some remarkable
  empirical performance, they rely on hybrid algorithms that are not always well understood
  and that in some cases fail to converge. Indeed, their convergence properties
  remain an important open question, especially when $D_\varepsilon$ is
  implemented as a neural network that is not a gradient mapping. These algorithms are better understood when interpreted as fixed-point algorithms seeking to reach a set of equilibrium equations between the denoiser and the data fidelity term~\cite{buzzard2018plug}.
Our understanding of the convergence properties of hybrid optimisation methods has advanced significantly recently %
~\cite{ryu2019plug,Xu2020,sun2020scalable,hurault2021gradient}, but these questions remain
largely unexplored in the context of stochastic Bayesian algorithms, to compute
$\hat{\vx}_{\textsc{mmse}}$ or perform other forms of statistical inference.

The use of \emph{Plug \& Play} operators has also been investigated in the context of
Approximate Message Passing (AMP) computation methods (see \cite{donoho2009message} for
an introduction to AMP focused on compressed sensing and \cite{pnp_ahmad2020} for a survey on PnP-AMP in the context of magnetic resonance imaging), particularly for applications involving randomised forward operators where it is possible to characterise AMP schemes in detail (see, e.g., \cite{bayati2011dynamics,javanmard2013state,metzler2016denoising,chen2017bm3d}). This is an active area of research, and recent works have extended the approach to Vector AMP (VAMP) strategies and characterised their behaviour for a wider class of problems \cite{fletcher2019plug}.

Approaches based on
score matching techniques \cite{song2019generative,ho2020denoising} have also shown
promising results recently
\cite{kawar2021stochastic,kawar2021snips}. These methods are linked with
\emph{Plug \& Play} approaches as they also estimate a Stein score. However,
they do not rely on the asymptotic convergence of a diffusion, but instead aim at inverting a noising process stemming from an optimal transport problem \cite{debortoli2020convergence}. The recent work \cite{kawar2021snips} is particularly relevant in this context as it considers a range of imaging inverse problems, where it exploits the structure of the forward operator to perform posterior sampling in a coarse-to-fine manner. This also allows the use of multivariate step-sizes that are specific to each scale and ensure stability. However, to the best of our knowledge, the convergence properties of \cite{kawar2021snips} have not been studied yet.

\subsection{Contributions summary}
This paper presents a formal framework for Bayesian analysis and computation with \emph{Plug \& Play} priors. We propose two \emph{Plug \& Play} ULAs, with detailed convergence guarantees under realistic assumptions on the denoiser used. We also study important questions regarding whether the underlying Bayesian models and estimators are well defined, well-posed, and have the basic regularity properties required to
support efficient Bayesian computation schemes. We pay particular attention to denoisers based on deep neural networks, and report extensive numerical experiments with a
specific neural network denoiser~\cite{ryu2019plug} shown to satisfy our
convergence guarantees.

The remainder of the paper is organized as follows. \Cref{sec:marcelo} defines notation, introduces our framework for studying Bayesian inference methods with \emph{Plug \& Play} priors, and presents two \emph{Plug \& Play} ULAs for Bayesian computation in imaging problems. This is then followed by a detailed theoretical analysis of \emph{Plug \& Play} Bayesian models and algorithms in \Cref{sec:theory}.
\Cref{sec:experimental} demonstrates the proposed approach with experiments related to non-blind image deblurring and image inpainting, where we perform point estimation and uncertainty visualisation analyses, and report comparisons with the \emph{Plug \& Play} Stochastic Gradient Descent method of \cite{laumont2021maximum}.
Conclusions and perspectives for future work are finally reported in \Cref{sec:conclusion}.

	\section{Bayesian inference with Plug \& Play priors: theory methods and algorithms}
	\label{sec:marcelo}

\subsection{Bayesian modelling and analysis with \emph{Plug \& Play} priors}\label{sec:BayesianPnP}
This section presents a formal framework for Bayesian analysis and computation
with \emph{Plug \& Play} priors. As explained previously, we are interested in
the estimation of the unknown image $x$ from an observation $y$ when the problem
is ill-conditioned or ill-posed, resulting in significant uncertainty about the
value of $x$. The Bayesian framework addresses this difficulty by using prior
knowledge about the marginal distribution of $x$ in order to reduce the
uncertainty about $x|y$ and make the estimation problem well posed. In the
Bayesian \emph{Plug \& Play} approach, instead of explicitly specifying the
marginal distribution of $x$, we introduce prior knowledge about $x$ by
specifying an image denoising operator $D_\vareps$ for recovering $x$ from a
noisy observation $x_\vareps \sim \mathcal{N}(x,\vareps\Id)$ with noise variance
$\vareps > 0$. A case of particular relevance in this context is when
$D_\vareps$ is implemented by a neural network, trained by using a set of clean
images $\{x^\prime_i\}_{i=1}^N$.

A central challenge in the formalisation of Bayesian inference with \emph{Plug \& Play} priors is that the denoiser $D_\vareps$ used is generally not directly related to a marginal distribution for $x$, so it is not possible to derive an explicit posterior for $x|y$ from $D_\vareps$. As a result, it is not clear that plugging $D_\vareps$ into gradient-based algorithms such as ULA leads to a well-defined or convergent scheme that is targeting a meaningful Bayesian model.

To overcome this difficulty, in this paper we analyse \emph{Plug \& Play} Bayesian models through the prism of \emph{M-complete} Bayesian modelling
\cite{bernardo_smith_bayesian_theory}. Accordingly, there exists a true -albeit
unknown and intractable- marginal distribution for $x$ and posterior
distribution for $\vx|\vy$. If it were possible, basing inferences on these true marginal and
posterior distributions would be optimal both in terms of point
estimation and in terms of delivering Bayesian probabilities that are valid from
a frequentist viewpoint. We henceforth use $\mu$ to denote this optimal prior
distribution for $\vx$ on $(\mathbb{R}^d, \mathcal{B}(\mathbb{R}^d))$ - where $\mathcal{B}(\rset^d)$ denotes the Borel $\sigma$-field of
$\rset^d$, and when
$\mu$ admits a density w.r.t. the Lebesgue measure on $\mathbb{R}^d$, we denote
it by $p^\star$. In the latter case, the posterior distribution for
$\vx|\vy$ associated with the marginal $\mu$ also admits a density that is given
for any $x \in \rset^d$ and $y \in \rset^\dimY$ by
\begin{equation}\label{posterior_star}
p^\star(x|y) = p(y|x)p^\star(x) / \textstyle{\int_{\rset^d} p(y|\tilde{\vx})p^\star(\tilde{\vx})\textrm{d}\tilde{\vx}}\, .    
\end{equation}
\footnote{Strictly speaking, the true likelihood $p^{\star}(y|x)$ may also be unknown, this is particularly relevant in the case of blind or myopic inverse imaging problems. For simplicity,  we restrict our experiments and theoretical development to the case where $p(y|x)$ represents the true likelihood. Generalizations of our approach to the blind or semi-blind setting are discussed, e.g. by \cite{guo2019agem} - formalising these generalisations is an important perspective for future work.}
Unlike most Bayesian
imaging approaches that operate implicitly in an \emph{M-closed} manner and treat
their postulated Bayesian models as true models (see
\cite{bernardo_smith_bayesian_theory} for more details), we explicitly regard $p^\star$ (or
more precisely $\mu$) as a fundamental property of the unknown $x$, and
models used for inference as operational approximations of $p^\star$ specified by the practitioner (either
analytically, algorithmically, or from training data). This distinction will be useful
for using the oracle posterior \eqref{posterior_star} as a reference, and \emph{Plug \& Play} Bayesian algorithms based on a denoiser $D_\vareps$ as approximations to reference algorithms to perform inference w.r.t. $p^\star$. The accuracy of the \emph{Plug \& Play} approximations will depend chiefly on the closeness between $D_\vareps$ and an optimal denoiser $D^\star_\vareps$ derived form $p^\star$ that we define shortly.

In this conceptual construction, the marginal $\mu$ naturally depends on the
imaging application considered. It could be the distribution of natural images
of the size and resolution of $\vx$, or that of a  class of images
related to a specific application. And in problems where there is training
data $\{x^\prime_i\}_{i=1}^N$ available, we regard
$\{x^\prime_i\}_{i=1}^N$ as samples from $\mu$. Lastly, we note that the
posterior for $x|y$ remains well defined when $\mu$ does not admit a density;
this is important to provide robustness to situations where $p^\star$ is nearly
degenerate or improper. For clarity, our presentation assumes that $p^\star$
exists, although this is not strictly required \footnote{Operating without
  densities requires measure disintegration concepts that are technical
  \cite{schwartzdesintegration}.}.

Notice that because $\mu$ is unknown, we cannot verify that $p^\star(x|y)$ satisfies the basic
desiderata for gradient-based Bayesian computation: i.e., $p^\star(x|y)$ need not be
proper and differentiable, with $\nabla \log p^\star(\vx|\vy)$ Lipschitz
continuous. To guarantee that gradient-based algorithms that target approximations of $p^\star(\vx|\vy)$ are well defined by construction, we introduce a regularised oracle $\mu_\vareps$
obtained via the convolution of $\mu$ with a Gaussian smoothing kernel with
bandwidth $\vareps > 0$. Indeed, by construction, $\mu_\vareps$ has a smooth proper
density $p_{\vareps}$ given for any $x \in \rset^d$ and $\vareps > 0$
by%
$$
\textstyle{p^\star_\vareps (x) = (2\uppi\vareps)^{-d/2} \int_{\mathbb{R}^d} \exp{[-\|x-\tilde{x}\|_2^2/(2\vareps)]} p^\star(\tilde{\vx}) \textrm{d}\tilde{\vx}\eqsp.}
$$
Equipped with this regularised marginal distribution, we use Bayes' theorem to involve the likelihood $p(\vy|\vx)$ and derive the
posterior density $p_{\vareps}^\star(x|y)$, given for any $\vareps > 0$ and
$x \in \rset^d$ by
\begin{equation}
  \label{eq:approx}
\textstyle{p^\star_{\vareps}(x|y) = p(y|x){p^\star_\vareps}(x)/ \int_{\rset^d} p(y|\tilde{x})p^\star_\vareps(\tilde{x}) \textrm{d}\tilde{x} \eqsp , }
\end{equation}
which inherits the regularity properties required for gradient-based Bayesian computation when the likelihood satisfies the following standard conditions:
\begin{assumption}
\label{assum:post}
For any $y \in \rset^\dimY$, $\sup_{x \in \rset^d} p(y|x) < +\infty$,
$p(y|\cdot) \in \rmc^1(\rset^d, \ooint{0, +\infty})$ and there exists
$\Ltt_y > 0$ such that $\nabla \log (p(y|\cdot))$ is $\Ltt_y$ Lipschitz
continuous.
\end{assumption}
More precisely, \Cref{prop:prop_eps} below establishes that the regularised prior
$p_\vareps^\star(x)$ and posterior $p_\vareps^\star(x|y)$ are proper, smooth,
and that they can be made arbitrarily close to the original oracle models
$p^\star(x)$ and $p^\star(x|y)$ by reducing $\vareps$, with the approximation
error vanishing as $\vareps \rightarrow 0$. 

\begin{proposition}
\label{prop:prop_eps}
Assume \rref{assum:post}. Then, for any $\vareps > 0$ and $y \in \rset^\dimY$, the following hold:
\begin{enumerate}[label=(\alph*), leftmargin=4em]
    \item \label{item:a_proper} $p_{\vareps}^\star$  and $p_{\vareps}^\star(\cdot|y)$ are proper.
    \item \label{item:b_der} For any $k \in \nset$,
      $p_{\vareps}^\star \in \rmc^k(\rset^d)$. In addition, if
      $p(y|\cdot) \in \rmc^k(\rset^d)$ then
      $p_{\vareps}^\star(\cdot| y) \in \rmc^{k}(\rset^d, \rset)$.
    \item \label{item:integration} Let $k \in \nset$. If $\int_{\rset^d} \norm{\tilde{x}}^k p^\star(x)\rmd \tilde{x} < +\infty$ then $\int_{\rset^d} \norm{\tilde{x}}^k p_\vareps^\star(\tilde{x}|y) \rmd \tilde{x} < +\infty$.
    \item \label{item:qual_bounds} $\lim_{\vareps \to 0} \| p_{\vareps}^\star(\cdot |y) - p^{\star}(\cdot|y) \|_1 = 0$. 
    \item \label{item:quant_bounds} In addition, if there exist $\kappa, \upbeta \geq 0$ such that for any $x \in \rset^d$, $\| p^{\star} - p^{\star}( \cdot - x) \|_1 \leq  \norm{x}^{\upbeta}$, then there exists $C \geq 0$ such that $\| p_{\vareps}^\star(\cdot |y) - p^{\star}(\cdot|y) \|_1 \leq C \vareps^{\upbeta / 2}$.
\end{enumerate}
\end{proposition}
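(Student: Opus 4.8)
The plan is to treat the five claims in increasing order of difficulty, with the two $L^1$ convergence statements \ref{item:qual_bounds}--\ref{item:quant_bounds} forming the core of the argument. Throughout I write $g_\vareps$ for the centred Gaussian density on $\rset^d$ with covariance $\vareps \IdM$, so that $p_\vareps^\star = g_\vareps \ast p^\star$, and I set $Z_\vareps = \int_{\rset^d} p(y|\tilde x) p_\vareps^\star(\tilde x)\rmd\tilde x$ and $Z_0 = \int_{\rset^d} p(y|\tilde x) p^\star(\tilde x)\rmd\tilde x$ for the normalising constants.

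For \ref{item:a_proper}, I would observe that $p_\vareps^\star$ is the law of $X + \sqrt{\vareps}\, G$ with $X \sim \mu$ and $G \sim \mathcal{N}(0,\IdM)$ independent, hence a bona fide probability density; properness of the posterior then follows since $Z_\vareps \le \sup_{x} p(y|x) < +\infty$ by \rref{assum:post}, while strict positivity of $p(y|\cdot)$ and of $p_\vareps^\star$ gives $Z_\vareps > 0$. For \ref{item:b_der}, the key point is that every partial derivative of the Gaussian kernel is a Hermite polynomial times a Gaussian and is therefore globally bounded; this yields a domination of the form $|\partial^\alpha_x g_\vareps(x - \tilde x)|\, p^\star(\tilde x) \le M_\alpha\, p^\star(\tilde x)$, integrable in $\tilde x$, so differentiation under the integral sign is legitimate to all orders and $p_\vareps^\star \in \rmc^\infty(\rset^d)$; the regularity of the posterior is then inherited from the product $p_\vareps^\star(\cdot|y) = p(y|\cdot)\,p_\vareps^\star(\cdot)/Z_\vareps$ via the Leibniz rule. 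Claim \ref{item:integration} reduces, after bounding $p(y|\cdot)$ by its supremum and dividing by $Z_\vareps > 0$, to the finiteness of $\expe{\| X + \sqrt\vareps\, G\|^k}$, which holds by the elementary inequality $\|a+b\|^k \le 2^{k-1}(\|a\|^k + \|b\|^k)$ together with the assumed finiteness of the $k$-th moment of $\mu$ and of all moments of $G$.

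The substance is in \ref{item:qual_bounds}--\ref{item:quant_bounds}. First I would establish the corresponding statements at the level of the prior. That $\|p_\vareps^\star - p^\star\|_1 \to 0$ as $\vareps \to 0$ is the standard approximate-identity property of Gaussian mollification of an $L^1$ function. For the quantitative version, writing $p_\vareps^\star(x) - p^\star(x) = \int g_\vareps(z)\,[p^\star(x-z) - p^\star(x)]\,\rmd z$ and applying Fubini together with the triangle inequality gives
\begin{equation*}
\|p_\vareps^\star - p^\star\|_1 \le \int_{\rset^d} g_\vareps(z)\, \|p^\star(\cdot - z) - p^\star\|_1 \,\rmd z \le \kappa \int_{\rset^d} g_\vareps(z)\,\|z\|^{\upbeta}\,\rmd z = \kappa\, \vareps^{\upbeta/2}\, \expe{\|G\|^{\upbeta}}\eqsp,
\end{equation*}
where the middle inequality uses the postulated modulus of continuity $\|p^\star(\cdot - z) - p^\star\|_1 \le \kappa\|z\|^\upbeta$ and $\expe{\|G\|^\upbeta} < +\infty$. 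The remaining task is to transfer these bounds from the prior to the posterior. I would add and subtract $p(y|x)p^\star(x)/Z_\vareps$ to split
\begin{equation*}
\frac{p(y|x)p_\vareps^\star(x)}{Z_\vareps} - \frac{p(y|x)p^\star(x)}{Z_0} = \frac{p(y|x)}{Z_\vareps}\bigl(p_\vareps^\star(x) - p^\star(x)\bigr) + p(y|x)p^\star(x)\Bigl(\frac{1}{Z_\vareps} - \frac{1}{Z_0}\Bigr)\eqsp,
\end{equation*}
integrate in $x$, and use $\sup_x p(y|x) < +\infty$ together with $|Z_\vareps - Z_0| \le \sup_x p(y|x)\,\|p_\vareps^\star - p^\star\|_1$ to arrive at the transfer inequality $\|p_\vareps^\star(\cdot|y) - p^\star(\cdot|y)\|_1 \le (2\sup_x p(y|x)/Z_\vareps)\,\|p_\vareps^\star - p^\star\|_1$.

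The main obstacle is the bookkeeping around the normalising constants: the factor $1/Z_\vareps$ must be controlled uniformly for small $\vareps$. Here I would use that $Z_0 > 0$ (since $p(y|\cdot)$ is strictly positive and $p^\star$ is a density) and that $Z_\vareps \to Z_0$ by the $L^1$ convergence already established, so that $Z_\vareps \ge Z_0/2$ for all sufficiently small $\vareps$. Feeding this lower bound into the transfer inequality yields \ref{item:qual_bounds} directly and, combined with the rate $\|p_\vareps^\star - p^\star\|_1 \le \kappa\vareps^{\upbeta/2}\expe{\|G\|^\upbeta}$, gives \ref{item:quant_bounds} with $C = 4\kappa\,\expe{\|G\|^\upbeta}\sup_x p(y|x)/Z_0$.
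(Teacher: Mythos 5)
Your proof is correct and follows essentially the same route as the paper's: Gaussian mollification of the prior in $L^1$ (with the rate $\kappa\vareps^{\upbeta/2}\E[\norm{G}^{\upbeta}]$ coming from the assumed modulus of continuity), followed by a transfer to the posterior through a decomposition that isolates the difference of normalising constants, which is exactly the content of the paper's \Cref{lemma:tv_norm} (the paper proves the statement in the more general setting where the prior is raised to a power $\alpha$, which is why its version needs an extra elementary inequality you can skip at $\alpha=1$). The only cosmetic difference is that your split leaves $Z_\vareps$ in the denominator, forcing the additional (correct) step $Z_\vareps \geq Z_0/2$ for small $\vareps$, whereas the paper's lemma is arranged so that the fixed constant $Z_0$ appears directly.
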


\begin{proof}
    The proof is postponed to \Cref{prop:posterior_eps:proof}.
\end{proof}

Under \rref{assum:post} and $p(y|\cdot) \in \rmc^1(\rset^d)$, $x \mapsto \nabla \log p_\vareps^\star(x|y)$ is
well-defined and continuous. However, $x \mapsto \nabla \log p_\vareps^\star(x|y)$ might
not be Lipschitz continuous and hence the Langevin SDE \eqref{LangevinSDE_intro} might not have a strong solution. This requires an additional assumption on $\mu$.

To study the Lipschitz continuity of $x \mapsto \nabla \log p_\vareps^\star(x|y)$, as well as to set the
grounds for \emph{Plug \& Play} methods that define priors implicitly through a
denoising algorithm, we introduce the oracle MMSE denoiser $D^\star_\vareps$
defined for any $x \in \rset^d$ and $\vareps > 0$ by
$$
\textstyle{D^\star_\vareps(x) = (2\uppi\vareps)^{-d/2} \int_{\mathbb{R}^d} \tilde{x} \exp{[-\|x-\tilde{x}\|^2/(2\vareps)]} p^\star(\tilde{\vx}) \textrm{d}\tilde{\vx} \eqsp.}
$$
Under the assumption that the expected mean square error (MSE) is finite, $D^\star_\vareps$ is the MMSE estimator to recover an image $x \sim \mu$ from a noisy
observation $x_\vareps \sim \mathcal{N}(x,\vareps\Id)$
\cite{robert2007bayesian}. Again, this optimal denoiser is a fundamental property of $x$ and it is generally intractable. Motivated by the fact that state-of-the-art image denoisers are close-to-optimal in terms of MSE, in Section 2.3 we will characterise the accuracy of \emph{Plug \& Play} Bayesian methods for approximate inference w.r.t. $p_\vareps^\star(x|y)$ and $p^\star(x|y)$ as a function of the closeness between the denoiser $D_\eps$ used and the reference $D^\star_\vareps$.

To relate the gradient $x \mapsto \nabla \log p_\vareps^\star(x)$ and
$D^\star_\vareps$, we use Tweedie's identity
\cite{efron2011tweedie} which states that for all $x \in \rset^d$ %
\begin{equation}\label{eq:Tweedie}
\vareps \nabla \log p_\vareps^\star(x) =  D^\star_\vareps(x) - x \eqsp ,
\end{equation}
and hence $x \mapsto \nabla \log p_\vareps^\star(x|y)$ is Lipschitz continuous if
and only if $D^\star_\vareps$ has this property. We argue that this is a natural
assumption on $D^\star_\vareps$, as it is essentially equivalent to assuming
that the denoising problem underpinning $D^\star_\vareps$ is well-posed in the sense of Hadamard (recall
that an inverse problem is said to be well posed if its solution is unique and
Lipschitz continuous w.r.t to the observation \cite{stuart_2010}). As established in \Cref{prop:reg_p_eps} below, this happens when the
expected MSE involved in using $D^\star_\vareps$ to recover $\vx$ from
$x_\vareps \sim \mathcal{N}(x,\vareps\Id)$, where $\vx$ has marginal $\mu$, is
finite and uniformly upper bounded for all $\vx_\vareps \in
\mathbb{R}^d$. %

\begin{proposition}
  \label{prop:reg_p_eps}
  Assume \rref{assum:post}. Let $\vareps > 0$.  $\nabla \log p_{\vareps}^\star$ is
  Lipschitz continuous if and only if there exists $C \geq 0$ such that for any
  $x_{\vareps} \in \rset^d$
\begin{equation}
\label{eq:var}
    \textstyle{\int_{\rset^d} \norm{{x} - D_{\vareps}^{\star}(x_\vareps)}^2 g_{\vareps}({x} | x_\vareps) \rmd {x} \leq C \eqsp ,}
\end{equation}
where $g_{\vareps}(\cdot|x_{\vareps})$ is the density of the conditional
distribution of the unknown image $x \in \rset^d$ with marginal
$\mu$, given a noisy observation $x_{\vareps} \sim
\mathcal{N}(x,\vareps\Id)$. See \Cref{sec:convergence_pnpula} for details.
\end{proposition}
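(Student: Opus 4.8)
The plan is to reduce the Lipschitz property of $x_\vareps \mapsto \nabla \log p_\vareps^\star(x_\vareps)$ to a uniform bound on the operator norm of its Jacobian, the Hessian $\nabla^2 \log p_\vareps^\star$, and then to identify this Hessian with the posterior covariance, whose trace is precisely the left-hand side of \eqref{eq:var}. First I would observe that, under \rref{assum:post}, the smoothed prior $p_\vareps^\star$ is of class $\rmc^\infty$ by \rref{prop:prop_eps}\ref{item:b_der}, so $\nabla \log p_\vareps^\star$ is continuously differentiable; for a $\rmc^1$ vector field on $\rset^d$, Lipschitz continuity is equivalent to uniform boundedness of the Jacobian (one implication by integrating the differential along segments, the converse from the definition of the derivative). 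It therefore suffices to prove that $\sup_{x_\vareps \in \rset^d} \opnorm{\nabla^2 \log p_\vareps^\star(x_\vareps)} < +\infty$ is equivalent to \eqref{eq:var}.

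Next I would differentiate Tweedie's identity \eqref{eq:Tweedie}, rewritten as $\nabla \log p_\vareps^\star(x_\vareps) = \vareps^{-1}(D_\vareps^\star(x_\vareps) - x_\vareps)$, to get
\begin{equation*}
\nabla^2 \log p_\vareps^\star(x_\vareps) = \vareps^{-1}\bigl(\nabla D_\vareps^\star(x_\vareps) - \Id\bigr) \eqsp.
\end{equation*}
The central computation is the second-order Tweedie (Miyasawa) identity $\nabla D_\vareps^\star(x_\vareps) = \vareps^{-1}\cov_{g_\vareps(\cdot|x_\vareps)}$, which I would establish by differentiating the posterior-mean representation $D_\vareps^\star(x_\vareps) = \int_{\rset^d} x\, g_\vareps(x|x_\vareps)\rmd x$ under the integral sign: the derivatives of the Gaussian kernel generate the score $\vareps^{-1}(x - x_\vareps)$, and applying the quotient rule to the posterior mean produces exactly the posterior second moment minus the outer product of the posterior mean, i.e. the posterior covariance. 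Substituting yields
\begin{equation*}
\nabla^2 \log p_\vareps^\star(x_\vareps) = \vareps^{-2}\,\cov_{g_\vareps(\cdot|x_\vareps)} - \vareps^{-1}\Id \eqsp,
\end{equation*}
so that uniform boundedness of the Hessian is equivalent to uniform boundedness of $\opnorm{\cov_{g_\vareps(\cdot|x_\vareps)}}$ over $x_\vareps \in \rset^d$.

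Finally, since $\cov_{g_\vareps(\cdot|x_\vareps)}$ is symmetric positive semidefinite, its operator norm and its trace are comparable through $\opnorm{\Sigma} \leq \tr \Sigma \leq d\,\opnorm{\Sigma}$; hence a uniform operator-norm bound holds if and only if the trace is uniformly bounded. Because $D_\vareps^\star(x_\vareps)$ is the posterior mean, the trace of the posterior covariance equals $\int_{\rset^d} \normLigne{x - D_\vareps^\star(x_\vareps)}^2 g_\vareps(x|x_\vareps)\rmd x$, which is exactly the quantity controlled in \eqref{eq:var}, and the equivalence follows. The step I expect to be most delicate is the rigorous justification of differentiation under the integral sign together with the finiteness of the second moments involved; here the Gaussian smoothing is decisive, since for each fixed $x_\vareps$ the map $x \mapsto \normLigne{x}^2\exp[-\normLigne{x - x_\vareps}^2/(2\vareps)]$ is bounded, so the posterior $g_\vareps(\cdot|x_\vareps)$ has finite moments of every order irrespective of the tails of $p^\star$, and dominated convergence applies locally uniformly in $x_\vareps$, legitimising the interchange.
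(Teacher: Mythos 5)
Your proposal is correct and follows essentially the same route as the paper's proof (\Cref{lemma:lip_peps}): both reduce Lipschitz continuity to uniform boundedness of the Hessian, compute $\nabla^2 \log p_\vareps^\star(x_\vareps) = -\vareps^{-1}\Id + \vareps^{-2}\cov_{g_\vareps(\cdot|x_\vareps)}$ by differentiating under the integral sign, and conclude via the comparison between the operator norm and the trace of the posterior covariance. Your framing through Tweedie's identity and the Jacobian of $D_\vareps^\star$ is only a cosmetic repackaging of the paper's direct second differentiation of $\log p_\vareps^\star$.
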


\begin{proof}
  The proof is postponed to \Cref{lemma:lip_peps}. 
\end{proof}

These results can be generalised to hold under the weaker assumption that the
expected MSE for $D^\star_\vareps$ is finite but not uniformly bounded, as in
this case $x \mapsto \nabla \log p_\vareps^\star(x|y)$ is locally instead of
globally Lipschitz continuous (we postpone this technical extension to future
work). The pathological case where $D^\star_\vareps$ does not have a finite MSE
 arises when $\mu$ is such that the denoising problem does not admit a Bayesian
 estimator w.r.t. to the MSE loss. In summary, the gradient $x \mapsto \nabla \log p_\vareps^\star(x|y)$ is Lipschitz continuous when $\mu$ carries enough information to make the problem of Bayesian image denoising under Gaussian additive noise well posed.
 
Notice that by using Tweedie's identity, we can express a ULA recursion for sampling approximately from $p^\star_{\vareps}(\vx|\vy)$ as follows:
\begin{equation}\label{MYULAe}
\begin{split}
X_{k+1} =& X_{k} + \delta \nabla \log p(y|{X}_k) + (\delta/\vareps) \left(D^\star_\vareps({X}_k)- X_{k}\right) + \sqrt{2\delta} Z_{k+1} \eqsp . 
\end{split}
\end{equation}
where we recall that $\ensembleLigne{Z_k}{k \in \nset}$ are i.i.d standard
Gaussian random variables on $\mathbb{R}^d$ and
$\delta > 0$ is a positive step-size. Under standard assumptions on $\delta$, the sequence generated by \eqref{MYULAe} is a Markov chain which admits an invariant probability distribution whose density is provably close to $p^\star_{\vareps}(\vx|\vy)$, with $\delta$ controlling a
trade-off between asymptotic accuracy and convergence speed. In the following section we present \emph{Plug \& Play} ULAs that arise from replacing $D^\star_\vareps$ in \eqref{MYULAe} with a denoiser $D_\vareps$ that is tractable.

Before concluding this section, we study whether the oracle
$p^\star(x|y)$ is itself well-posed, i.e., if
$p^\star(x|y)$ changes continuously w.r.t. $y$ under a suitable probability
metric (see \cite{latzs2020well}). We answer positively to this question in
\Cref{prop:stability_observation_simplified} which states that, under mild
assumptions on the likelihood, $p^\star(x|y)$ is locally Lipschitz continuous
w.r.t. $y$ for an appropriate metric. This stability result implies, for
example, that the MMSE estimator derived from $p^\star(x|y)$ is locally
Lipschitz continuous w.r.t. $y$, and hence stable w.r.t. small perturbations of
$y$. Note that a similar property holds for the regularised posterior
$p_\vareps^\star(x|y)$. In particular,
\Cref{prop:stability_observation_simplified} holds for Gaussian likelihoods (see
\Cref{sec:theory} for details).

\begin{proposition}
  \label{prop:stability_observation_simplified}
  Assume that there exist $\Phi_1: \ \rset^d \to \coint{0,+\infty}$ and $\Phi_2: \ \rset^\dimY \to \coint{0, +\infty}$
  such that for any $x \in \rset^d$ and $y_1, y_2 \in \rset^\dimY$
  \begin{equation}
    \label{eq:condition_reg}
    \norm{\log(p(y_1|x)) - \log(p(y_2|x))} \leq (\Phi_1(x) + \Phi_2(y_1) + \Phi_2(y_2)) \norm{y_1 - y_2}\eqsp ,
  \end{equation}
  and for any $c > 0$, 
  $\int_{\rset^d} (1+\Phi_1(\tilde{x})) \exp[c \Phi_1(\tilde{x})] p^\star(x) \rmd \tilde{x} < +\infty$.
  Then $y \mapsto p^\star(\cdot|y)$ is locally Lipschitz w.r.t 
  $\normLigne{\cdot}_1$, \ie, for any compact set $\msk$ there exists $C_{\msk} \geq 0$ such that for any $y_1, y_2 \in \msk$ , $\norm{p^\star(\cdot | y_1) - p^\star(\cdot|y_2)}_1 \leq C_{\msk} \norm{y_1 - y_2} $.
\end{proposition}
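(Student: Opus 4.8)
The plan is to compare the two posteriors by writing one as a reweighting of the other through the likelihood ratio, so that the normalising constants enter only through a single scalar that is easy to control. Throughout, write $Z(y) = \int_{\rset^d} p(y|\tilde x) p^\star(\tilde x)\rmd \tilde x$ for the evidence, so that $p^\star(x|y) = p(y|x)p^\star(x)/Z(y)$, and fix a compact set $\msk \subset \rset^\dimY$. Set $r(x) = p(y_1|x)/p(y_2|x)$ and $\bar r = \int_{\rset^d} r(x)\eqsp p^\star(x|y_2)\rmd x = Z(y_1)/Z(y_2)$. A direct computation gives $p^\star(x|y_1) = (r(x)/\bar r)\eqsp p^\star(x|y_2)$, whence
\begin{equation*}
\norm{p^\star(\cdot|y_1) - p^\star(\cdot|y_2)}_1 = \int_{\rset^d} \abs{\frac{r(x)}{\bar r} - 1}\eqsp p^\star(x|y_2)\rmd x \leq \frac{2}{\bar r}\int_{\rset^d} \abs{r(x) - 1}\eqsp p^\star(x|y_2)\rmd x \eqsp ,
\end{equation*}
where the last step uses $\abs{\bar r - 1} = \abs{\int_{\rset^d}(r(x)-1)p^\star(x|y_2)\rmd x}\leq \int_{\rset^d}\abs{r(x)-1}\eqsp p^\star(x|y_2)\rmd x$ together with the triangle inequality. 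This reduces the problem to bounding $\int_{\rset^d}\abs{r-1}\eqsp p^\star(\cdot|y_2)$ by $\norm{y_1-y_2}$ up to a constant, and to bounding $1/\bar r$ uniformly on $\msk$.

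For the first task I would use hypothesis \eqref{eq:condition_reg}: writing $A(x) = \Phi_1(x) + \Phi_2(y_1) + \Phi_2(y_2)$, it gives $\abs{\log r(x)} \leq A(x)\norm{y_1 - y_2}$, and the elementary inequality $\abs{\rme^t - 1}\leq \abs{t}\rme^{\abs{t}}$ then yields $\abs{r(x) - 1}\leq A(x)\norm{y_1-y_2}\eqsp \rme^{A(x)\norm{y_1-y_2}}$. Setting $M_\msk = \sup_{y\in\msk}\Phi_2(y)$ and $D_\msk = \diam(\msk)$, both finite by local boundedness of $\Phi_2$ on $\msk$, we have $A(x)\leq \Phi_1(x) + 2M_\msk$ and $\norm{y_1-y_2}\leq D_\msk$ throughout $\msk$. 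The key remaining point is to control $p^\star(x|y_2) = p(y_2|x)p^\star(x)/Z(y_2)$ uniformly in $y_2\in\msk$; here I would fix a reference point $\bar y\in\msk$ and apply \eqref{eq:condition_reg} once more to obtain $p(y_2|x)\leq p(\bar y|x)\eqsp \rme^{(\Phi_1(x)+2M_\msk)D_\msk}\leq (\sup_x p(\bar y|x))\eqsp \rme^{(\Phi_1(x)+2M_\msk)D_\msk}$, the supremum being finite by \rref{assum:post}. Combining these estimates, $\abs{r(x)-1}\eqsp p(y_2|x)p^\star(x)$ is dominated by $\norm{y_1-y_2}$ times an $x$-function of the form $(\Phi_1(x)+2M_\msk)\eqsp\rme^{2(\Phi_1(x)+2M_\msk)D_\msk}p^\star(x)$, whose integral is finite by the assumed integrability of $(1+\Phi_1)\rme^{c\Phi_1}p^\star$ with $c = 2D_\msk$, and does not depend on $y_1,y_2\in\msk$.

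It remains to bound $1/\bar r = Z(y_2)/Z(y_1)$, and this is where I expect the main obstacle to lie, since it requires uniform two-sided control of the evidence rather than a pointwise estimate. I would first argue that $Z$ is continuous and strictly positive on $\rset^\dimY$: positivity is immediate because $p(y|\cdot)>0$ by \rref{assum:post}, while continuity follows by dominated convergence, bounding $p(y|x)$ for $y$ near $y_0$ by $(\sup_x p(y_0|x))\eqsp\rme^{(\Phi_1(x)+\Phi_2(y_0)+M)\norm{y-y_0}}$ via \eqref{eq:condition_reg}, which is $p^\star$-integrable uniformly on a neighbourhood of $y_0$ by the integrability hypothesis. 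Then on the compact $\msk$ one has $0 < \inf_\msk Z \leq \sup_\msk Z < +\infty$, so $1/\bar r \leq \sup_\msk Z/\inf_\msk Z$ and $1/Z(y_2)\leq 1/\inf_\msk Z$ are both bounded uniformly on $\msk$. Collecting $\sup_\msk Z$, $\inf_\msk Z$, $\sup_x p(\bar y|x)$, $M_\msk$, $D_\msk$ and the finite integral above into a single constant $C_\msk$ then yields $\norm{p^\star(\cdot|y_1)-p^\star(\cdot|y_2)}_1 \leq C_\msk\norm{y_1-y_2}$, as claimed. The delicate points are thus entirely in the uniform control of $Z$ on $\msk$ and in transferring the pointwise bound $\sup_x p(y|x)<\infty$ from a single reference $\bar y$ to all of $\msk$ using \eqref{eq:condition_reg}, thereby avoiding any assumption of uniform boundedness of $y\mapsto\sup_x p(y|x)$.
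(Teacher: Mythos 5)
Your argument is correct and follows essentially the same route as the paper's proof of \Cref{prop:stability_observation}: both reduce the $\Lone$ distance between the two posteriors to $2Z(y_1)^{-1}\int|p(y_1|x)-p(y_2|x)|p^\star(x)\rmd x$, bound the pointwise difference via $|\rme^{t}-1|\leq|t|\rme^{|t|}$ together with \eqref{eq:condition_reg} and a reference point in $\msk$, and conclude with the integrability hypothesis. The only difference is that you are more explicit about two points the paper leaves implicit, namely the uniform positive lower bound $\inf_{\msk}Z>0$ (needed so that $C_{\msk}$ is genuinely uniform over $y_1,y_2\in\msk$) and the factor $\sup_{x}p(\bar y|x)<+\infty$ from \rref{assum:post} used to dominate the density at the reference observation.
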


\begin{proof}
The proof is a straightforward application of \Cref{prop:stability_observation}. 
\end{proof}

To conclude, starting from the decision-theoretically optimal model
$p^\star(x|y)$, we have constructed a regularised approximation
$p_\vareps^\star(x|y)$ that is proper and smooth by construction, with gradients that are explicitely related to denoising operators by Tweedie's formula. Under mild
assumptions on $p(y|x)$, the approximation $p_\vareps^\star(x|y)$ is well-posed
and can be made arbitrarily close to the oracle
$p^\star(x|y)$ by controlling $\vareps$. Moreover, we established that
$x \mapsto \nabla \log p_\vareps^\star(x)$ is Lipschitz continuous when the
problem of Gaussian image denoising for $\mu$ under the MSE loss is well
posed. This allows imagining convergent gradient-based algorithms for performing Bayesian
computation for $p_\vareps^\star(x|y)$, setting the basis for \emph{Plug \& Play} ULA schemes that mimic these idealised algorithms by using a tractable denoiser $D_\epsilon$ such as neural network, trained to optimise MSE performance and hence to approximate the oracle MSE denoiser $D^\star_\epsilon$.

\subsection{Bayesian computation with \emph{Plug \& Play} priors} 
\label{sec:tractable_bayesian}
We are now ready to study \emph{Plug \& Play} ULA schemes to perform approximate inference w.r.t. $p_\vareps^\star(x|y)$ (and hence indirectly w.r.t. $p^\star(x|y)$). We use \eqref{MYULAe} as starting point, with $D^\star_\vareps$ replaced by a surrogate denoiser $D_\vareps$, but also modify \eqref{MYULAe} to guarantee geometrically fast convergence\footnote{Geometric convergence is highly desirable property in large-scale problems and guarantees that the generated Markov chains can be used for Monte Carlo integration. } to a neighbourhood of $p_\vareps^\star(x|y)$. In particular, geometrically fast convergence is achieved here by modifying far-tail probabilities to prevent the Markov chain from becoming too diffusive as it explores the tails of $p_\vareps^\star(x|y)$. We consider two alternatives to guarantee geometric
convergence with markedly different bias-variance trade-offs: one with excellent
accuracy guarantees but that requires using a small step-size
$\delta$ and hence has a higher computational cost, and
another one that allows taking a larger step-size $\delta$ to
improve convergence speed at the expense of weaker guarantees in terms of
estimation bias. 

First, in the spirit of Moreau-Yosida regularised ULA \cite{durmus2018efficient}, we define \emph{Plug \& Play} ULA (PnP-ULA) as the following recursion: given an initial state $X_0 \in \rset^d$ and
for any $k \in \nset$, 
\begin{equation}\label{PNPULAe}
\begin{split}
\textrm{(PnP-ULA)}\quad X_{k+1} =& X_{k} + \delta \nabla \log p(y|{X}_k) + (\delta/\vareps) \left(D_\vareps({X}_k)-X_{k}\right)\\ &+  (\delta/\lambda) (\Pi_\msc (X_k) -X_k) + \sqrt{2\delta} Z_{k+1} \eqsp , 
\end{split}
\end{equation}
where $\msc \subset \mathbb{R}^d$ is some large compact convex set that contains most of the prior probability mass of $x$,  $\Pi_{\msc}$ is the
projection operator onto $\msc$ w.r.t the Euclidean scalar product on
$\rset^d$, and $\lambda > 0$ is a tail regularisation parameter that is set such that the drift in PnP-ULA satisfies a certain growth condition as $\|\vx\| \rightarrow \infty$ (see \Cref{sec:theory} for details). 

An alternative strategy (which we call Projected \pnpula , \ie P\pnpula, see
\Cref{algo:ppnp-ula}) is to modify PnP-ULA to include a hard projection onto $\msc$,
\ie \ $(X_k)_{k \in \nset}$ is defined by $X_0 \in \msc$ and the following
recursion for any $k \in \nset$
\begin{equation}\label{PMYULAe}
\begin{split}
X_{k+1} = \Pi_\msc \left[X_{k} + \delta \nabla \log p(y|{X}_k) + (\delta/\vareps) (D_\vareps({X}_k) - X_{k})+  \sqrt{2\delta} Z_{k+1}\right] \eqsp , 
\end{split}
\end{equation}
where we notice that, by construction, the chain cannot exit $\msc$ because of
the action of the projection operator $\Pi_\msc$. The hard projection guarantees
geometric convergence with weaker restrictions on $\delta$ and
hence PPnP-ULA can be tuned to converge significantly faster than PnP-ULA,
albeit with a potentially larger bias. These two schemes are summarised in
\Cref{algo:pnp-ula} and \Cref{algo:ppnp-ula}
below. Note the presence of a regularisation parameter $\alpha$ in these algorithms, which permits to balance the weights between the prior and data terms. For the sake of simplicity, this parameter is set to $\alpha=1$ in \Cref{sec:theory} and \Cref{sec:experimental} but will be taken into account in the supplementary material \Cref{sec:supplemantary}. \Cref{sec:convergence_pnpula} and \Cref{sec:proj-altern} present detailed
convergence results for \pnpula \  and P\pnpula. Implementation guidelines,
including suggestions for how to set the algorithm parameters of \pnpula \  and
P\pnpula \  are provided in \Cref{sec:experimental}.

\begin{algorithm}  
	\caption{PnP-ULA}\label{algo:pnp-ula}
	\begin{algorithmic}
		
		\REQUIRE $n \in \nset$, $y \in \rset^\dimY$, $\vareps, \lambda, \alpha, \delta > 0$, $\msc\subset \rset^d$  convex and compact
		
		\ENSURE $2 \lambda (2 \Ltt_y + \alpha \Ltt / \vareps) \leq 1$ and $\delta < (1/3)(\Ltt_y + 1/\lambda + \alpha \Ltt/\vareps)^{-1}$
		
		\STATE \textbf{Initialization:} Set $X_0 \in \rset^d$ and $k=0$.
		
		\FOR{$k=0:N$}
		
		\STATE $Z_{k+1} \sim \mathcal{N}(0, \Id)$
		
		\STATE $X_{k+1} = X_k + \delta \nabla \log(p(y| X_k)) + (\alpha \delta / \vareps) (D_{\vareps}(X_k) -X_k) + (\delta / \lambda) (\Pi_{\msc} (X_k) -X_k) + \sqrt{2 \delta} Z_{k+1}$

		\ENDFOR

		\STATE \textbf{return} $\ensembleLigne{X_k}{k \in \{0, \dots, N+1\}}$

	\end{algorithmic}
\end{algorithm}

\begin{algorithm}  
	\caption{PPnP-ULA}\label{algo:ppnp-ula}
	\begin{algorithmic}
		
		\REQUIRE $n \in \nset$, $y \in \rset^\dimY$, $\vareps, \lambda, \alpha, \delta > 0$, $\msc\subset \rset^d$ convex and compact

		\STATE \textbf{Initialization:} Set $X_0 \in \msc$ and $k=0$.
		
		\FOR{$k=0:N$}
		
		\STATE $Z_{k+1} \sim \mathcal{N}(0, \Id)$
		
		\STATE $X_{k+1} = \Pi_{\msc} \left(X_k + \delta \nabla \log(p(y| X_k)) + (\alpha \delta / \vareps) (D_{\vareps}(X_k) -X_k) + \sqrt{2 \delta} Z_{k+1}\right)$

		\ENDFOR

		\STATE \textbf{return} $\ensembleLigne{X_k}{k \in \{0, \dots, N+1\}}$

	\end{algorithmic}
\end{algorithm}

Lastly, it is worth mentioning that \Cref{algo:pnp-ula} and \Cref{algo:ppnp-ula}
can be straightforwardly modified to incorporate additional regularisation
terms. More precisely, one could consider a prior defined as the (normalised)
product of a \emph{Plug \& Play} term and an explicit analytical term. In that
case, one should simply modify the recursion defining the Markov chain by adding
the gradient associated with the analytical term. In a manner akin to
\cite{durmus2018efficient}, analytical terms that are not smooth are involved
via their proximal operator.

Before concluding this section, it is worth
emphasising that, in addition to being important in their own right,
\Cref{algo:pnp-ula} and \Cref{algo:ppnp-ula} and the
associated theoretical results set the grounds for analysing more advanced
stochastic simulation and optimisation schemes for performing Bayesian inference
with \emph{Plug \& Play} priors, in particular accelerated optimisation and
sampling algorithms \cite{pereyra2020accelerating}. This is an important
perspective for future work.

	\section{Theoretical analysis}
	\label{sec:theory}

In this section, we provide a theoretical study of the long-time behaviour of
\pnpula , see \Cref{algo:pnp-ula} \ and P\pnpula , see \Cref{algo:ppnp-ula}.  %
For any $\vareps > 0$ we recall that
$p_{\vareps}^\star$ is given by the Gaussian smoothing of $p$ with level $\vareps$,
for any $x \in \rset^d$ by
\begin{equation}
  \textstyle{
    p^\star_{\vareps}(x) = (2 \uppi \vareps)^{-d/2} \int_{\rset^{\dim}} \exp[-\norm{x - \tilde{x}}^2 / (2 \vareps)] \ p^\star(\tilde{x}) \rmd \tilde{x} \eqsp .
    }
\end{equation}
One typical example of likelihood function that we consider in our numerical
illustration, see \Cref{sec:experimental}, is
$p(y|x) \propto \exp[-\norm{\rmA x - y}^2/(2 \sigma^2)]$ for any $x \in \rset^d$ with
$\sigma > 0$ and $\rmA \in \rset^{m \times \dim}$. 
We define $\posteriorw$ the target
posterior distribution given for any $x \in \rset^d$ by
$(\rmd \posteriorw / \rmd \Leb)(x) = p^\star(x|y)$.  We also consider the family
of probability distributions $\ensembleLigne{\posteriorweps}{\vareps >0}$ given
for any $\vareps > 0$ and $x \in \rset^{\dim}$ by
\begin{equation}
  \label{eq:posterior_eps}
  (\rmd \posteriorweps / \rmd \Leb) (x) = \left. p(y|x) p_{\vareps}^{\star}(x) \middle/ \int_{\rset^{\dim}} p(y|\tilde{x}) p_{\vareps}^{\star}(\tilde{x}) \rmd \tilde{x} \right. \eqsp .
\end{equation}
Note that in the supplementary material \Cref{sec:supplemantary} we
investigate the general setting where $p_\vareps^\star$ is replaced by
$(p_\vareps^\star)^\alpha$ for some $\alpha > 0$ that acts as a regularisation parameter. We divide our
study into two parts.  We recall that $\pi_\vareps$ is well-defined for any
$\vareps > 0$ under \rref{assum:post}, see \Cref{prop:prop_eps}.  We start with some notation in \Cref{sec:notation}. We then 
establish non-asymptotic bounds between the iterates of \pnpula \ and
$\pi_{\vareps}$ with respect to the total variation distance for any
$\vareps > 0$, in \Cref{sec:convergence_pnpula}. Finally, in
\Cref{sec:proj-altern} we establish similar results for
P\pnpula.%

\subsection{Notation}
\label{sec:notation}
Denote by $\mathcal{B}(\rset^d)$ the Borel $\sigma$-field of
$\rset^d$, and for $f : \rset^d \to \rset$ measurable,
$\Vnorm[\infty]{f}= \sup_{\tilde{x} \in \rset^d} \abs{f(\tilde{x})}$.  For $\mu$ a
probability measure on $(\rset^d, \mathcal{B}(\rset^d))$ and $f$ a
$\mu$-integrable function, denote by $\mu(f)$ the integral of $f$
\wrt~$\mu$. For $f: \ \rset^d \to \rset$ measurable and $V: \ \rset^d \to [1, \infty )$ measurable, the $V$-norm  of
$f$ is given by $\Vnorm[V]{f}= \sup_{\tilde{x} \in \rset^d} |f(\tilde{x})|/V(\tilde{x})$. Let
$\xi$ be a finite signed measure on $(\rset^d,\mcbb(\rset^d))$. The
$V$-total variation distance of $\xi$ is defined as
\begin{equation}
\textstyle{\Vnorm[V]{\xi} = \sup_{\Vnorm[V]{f} \leq 1}  \abs{\int_{\rset^d } f(\tilde{x}) \rmd \xi (\tilde{x})} \eqsp.}
\end{equation}
If $V = 1$, then $\Vnorm[V]{\cdot}$ is the total variation denoted by
$\tvnorm{\cdot}$.  Let $\msu$ be an open set of $\rset^d$.  For any pair of
measurable spaces $(\msx, \mcx)$ and $(\msy, \mcy)$, measurable function
$f : \ (\msx, \mcx) \to (\msy, \mcy)$ and measure $\mu$ on $(\msx, \mcx)$ we
denote by $f_{\#} \mu$ the pushforward measure of $\mu$ on $(\msy, \mcy)$ given
for any $\msa \in \mcy$ by $f_{\#}\mu(\msa) = \mu(f^{-1}(\msa))$. We denote
$\Pens(\rset^d)$ the set of probability measures over $(\rset^d, \mcb{\rset^d})$
and for any $m \in \nset$,
$\Pens_m(\rset^d) = \ensembleLigne{\nu \in \Pens(\rset^d)}{\int_{\rset^d}
  \normLigne{\tilde{x}}^m \rmd \nu(\tilde{x}) < +\infty}$.

We denote by $\rmc^{k}(\msu, \rset^\dimY)$ and $\rmc^{k}_c(\msu, \rset^\dimY)$
the set of $\rset^\dimY$-valued $k$-differentiable functions, respectively the
set of compactly supported $\rset^\dimY$-valued and $k$-differentiable
functions.  Let $f : \msu \to \rset$, we denote by $\nabla f$, the gradient of
$f$ if it exists. $f$ is said to be $\mtt$-convex with $\mtt \geq 0$ if for all
$x_1,x_2 \in \rset^d$ and $t \in \ccint{0,1}$,
\begin{equation}
  f(t x_1 + (1-t) x_2) \leq t f(x_1)  + (1-t) f(x_2) -\mtt t(1-t)  \norm[2]{x_1-x_2}/2  \eqsp.
\end{equation}
For any $a \in \rset^d$ and $R > 0$, denote $\ball{a}{R}$ the open 
ball centered at $a$ with radius $R$.  %
Let $(\msx, \mcx)$ and $(\msy, \mcy)$ be two measurable spaces. A
Markov kernel $\Pker$ is a mapping
$\Kker: \ \msx \times \mcy \to \ccint{0, 1}$ such that for any
$\tilde{x} \in \msx$, $\Pker(\tilde{x}, \cdot)$ is a probability measure and for any
$\msa \in \mcy$, $\Pker(\cdot, \msa)$ is measurable. For any
probability measure $\mu$ on $(\msx, \mcx)$ and measurable function
$f : \msy \to \rset_+$ we denote
$\mu \Pker = \int_{\msx} \Pker(x, \cdot) \rmd \mu(x)$ and
$\Pker f = \int_{\msy} f(y) \Pker(\cdot , \rmd y)$.  In what follows
the Dirac mass at $\tilde{x} \in \rset^{d}$ is denoted by $\updelta_{\tilde{x}}$.  For
any $\tilde{x} \in \rset^{\dim}$, we denote
$\tau_{\tilde{x}}: \ \rset^{\dim} \to \rset^{\dim}$ the translation operator
given for any $\tilde{x}' \in \rset^{\dim}$ by
$\tau_{\tilde{x}}(\tilde{x}') = \tilde{x}' - \tilde{x}$. 
The complement of a set $\msa \subset \rset^d$, is denoted by
$\msa^{\complementary}$.  All densities are w.r.t. the Lebesgue
measure (denoted $\Leb$) unless stated otherwise.
 For all
convex and closed set $\msc \subset \rset^d$, we define $\Pi_{\msc}$ the
projection operator onto $\msc$ w.r.t the Euclidean scalar product on
$\rset^d$. For any matrix $\rma \in \rset^{d_1\times d_2}$ with
$d_1, d_2 \in \nset$, we denote $\rma^\top \in \rset^{d_2 \times d_1}$ its
adjoint.

\subsection{Convergence of \pnpula}
\label{sec:convergence_pnpula}

In this section, we fix $\vareps > 0$ and derive quantitative bounds between the
iterates of \pnpula \ and $\pi_{\vareps}$ with respect to the total
variation distance. To address this issue, we first show that \pnpula \ is geometrically
ergodic and establish non-asymptotic bounds between the corresponding Markov
kernel and its invariant distribution. Second, we analyse the distance between
this stationary distribution and $\pi_{\vareps}$. %

For any $\vareps > 0$ we define
$g_{\vareps}: \ \rset^d \times \rset^d \to \coint{0,+\infty}$ for any
$x_1, x_2 \in \rset^d$ by
\begin{equation}
  \label{eq:cond_prior}
  g_{\vareps}(x_1 | x_2) = \left. p^\star(x_1) \exp[- \norm{x_{2} - x_1}^2/ (2\vareps)]  \middle/ \int_{\rset^{\dim}} p^\star(\tilde{x}) \exp[- \norm{x_2 - \tilde{x}}^2/(2 \vareps)] \rmd \tilde{x} \right. \eqsp .
\end{equation}
Note that $g(\cdot | X_{\vareps})$ is the density with respect to the Lebesgue
measure of the distribution of $X$ given $X_{\vareps}$, where $X$ is sampled
according to the prior distribution $\prior$ (with density $p^\star$) and
$X_{\vareps} = X + \vareps^{1/2} Z$ where $Z$ is a Gaussian random variable with
zero mean and identity covariance matrix.  Throughout, this section, we consider
the following assumption on the family of denoising operators
$\ensembleLigne{D_{\vareps}}{\vareps > 0}$ which will ensure that \pnpula \
approximately targets $\posteriorweps$. %

\begin{assumption}[R]
  \label{assum:neural_net}
  We have that
  $\int_{\rset^d} \normLigne{\tilde{x}}^2 p^\star(\tilde{x}) \rmd \tilde{x} <
  +\infty$.  In addition, there exist $\bvareps> 0$, $\Mtt_R \geq 0$ and
  $\Ltt \geq 0$ such that for any $\vareps \in \ocint{0, \bvareps}$,
  $x_{1}, x_{2} \in \rset^d$ and $x \in \cball{0}{R}$ we have
  \begin{equation}
    \label{eq:denoiser_cond}
      \norm{(\Id - D_{\vareps})(x_{1}) - (\Id
        - D_{\vareps})(x_{2})} \leq \Ltt \norm{x_{1} - x_{2}} \eqsp , \qquad  
      \norm{D_{\vareps}(x) - D_{\vareps}^{\star}(x)} \leq \Mtt_R \eqsp ,
  \end{equation}
  where we recall that \begin{equation}
    \label{eq:def_d_star}
    \textstyle{
      D_{\vareps}^{\star}(x_{1}) =  \int_{\rset^{\dim}} \tilde{x} \ g_{\vareps}(\tilde{x} | x_{1}) \rmd \tilde{x} \eqsp .
      }
    \end{equation}
  \end{assumption}
  The Lipschitz continuity condition in \eqref{eq:denoiser_cond} will be useful for establishing the stability and geometric convergence of the Markov chain generated by PnP-ULA. This condition can be explicitly enforced during training by using an appropriate regularization of the neural network weights \cite{ryu2019plug,miyato2018spectral}. Regarding the second condition in
  	\eqref{eq:denoiser_cond}, $\Mtt_R$ is a bound on the error involved in using $D_\vareps$ as an approximation of $D^\star_\vareps$ for images of magnitude $R$ (i.e., for any $x \in \cball{0}{R}$), and it will be useful for bounding the bias resulting from using PnP-ULA for inference w.r.t. $\pi_\vareps$ (recall that the bias vanishes as $\Mtt_R\rightarrow 0$ and $\delta \rightarrow 0$). For denoisers represented by neural networks, one can promote a small value of $\Mtt_R$ during training by using an appropriate loss function. More precisely, consider a neural network
  	$f_{w}: \ \rset^d \to \rset^d$, parameterized by its weights and bias gathered
  	in $w \in \mathcal{W}$ where $\mathcal{W}$ is some measurable space, for any
  	$\vareps > 0$, one could target empirical approximation of a loss of the form $\ell_{\vareps}: \ \mathcal{W} \to \coint{0, +\infty}$
  	given for any $w \in \mathcal{W}$ by
  	$\ell_{\vareps}(w) = \int_{\rset^{\dim} \times \rset^d} \normLigne{x -
  		f_{w}(x_{\vareps})}^2p_{\vareps}^\star(x_{\vareps}) g_{\vareps}(x |
  	x_{\vareps}) \rmd x_{\vareps} \rmd x $.  Note that such a loss is considered
  	in the Noise2Noise network introduced in \cite{lehtinen2018noise2noise}. %
  
  With regards to the theoretical limitations stemming from representing $D_\vareps$ by a deep neural network, universal approximation theorems (see e.g., \cite[Section 4.7]{bach2017breaking}) suggest that $\Mtt_R$ could be arbitrarily low in principle. For a given architecture and training strategy, and if there exists $\tilde{\Mtt}_R \geq 0$ such
  	that
  	$\inf_{w \in \mathcal{W}} \sup_{x \in \cball{0}{R}} \tilde{\Mtt}_R^{-1}
  	\normLigne{f_{w}(x) - D_{\vareps}^{\star}(x)} \} \leq 1$ then the second
  	condition in \eqref{eq:denoiser_cond} holds upon letting
  	$D_{\vareps} = f_{w^{\dagger}}$ for an appropriate choice of weights
  	$w^{\dagger} \in \mathcal{W}$. This last inequality can be established using
  	universal approximation theorems such as \cite[Section
  	4.7]{bach2017breaking}. Moreover, for any other $w \in \mathcal{W}$,
  	$\ell_{\vareps}(w) \geq \int_{\rset^{\dim} \times \rset^d} \normLigne{x -
  		D_{\vareps}^{\star}(x_{\vareps})}^2p_{\vareps}^\star(x_{\vareps})
  	g_{\vareps}(x | x_{\vareps}) \rmd x \rmd x_{\vareps} =
  	\ell^\star_{\vareps}$, since for any $x_{\vareps} \in \rset^d$,
  	$D_{\vareps}^{\star}(x_{\vareps}) = \int_{\rset^{\dim}} \tilde{x} \
  	g_{\vareps}(\tilde{x} | x_{\vareps}) \rmd \tilde{x}$, see
  	\eqref{eq:def_d_star}. Consider ${w}^\dagger \in \mathcal{W}$ obtained after
  	numerically minimizing $\ell_{\vareps}$ and satisfying
  	$\ell_{\vareps}({w}^\dagger) \leq \ell_{\vareps}^\star + \eta$ with
  	$\eta > 0$.  In this case, the following result ensures that
  	\eqref{eq:denoiser_cond} is satisfied with $\Mtt_R$ of order
  	$\eta^{1/(2d +2)}$ for any $R > 0$ and letting
  	$D_{\vareps} = f_{w^{\dagger}}$.
  
      \begin{proposition}
        \label{prop:fun_res}
        Assume that for any $w \in \mathcal{W}$
        \begin{equation}
          \label{eq:bound_nerual}
          \textstyle{
          \int_{\rset^d} (\norm{x}^2 + \norm{f_{w}(x_{\vareps})}^2)  p_{\vareps}^\star(x_{\vareps}) g_{\vareps}(x |
      x_{\vareps}) \rmd x \rmd x_{\vareps}  < +\infty \eqsp .}
        \end{equation}
        Let $R, \eta > 0$ and $\w^\dagger \in \mathcal{W}$ such that
        $\ell_{\vareps}(w^\star) \leq \ell_{\vareps}^\star + \eta$.  In
        addition, assume that 
        \begin{equation}
          \sup_{x_1, x_2 \in \cball{0}{2R}} \defEns{ \norm{x_2 - x_1}^{-1} (\norm{f_{w^{\dagger}}(x_2) - f_{w^{\dagger}}(x_1)} + \norm{D_{\vareps}^{\star}(x_2) - D_{\vareps}^{\star}(x_1)})} < +\infty  \eqsp ,
        \end{equation}
        where $D_{\vareps}^{\star}$ is given in \eqref{eq:def_d_star}.  Then
        there exists $C_R, \bar{\eta}_R\geq 0$ such that if
        $\eta \in \ocint{0, \bar{\eta}_R}$ then for any
        $\tilde{x} \in \cball{0}{R}$,
        $\norm{f_{w^{\dagger}}(\tilde{x}) - D_{\vareps}^{\star}(\tilde{x})} \leq C_R \eta^{1/(2d +2)}$.
      \end{proposition}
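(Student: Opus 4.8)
The plan is to exploit that $D_{\vareps}^{\star}$ is the Bayes (conditional-mean) estimator, so that the quadratic loss admits a bias--variance decomposition; this converts the hypothesis $\ell_{\vareps}(w^{\dagger}) \leq \ell_{\vareps}^{\star} + \eta$ into an integrated ($\Ltwo$) control on $h := f_{w^{\dagger}} - D_{\vareps}^{\star}$, which I would then upgrade to a pointwise ($\Vnorm[\infty]{\cdot}$) bound using the Lipschitz hypothesis on $\cball{0}{2R}$.

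First I would establish the orthogonality identity. For fixed $x_{\vareps}$, since $D_{\vareps}^{\star}(x_{\vareps}) = \int_{\rset^{\dim}} \tilde{x}\, g_{\vareps}(\tilde{x} | x_{\vareps}) \rmd \tilde{x}$ is the mean of $g_{\vareps}(\cdot | x_{\vareps})$ by \eqref{eq:def_d_star}, expanding $\norm{x - f_{w^{\dagger}}(x_{\vareps})}^2$ around $D_{\vareps}^{\star}(x_{\vareps})$ and integrating against $g_{\vareps}(\cdot | x_{\vareps})$ annihilates the cross term $\ps{x - D_{\vareps}^{\star}(x_{\vareps})}{D_{\vareps}^{\star}(x_{\vareps}) - f_{w^{\dagger}}(x_{\vareps})}$, because $\int_{\rset^{\dim}} (x - D_{\vareps}^{\star}(x_{\vareps})) g_{\vareps}(x | x_{\vareps}) \rmd x = 0$. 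Integrating further against $p_{\vareps}^{\star}$, where assumption \eqref{eq:bound_nerual} guarantees every term is finite and the rearrangement is legitimate, yields $\ell_{\vareps}(w^{\dagger}) - \ell_{\vareps}^{\star} = \int_{\rset^d} \norm{h(x_{\vareps})}^2 p_{\vareps}^{\star}(x_{\vareps}) \rmd x_{\vareps}$. The hypothesis then gives $\int_{\rset^d} \norm{h(x_{\vareps})}^2 p_{\vareps}^{\star}(x_{\vareps}) \rmd x_{\vareps} \leq \eta$.

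Next I would localise. By \Cref{prop:prop_eps}, $p_{\vareps}^{\star}$ is continuous and strictly positive, hence admits a lower bound $c_R > 0$ on the compact set $\cball{0}{2R}$, so $\int_{\cball{0}{2R}} \norm{h}^2 \rmd \Leb \leq \eta / c_R$; let $L$ denote the finite Lipschitz constant of $h$ on $\cball{0}{2R}$ supplied by the second hypothesis. The main step is the $\Ltwo$-to-$\Vnorm[\infty]{\cdot}$ conversion. Writing $M = \sup_{\tilde{x} \in \cball{0}{R}} \norm{h(\tilde{x})}$, attained at some $\tilde{x}_0$ by continuity and compactness, the Lipschitz bound forces $\norm{h} \geq M/2$ on the ball $\ball{\tilde{x}_0}{M/(2L)}$; choosing $\bar{\eta}_R$ (with $\bar{\eta}_R \leq 1$) small enough that $M/(2L) \leq R$, this ball stays inside $\cball{0}{2R}$. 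Lower-bounding the integral by the contribution of this ball gives $\eta/c_R \gtrsim (M/2)^2 (M/(2L))^d$, i.e. $M^{d+2} \lesssim \eta$, whence $M \leq C_R \eta^{1/(d+2)} \leq C_R \eta^{1/(2d+2)}$ since $\eta \leq 1$; this is the asserted bound on $\cball{0}{R}$.

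The delicate point is precisely this last conversion: one must ensure that the ball on which $\norm{h}$ stays of order $M$ remains inside the \emph{enlarged} domain $\cball{0}{2R}$ uniformly over the location of the maximiser $\tilde{x}_0 \in \cball{0}{R}$ --- which is exactly why the Lipschitz and density bounds are imposed on $\cball{0}{2R}$ rather than $\cball{0}{R}$ --- and to fix $\bar{\eta}_R$ so that $M/(2L) \leq R$ holds a priori. The dimensional exponent is the unavoidable cost of trading integrated smallness for a uniform bound through the volume of that ball; everything else (the orthogonality identity, the density lower bound) is routine once the finiteness in \eqref{eq:bound_nerual} is in place.
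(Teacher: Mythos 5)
Your proof is correct, and it shares the paper's two key ingredients --- the bias--variance (orthogonality) identity $\ell_{\vareps}(w^{\dagger})-\ell_{\vareps}^{\star}=\int_{\rset^d}\norm{h(x_\vareps)}^2 p_{\vareps}^{\star}(x_\vareps)\,\rmd x_\vareps$ for $h=f_{w^{\dagger}}-D_{\vareps}^{\star}$, and the Lipschitz-based conversion of an integrated bound into a pointwise one on a small ball around the maximiser --- but the execution of the second step differs in a way worth noting. The paper first applies Cauchy--Schwarz to pass to the $\mathrm{L}^1(p_{\vareps}^\star)$ bound $\expeLigne{\norm{h(X_\vareps)}}\le\sqrt{\eta}$, then lower-bounds the integral on a ball of the \emph{prescribed} radius $C_R^{-1}\eta^{\varpi}$ with $\varpi=1/(2d+2)$ and evaluates the mass of that ball via Lebesgue differentiation ($\asymp \eta^{\varpi d}p_{\vareps}^\star(x_R)$); the balance $1/2-\varpi d=\varpi$ is what produces the exponent $1/(2d+2)$. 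You instead keep the $\mathrm{L}^2$ information, replace the shrinking-ball mass estimate by a uniform lower bound $c_R=\inf_{\cball{0}{2R}}p_{\vareps}^\star>0$ (valid since $p_{\vareps}^\star$ is continuous and everywhere positive), and take the ball radius $M/(2L)$ dictated by the unknown supremum $M$. This avoids the loss incurred by Cauchy--Schwarz and yields $M\lesssim\eta^{1/(d+2)}$, a strictly better exponent that implies the stated $\eta^{1/(2d+2)}$ bound once $\bar{\eta}_R\le 1$. The one point you flag yourself --- ensuring $\ball{\tilde{x}_0}{M/(2L)}\subset\cball{0}{2R}$ --- is stated slightly circularly (you cannot choose $\bar{\eta}_R$ "so that $M/(2L)\le R$" before knowing $M$ is small); the clean fix is to work with the radius $r=\min(M/(2L),R)$, on which $\norm{h}\ge M/2$ still holds and which is always contained in $\cball{0}{2R}$, giving $\eta/c_R\gtrsim M^2\min(M/(2L),R)^d$ and hence $M\lesssim\max(\eta^{1/2},\eta^{1/(d+2)})$ with no a priori smallness assumption. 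With that cosmetic repair your argument is complete and, if anything, sharper than the paper's.
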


      \begin{proof}
        The proof is postponed to \Cref{sec:fun_res_proof}.
      \end{proof}

      Note that \eqref{eq:bound_nerual} is satisfied if for any
      $w \in \mathcal{W}$,
      $\sup_{x \in \rset^{d}} \normLigne{f_w(x)}(1+ \normLigne{x})^{-1} <
      +\infty$ and \Cref{assum:neural_net} holds . 
      
      We recall that \pnpula , see \Cref{algo:pnp-ula}, is given by the following
      recursion: $X_0 \in \rset^d$ and for any $k \in \nset$
\begin{align}
  \label{eq:pnpula}
  X_{k+1} &= X_k + \delta b_{\vareps}(X_k)  + \sqrt{2 \delta} Z_{k+1} \eqsp , \\
  b_{\vareps}(x) &= \nabla \log p(y|x) +  P_{\vareps}(x) + (\prox_{\lambda}(\iota_{\msc})(x) -x)/\lambda \eqsp , \quad   P_{\vareps}(x) = (D_{\vareps}(x) - x) / \vareps \eqsp ,
\end{align} 
where $\delta > 0$ is a step-size, $\vareps, \lambda >0$ are
hyperparameters of the algorithm, $\msc \subset \rset^d$ is a closed convex set,
$\ensembleLigne{Z_k}{k \in \nset}$ a family of i.i.d. Gaussian random variables
with zero mean and identity covariance matrix and
$\prox_{\lambda}(\iota_{\msc})$ the proximal operator of $\iota_{\msc}$ with
step-size $\lambda$, see \cite[Definition 12.23]{bauschke2011convex}, where
$\iota_{\msc}$ is the convex indicator of $\msc$ defined for
$x \in \rset^{\dim}$ by $\iota_{\msc} = +\infty$ if $x \notin \msc$ and $0$ if
$x \in \msc$. Note that for any $x \in \rset^d$ we have
$\prox_{\lambda}(\iota_{\msc})(x) = \Pi_{\msc}(x)$, where
$\Pi_{\msc}$ is the projection onto $\msc$. 

In what follows, for any $\delta >0$ and
$\msc \subset \rset^{\dim}$ closed and convex, we denote by
$\Rker_{\vareps, \delta}: \ \rset^d \times \mcb{\rset^d} \to \ccint{0,1}$
the Markov kernel associated with the recursion \eqref{eq:pnpula} and given for
any $x \in \rset^d$ and $\msa \in \mcb{\rset^d}$ by
\begin{equation}
  \label{eq:pnpula_kernel}
  \Rker_{\vareps, \delta}(x, \msa) = (2 \uppi)^{-d/2} \int_{\rset^{\dim}} \1_{\msa}(x + \delta b_{\vareps}(x) + \sqrt{2\delta} z) \exp[- \norm{z}^2/2] \rmd z \eqsp .
\end{equation}
Note that for ease of notation, we do not explicitly highlight the dependency of
$\Rker_{\vareps, \delta}$ and $b_{\vareps}$ with respect to the hyperparameter
$\lambda > 0$ and $\msc$.

Here we consider the case where
  $x \mapsto \log p(y|x)$ %
  satisfies a one-sided Lipschitz condition, \ie \ we consider
  the following condition.
\begin{assumption}
  \label{assum:one_sided}
  There exists $\mtt \in \rset$ such that for any $x_1, x_2 \in \rset^d$ we have
  \begin{equation}
    \langle \nabla \log p(y|x_2) - \nabla \log p(y|x_1) , x_2 - x_1 \rangle \leq -\mtt \norm{x_2 - x_1}^2 \eqsp . 
  \end{equation}
\end{assumption}
We refer to the supplementary material
\Cref{sec:strongly-log-concave} for refined
convergence rates in the case where $x \mapsto \log p(y|x) $ is strongly
$\mtt$-concave.  Note that if \Cref{assum:one_sided} is satisfied with
$\mtt > 0$ then $x \mapsto \log p(y|x)$ is $\mtt$-concave. Assume
\Cref{assum:post} then \Cref{assum:one_sided} holds for $\mtt =
-\Ltt_y$. However, it is possible that $\mtt > -\Ltt_y$ which leads to better
convergence rates for \pnpula . As a result even when \Cref{assum:post} holds we
still consider \Cref{assum:one_sided}. In order to deal with
\Cref{assum:one_sided} in the case where $\mtt \leq 0$, we set
$\msc \subset \rset^d$ to be some convex compact set fixed by the user.  Doing
so, we ensure the stability of the Markov chain. The choice of $\msc$ in
practice is discussed in \Cref{sec:experimental}. In our imaging experiments, we recall that for any $x \in \rset^d$ we have,
$p(y|x) \propto \exp[-\norm{\rmA x - y}^2/(2 \sigma^2) ]$. If $\rmA$ is not invertible
then $x \mapsto \log p(y|x)$ is not $\mtt$-concave with $\mtt > 0$.  This is the
case, in our deblurring experiment when the convolution kernel has zeros in the Fourier domain.

We start with the following result which
ensures that the Markov chain \eqref{eq:pnpula} is geometrically
ergodic under \rref{assum:neural_net} for the Wasserstein metric
$\wassersteinD[1]$ and in $V$-norm for $V: \ \rset^{\dim} \to \coint{1, +\infty}$ given for any $x \in \rset^{\dim}$ by
\begin{equation}
  \label{eq:V_def}
V(x) = 1 + \norm{x}^2 \eqsp .  
\end{equation}

\begin{proposition}
  \label{prop:ergo_C}
  Assume \rref{assum:post}, \tup{\Cref{assum:neural_net}($R$)} for some $R > 0$
  and \rref{assum:one_sided}. Let $\lambda >0$,
  $\vareps \in \ocint{0, \vareps_0}$ such that
  $2\lambda (\Ltt_y + \Ltt /\vareps - \min(\mtt, 0)) \leq 1$ and
  $\bdelta = (1/3)(\Ltt_y + \Ltt /\vareps + 1 / \lambda)^{-1}$.  Then for any
  $\msc \subset \rset^d$ convex and compact with $0 \in \msc$, there exist
  $A_{1, \msc} \geq 0$ and $\rho_{1, \msc} \in \coint{0, 1}$ such that for any
  $\delta \in \ocintLigne{0, \bdelta}$, $x_1, x_2 \in \rset^{\dim}$ and
  $k \in \nset$ we have
  \begin{align}
    \Vnorm{\updelta_{x_1} \Rker_{\vareps, \delta}^k-  \updelta_{x_2} \Rker_{\vareps, \delta}^k} &\leq A_{1, \msc} \rho_{1, \msc}^{k \delta} (V^2(x_1) + V^2(x_2))  \eqsp , \\
                                                                                           \wassersteinD[1](\updelta_{x_1} \Rker_{\vareps, \delta}^k, \updelta_{x_2} \Rker_{\vareps, \delta}^k) &\leq A_{1, \msc} \rho_{1, \msc}^{k \delta} \norm{x_1-x_2}  \eqsp ,
  \end{align}
  where $V$ is given in \eqref{eq:V_def}.
\end{proposition}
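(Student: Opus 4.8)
The plan is to verify for the Markov kernel $\Rker_{\vareps, \delta}$ the two ingredients that drive geometric ergodicity of discretised Langevin dynamics, namely a Foster--Lyapunov drift condition for $V$ and a coupling-based contraction, and then to combine them quantitatively to obtain both stated bounds. First I would record that the drift is globally Lipschitz. Writing $b_{\vareps} = \nabla \log p(y|\cdot) + P_{\vareps} + (\Pi_{\msc} - \Id)/\lambda$, the map $\nabla \log p(y|\cdot)$ is $\Ltt_y$-Lipschitz by \Cref{assum:post}, $P_{\vareps} = -(\Id - D_{\vareps})/\vareps$ is $(\Ltt/\vareps)$-Lipschitz by the first bound in \eqref{eq:denoiser_cond}, and $(\Pi_{\msc} - \Id)/\lambda = -(\Id - \Pi_{\msc})/\lambda$ is $(1/\lambda)$-Lipschitz since $\Id - \Pi_{\msc}$ is firmly non-expansive. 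Hence $b_{\vareps}$ is $L$-Lipschitz with $L = \Ltt_y + \Ltt/\vareps + 1/\lambda$, which is exactly what makes the choice $\bdelta = (1/3) L^{-1}$ meaningful and keeps the recursion \eqref{eq:pnpula} stable for $\delta \in \ocintLigne{0,\bdelta}$.

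Next I would establish the drift condition. Expanding the Gaussian kernel \eqref{eq:pnpula_kernel} gives $\Rker_{\vareps,\delta} V(x) = V(x) + 2\delta \ps{b_{\vareps}(x)}{x} + \delta^2 \norm{b_{\vareps}(x)}^2 + 2 d \delta$, so the crux is a dissipativity bound $\ps{b_{\vareps}(x)}{x} \leq -\tilde{c}\norm{x}^2 + \tilde{K}$ with $\tilde{c} > 0$. Here \Cref{assum:one_sided} (through $-\min(\mtt,0)$) and the Lipschitz bounds on $\nabla\log p(y|\cdot)$ and $\Id-D_{\vareps}$ control $\ps{\nabla\log p(y|x)}{x}$ and $\ps{P_{\vareps}(x)}{x}$ up to quadratic growth with coefficient $\Ltt_y + \Ltt/\vareps - \min(\mtt,0)$, while the tail term contributes $-(1/\lambda)\ps{x - \Pi_{\msc}(x)}{x} \leq -(1/\lambda)\dist(x,\msc)^2$, using $0 \in \msc$ and the obtuse-angle property of the projection. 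The hypothesis $2\lambda(\Ltt_y + \Ltt/\vareps - \min(\mtt,0)) \leq 1$ guarantees that this restoring force dominates, yielding $\tilde{c} > 0$. Together with $\norm{b_{\vareps}(x)}^2 \leq 2 L^2 \norm{x}^2 + C$ and $\delta \leq \bdelta$, this produces a geometric drift $\Rker_{\vareps,\delta} V \leq (1 - \tilde{c}\delta) V + \tilde{K}' \delta$, whence $\Rker_{\vareps,\delta}^k V \leq \rho^{k\delta} V + C'$ with $\rho = \mre^{-\tilde{c}} \in \coint{0,1}$ independent of $\delta$; a parallel computation for $V^2$ supplies the higher moment used later.

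The main obstacle is the Wasserstein contraction, because synchronous coupling alone is not contractive: the drift only satisfies $\ps{b_{\vareps}(x_2) - b_{\vareps}(x_1)}{x_2-x_1} \leq (\Ltt/\vareps - \mtt)\norm{x_2-x_1}^2$, a coefficient that need not be negative. I would instead exploit the confinement supplied by the drift together with a reflection coupling of the Gaussian increments. The geometric drift forces both coupled chains into a compact region around $\msc$ in controlled time, and on such a region the reflection coupling contracts a concave-modified distance $f(\norm{x_1-x_2})$ in the spirit of Eberle, with $f$ equivalent to $\norm{x_1-x_2}$ up to multiplicative constants. Combining the confinement for far-apart states with this local contraction yields a one-step contraction in a $\wassersteinD[1]$-type semimetric, $\wassersteinD[1](\updelta_{x_1}\Rker_{\vareps,\delta}, \updelta_{x_2}\Rker_{\vareps,\delta}) \leq (1 - \kappa\delta)\norm{x_1-x_2}$ up to metric equivalence; iterating gives the second displayed bound with $\rho_{1,\msc} = \mre^{-\kappa}$ and $\kappa > 0$ depending on $\msc$ and the Lipschitz constants.

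Finally, for the $V$-total-variation bound I would promote this contraction to a $V$-weighted one. A minorisation on the level sets $\{V \leq r\}$ follows from the explicit Gaussian form \eqref{eq:pnpula_kernel} and the boundedness of $b_{\vareps}$ on compacts, so these sets are small; feeding the geometric drift of the second step and this minorisation into the quantitative Harris-type machinery underlying the cited convergence results converts the $\wassersteinD[1]$ contraction into a bound on $\Vnorm{\updelta_{x_1}\Rker_{\vareps,\delta}^k - \updelta_{x_2}\Rker_{\vareps,\delta}^k}$. Carrying the drift through the conversion requires controlling a moment of $V$ along the coupled chains, which is the origin of the quadratic weight $V^2(x_1) + V^2(x_2)$. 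All constants $A_{1,\msc}$ and $\rho_{1,\msc} \in \coint{0,1}$ depend on $\msc$ only through its geometry and are independent of $\delta$, giving the first displayed inequality.
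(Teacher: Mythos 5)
Your overall architecture matches the paper's: establish that $b_{\vareps}$ is $(\Ltt_y + \Ltt/\vareps + 1/\lambda)$-Lipschitz, derive a dissipativity/drift condition from \rref{assum:one_sided} and the condition on $\lambda$, invoke a coupling-based geometric ergodicity result for the Wasserstein and total-variation contractions, and obtain the $V$-norm bound with weight $V^2(x_1)+V^2(x_2)$ by combining the contraction with moment bounds for $V^2$ (the paper does this via \Cref{lemma:drift_plus}, \Cref{lemma:bound_plus} and the inequality $\Vnorm{\nu_1-\nu_2}\leq \tvnormsq{\nu_1-\nu_2}\,(\nu_1[V^2]+\nu_2[V^2])^{1/2}$, rather than a Harris minorisation, but both are standard).

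The one substantive point where you diverge — and where your sketch is weaker than the paper's argument — is the pairwise one-sided estimate. You assert that the drift difference only satisfies $\langle b_{\vareps}(x_2)-b_{\vareps}(x_1),x_2-x_1\rangle \leq (\Ltt/\vareps-\mtt)\norm{x_2-x_1}^2$ and that this "need not be negative", but this omits the contribution of the tail term $(\Pi_{\msc}-\Id)/\lambda$: since $\norm{\Pi_{\msc}(x_1)-\Pi_{\msc}(x_2)}\leq \diam(\msc)$, one gets
\begin{equation}
  \langle b_{\vareps}(x_1)-b_{\vareps}(x_2), x_1-x_2\rangle \leq -\norm{x_1-x_2}^2/(2\lambda) + \diam(\msc)\norm{x_1-x_2}/\lambda
\end{equation}
under the hypothesis $2\lambda(\Ltt_y+\Ltt/\vareps-\min(\mtt,0))\leq 1$, so the drift difference \emph{is} contractive once $\norm{x_1-x_2}\geq 4\diam(\msc)$. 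This "contractivity at infinity" of the pairwise difference is precisely the hypothesis of the result the paper cites (\cite[Corollary 2]{debortoli2020convergence}), and it is what delivers the clean factor $\norm{x_1-x_2}$ in the $\wassersteinD[1]$ bound: one can then use a concave distance $f$ that is bi-Lipschitz equivalent to the identity, with no Lyapunov weight in the Wasserstein estimate. Your replacement — Lyapunov confinement plus a local Eberle-type reflection coupling — is a legitimate route to geometric ergodicity, but as sketched it naturally produces a semimetric of the form $f(\norm{x_1-x_2})+\epsilon(V(x_1)+V(x_2))\1_{x_1\neq x_2}$, whose second term does not vanish as $x_2\to x_1$; without the contractivity-at-infinity observation you would therefore obtain a bound weighted by $V(x_1)+V(x_2)$ rather than by $\norm{x_1-x_2}$ in the second display. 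Restoring the projection term in your pairwise estimate closes this gap and reduces your argument to essentially the paper's.
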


\begin{proof}
  The proof is postponed to \Cref{prop:ergo:proof}.
\end{proof}

The constants $A_{1, \msc}$ and $\rho_{1, \msc}$ do not depend on the dimension
$d$ but only on the parameters $\mtt, \Ltt, \Ltt_y, \vareps$ and $\msc$. Note
that a similar result can be established for $\wassersteinD[p]$ for any
$p \in \nsets$ instead of $\wassersteinD[1]$. Under the conditions of
\Cref{prop:ergo_C} we have for any $\nu_1, \nu_2 \in \Pens_1(\rset^{\dim})$
\begin{align}
  \label{eq:contract_general}
      \Vnorm{\nu_1 \Rker_{\vareps, \delta}^k-  \nu_2 \Rker_{\vareps, \delta}^k} &\leq A_{1, \msc} \rho_{1, \msc}^{k \delta} \parenthese{\int_{\rset^{\dim}} V^2(\tilde{x})  \rmd \nu_1(\tilde{x}) + \int_{\rset^{\dim}} V^2(\tilde{x})  \rmd \nu_2(\tilde{x}) }  \eqsp , \\
                                                                                           \wassersteinD[1](\nu_1 \Rker_{\vareps, \delta}^k, \nu_2 \Rker_{\vareps, \delta}^k) &\leq A_{1, \msc} \rho_{1, \msc}^{k \delta} \parenthese{\int_{\rset^{\dim}} \norm{\tilde{x}} \rmd \nu_1(\tilde{x}) + \int_{\rset^{\dim}} \norm{\tilde{x}} \rmd \nu_2(\tilde{x})}  \eqsp .
\end{align}

First, $(\Pens_1(\rset^{\dim}), \wassersteinD[1])$ is a complete metric space
\cite[Theorem 6.18]{villani2009optimal}. Second, for any
$\delta \in \ocintLigne{0, \bdelta}$, there exists $\rmm \in \nsets$ such that
$\rmf^{\rmm}$ is contractive with
$\rmf: \ \Pens_1(\rset^{\dim}) \to \Pens_1(\rset^{\dim})$ given for any
$\nu \in \Pens_1(\rset^{\dim})$ by $\rmf(\nu) = \nu \Rker_{\vareps, \delta}$
using \Cref{prop:ergo_C}. Therefore we can apply the Picard fixed point theorem
and we obtain that $\Rker_{\vareps, \delta}$ admits an invariant probability
measure $\posteriorwepsdelta \in \Pens_1(\rset^{\dim})$.

Therefore, since $\posteriorwepsdelta$ is an invariant
probability measure for $\Rker_{\vareps, \delta}$ and
$\posteriorwepsdelta \in \Pens_1(\rset^{\dim})$, using
\eqref{eq:contract_general}, we have for any
$\nu \in \Pens_1(\rset^{\dim})$
\begin{align}
      \Vnorm{\nu \Rker_{\vareps, \delta}^k-  \posteriorwepsdelta} &\leq A_{1, \msc} \rho_{1, \msc}^{k \delta} \parenthese{\int_{\rset^{\dim}} V^2(\tilde{x}) \rmd \nu(\tilde{x}) + \int_{\rset^{\dim}} V^2(\tilde{x}) \rmd \posteriorwepsdelta(\tilde{x}) }  \eqsp , \\
                                                                                           \wassersteinD[1](\nu \Rker_{\vareps, \delta}^k, \posteriorwepsdelta) &\leq A_{1, \msc} \rho_{1, \msc}^{k \delta} \parenthese{\int_{\rset^{\dim}} \norm{\tilde{x}} \rmd \nu(\tilde{x}) + \int_{\rset^{\dim}} \norm{\tilde{x}} \rmd \posteriorwepsdelta(\tilde{x})}  \eqsp .
\end{align}
Combining this result with the fact that for any $t \geq 0$,
$(1-\rme^{-t})^{-1} \leq 1 + t^{-1}$, we get that for any $n \in \nsets$ and 
$h: \ \rset^{\dim} \to \rset$ measurable such that $\sup_{x \in \rset^d} \{ (1 + \norm{x}^2)^{-1} \abs{h (x)}\} < +\infty$ 
\begin{align}& \abs{n^{-1}\sum_{k=1}^n\expeLigne{h(X_k)} - \int_{\rset^d} h(\tilde{x})
  \rmd \posteriorwepsdelta(\tilde{x}) } \\
  & \qquad \qquad \leq A_{1, \msc} (\bdelta +
  \log^{-1}(1/\rho_{1, \msc})) \left. \parenthese{V^2(x) + \int_{\rset^{\dim}}
      V^2(\tilde{x}) \rmd \posteriorwepsdelta(\tilde{x})} \middle/ (n \delta)
  \right. \eqsp ,
  \end{align}
  where $(X_k)_{k \in \nset}$ is the Markov chain given by \eqref{eq:pnpula} with starting point $X_0 = x \in \rset^d$.

  In the rest of this section we evaluate how close the invariant measure $\pi_{\vareps, \delta}$ is
  to $\pi_{\vareps}$.  Our proof will rely on the following assumption which is
  necessary to ensure that $x \ \mapsto \log p_{\vareps}^\star(x)$ has Lipschitz
  gradients, see \Cref{prop:reg_p_eps}.
  \begin{assumption}
    \label{assum:cov}
    For any $\vareps > 0$, there exists $\Ktt_{\vareps} \geq 0$ such that for any $x \in \rset^d$,
    \begin{align}
      &\int_{\rset^d} \norm{\tilde{x} - \int_{\rset^d} \tilde{x}' g_{\vareps}(\tilde{x}'| x) \rmd \tilde{x}'}^2 g_{\vareps}(\tilde{x}| x) \rmd \tilde{x} \leq \Ktt_{\vareps} \eqsp ,
    \end{align}
    with $g_{\vareps}$ given in \eqref{eq:cond_prior}.
  \end{assumption}
  We emphasize that \Cref{assum:cov} is not needed to establish the convergence
  of the Markov chain. However, we impose it in order to compare the stationary
  distribution of \pnpula \ with the target distribution
  $\posteriorweps$. Depending on the prior distribution density $p^\star$,
  \Cref{assum:cov} may be checked by hand. %
  Finally, note that \Cref{assum:cov} can be
  extended to cover the case where the prior distribution $\prior$ does not
  admit a density with respect to the Lebesgue measure.

  In the following proposition, we show that we can control the
  distance between $\posteriorwepsdelta$ and $\posteriorweps$
  based on the previous observations.

\begin{proposition}
  \label{prop:bias_C}
  Assume \rref{assum:post}, \tup{\Cref{assum:neural_net}($R$)} for some $R > 0$,
  \rref{assum:one_sided} and \rref{assum:cov}. Moreover, let
  $\vareps \in \ocint{0, \vareps_0}$ and assume that
  $\int_{\rset^{\dim}} (1+\norm{\tilde{x}}^4) p_{\vareps}^\star(\tilde{x}) \rmd
  \tilde{x} < + \infty$. Let $\lambda >0$ such that
  $2\lambda (\Ltt_y + ( /\vareps)\max(\Ltt, 1 + \Ktt_{\vareps}/ \vareps)
  -\min(\mtt,0)) \leq 1$ and
  $\bdelta = (1/3)(\Ltt_y + \Ltt /\vareps + 1 / \lambda)^{-1}$. Then for any
  $\delta \in \ocintLigne{0, \bdelta}$ and $\msc$ convex and compact with
  $0 \in \msc$, $\Rker_{\vareps, \delta}$ admits an invariant probability
  measure $\posteriorwepsdeltac$. In addition, there exists $B_0 \geq 0$ such
  that for any $\msc$ convex compact with $\cball{0}{R_{\msc}} \subset \msc$ and
  $R_{\msc} > 0$, there exists $B_{1, \msc} \geq 0$ such that for any
  $\delta \in \ocintLigne{0, \bdelta}$
  \begin{equation}
    \Vnorm{\posteriorwepsdeltac- \posteriorweps} \leq B_0 R_{\msc}^{-1} + B_{1, \msc} (\delta^{1/2} + \Mtt_R + \exp[-R])  \eqsp ,
  \end{equation}
  where $V$ is given in \eqref{eq:V_def}.  
\end{proposition}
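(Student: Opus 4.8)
The plan is to compare $\posteriorwepsdeltac$ with $\posteriorweps$ through two intermediate distributions that isolate the three error sources in the bound. I would first introduce the \emph{tail-regularised} target $\bar\pi_{\vareps,\msc}$ with density proportional to $p(y|x)\,p_{\vareps}^\star(x)\exp[-\dist(x,\msc)^2/(2\lambda)]$. Since the Moreau--Yosida envelope of $\iota_\msc$ at level $\lambda$ has gradient $(x-\prox_\lambda(\iota_\msc)(x))/\lambda=(x-\Pi_\msc(x))/\lambda$, Tweedie's identity \eqref{eq:Tweedie} shows that $\bar\pi_{\vareps,\msc}$ is exactly the invariant measure of the continuous Langevin diffusion whose drift is \eqref{eq:pnpula} with $D_\vareps$ replaced by the oracle $D_\vareps^\star$ (this drift is a gradient). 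Let $\tilde\Rker_{\vareps,\delta}$ be the ``oracle'' \pnpula\ kernel obtained from \eqref{eq:pnpula} upon replacing $D_\vareps$ by $D_\vareps^\star$, and $\tilde\pi_{\vareps,\delta}$ its invariant measure; it is well-defined by the same Picard argument used for $\Rker_{\vareps,\delta}$, because $D_\vareps^\star$ is Lipschitz with constant $1+\Ktt_\vareps/\vareps$ under \rref{assum:cov} and \Cref{prop:reg_p_eps}. I would then split
\[
\Vnorm{\posteriorwepsdeltac - \posteriorweps} \leq \Vnorm{\posteriorwepsdeltac - \tilde\pi_{\vareps,\delta}} + \Vnorm{\tilde\pi_{\vareps,\delta} - \bar\pi_{\vareps,\msc}} + \Vnorm{\bar\pi_{\vareps,\msc} - \posteriorweps} .
\]

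For the third term (tail regularisation), $\bar\pi_{\vareps,\msc}$ and $\posteriorweps$ share the factor $p(y|x)p_\vareps^\star(x)$ and differ only by the weight $\exp[-\dist(x,\msc)^2/(2\lambda)]\in\ocintLigne{0,1}$, which equals $1$ on $\cball{0}{R_\msc}\subset\msc$. I would bound the $V$-norm by the $V$-weighted mass that $\posteriorweps$ places outside $\cball{0}{R_\msc}$ (simultaneously controlling the ratio of normalising constants); the fourth-moment hypothesis $\int(1+\normLigne{\tilde x}^4)p_\vareps^\star(\tilde x)\rmd\tilde x<+\infty$ together with Markov's inequality makes this mass $O(R_\msc^{-1})$, producing $B_0 R_\msc^{-1}$ with $B_0$ independent of $\msc$.

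For the second term (discretisation), $\tilde\Rker_{\vareps,\delta}$ is a genuine ULA targeting $\bar\pi_{\vareps,\msc}$, whose potential has a Lipschitz gradient by \rref{assum:cov} (through \Cref{prop:reg_p_eps}) and \rref{assum:post}. I would invoke the standard weak-error analysis of the Euler--Maruyama scheme combined with the $V$-geometric ergodicity of \Cref{prop:ergo_C} (which applies verbatim to $\tilde\Rker_{\vareps,\delta}$, the Lipschitz constant $\max(\Ltt,1+\Ktt_\vareps/\vareps)$ being exactly what enters the condition on $\lambda$) to obtain $\Vnorm{\tilde\pi_{\vareps,\delta}-\bar\pi_{\vareps,\msc}}\leq C\delta^{1/2}$, using the step-size restriction $\delta\leq\bdelta$.

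The main obstacle is the first term (denoiser error), where $\posteriorwepsdeltac$ and $\tilde\pi_{\vareps,\delta}$ are invariant for kernels whose drifts differ only by $P_\vareps-P_\vareps^\star=(D_\vareps-D_\vareps^\star)/\vareps$. A naive one-step total-variation comparison loses a factor $\delta^{-1/2}$ and blows up as $\delta\to0$, so the cancellation due to invariance must be exploited: from the telescoping identity $\tilde\pi_{\vareps,\delta}-\posteriorwepsdeltac=\sum_{k\geq0}\tilde\pi_{\vareps,\delta}(\tilde\Rker_{\vareps,\delta}-\Rker_{\vareps,\delta})\Rker_{\vareps,\delta}^k$, I would combine the $V$-geometric ergodicity of \Cref{prop:ergo_C} with the cumulative smoothing estimates on $\Rker_{\vareps,\delta}^k$ (equivalently, a synchronous-coupling Wasserstein perturbation bound, for which \Cref{prop:ergo_C} already supplies the $\wassersteinD[1]$-contraction), so that the resolvent sum yields a constant \emph{uniform} in $\delta$. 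Since \eqref{eq:denoiser_cond} only bounds $\normLigne{D_\vareps-D_\vareps^\star}$ by $\Mtt_R$ on $\cball{0}{R}$, I would split the integral of the drift difference against $\tilde\pi_{\vareps,\delta}$ into $\cball{0}{R}$, contributing $\Mtt_R$, and its complement, where the Lipschitz and linear-growth bounds on $D_\vareps,D_\vareps^\star$ control the integrand while the Foster--Lyapunov confinement underpinning \Cref{prop:ergo_C} makes the $V$-weighted tail mass of order $\exp[-R]$. Assembling the three estimates gives $B_0R_\msc^{-1}+B_{1,\msc}(\delta^{1/2}+\Mtt_R+\exp[-R])$; establishing the $\delta$-uniform perturbation constant (the absence of the spurious $\delta^{-1/2}$) is the crux.
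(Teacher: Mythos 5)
Your three-way decomposition is sound and isolates the same three error sources as the paper, but the paper organises the proof differently: it uses only two intermediate comparisons. For the tail-regularisation term your argument matches the paper's part (b) (the paper bounds $\KL(\posteriorweps\,|\,\tilde\pi_\vareps)$ by $\int_{\msc^\complementary}\norm{\tilde x}^2\rmd\posteriorweps(\tilde x)$ and applies the generalized Pinsker inequality, which is why the fourth-moment hypothesis appears; your Cauchy--Schwarz/Markov route gives the same $R_\msc^{-1}$). The substantive difference is in the other two terms: rather than comparing the actual discrete chain to an oracle discrete chain and then that to the diffusion, the paper compares the actual discrete kernel $\Rker_{\vareps,\delta}$ \emph{directly} to the continuous oracle semigroup $\Pker_{t,\vareps}$ over unit-time blocks of $m=\lceil 1/\delta\rceil$ steps, via Girsanov (\Cref{lemma:girsanov}) applied to the telescoping sum $\sum_j \updelta_x\Rker^{jm}(\Rker^m-\Pker_1)\Pker_{k-j-1}$. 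This single device delivers both the $\delta^{1/2}$ and the $\Mtt_R+\exp[-R]$ contributions at once, and --- crucially --- it is exactly what resolves the issue you correctly identify as the crux: because the drift discrepancy is integrated over a \emph{fixed unit time horizon} inside the Girsanov exponent, the bound on one block is $O(\delta^{1/2}+\Mtt_R+\exp[-R])$ with no $\delta^{-1/2}$, and the geometric decay of $\Pker_{k-j-1,\vareps}$ makes the outer sum $\delta$-uniform. Your proposed alternative (Wasserstein perturbation of the resolvent sum plus a smoothing step to return to the $V$-norm) can be made to work, but only if the smoothing is likewise performed over a unit-time block rather than a single step --- otherwise the TV-regularisation cost $\delta^{-1/2}$ reappears; as written this step is the one genuinely unproved point in your sketch. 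One further caveat: to pair the linear growth of $\normLigne{D_\vareps-D_\vareps^\star}$ outside $\cball{0}{R}$ with a tail mass of order $\exp[-R]$ you need \emph{exponential} moments of the law of the interpolated process, not the polynomial Lyapunov function $V(x)=1+\norm{x}^2$; the paper obtains these from the drift condition with $W(x)=\exp[\varpi\sqrt{1+\normLigne{x}^2}]$ in \Cref{lemma:drift_plus} together with a Doob/Markov argument, and your appeal to ``Foster--Lyapunov confinement'' should be sharpened accordingly.
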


\begin{proof}
  The proof is postponed to \Cref{prop:bias:proof}.
\end{proof}

We now combine \Cref{prop:ergo_C} and \Cref{prop:bias_C} in order to control the
bias of the Monte Carlo estimator obtained using \pnpula . %
In the supplementary material
\Cref{sec:posterior_approx} we also provide bounds
on
$\absLigne{n^{-1}\sum_{k=1}^n \expeLigne{h(X_k)} - \int_{\rset^{\dim}} h(\tilde{x}) \rmd
  \posteriorw(\tilde{x}) }$ by controlling $\Vnorm{\pi - \pi_\vareps}$.

\begin{proposition}
  \label{prop:bias_control_final_C}
  Assume \tup{\Cref{assum:post}}, \tup{\Cref{assum:neural_net}($R$)} for some
  $R > 0$, \rref{assum:one_sided} and \rref{assum:cov}. Moreover, let $>0$,
  $\vareps \in \ocint{0, \vareps_0}$ and assume that
  $\int_{\rset^{\dim}} (1+\norm{\tilde{x}}^4) p_{\vareps}^\star(\tilde{x})
  \rmd \tilde{x} < + \infty$.  Let $\lambda >0$ such that
  $2\lambda (\Ltt_y + (1/\vareps) \max(\Ltt, 1 + \Ktt_{\vareps}/ \vareps) - \min(\mtt,0))
  \leq 1$ and
  $\bdelta = (1/3)(\Ltt_y +  \Ltt /\vareps + 1 / \lambda)^{-1}$. Then
  there exists $C_{1, \vareps} > 0$ such that for any $\msc$ convex compact
  with $\cball{0}{R_{\msc}} \subset \msc$ and $R_{\msc} > 0$ there exists
  $C_{2, \vareps}$ such that for any $h : \ \rset^{\dim} \to \rset$ measurable
  with $\sup_{x \in \rset^d} \{\abs{h(x)}(1 + \norm{x}^{2})^{-1} \} \leq 1$, $n \in \nsets$,
  $\delta \in \ocintLigne{0, \bdelta}$ we have
  \begin{multline}
    \abs{n^{-1}\sum_{k=1}^n \expe{h(X_k)} - \int_{\rset^{\dim}} h(\tilde{x}) \rmd \posteriorweps(\tilde{x}) } \\\leq \defEns{C_{1, \vareps} R_{\msc}^{-1} + C_{2, \vareps, \msc} (\delta^{1/2} + \Mtt_R + \exp[-R] + (n \delta)^{-1})  } (1 + \norm{x}^4)  \eqsp .
  \end{multline}

\end{proposition}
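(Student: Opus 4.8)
The plan is to insert the invariant measure $\posteriorwepsdeltac$ of the \pnpula \ kernel $\Rker_{\vareps,\delta}$ between the ergodic average and the target, and split the error by the triangle inequality,
\begin{equation*}
\left| n^{-1}\textstyle\sum_{k=1}^n \expe{h(X_k)} - \posteriorweps(h) \right| \leq \left| n^{-1}\textstyle\sum_{k=1}^n \expe{h(X_k)} - \posteriorwepsdeltac(h) \right| + \left| \posteriorwepsdeltac(h) - \posteriorweps(h) \right| \eqsp .
\end{equation*}
The first term is the ergodic-average error, controlled through the geometric ergodicity of \Cref{prop:ergo_C}, and the second is the stationary bias, controlled through \Cref{prop:bias_C}. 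The existence of $\posteriorwepsdeltac$ under the stated hypotheses is precisely the content of \Cref{prop:bias_C}.

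First I would dispatch the stationary-bias term. Since $h$ satisfies $\sup_x \abs{h(x)}(1+\norm{x}^2)^{-1}\leq 1$, we have $\Vnorm[V]{h}\leq 1$ for $V = 1+\normLigne{\cdot}^2$, so by definition of the $V$-total variation distance $\abs{\posteriorwepsdeltac(h) - \posteriorweps(h)} \leq \Vnorm{\posteriorwepsdeltac - \posteriorweps}$, and \Cref{prop:bias_C} bounds the right-hand side by $B_0 R_{\msc}^{-1} + B_{1,\msc}(\delta^{1/2} + \Mtt_R + \exp[-R])$. Crucially, $B_0$ does not depend on $\msc$, so its contribution is absorbed into the $\msc$-free coefficient $C_{1,\vareps}$ of $R_{\msc}^{-1}$, whereas $B_{1,\msc}$ feeds into $C_{2,\vareps,\msc}$; this yields the $\delta^{1/2}$, $\Mtt_R$ and $\exp[-R]$ contributions.

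For the ergodic-average term I would invoke the Monte Carlo bound already derived from \Cref{prop:ergo_C}, which (again using $\Vnorm[V]{h}\leq 1$) reads $\abs{n^{-1}\sum_{k=1}^n \expe{h(X_k)} - \posteriorwepsdeltac(h)} \leq A_{1,\msc}(\bdelta + \log^{-1}(1/\rho_{1,\msc}))\,(V^2(x) + \posteriorwepsdeltac(V^2))/(n\delta)$. Here two inputs are needed: a uniform-in-$\delta$ fourth-moment control $\sup_{\delta \in \ocintLigne{0,\bdelta}}\posteriorwepsdeltac(V^2)<+\infty$, which follows from the Foster--Lyapunov drift inequality for $V^2$ underlying \Cref{prop:ergo_C} and bounds $\posteriorwepsdeltac(V^2)$ by a constant depending only on $\vareps$ and $\msc$; and the elementary inequality $V^2(x)=(1+\norm{x}^2)^2\leq 2(1+\norm{x}^4)$, which produces exactly the $(1+\norm{x}^4)$ prefactor and the $(n\delta)^{-1}$ rate in the claimed estimate. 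Folding $A_{1,\msc}$, $\rho_{1,\msc}$, and the moment bound into $C_{2,\vareps,\msc}$, and bounding every term by the common factor $1+\norm{x}^4\geq 1$, assembles the two pieces into the stated inequality. The main obstacle is the uniform fourth-moment estimate $\sup_\delta \posteriorwepsdeltac(V^2)<\infty$ together with the bookkeeping that keeps the coefficient of $R_\msc^{-1}$ genuinely independent of $\msc$; once these are secured, the remainder is a routine collection of constants.
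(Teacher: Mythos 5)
Your proposal is correct and follows exactly the route the paper takes: its proof is stated as a straightforward combination of \Cref{prop:ergo_C} (via the Monte Carlo ergodic-average bound derived just before the statement) and \Cref{prop:bias_C}, split by the triangle inequality through the invariant measure $\posteriorwepsdeltac$. Your added remarks on the uniform-in-$\delta$ bound for $\posteriorwepsdeltac(V^2)$ and on keeping the coefficient of $R_{\msc}^{-1}$ independent of $\msc$ correctly fill in the bookkeeping the paper leaves implicit.
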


\begin{proof}
  The proof is straightforward combining \Cref{prop:ergo_C} and \Cref{prop:bias_C}.
\end{proof}

\subsection{Convergence guarantees for P\pnpula}
\label{sec:proj-altern}

We now study the Projected \emph{Plug \& Play} Unadjusted Langevin Algorithm
(P\pnpula ). It is given by the following recursion: $X_0 \in \msc$ and for any
$k \in \nset$
\begin{align}
  \label{eq:ppnpula}
  X_{k+1} &= \Pi_\msc(X_k + \delta b_{\vareps}(X_k)  + \sqrt{2 \delta} Z_{k+1}) \eqsp , \\
  b_{\vareps}(x) &= \nabla \log p(y|x) +  P_\vareps(x)   \eqsp , \quad   P_\vareps(x) = (D_{\vareps}(x) - x) / \vareps \eqsp ,
\end{align}
where $\delta > 0$ is a step-size, $\vareps >0$ is
an hyperparameter of the algorithm, $\msc \subset \rset^d$ is a closed convex set,
$\ensembleLigne{Z_k}{k \in \nset}$ a family of i.i.d. Gaussian random variables
with zero mean and identity covariance matrix and where $\Pi_{\msc}$ is the
projection onto $\msc$.
In what follows, for any $\delta >0$ and
$\msc \subset \rset^{\dim}$ closed and convex, we denote by
$\Qker_{\vareps, \delta}: \ \rset^d \times \mcb{\rset^d} \to \ccint{0,1}$
the Markov kernel associated with the recursion \eqref{eq:ppnpula} and given for
any $x \in \rset^d$ and $\msa \in \mcb{\rset^d}$ by
\begin{equation}
  \label{eq:ppnpula_kernel}
  \Qker_{\vareps, \delta}(x, \msa) = (2 \uppi)^{-d/2} \int_{\rset^{\dim}} \1_{\Pi_{\msc}^{-1}(\msa)}(x + \delta b_{\vareps}(x) + \sqrt{2\delta} z) \exp[- \norm{z}^2/2] \rmd z \eqsp .
\end{equation}
Note that for ease of notation, we do not explicitly highlight the dependency of
$\Qker_{\vareps, \delta}$ and $b_{\vareps}$ with respect to the hyperparameter $\msc$. 

First, we have the following result which ensures that P\pnpula \ is
geometrically ergodic for all step-sizes.

\begin{proposition}
  \label{prop:ergo_C_proj}
  Assume \rref{assum:post}, \tup{\Cref{assum:neural_net}($R$)} for some $R >
  0$. Let $\lambda, \vareps, \bdelta >0$.  Then for any $\msc \subset \rset^d$
  convex and compact with $0 \in \msc$, there exist $\tilde{A}_{\msc} \geq 0$
  and $\tilde{\rho}_{\msc} \in \coint{0, 1}$ such that for any
  $\delta \in \ocintLigne{0, \bdelta}$, $x_1, x_2 \in \msc$ and $k \in \nset$ we
  have
  \begin{align}
    \tvnormLigne{\updelta_{x_1} \Qker_{\vareps, \delta}^k-  \updelta_{x_2} \Qker_{\vareps, \delta}^k} &\leq \tilde{A}_{\msc} \tilde{\rho}_{\msc}^{k \delta}   \eqsp .
  \end{align}
\end{proposition}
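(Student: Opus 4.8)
The plan is to exploit that the chain is confined to the compact convex set $\msc$ and that its driving noise is non-degenerate, so that total-variation geometric ergodicity follows from a uniform minorisation (Doeblin) argument rather than from any contractivity of the drift (note that \rref{assum:one_sided} is \emph{not} assumed here). The subtlety, shared by all ULA-type statements, is that a single step with small $\delta$ barely moves the chain: a one-step minorisation over $\msc$ would produce a constant decaying like $\exp[-c/\delta]$ and could not yield a rate of the advertised form $\tilde\rho_\msc^{k\delta}$ with $\tilde\rho_\msc$ independent of $\delta$. I would therefore establish an $n_\delta$-step minorisation over a fixed time horizon, with $n_\delta = \ceil{T/\delta}$, whose constant is bounded below uniformly in $\delta \in \ocintLigne{0,\bdelta}$. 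The mechanism is that the accumulated noise $\sqrt{2\delta}\sum_{i=1}^{n_\delta} Z_i \sim \mathcal N(0, 2\delta n_\delta\, \Id)$ has covariance $\approx 2T\,\Id$, so it spreads by an $O(\sqrt{T})$ amount that does not vanish as $\delta \to 0$.

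First I would record that, under \rref{assum:post} and \tup{\Cref{assum:neural_net}($R$)}, the drift $b_{\vareps} = \nabla\log p(y|\cdot) + P_{\vareps}$ is continuous: $\Id - D_{\vareps}$ is $\Ltt$-Lipschitz, hence $P_{\vareps}$ is $\Ltt/\vareps$-Lipschitz, and $\nabla\log p(y|\cdot)$ is $\Ltt_y$-Lipschitz. Since $\msc$ is compact there is $B \geq 0$ with $\sup_{x \in \msc}\norm{b_{\vareps}(x)} \leq B$. Because $\Pi_{\msc}$ acts as the identity on $\msc$, for any $\msa \subseteq \msc$ one has the elementary lower bound $\Qker_{\vareps,\delta}(x,\msa) \geq \int_{\msa} \tilde q_\delta(x,x')\,\rmd x'$, where $\tilde q_\delta(x,x') = (4\uppi\delta)^{-d/2}\exp[-\norm{x'-x-\delta b_{\vareps}(x)}^2/(4\delta)]$ is the unprojected Euler transition density (indeed $\{x + \delta b_{\vareps}(x) + \sqrt{2\delta}Z \in \msa\} \subseteq \{\Pi_{\msc}(\cdot) \in \msa\}$ when $\msa\subseteq\msc$). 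Chaining this bound, and keeping every intermediate point in $\msc$ (which carries all the mass), yields $\Qker_{\vareps,\delta}^{n_\delta}(x,\msa) \geq \int_{\msc^{n_\delta-1}}\int_{\msa}\prod_{i=1}^{n_\delta}\tilde q_\delta(x_{i-1},x_i)\,\rmd x_{n_\delta}\cdots\rmd x_1$ with $x_0 = x$.

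The core estimate is to bound this confined random-walk integral below by $\epsilon_\msc\,\Leb(\msa \cap \boule{x^\star}{r_0})$, for a fixed interior ball $\boule{x^\star}{r_0}\subset\msc$, uniformly in $\delta \in \ocintLigne{0,\bdelta}$ and in the starting point $x\in\msc$. I would obtain this by restricting all intermediate points to a fixed tube inside $\msc$ and invoking standard Gaussian estimates: since each step displaces by $O(\delta + \sqrt{\delta})$ and the drift contributes at most $B\delta$ per step, the product density remains comparable to that of a pinned Gaussian bridge of variance $\asymp T$, which is bounded below on $\boule{x^\star}{r_0}$ by a constant depending only on $T$, $B$, $\diam(\msc)$ and $r_0$. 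This gives $\Qker_{\vareps,\delta}^{n_\delta}(x,\cdot) \geq \epsilon_\msc\,\nu(\cdot)$ with $\nu$ the normalised restriction of $\Leb$ to $\boule{x^\star}{r_0}$ and $\epsilon_\msc>0$ independent of $\delta$.

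Finally, the minorisation makes $\Qker_{\vareps,\delta}^{n_\delta}$ a strict contraction in total variation, $\tvnormLigne{\updelta_{x_1}\Qker_{\vareps,\delta}^{n_\delta} - \updelta_{x_2}\Qker_{\vareps,\delta}^{n_\delta}} \leq 2(1-\epsilon_\msc)$; writing $k = m\,n_\delta + r$ and using the total-variation contractivity of $\Qker_{\vareps,\delta}$ I get $\tvnormLigne{\updelta_{x_1}\Qker_{\vareps,\delta}^{k} - \updelta_{x_2}\Qker_{\vareps,\delta}^{k}} \leq 2(1-\epsilon_\msc)^{\lfloor k/n_\delta\rfloor}$. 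Since $n_\delta = \ceil{T/\delta} \leq (T+\bdelta)/\delta$ for $\delta \leq \bdelta$, one has $\lfloor k/n_\delta\rfloor \geq k\delta/(T+\bdelta) - 1$, so the bound takes the claimed form with $\tilde\rho_\msc = (1-\epsilon_\msc)^{1/(T+\bdelta)}$ and $\tilde A_\msc = 2/(1-\epsilon_\msc)$. I expect the main obstacle to be the core Gaussian estimate of the third paragraph, namely making the confined $n_\delta$-step density lower bound genuinely uniform in $\delta$ as $\delta\to 0$; the boundedness argument and the final bookkeeping are routine by comparison.
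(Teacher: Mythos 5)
Your overall architecture — accumulate the noise over a block of $\asymp 1/\delta$ steps so that the contraction constant over a fixed time horizon is independent of $\delta$, then convert $\lfloor k/n_\delta\rfloor$ into the advertised $\tilde{\rho}_{\msc}^{k\delta}$ form — is exactly the right shape, and your final bookkeeping paragraph is fine. But the core estimate you flag as ``the main obstacle'' is not merely delicate: as stated, it is false, and this is a genuine gap. Your chained lower bound replaces $\Qker_{\vareps,\delta}^{n_\delta}(x,\cdot)$ by the law of the \emph{unprojected} Euler walk killed upon exiting $\msc$, so any minorisation constant $\epsilon_{\msc}$ you extract is at most the survival probability of that killed walk over time $T$. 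For a starting point $x \in \partial\msc$ (and the projected chain genuinely occupies the boundary: every step on which the projection activates lands the chain exactly on $\partial\msc$), the component of the walk along the inward normal is a random walk with step scale $\sqrt{2\delta}$ started at $0$, and by the standard gambler's-ruin/reflection-principle estimate its probability of remaining positive for $n_\delta = \lceil T/\delta\rceil$ steps is of order $\sqrt{\delta/T}$. Hence $\inf_{x\in\msc}\epsilon_{\msc}(x) = O(\sqrt{\delta})$, the Doeblin constant cannot be taken uniform in $\delta\in\ocintLigne{0,\bdelta}$, and the resulting rate degenerates to something like $\exp[-c\,k\,\delta^{3/2}]$ rather than $\tilde{\rho}_{\msc}^{k\delta}$ with $\tilde{\rho}_{\msc}$ independent of $\delta$. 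The difficulty cannot be patched by choosing a better tube: the total mass of \emph{all} confined paths from a boundary point is already $O(\sqrt{\delta})$, so no minorising measure supported on an interior ball can do better.

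The paper avoids this entirely by not using a minorisation at all. It couples two copies of the \emph{projected} chain started at $x_1, x_2 \in \msc$ using the reflection-type absorbing coupling of \cite[Section 3]{debortoli2020convergence}, and invokes \cite[Corollary 7-(b)]{debortoli2020convergence} to get, over each block of $\itgamma = \ceilLigne{1/\delta}$ steps, a meeting probability at least $\beta = 2\Phibf\defEnsLigne{-(1+\bdelta)(1+\Ltt_y+\alpha\Ltt/\vareps)\diam(\msc)}$, which depends only on the Lipschitz constant of the drift and $\diam(\msc)$, not on $\delta$. The key structural point — and the reason the coupling route succeeds where the minorisation route fails — is that $\Pi_{\msc}$ is non-expansive, so applying the projection to both coupled trajectories never increases their distance and never undoes a successful coalescence; the boundary is thus an asset (it confines both chains) rather than an obstacle (a killing barrier). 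The absorbing property of the coupling then gives $\tvnormLigne{\updelta_{x_1}\Qker_{\vareps,\delta}^k - \updelta_{x_2}\Qker_{\vareps,\delta}^k} \leq (1-\beta)^{\floor{k/\itgamma}}$, and the same $\floor{k/\itgamma}\geq k\delta/(1+\delta)-1$ bookkeeping you use closes the proof. If you want to keep a two-point argument without importing the coupling result wholesale, you should aim at lower-bounding the overlap of $\Qker_{\vareps,\delta}^{n_\delta}(x_1,\cdot)$ and $\Qker_{\vareps,\delta}^{n_\delta}(x_2,\cdot)$ directly rather than minorising both by a fixed measure.
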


\begin{proof}
    The proof is postponed to \Cref{prop:ergo_C_proj:proof}.
\end{proof}

In particular $\Qker_{\vareps, \delta}$ admits an invariant probability measure
$\pi_{\vareps, \delta}^\msc$.  The next proposition ensures that for small
enough step-size $\delta$ the invariant measures of \pnpula \ and P\pnpula \ are
close if the compact convex set $\msc$ has a large diameter.

\begin{proposition}
  \label{prop:disc_invariant_K}
  Assume \rref{assum:post}, 
  \tup{\Cref{assum:neural_net}($R$)} for some $R > 0$ and
  \rref{assum:one_sided}. In addition, assume that there exists $\tilde{\mtt}, c > 0$ such that for $\msc = \rset^d$ and for any $\vareps > 0$ and $x \in \rset^d$, $\langle b_\vareps(x), x \rangle \leq -\tilde{\mtt} \norm{x}^2 + c$. Let $\lambda >0$,
  $\vareps \in \ocint{0, \vareps_0}$ such that
  $2\lambda (\Ltt_y +  \Ltt /\vareps - \min(\mtt, 0)) \leq 1$.  Then there
  exist $\bar{A} \geq 0$ and $\eta, \bdelta > 0$ such that for any
  $\msc \subset \rset^d$ convex and compact with $0 \in \msc$ and
  $\cball{0}{R_{\msc} / 2} \subset \msc \subset \cball{0}{R_{\msc}}$ and
  $\delta \in \ocintLigne{0, \bdelta}$ we have
  \begin{equation}
    \tvnormLigne{\pi_{\vareps, \delta} - \pi_{\vareps, \delta}^\msc} \leq \bar{A} \exp[-\eta R_{\msc}] \eqsp ,   
  \end{equation}
  where $\pi_{\vareps, \delta}$ is the invariant measure of $\Rker_{\vareps, \delta}$ and
  $\pi_{\vareps, \delta}^\msc$ is the invariant measure of $\Qker_{\vareps, \delta}$.
\end{proposition}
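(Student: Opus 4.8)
The plan is to exploit that the two Markov kernels $\Rker_{\vareps, \delta}$ and $\Qker_{\vareps, \delta}$ coincide away from the boundary of $\msc$, and to convert this local agreement into a global bound on the invariant measures through the geometric ergodicity already established in \Cref{prop:ergo_C} and \Cref{prop:ergo_C_proj}. Observe first that the soft-projection term in $b_{\vareps}$ vanishes on $\msc$, so for any $x$ in the interior of $\msc$ whose one-step proposal $x + \delta b_{\vareps}(x) + \sqrt{2 \delta} z$ stays in $\msc$, the drift is identical for both schemes and the hard projection $\Pi_{\msc}$ acts as the identity; hence $(\Qker_{\vareps, \delta} - \Rker_{\vareps, \delta})(x, \cdot)$ is a signed measure carried by the proposals that exit $\msc$. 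Writing $\mu = \pi_{\vareps, \delta}^\msc$ and $\pi = \pi_{\vareps, \delta}$ and using the invariances $\mu \Qker_{\vareps, \delta} = \mu$ and $\pi \Rker_{\vareps, \delta} = \pi$, I would telescope
\begin{equation}
  \mu - \pi = \sum_{j \geq 0} \mu (\Qker_{\vareps, \delta} - \Rker_{\vareps, \delta}) \Rker_{\vareps, \delta}^j \eqsp ,
\end{equation}
the series converging in $\tvnorm{\cdot}$ thanks to the contraction \eqref{eq:contract_general} applied to the zero-mass signed measure $\mu (\Qker_{\vareps, \delta} - \Rker_{\vareps, \delta})$.

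The first key estimate is an exponential concentration of both invariant measures. The additional dissipativity hypothesis $\langle b_{\vareps}(x), x \rangle \leq -\tilde{\mtt} \norm{x}^2 + c$, valid on all of $\rset^d$, is precisely what is needed to build an exponential Lyapunov function $W(x) = \exp[\zeta \sqrt{1 + \norm{x}^2}]$ for a suitable $\zeta > 0$ and to derive a Foster--Lyapunov drift inequality $\Rker_{\vareps, \delta} W \leq \gamma W + \rmb \, \indi{K}$, and likewise for $\Qker_{\vareps, \delta}$, with $\gamma \in \ooint{0,1}$, $\rmb \geq 0$ and $K$ compact, all independent of $\delta \in \ocintLigne{0, \bdelta}$. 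Using $\cball{0}{R_{\msc}/2} \subset \msc \subset \cball{0}{R_{\msc}}$, this yields $\pi_{\vareps, \delta}(\msc^{\complementary}) \leq C \exp[-\eta R_{\msc}]$ and, more generally, that both $\mu$ and $\pi$ put mass at most $C \exp[-\eta R_{\msc}]$ on the boundary layer $\{x : \dist(x, \msc^{\complementary}) \leq r\}$ for any fixed $r > 0$.

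The second key estimate controls the per-step difference. For $x$ at distance at least $r$ from $\msc^{\complementary}$, a Gaussian tail bound gives $\tvnormLigne{(\Qker_{\vareps, \delta} - \Rker_{\vareps, \delta})(x, \cdot)} \leq 2 \, \mathbb{P}(x + \delta b_{\vareps}(x) + \sqrt{2 \delta} Z \notin \msc) \leq C \exp[-c r^2 / \delta]$, with $Z \sim \mathcal{N}(0, \Id)$, while on the boundary layer this quantity is trivially at most $2$ but is weighted by the exponentially small mass above. Combining these two estimates with the geometric contraction of $\Rker_{\vareps, \delta}$ and summing the telescoping series produces the announced bound of the form $\bar{A} \exp[-\eta R_{\msc}]$.

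The main obstacle is to make this bound uniform in the step-size $\delta$: estimating the series naively in total variation loses a factor $(1 - \rho_{1, \msc}^\delta)^{-1} = O(1/\delta)$ coming from the contraction rate $\rho_{1, \msc}^{k \delta}$, whereas the per-step perturbation $\mu (\Qker_{\vareps, \delta} - \Rker_{\vareps, \delta})$ only carries a factor of order $\delta^{1/2}$ (the boundary flux of a Gaussian step of standard deviation $\sqrt{2\delta}$). The resolution is that the discrepancy splits into two pieces of different nature: the genuinely exterior mass $\pi_{\vareps, \delta}(\msc^{\complementary})$, which is $\delta$-free and bounded directly by the exponential drift; and the boundary \emph{overshoot}, for which the two coupled proposals differ by only $O(\delta^{1/2})$ in position, so its contribution is controlled through the Wasserstein-$1$ contraction of \Cref{prop:ergo_C} (equivalently, a reflection coupling that re-synchronises the chains in $O(1)$ steps) rather than through the crude total-variation resolvent. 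Tracking these two contributions separately cancels the spurious $\delta^{-1/2}$ factor and yields the claimed $\delta$-uniform estimate $\tvnormLigne{\pi_{\vareps, \delta} - \pi_{\vareps, \delta}^\msc} \leq \bar{A} \exp[-\eta R_{\msc}]$.
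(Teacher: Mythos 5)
Your strategy --- the resolvent expansion $\pi_{\vareps,\delta}^{\msc}-\pi_{\vareps,\delta}=\sum_{j\geq 0}\pi_{\vareps,\delta}^{\msc}(\Qker_{\vareps,\delta}-\Rker_{\vareps,\delta})\Rker_{\vareps,\delta}^{j}$, exponential concentration of both invariant measures via an exponential Lyapunov function built from the dissipativity hypothesis, and a Gaussian boundary-flux estimate for the one-step difference --- is genuinely different from the paper's argument, and its ingredients up to the final summation are sound. But the step you yourself flag as the main obstacle is not actually repaired, and this is a real gap. By your own accounting the per-step perturbation $\tvnormLigne{\pi^{\msc}_{\vareps,\delta}(\Qker_{\vareps,\delta}-\Rker_{\vareps,\delta})}$ is of order $\delta^{1/2}\exp[-\eta R_{\msc}]$ while the resolvent contributes $(1-\rho_{1,\msc}^{\delta})^{-1}=O(1/\delta)$, and neither device you invoke cancels the resulting $\delta^{-1/2}$: a one-step Wasserstein-to-TV regularisation costs a factor $\delta^{-1/2}$ that exactly consumes the $O(\delta^{1/2})$ overshoot, and a reflection coupling of two chains at distance $h=O(\delta^{1/2})$ fails to merge in one step with probability of order $h/\delta^{1/2}=O(1)$, not $o(1)$. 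Either route leaves you with $O(\delta^{-1/2}\exp[-\eta R_{\msc}])$, which is not the claimed bound uniformly over $\delta\in\ocintLigne{0,\bdelta}$. To close the expansion you would need the parabolic smoothing estimate $\tvnormLigne{\nu_1\Rker_{\vareps,\delta}^{j}-\nu_2\Rker_{\vareps,\delta}^{j}}\leq C\,\wassersteinD[1](\nu_1,\nu_2)/\sqrt{j\delta}$, summed against the resulting $j^{-1/2}$ tail for $j\leq 1/\delta$ and against the geometric contraction beyond; this is the missing ingredient, and it is neither stated nor a consequence of \Cref{prop:ergo_C}.

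The paper's proof sidesteps the issue by never perturbing invariant measures: it runs the two chains synchronously from the same point and observes that they coincide exactly until the first exit from $\cball{0}{R_{\msc}/2}$ (where both the soft and the hard projection are inactive), bounds $\proba{\sup_{k\leq n}\normLigne{X_k}\geq R}\leq V(x)A^{n\delta}\exp[-R]$ by a supermartingale/Doob argument from the drift condition, and balances this against the geometric ergodicity $\tilde{B}\tilde{\rho}^{k\delta}$ of each chain towards its own invariant measure by choosing $k\delta\asymp R_{\msc}/\log A$. Because the exit probability is expressed in terms of the elapsed time $k\delta$ rather than the number of steps $k$, no $\delta^{-1}$ resolvent appears and uniformity in $\delta$ is automatic. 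If you wish to keep your decomposition, the cleanest fix is to import this idea and replace the step-by-step summation by the exact path coincidence up to the exit time.
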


\begin{proof}
  The proof is postponed to \Cref{prop:disc_invariant_K:proof}.
\end{proof}

It is worth mentioning at this point that in our experiments, see \Cref{sec:experimental}, the probability of the iterates $(X_n)_{n \in \nset}$ leaving $\msc$ with \pnpula \ or with P\pnpula \ is so low that the projection constraint is not activated. As a result, if implemented with the same step-size both algorithms produce the same results. We do not suggest completely removing the constraints as this is important to theoretically guarantee the geometric ergodicity of the algorithms.

Regarding the choice of the step-size, we observe that the bound $\bdelta = (1/3)(\Ltt_y + \Ltt /\vareps + 1 / \lambda)^{-1}$ used in \pnpula \ is
conservative and our experiments suggest that \pnpula \ is stable for larger step-sizes.

	\section{Experimental study}\label{sec:experimental}
	This section illustrates the behaviour of \pnpula \ and P\pnpula \ with two classical imaging inverse problems: non-blind image deblurring  and inpainting. For these two problems, we first analyse in detail the convergence  of the Markov chain generated by \pnpula \ for different test images. This is then followed by a comparison between the MMSE Bayesian point estimator, as calculated by using \pnpula \ and P\pnpula \, and the MAP estimator provided by the recent \pnpsgd~method \cite{laumont2021maximum}. We refer the reader to \cite{laumont2021maximum} for comparisons with PnP-ADMM \cite{ryu2019plug}. To simplify comparisons, for all experiments and algorithms, the operator $D_{\vareps}$ is chosen as the pretrained denoising neural network introduced in \cite{ryu2019plug}, for which $(D_{\vareps}-\Id)$ is $\Ltt$-Lipschitz with $\Ltt<1$.

For the deblurring experiments,  the observation model takes the form 
\begin{equation}\label{eq:inverse_problem} 
y = \rmA x + n \eqsp ,
\end{equation}
where $x \in \mathbb{R}^d$ is the unknown original image, $y \in \mathbb{R}^\dimY$ the observed image, $n$ is a realization of a Gaussian i.i.d. centered noise with variance $\sigma^2 \Id$ (with $\sigma^2 = (1/255)^2$), and $\rmA$ is a $9 \times 9$ box blur operator. The log-likelihood for this case writes $\log p(y|x) = -\|\rmA x-y \|^2/(2\sigma^2)$. 

In the inpainting experiments, we seek to recover $x \in \mathbb{R}^d$  from  $y = \rmA x$ where the matrix $\rmA$ is a $m\times d$ matrix containing $m$ randomly selected rows of the $d\times d$ identity matrix. We focus on a case where $80\%$ of the image pixels are hidden and the observed pixels are measured without any noise. Because the posterior density for $ x|y$ is degenerate, we run PnP-ULA on the posterior $\tilde x|y$ where $\tilde x := \rmP x \in \R^n$ denotes the vector of
$n=d-m$ unobserved pixels of $x$, and map samples to the pixel space by using the affine mapping
$f_y:\R^n \to \R^d$ defined for any $\tilde x  \in \rset^n$ and $y \in \rset^m$ by
\[f_y(\tilde x) = \rmP^\top \tilde x + \rmA^\top y. \] Note that we can write the log-posterior $\tilde{U}_\vareps (\tilde x)= -\log p_\vareps(\tilde{x}|y)$ on the set $\rset^n$ of hidden pixels in terms of $f_y$ and the log-prior $U_\vareps(x) = -\log p_\vareps(x)$ on the set $\rset^d$:
$$\tilde{U}_\vareps = U_\vareps \circ f_y .$$
Using the chain rule and Tweedie's formula, we have that for any $x \in \rset^d$ and $y \in \rset^m$
\begin{equation}
  b_\vareps(\tilde x)  = -\nabla \tilde U_\vareps (\tilde x) = -\rmP \nabla U_\vareps (f_y (\tilde x)) = (1/\vareps) \rmP  (D_\vareps - \Id)( f_y(\tilde x)) \eqsp . 
\end{equation}
Since $\rmP$ and $f_y$ are 1-Lipschitz, $b_\vareps = -\nabla \tilde{U}_{\vareps}$ is also Lipschitz with constant $ \tilde{\Ltt} \leq (\Ltt/\vareps)$. %

\Cref{fig:original_images} shows the six test images of size  $256 \times 256$ pixels that were used in the experiments. We have selected these six images for their diversity in composition, content and level of detail (some images are predominantly composed of piece-wise constant regions, whereas others are rich in complex textures). This diversity will highlight strengths and limitations of the chosen denoiser as an image prior. \Cref{fig:blurred_images} depicts the corresponding blurred images and  \Cref{fig:toipaint_images} the images to inpaint.

\begin{figure}[h]
    \centering
    \begin{tabular}{ccc}
    \includegraphics[width=0.25\textwidth]{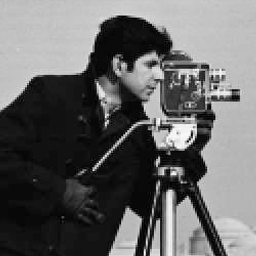} &
    \includegraphics[width=0.25\textwidth]{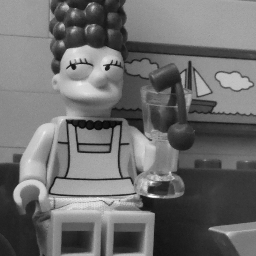} &
    \includegraphics[width=0.25\textwidth]{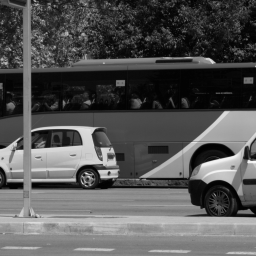}  \\
    \texttt{Cameraman.} & \texttt{Simpson.} & \texttt{Traffic.} \\
    \includegraphics[width=0.25\textwidth]{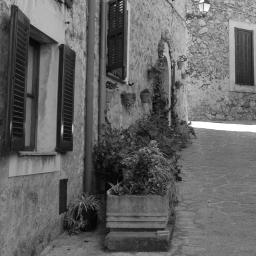} &
    \includegraphics[width=0.25\textwidth]{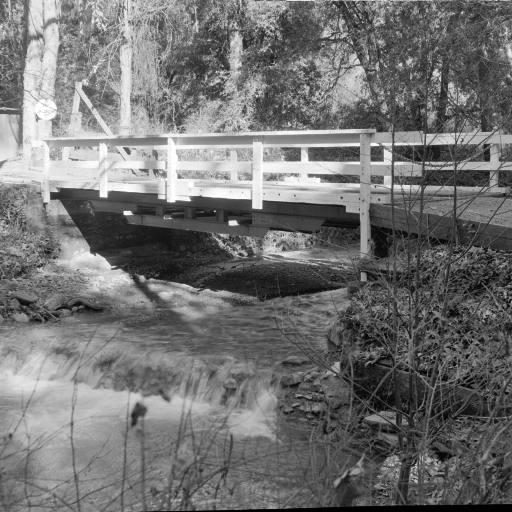} &
    \includegraphics[width=0.25\textwidth]{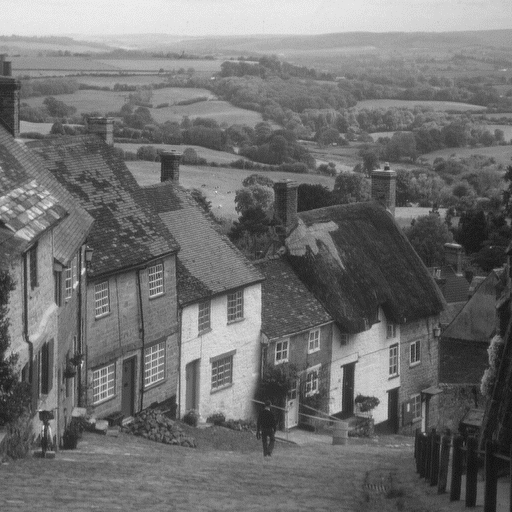} \\
    \texttt{Alley.} & \texttt{Bridge.} & \texttt{Goldhill.}
    \end{tabular}
	\caption{Original images used for the deblurring and inpainting experiments.}
	\label{fig:original_images}
\end{figure}

\begin{figure}[h]
    \centering
    \begin{tabular}{ccc}
        \includegraphics[width=0.25\textwidth]{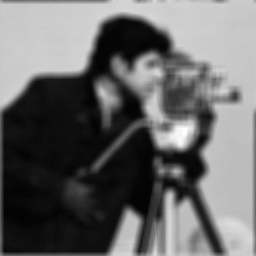} &
        \includegraphics[width=0.25\textwidth]{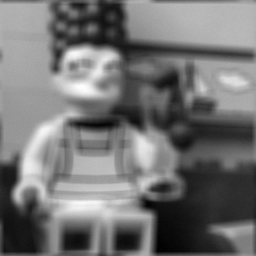} &
        \includegraphics[width=0.25\textwidth]{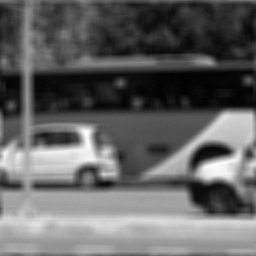}
        \\
        \begin{small}PSNR=20.30/SSIM=0.70\end{small} & \begin{small}PSNR=22.44/SSIM=0.66\end{small} & \begin{small}PSNR=20.34/SSIM=0.49\end{small}
         \\
        \includegraphics[width=0.25\textwidth]{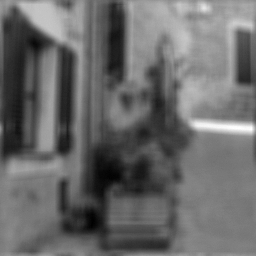} &  
        \includegraphics[width=0.25\textwidth]{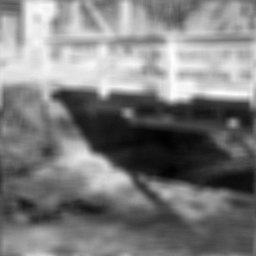} &
        \includegraphics[width=0.25\textwidth]{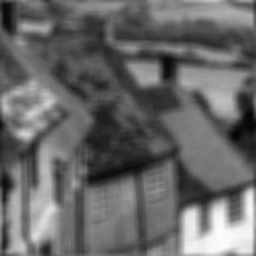}
        \\
        \begin{small}PSNR=22.64/SSIM=0.46\end{small} & \begin{small}PSNR=21.84/SSIM=0.49\end{small} & \begin{small}PSNR=22.61/SSIM=0.45\end{small} 
    \end{tabular}
    \caption{Images of \Cref{fig:original_images}, blurred using a $9 \times 9$-box-filter operator and corrupted by an additive Gaussian white noise with standard deviation $\sigma=1/255$.}
	\label{fig:blurred_images}
\end{figure}

\begin{figure}[h]
    \centering
    \begin{tabular}{ccc}
        \includegraphics[width=0.25\textwidth]{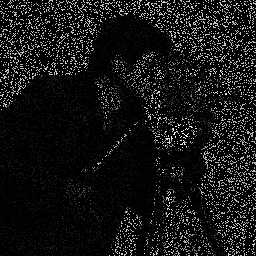} &
        \includegraphics[width=0.25\textwidth]{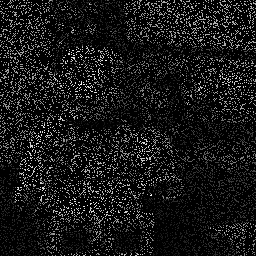} &
        \includegraphics[width=0.25\textwidth]{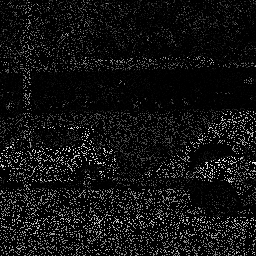} 
        \\
        \begin{small}PSNR=6.69/SSIM=0.11\end{small} & \begin{small}PSNR=7.43/SSIM=0.04\end{small} & \begin{small}PSNR=8.35/SSIM=0.09 \end{small}
        \\
        \includegraphics[width=0.25\textwidth]{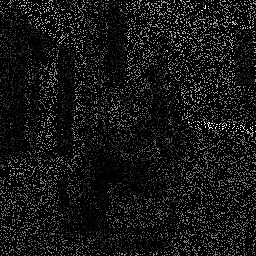} &  
        \includegraphics[width=0.25\textwidth]{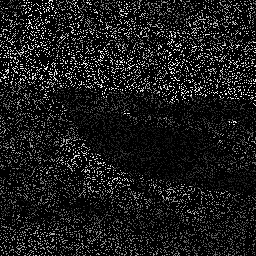} &
        \includegraphics[width=0.25\textwidth]{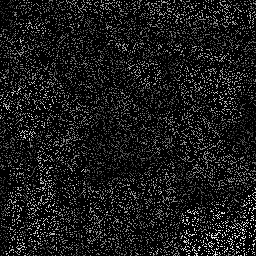}
        \\
        \begin{small}PSNR=8.27/SSIM=0.004\end{small} & \begin{small}PSNR=5.71/SSIM=0.004\end{small} & \begin{small}PSNR=6.61/SSIM=0.03\end{small}
    \end{tabular}
    \caption{Images of \Cref{fig:original_images}, with 80\% missing pixels.}
	\label{fig:toipaint_images}
\end{figure}

\subsection{Implementation guidelines and parameter setting}
In the following, we provide some simple and robust rules in order to set the parameters of the different algorithms, in particular the discretization step-size $\delta$ and the tail regularization parameter $\lambda$.

\paragraph{Choice of the denoiser} The theory presented in \Cref{sec:theory} requires that $D_\vareps$ satisfies \Cref{assum:neural_net}($R$). As default choice, we recommend using a pretrained denoising neural network
such as the one described in \cite{ryu2019plug}. The Lipschitz constant of the
network is controlled during training by using spectral normalization and therefore
the first condition of \Cref{assum:neural_net}($R$) holds. Moreover, the loss function used to train the
network is given by $\ell_\vareps$ as introduced in
\Cref{sec:convergence_pnpula}. Therefore, under the conditions of
\Cref{prop:fun_res}, we get that the second condition of
\Cref{assum:neural_net}($R$) holds.%

\paragraph{Step-size $\delta$} The parameter $\delta$ controls the asymptotic
accuracy of PnP-ULA and PPnP-ULA, as well as the speed of convergence to
stationarity. This leads to the following bias-variance trade-off. For large
values of $\delta$, the Markov chain has low auto-correlation and converges
quickly to its stationary regime. Consequently, the Monte Carlo estimates
computed from the chain exhibit low asymptotic variance, at the expense of some
asymptotic bias. On the contrary, small values of $\delta$ produce a Markov
chain that explores the parameter space less efficiently, but more
accurately. As a result, the asymptotic bias is smaller, but the variance is
larger. In the context of inverse problems that are high-dimensional and
ill-posed, properly exploring the solution space can take a large number of
iterations. For this reason, we recommend using large values of $\delta$, at the
expense of some bias. In addition, in \pnpula, $\delta$ is also subject to a
numerical stability constraint related to the inverse of the Lipschitz constant
of $b_\vareps(x) = \nabla \log p_{\vareps}(x|y)$; namely, we require
$\delta<(1/3)\operatorname{Lip}(b_\vareps)^{-1}$ where 
\[
\operatorname{Lip}(b_\vareps) = \begin{cases}
\alpha \Ltt/\vareps + 1/\lambda & \text{for the inpainting problem} \\
\alpha \Ltt /\vareps + \Ltt_y + 1/\lambda& \text{otherwise}
\end{cases}
\]
 where ${\Ltt}$
and $\Ltt_y$ are respectively the Lipschitz constant of the denoiser residual
$(D_{\vareps} -\Id)$ and the Lipschitz constant of the log-likelihood
gradient. In our experiments, $\Ltt=1$ and
$\Ltt_y = \|\rmA^\top \rmA \| / \sigma^2$, so we choose $\delta$ just below the
upper bound
$\delta_{th} = 1/3(\operatorname{Lip}(b_\vareps))^{-1}$
where
$\rmA^\top$ is the adjoint of $\rmA$. For PPnP-ULA, we set
$\delta <(\Ltt /\vareps + \Ltt_y)^{-1}$ (resp. $\delta <(\Ltt/\vareps)^{-1}$ for inpainting)  to prevent excessive bias.

\paragraph{Parameter $\lambda$} The parameter $\lambda$ controls the tail behaviour of the target density. As previously explained, it must be set so that the tails of the target density decay sufficiently fast to ensure convergence at a geometric rate, a key property for guaranteeing that the Monte Carlo estimates computed from the chain are consistent and subject to a Central Limit Theorem with the standard $\mathcal{O}(\sqrt{k})$ rate. More precisely, we require $\lambda \in (0,1/2(\Ltt/\vareps + 2\Ltt_y ))$. Within this admissible range, if $\lambda$ is too small this limits the maximal $\delta$ and leads to a slow Markov chain. For this reason, we recommend setting $\lambda$ as large as possible below $(2\Ltt/\vareps + 4\Ltt_y )^{-1}$. %

\paragraph{Other parameters} The compact set $\msc$ is defined as $\msc = \ccint{-1,2}^d$, even if in practice no samples where generated outside of $\msc$ in all our experiments, which suggests that the tail decay conditions hold without explicitly enforcing them.
In all our experiments, we set the noise level of the denoiser $D_{\vareps}$ to $\vareps= (5/255)^2$. The initialization $X_0$ can be set to a random vector. In our experiments (where $m=d$), we chose $X_0 = y$ in order to reduce the number of burn-in iterations. For $m \neq d$ we could use $X_0 = \rmA^{\top}  y$ instead. Concerning the regularization parameter $\alpha$, by default we set $\alpha = 1$, but in some cases it is possible to marginally improve the results by fine tuning it.
All algorithms are implemented using Python and the PyTorch library, and run on an Intel Xeon CPU E5-2609 server with a Nvidia Titan XP graphic card or on Idris' Jean-Zay servers featuring Intel Cascade Lake 6248 CPUs with a single Nvidia Tesla V100 SXM2 GPU. Reported running times correspond to the Xeon + Titan XP configuration.

\subsection{Convergence analysis of PnP-ULA in non-blind image deblurring and inpainting} 
\label{sec:cv_study_deblurring}

When using a sampling algorithm such as \pnpula\ on a new problem, it is essential to check that the  state space is correctly explored. 
In order to provide a thorough convergence study, we first run the algorithm for $25\times 10^6$ iterations. We use  a burn-in period of $2.5\times 10^6$ iterations, and consider only the samples computed after this burn-in period to study the  Markov chain in close-to-stationary regime. In section~\ref{sec:deblurring_inpainting}, we will see that much less iterations are required if the goal is only to compute point estimators with \pnpula.  %
For simplicity, the algorithm is always initialized with the observation $y$ in our experiments with \pnpula \ (for inpainting, this means that unknown pixels are initialized to the value $0$).  

There is no fully comprehensive way to empirically characterise the convergence properties of a high-dimensional Markov chain, as different statistics computed from the same chain align differently with the eigenfunctions of the Markov kernel and hence exhibit different convergence speeds. In problems of small dimension, we would calculate and analyse the $d$-dimensional multivariate autocorrelation function (ACF) of the Markov chain, but this is not feasible in imaging problems. In problems of moderate dimension, one could characterise the range of convergence speeds by first estimating the posterior covariance matrix (which, for $256\times 256$ images, would be a $256^2\times 256^2$ matrix) and then performing a principal component analysis on this matrix to identify the directions with smallest and largest uncertainty, as these would provide a good indication of the subspaces where the chain converges the fastest and the slowest. However, computing the posterior covariance matrix is also not possible in imaging problems because of the dimensionality involved. Here we focus on approximations of the posterior covariance which make sense for the particular inverse problem we study. More precisely, we use the  diagonalization basis of the inverse operator, {\em i.e.} the Fourier basis for the deblurring experiments, and the basis formed by the unknown pixels for the inpainting experiments. Under the assumption that the posterior covariance is mostly determined by the likelihood, this strategy allows broadly identifying the linear statistics that converge fastest and slowest, without requiring the estimation and manipulation of prohibitively large matrices.

\begin{figure}[h!]
	\centering
	\begin{tabular}{ccc}
	   
		\includegraphics[width=0.28\textwidth]{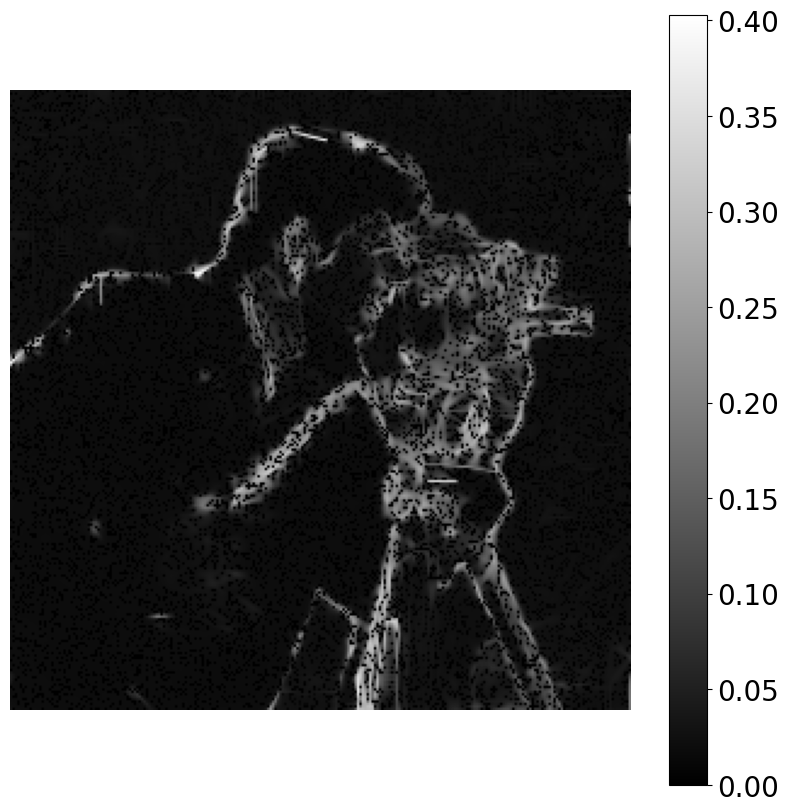} &
		
		\includegraphics[width=0.28\textwidth]{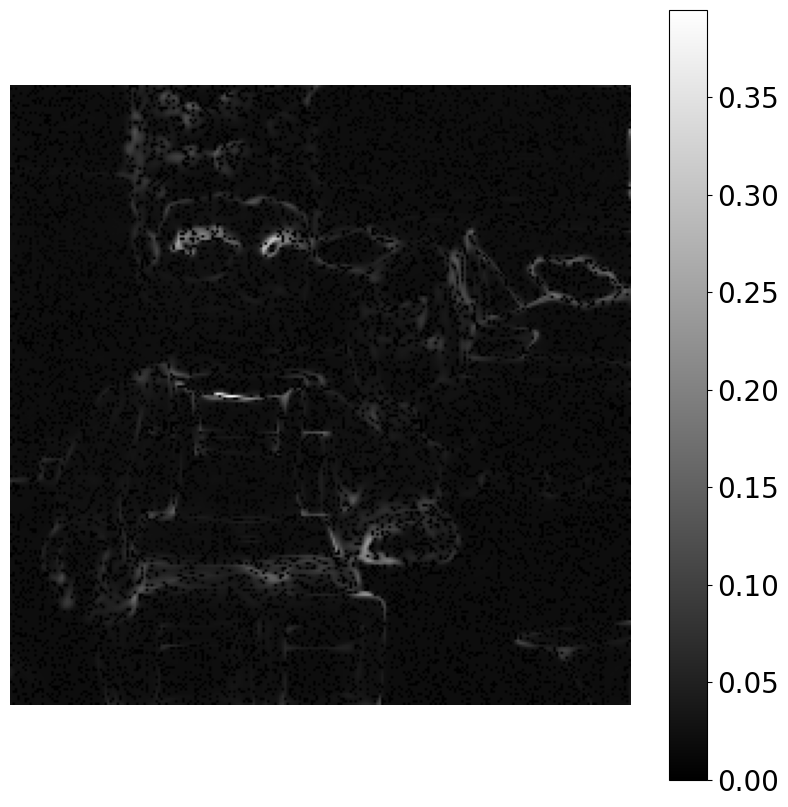} &
	
		\includegraphics[width=0.28\textwidth]{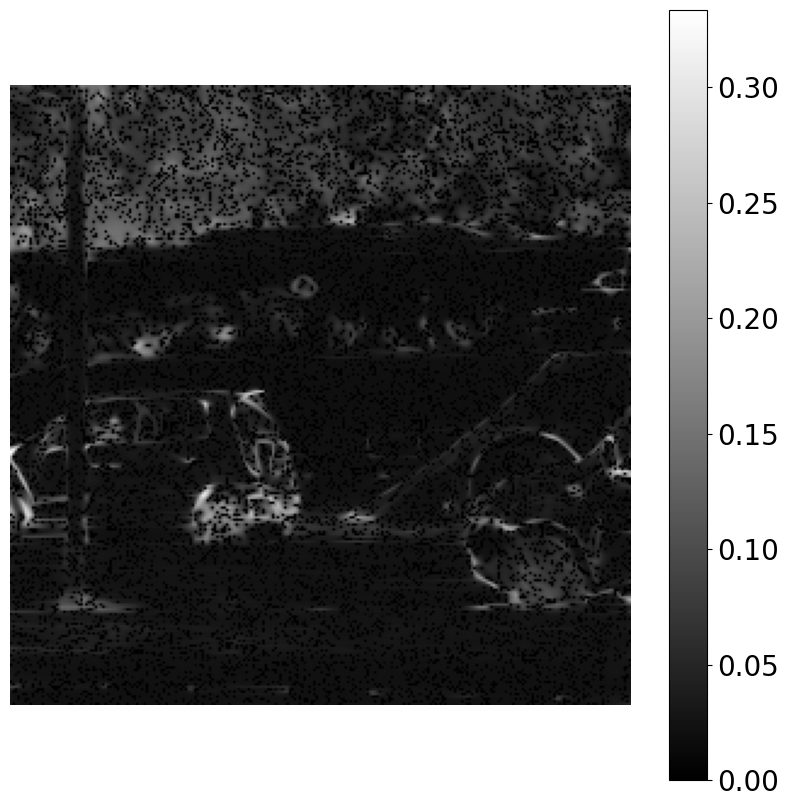} 
		
	    \\
	    \includegraphics[width=0.28\textwidth]{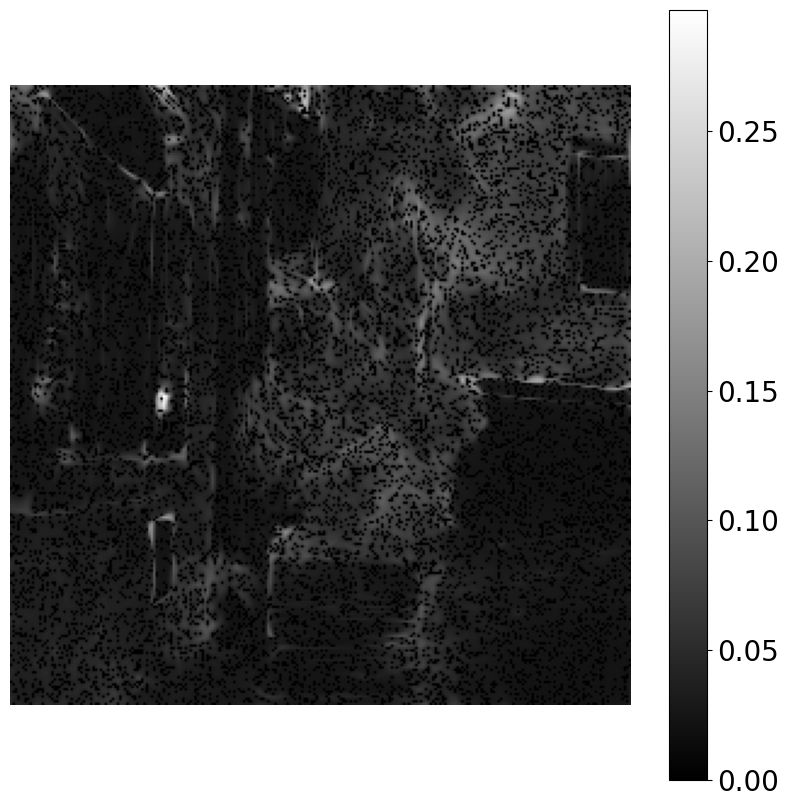} &
		
		\includegraphics[width=0.28\textwidth]{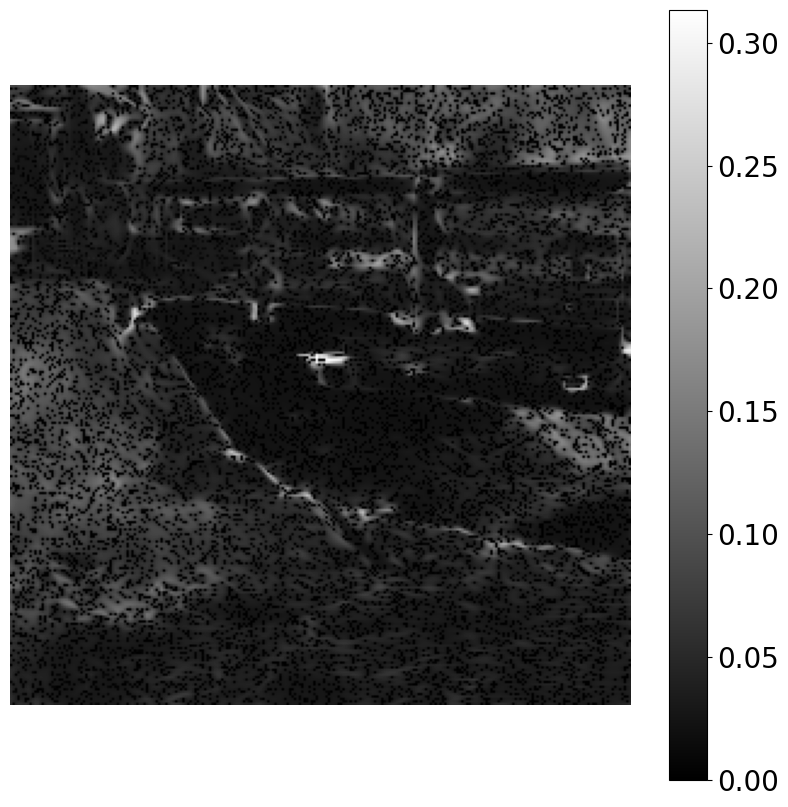} &

		\includegraphics[width=0.28\textwidth]{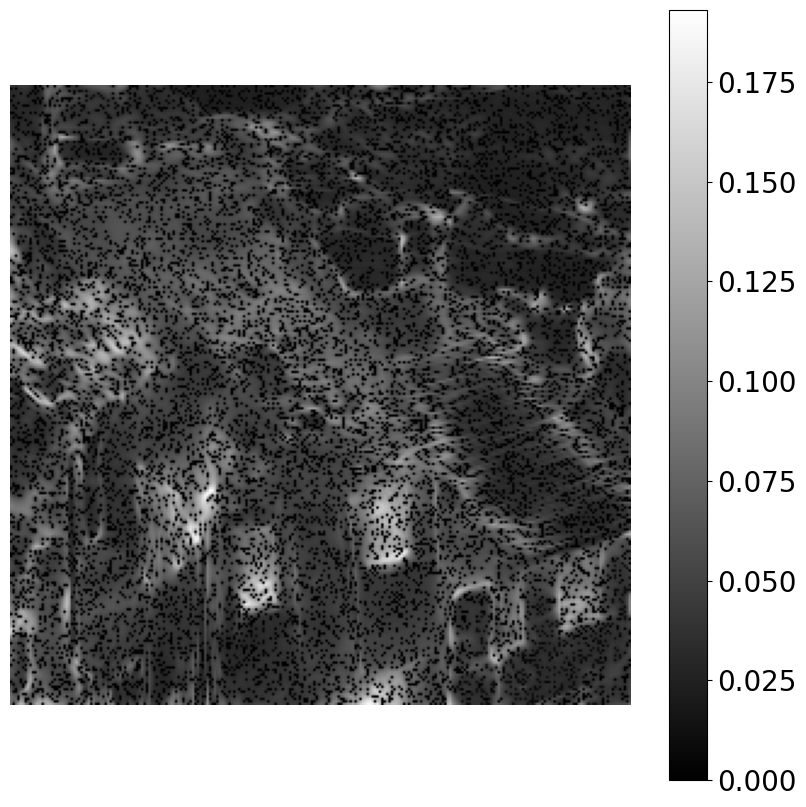}

	\end{tabular}
	\caption{Marginal posterior standard deviation of the unobserved pixels for the inpainting problem. Uncertainty is located around edges and in textured areas.}
	\label{fig:inpainting_std}
\end{figure}

\paragraph{Inpainting}
We first focus on the inpainting problem. 
\Cref{fig:inpainting_std} shows a map of the pixel-wise marginal standard deviations, for all images. We observe that pixels in homogeneous regions have low uncertainty, while pixels on textured regions, edges, or complex structures (a reflection on the window shutter in the Alley image for instance) are the most uncertain.

\begin{figure}
    \centering
    \begin{tabular}{ccc}
    
    \includegraphics[width = 0.3 \textwidth]{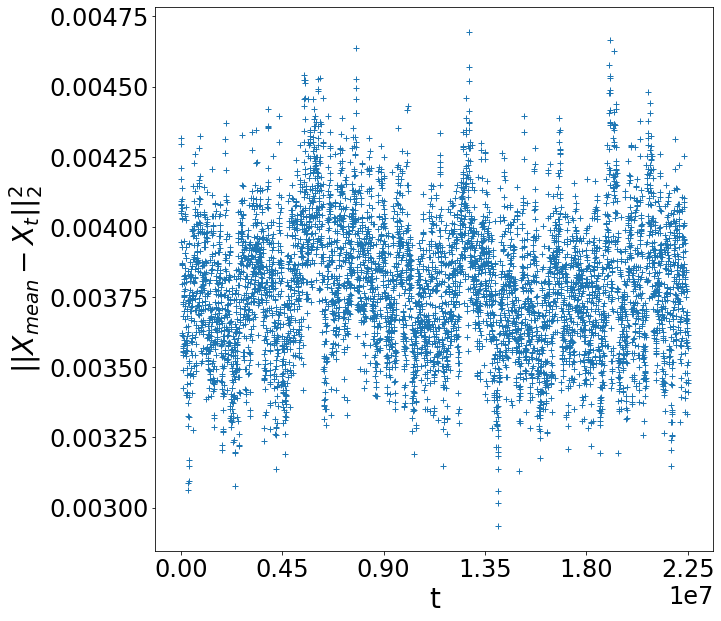} 
    &
    \includegraphics[width = 0.3 \textwidth]{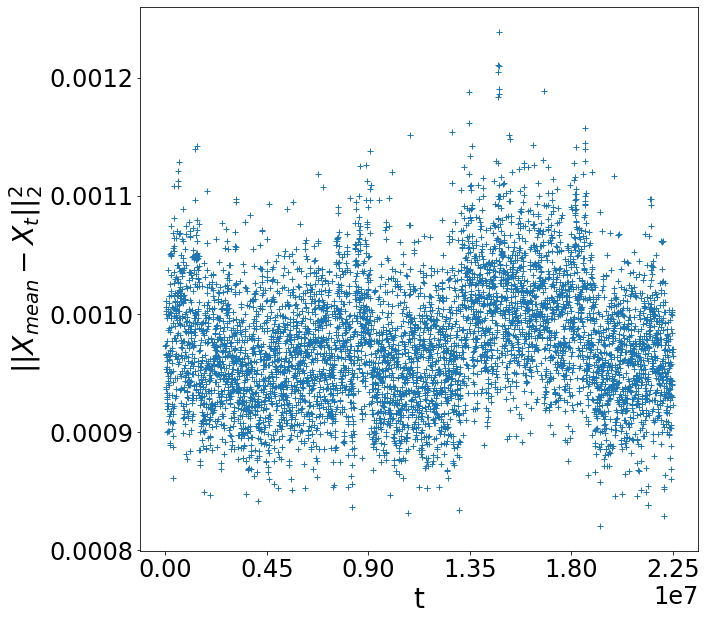} &
    \includegraphics[width = 0.3 \textwidth]{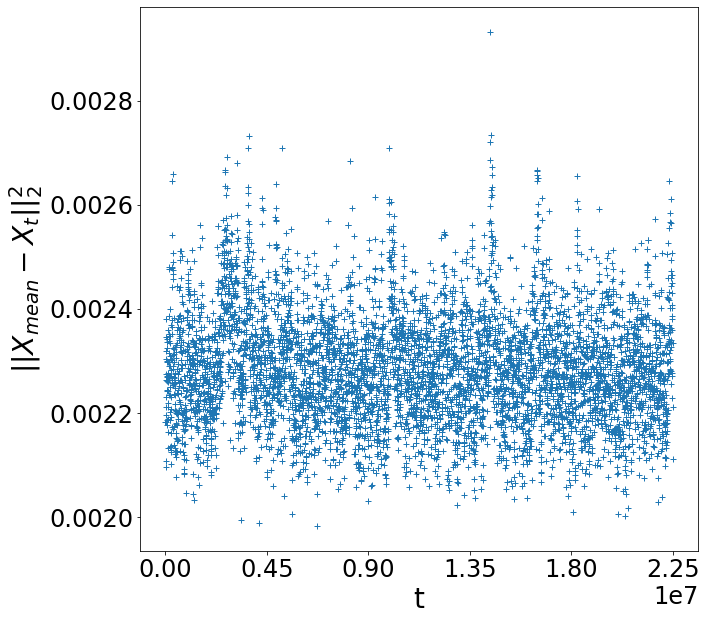}
    \\
    \texttt{Cameraman.} & \texttt{Simpson.} & \texttt{Traffic.}
    \\
     \includegraphics[width = 0.3 \textwidth]{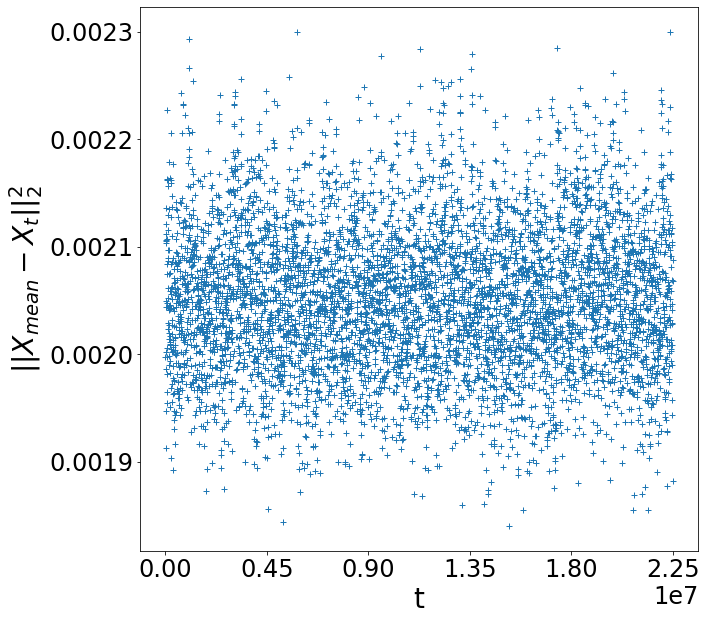} &  
    \includegraphics[width = 0.3 \textwidth]{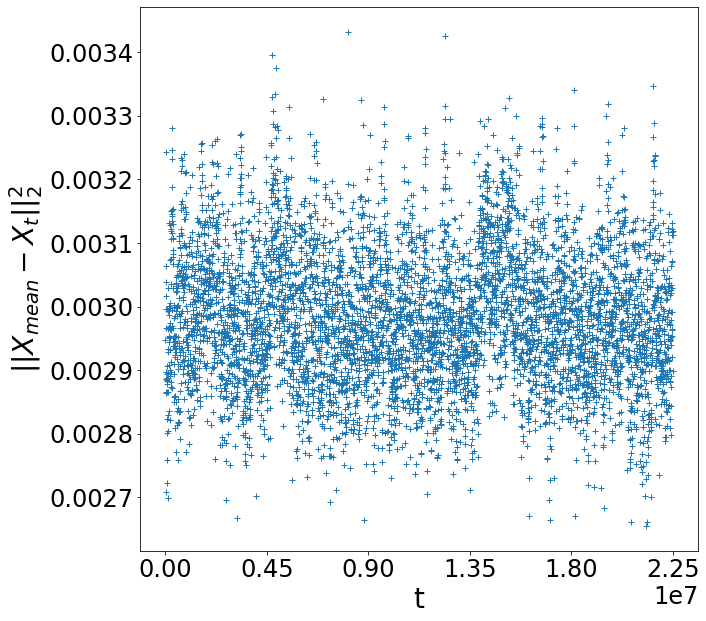} &
    \includegraphics[width = 0.3 \textwidth]{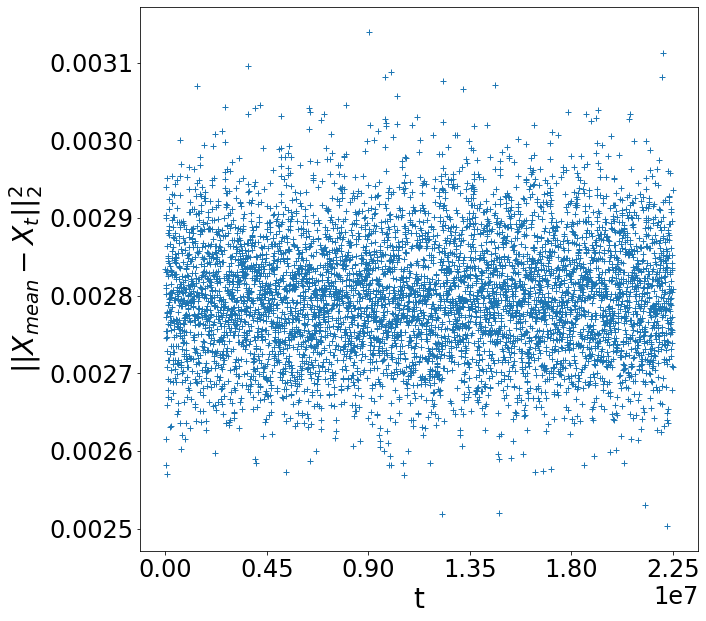}  
    \\
    \texttt{Alley.} & \texttt{Bridge.} & \texttt{Goldhill.}

    \end{tabular}

    \caption{Evolution of the $L_2$ distance between the final MMSE estimate and the samples generated by \pnpula \  for the inpainting problem after the burn-in phase. Samples randomly oscillate around the MMSE. It means that they are uncorrelated. For the images \texttt{Cameraman}, \texttt{Simpson} or \texttt{Bridge}, we note a change of range for the $L_2$ distance. It could be interpreted as a mode switching as our posterior is likely not log-concave.}
    \label{fig:inpainting_l2_distance}
\end{figure}

For the same experiments, \Cref{fig:inpainting_l2_distance} shows the Euclidean distance between the final MMSE estimate (computed using all samples) and the  samples of the chain, every 2500 samples (after the burn-in period, and hence in what is considered to be a close-to-stationary regime). Fluctuations around the posterior mean and the absence of temporal structure in the plots of \texttt{Alley} or \texttt{Goldhill} are a first indication that the chain explores the solution space with ease. However, in some other cases such as the \texttt{Simpson} image, we observe meta-stability, where the chain stays in a region of the space for millions of iterations and then jumps to a different region, again for millions of iterations. This is one of the drawbacks of operating with a posterior distribution that is not log-concave and that may exhibit  several modes.

\begin{figure}
    \centering
    \begin{tabular}{ccc}
        \includegraphics[width=0.3\textwidth]{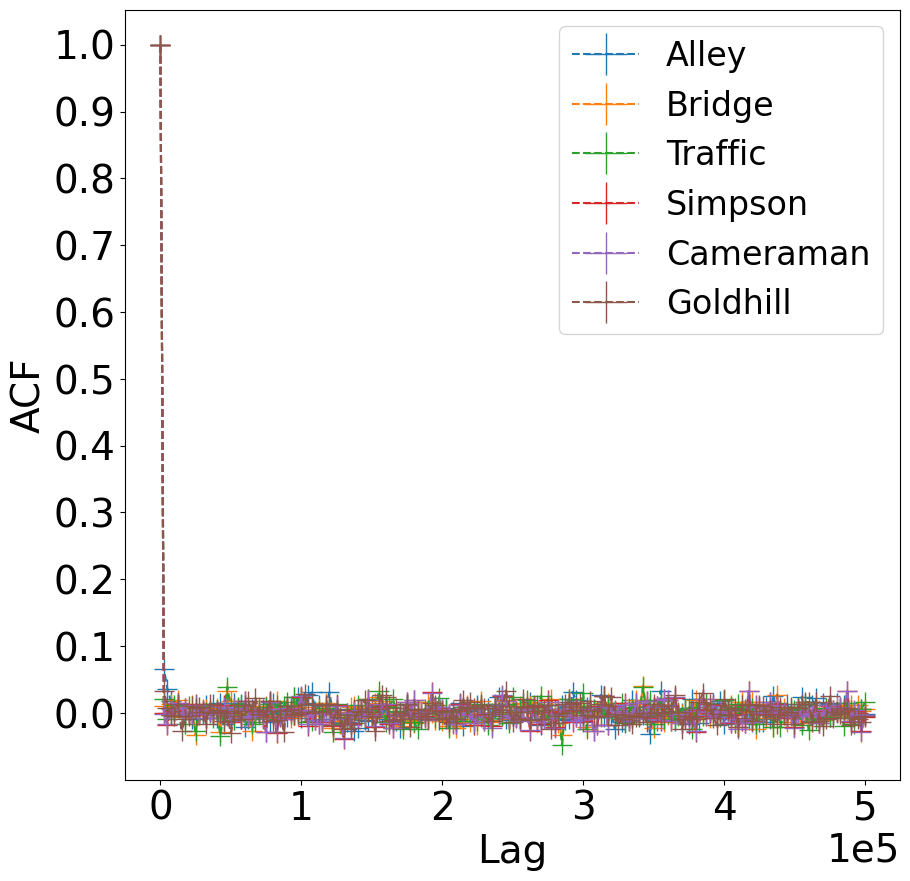}
        &
       \includegraphics[width=0.3\textwidth]{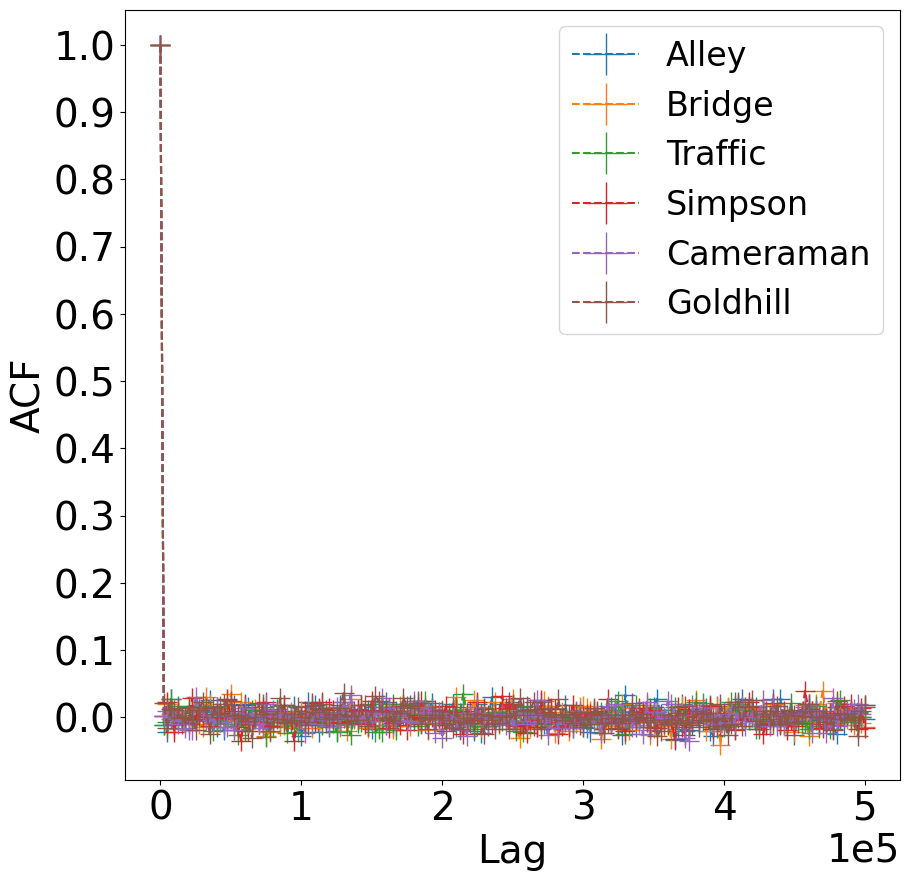}
        &
        \includegraphics[width=0.3\textwidth]{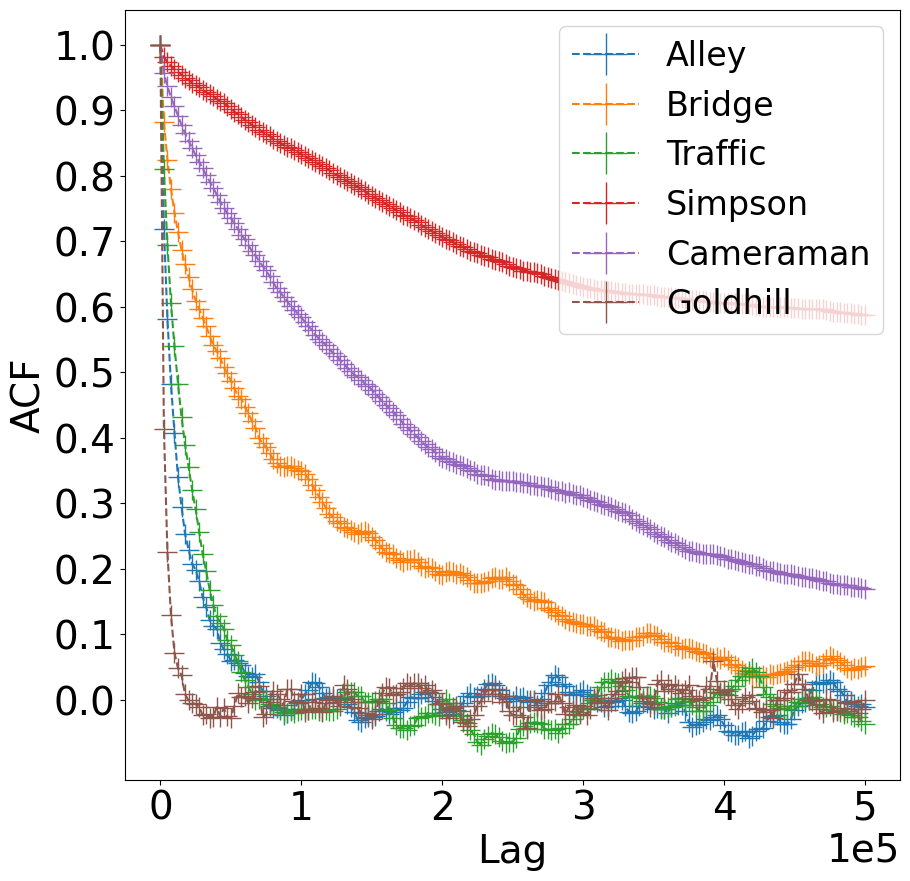}\\
        Fastest direction& Median direction& Slowest direction
    \end{tabular}
    \caption{ACF for the inpainting problem. The ACF are shown for lags up to 5e5 for all images in the pixel domain. After 5e5 iterations, sample pixels are nearly uncorrelated in all spatial directions for the images \texttt{Traffic}, \texttt{Alley}, \texttt{Bridge} and \texttt{Goldhill}. For the images \texttt{Cameraman} and \texttt{Simpson}, in the slowest direction, samples need more iterations to become uncorrelated.}
    \label{fig:inpainting_acf_pixels}
\end{figure}

Lastly, Figure~\ref{fig:inpainting_acf_pixels} displays the sample ACFs of the fastest and slowest converging statistics associated with the inpainting experiments (as estimated by identifying, for each image, the unknown pixels with lowest and highest uncertainty). These ACF plots measure how fast samples become uncorrelated.  A fast decay of the ACF is associated with good Markov chain mixing, which in turn implies accurate Monte Carlo estimates. On the contrary, a slow decay of the ACF indicates that the Markov chain is moving slowly, which leads to Monte Carlo estimates with high variance. As mentioned previously, because computing and visualising a multivariate ACF is difficult, here we show the ACF of the chain along the slowest and the fastest directions in the spatial domain (for completeness, we also show the ACF for a pixel with median uncertainty). We see that independence is reached very fast in the subspaces of low or median uncertainty, and is much slower for the few very uncertain pixels.

\begin{figure}
    \centering
   \begin{tabular}{cccc}
        \includegraphics[width=0.25\textwidth]{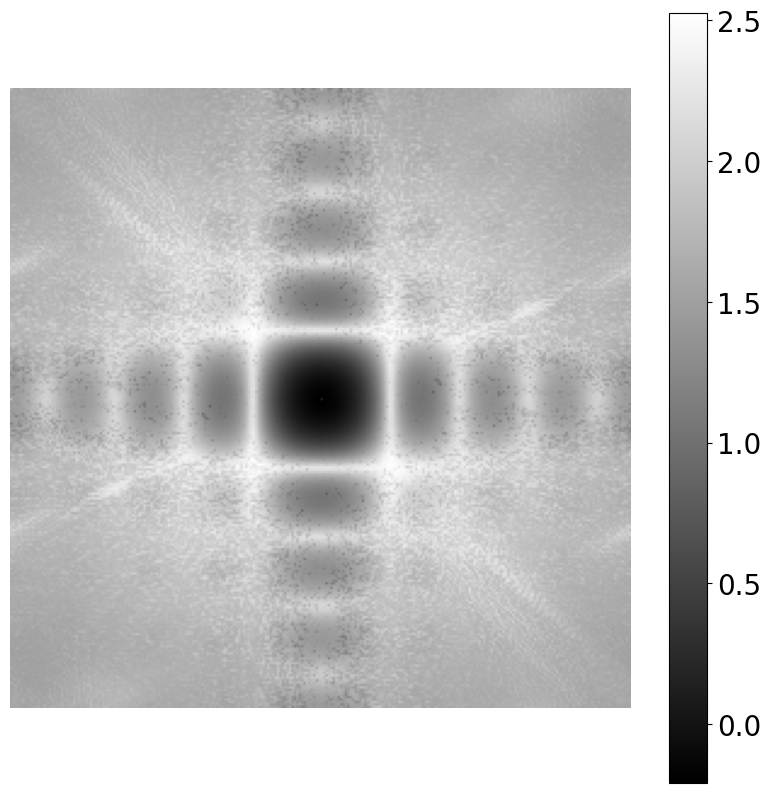} &
        \includegraphics[width=0.25\textwidth]{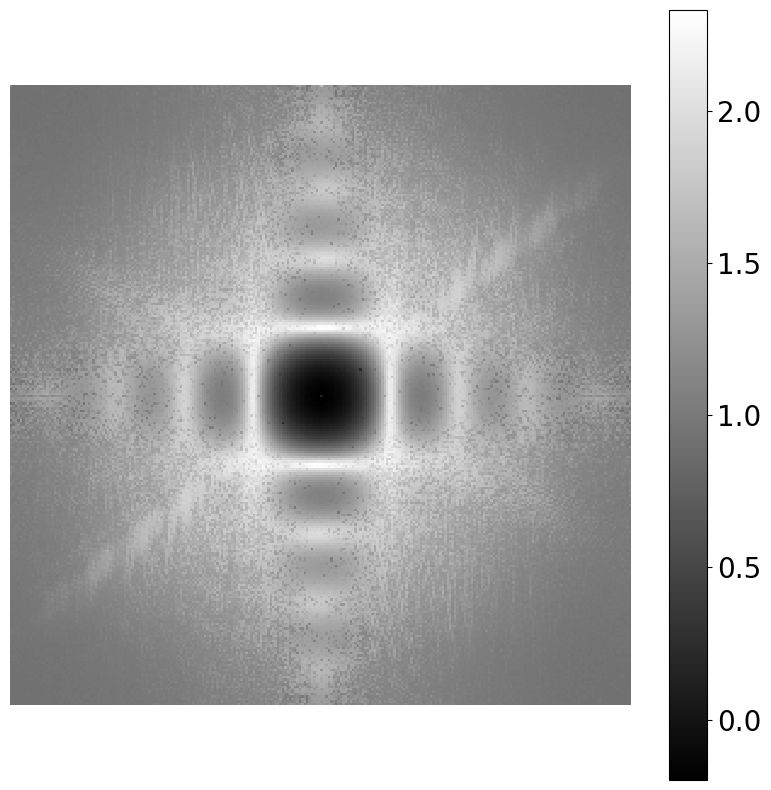} &
        \includegraphics[width=0.25\textwidth]{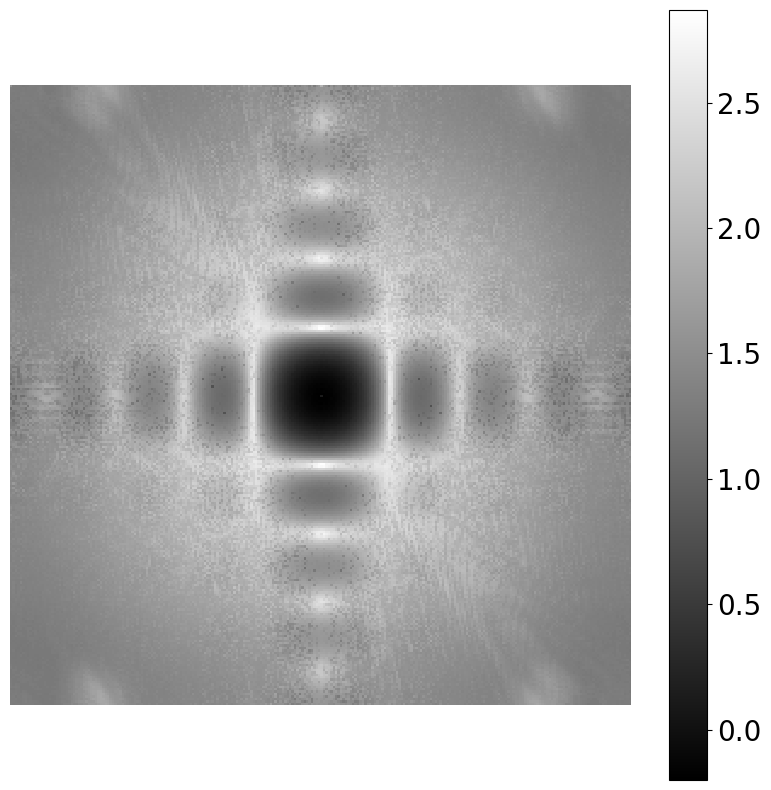}  &
        \multirow{2}{*}{\makecell{\includegraphics[width=0.13\textwidth]{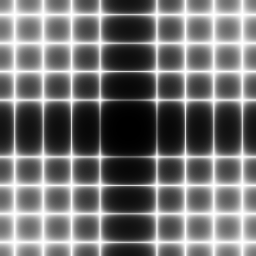} \\
        \scriptsize{Inverse Fourier} \\ \scriptsize{transform of}\\ \scriptsize{the blur kernel.}\\}}
        \\
        \includegraphics[width=0.25\textwidth]{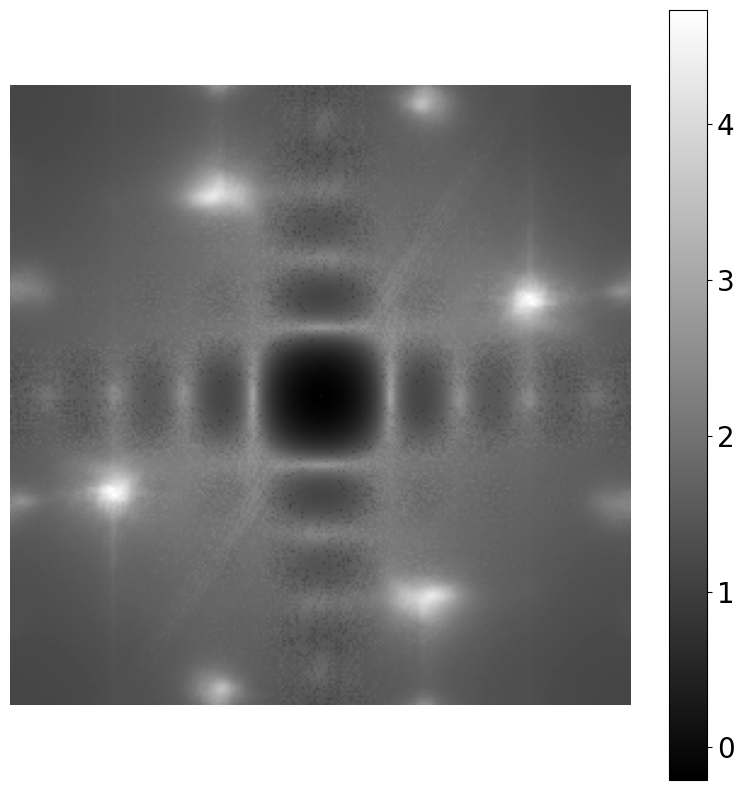} &
        \includegraphics[width=0.25\textwidth]{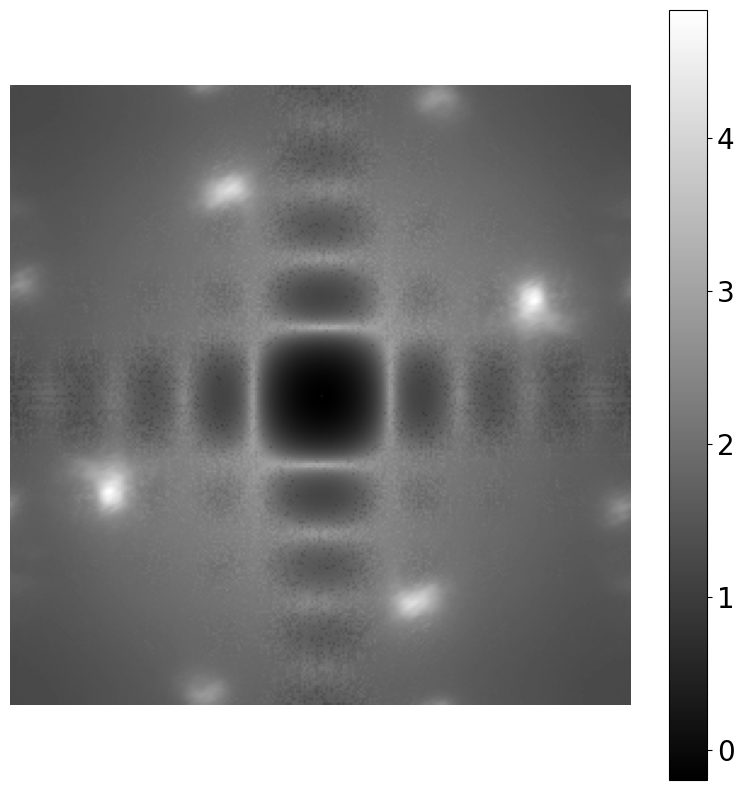} &
        \includegraphics[width=0.25\textwidth]{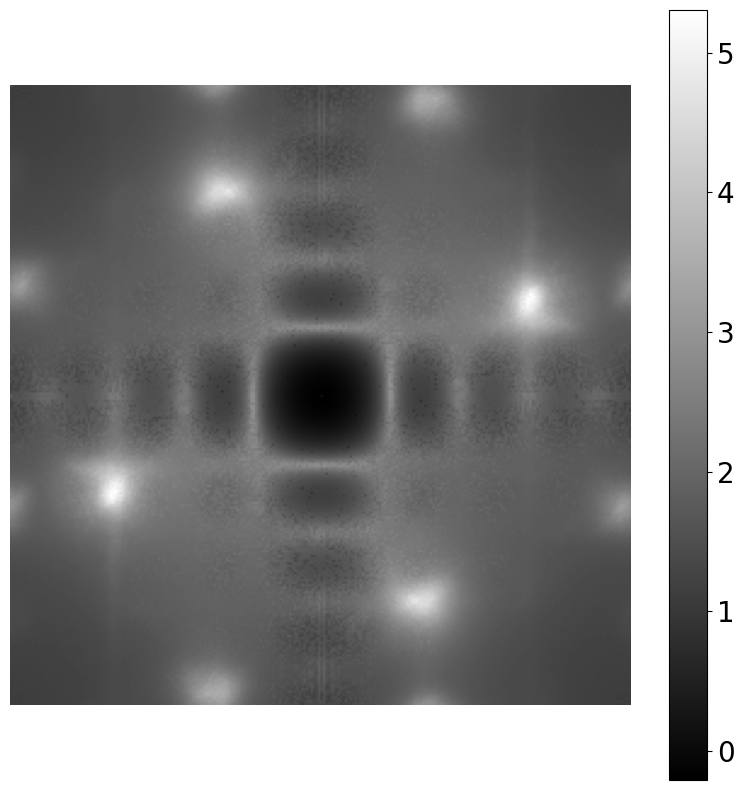} 

   \end{tabular}
     \caption{Log-standard deviation maps in the Fourier domain for the Markov chains defined by \pnpula \ for the deblurring problem. First line: images \texttt{Cameraman}, \texttt{Simpson}, \texttt{Traffic}. Second line: images \texttt{Alley}, \texttt{Bridge} and \texttt{Goldhill}. For the first three images, we clearly see that uncertainty is observed on frequencies that are near the kernel of the blur filter (shown on the right), and is also higher around high frequencies ({\em i.e}. around edges and textured areas in images). For the last three images, very high uncertainty is observed around some specific frequencies. In the direction of these frequencies, the Markov chain is moving very slowly and the mixing time of the chain is particularly slow, as shown on~\Cref{fig:deblurring_acf}. 
     }
    \label{fig:deblurring_std_map}
\end{figure}

\paragraph{Deblurring}
We now focus on the non-blind image deblurring experiments, where, as explained previously, we perform our convergence analysis by using statistics associated with the Fourier domain. \Cref{fig:deblurring_std_map} depicts the marginal standard deviation of the Fourier coefficients (in absolute value), for all images. For the three  images \texttt{Cameraman}, \texttt{Simpsons} and \texttt{Traffic}, all the standard deviations have a similar range of values, and the largest values are observed around frequencies in the kernel of the blur filter (shown on the right of the same figure) and for high frequencies. 
Conversely, for the three images \texttt{Alley}, \texttt{Bridge} and \texttt{Goldhill}, very high uncertainty is observed in the vicinity of four specific frequencies. This suggests that the denoiser used is struggling to regularise these specific frequencies, and consequently the posterior distribution is very spread along these directions and difficult to explore by Markov chain sampling as a result. Interestingly, this phenomenon is only observed in the images that are rich in texture content.%

\begin{figure}
    \centering
    \begin{tabular}{ccc}
    \includegraphics[width = 0.3 \textwidth]{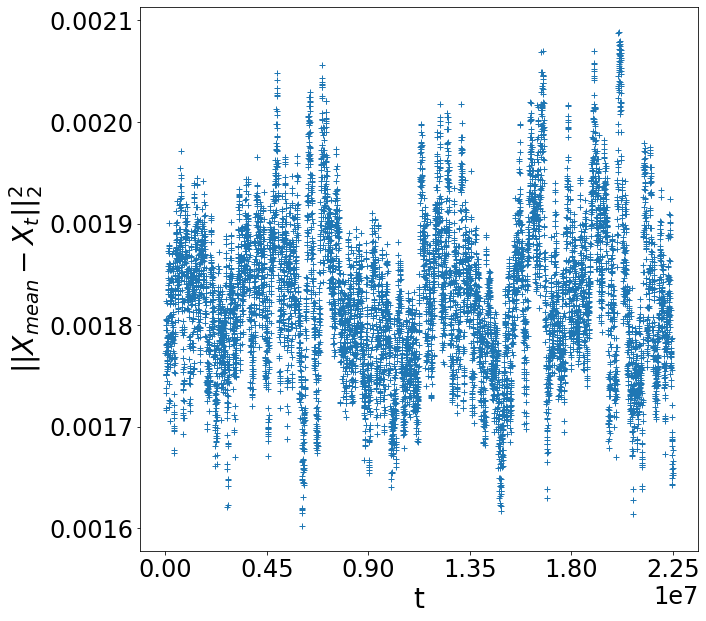} &
    \includegraphics[width = 0.3 \textwidth]{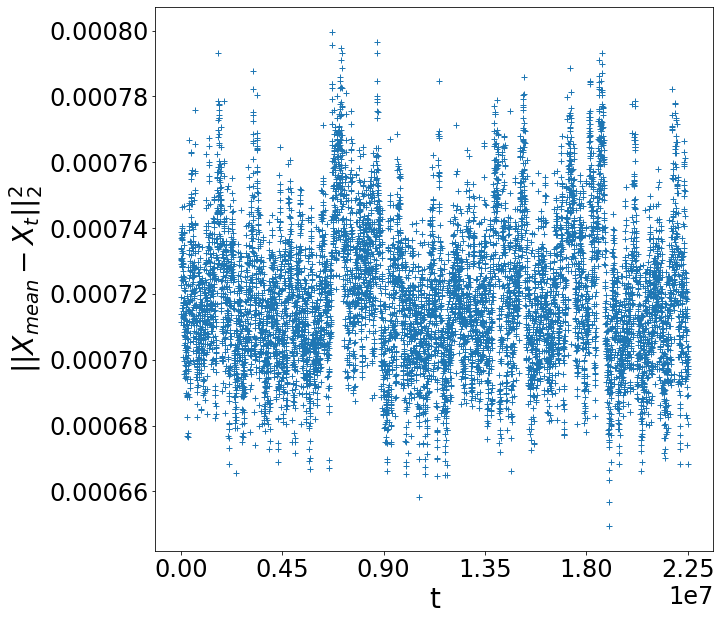} &
    \includegraphics[width = 0.3 \textwidth]{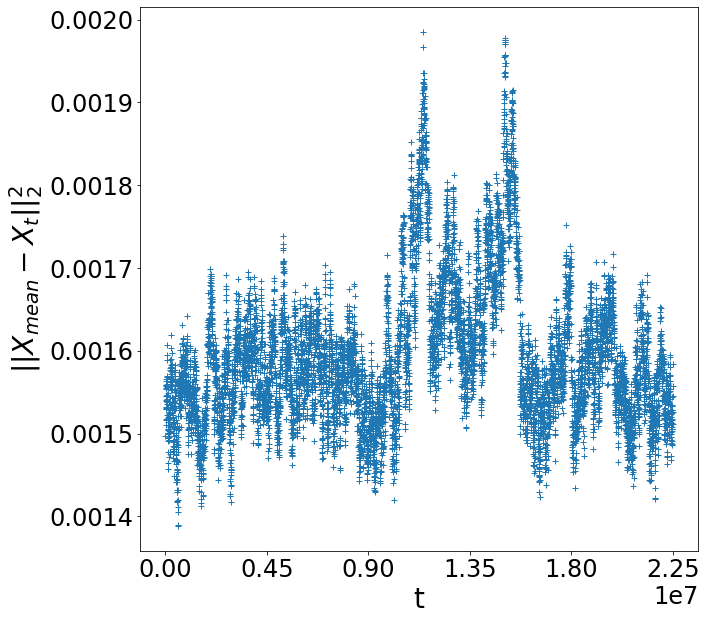}
    \\
    \texttt{Cameraman.} & \texttt{Simpson.} & \texttt{Traffic.}
    \\
    \includegraphics[width = 0.3 \textwidth]{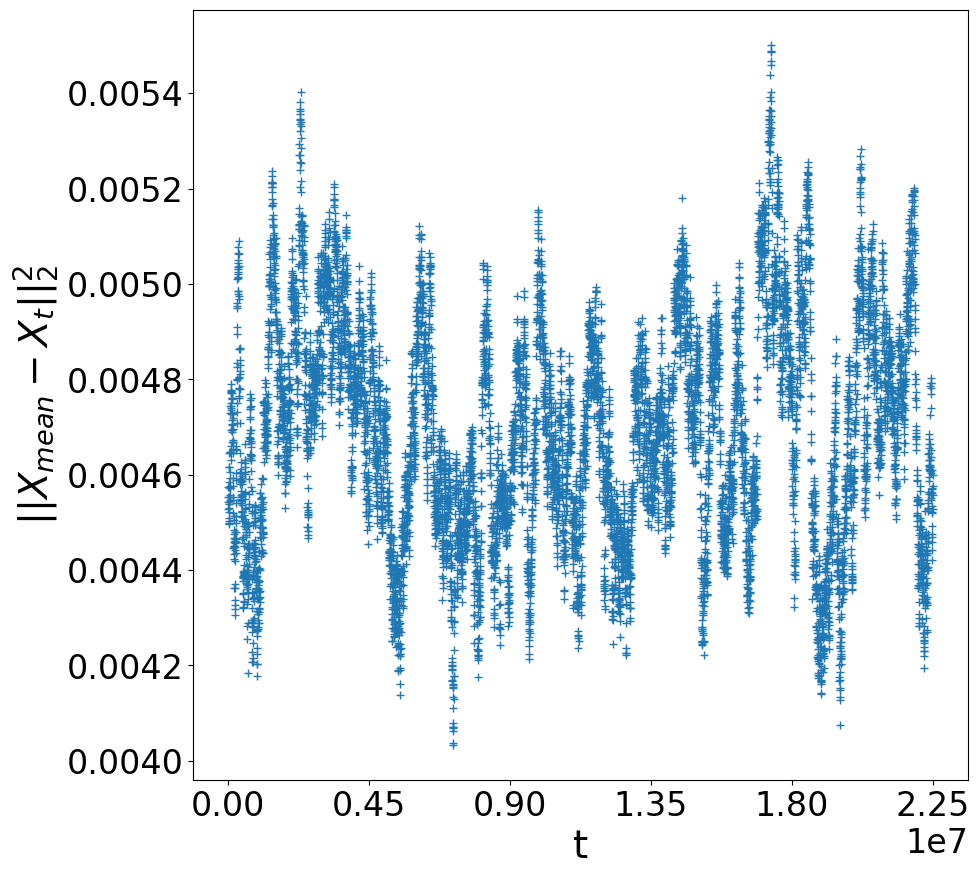} &  
    \includegraphics[width = 0.3 \textwidth]{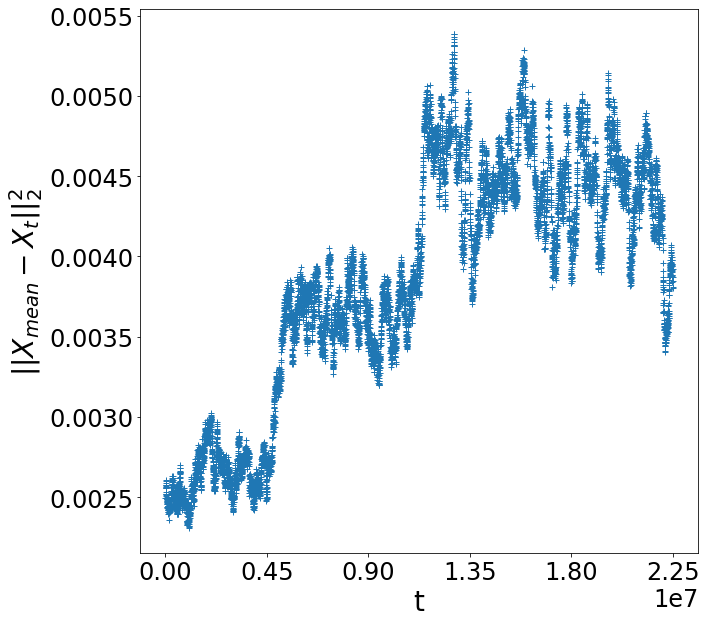} &
    \includegraphics[width = 0.3 \textwidth]{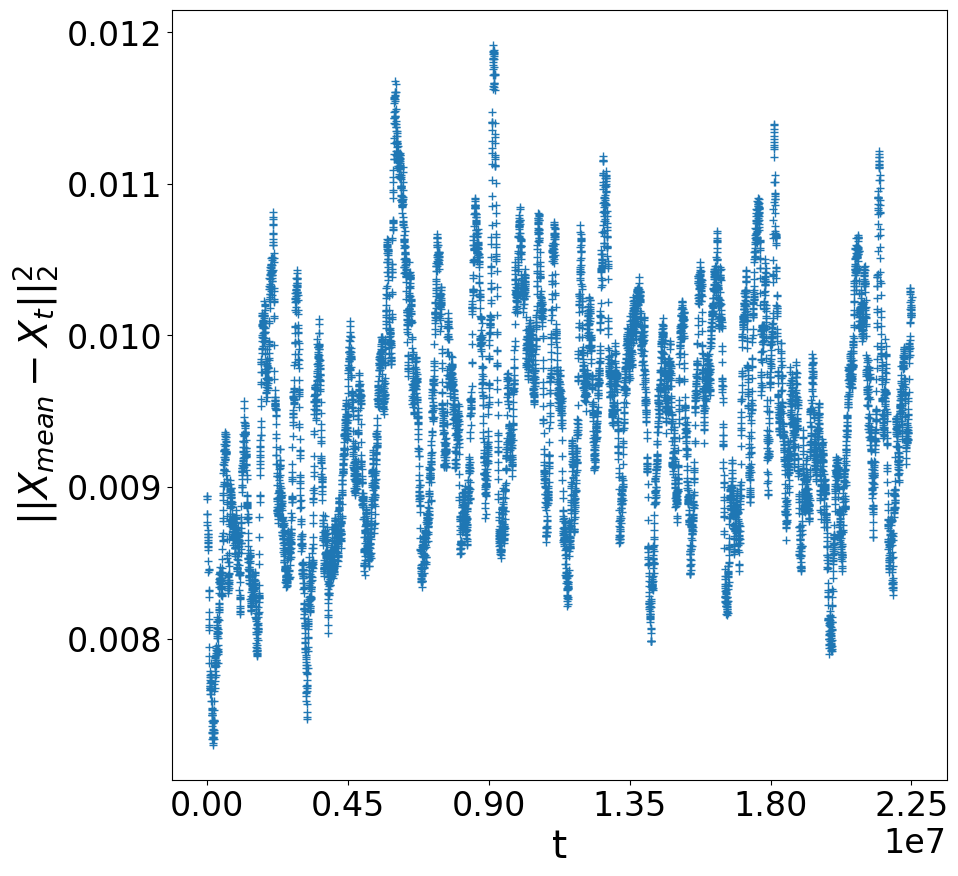} 
    \\
     \texttt{Alley.} & \texttt{Bridge.} & \texttt{Goldhill.}
    
    \end{tabular}

    \caption{Evolution of the $L_2$ distance between the final MMSE estimate and the samples generated by \pnpula \ for the deblurring problem after the burn-in phase. For images as \texttt{Cameraman} or \texttt{Simpson}, samples randomly oscillate around the MMSE. On the contrary, for images as \texttt{Bridge} or \texttt{Goldhill}, the plot is structured, meaning that samples are still correlated.}
    \label{fig:deblurring_l2_distance}
\end{figure}

Moreover, \Cref{fig:deblurring_l2_distance} depicts the Euclidean distance between the MMSE estimator computed from entire chain ({\em i.e.} all samples) and each sample (we show one point every 2500 samples). We notice that many of the images exhibit some degree of meta-stability or slow convergence because of the presence of directions in the solution space with very high uncertainty. Again, this is consistent with our convergence theory, which identifies posterior multimodality and anisotropy as key challenges that future work should seek to overcome.

\begin{figure}
    \centering
    \begin{tabular}{cccc}
        \includegraphics[width=0.22\textwidth]{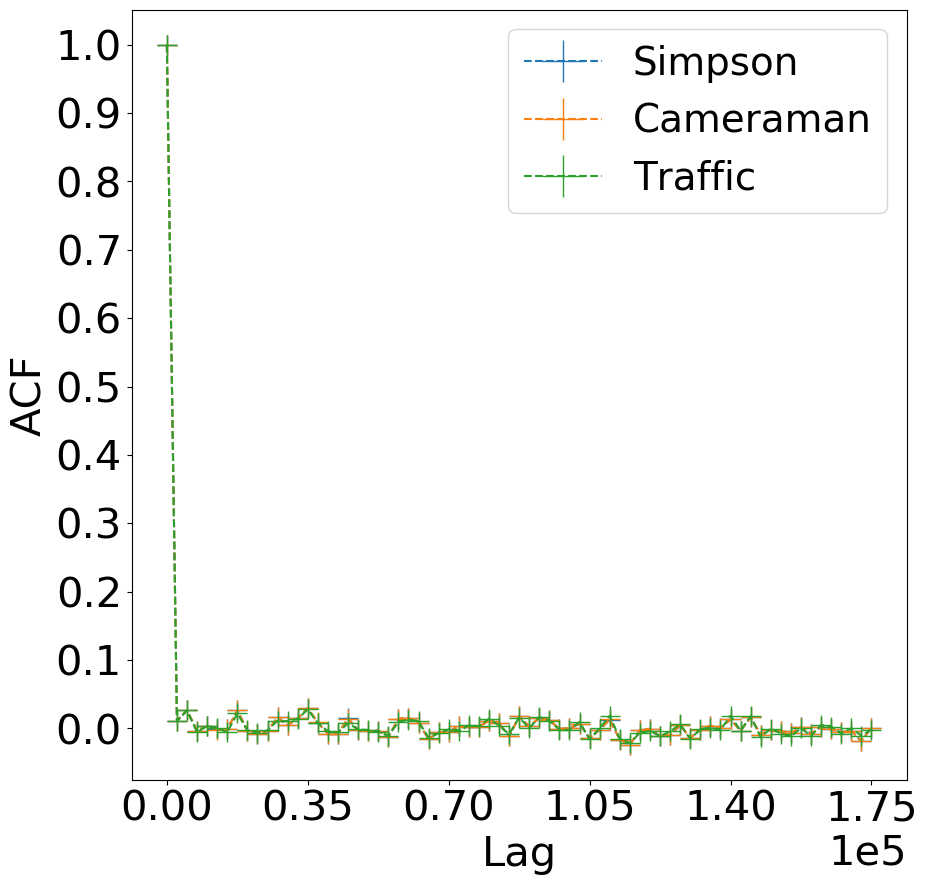} &
        \includegraphics[width=0.22\textwidth]{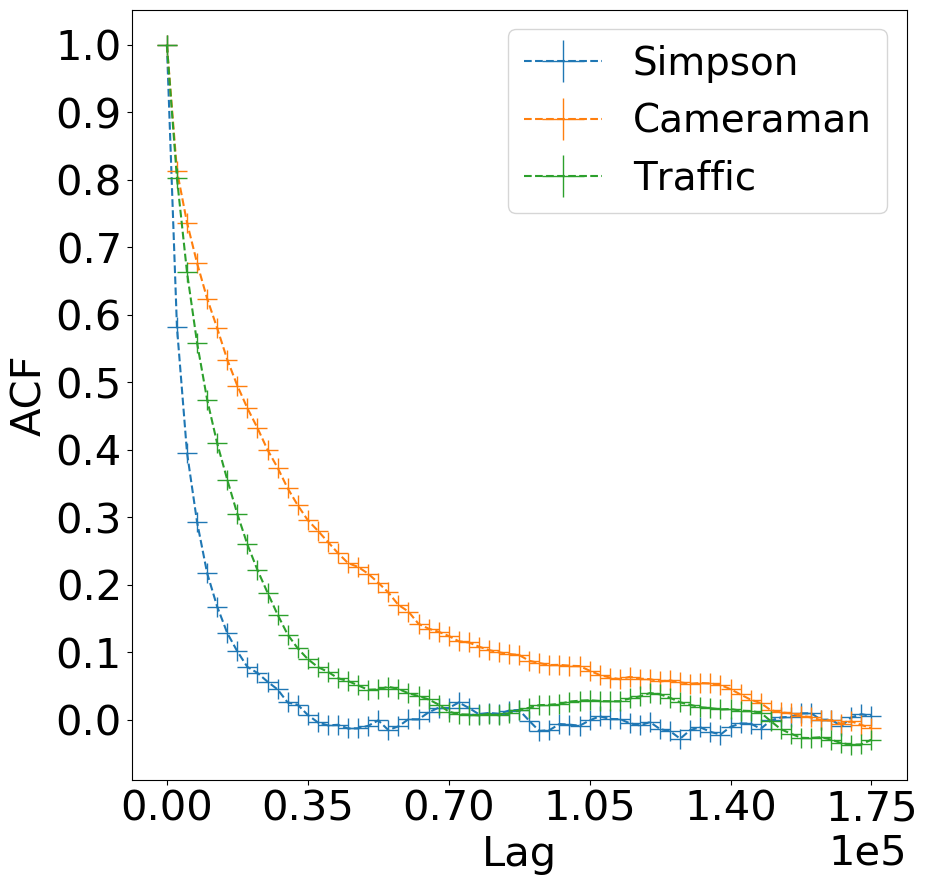} &
        \includegraphics[width=0.22\textwidth]{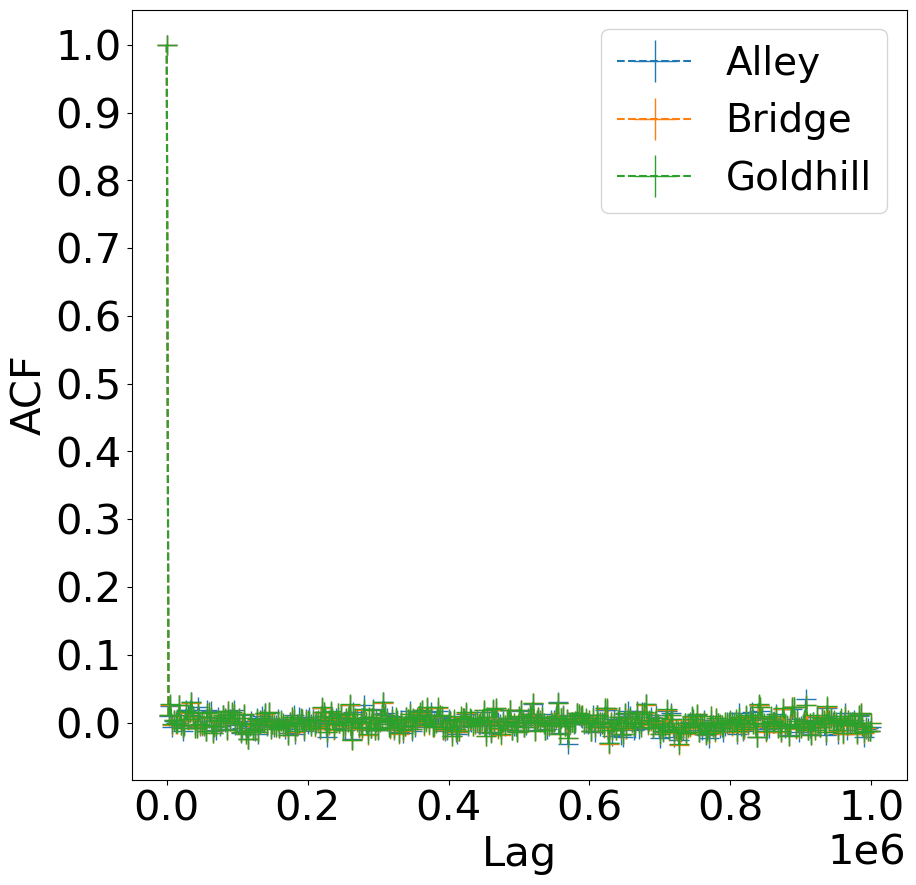} &
        \includegraphics[width=0.22\textwidth]{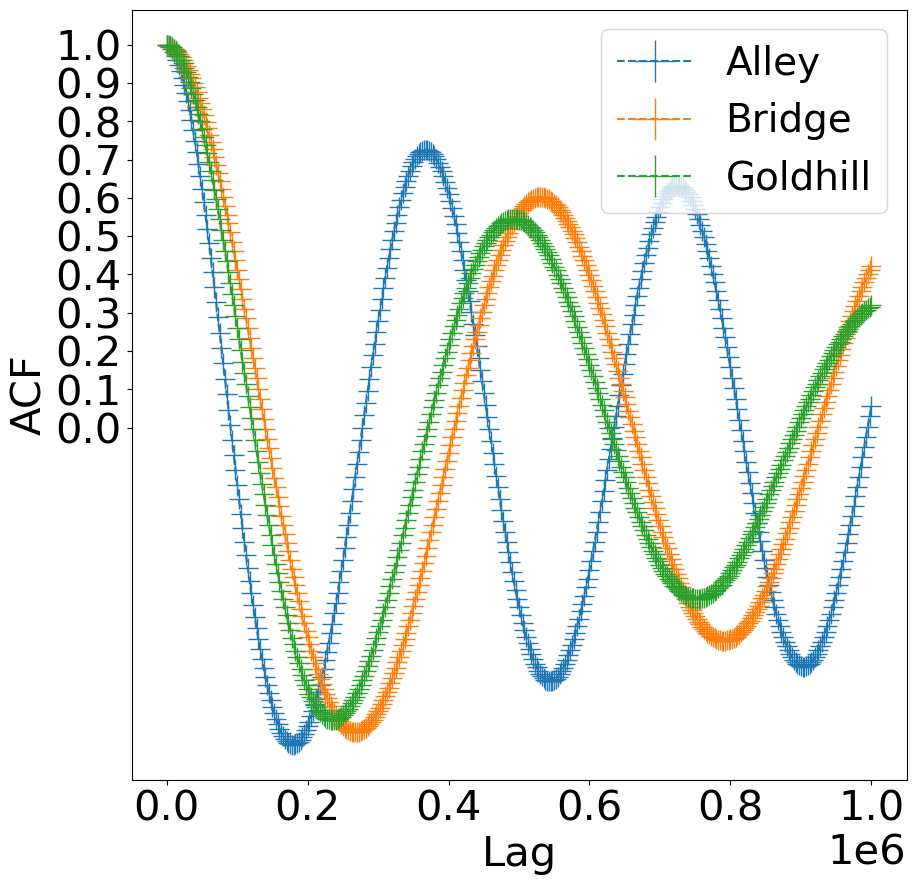}\\
        Fast direction& Slow direction& Fast direction& Slow direction
    \end{tabular}
    \caption{ACF for the deblurring problem. The ACF are shown for lags up to 1.75e5 for the three images \texttt{Cameraman}, \texttt{Simpson} and \texttt{Traffic} (see the two plots to the left) and independence seems to be achieved in all directions. For the three other images, independence is not achieved in the slowest direction (corresponding to the most uncertain frequency of the samples in the Fourier domain) even after 1e6 iterations.}
    \label{fig:deblurring_acf}
\end{figure}

Lastly, we show on Figure~\ref{fig:deblurring_acf} the sample ACFs for the slowest and the fastest directions in the Fourier domain\footnote{The slowest direction corresponds to the Fourier coefficient with the highest (real or imaginary) variance.}. Again, in all experiments, independence is achieved quickly in the fastest direction. The behaviours of the slowest direction for the three images \texttt{Alley}, \texttt{Bridge} and \texttt{Goldhill} suggest that the Markov chain is close to the stability limit and exhibits highly oscillatory behaviour as well as poor mixing.

\subsection{Point estimation for non-blind image deblurring and inpainting}
\label{sec:deblurring_inpainting}

\begin{figure}
    \centering
    \begin{tabular}{cccc}
        \CenteredVcell{0.3\textwidth}{\pnpula, $\delta =\delta_{th}$.}
        &
        \includegraphics[width = 0.28\textwidth]{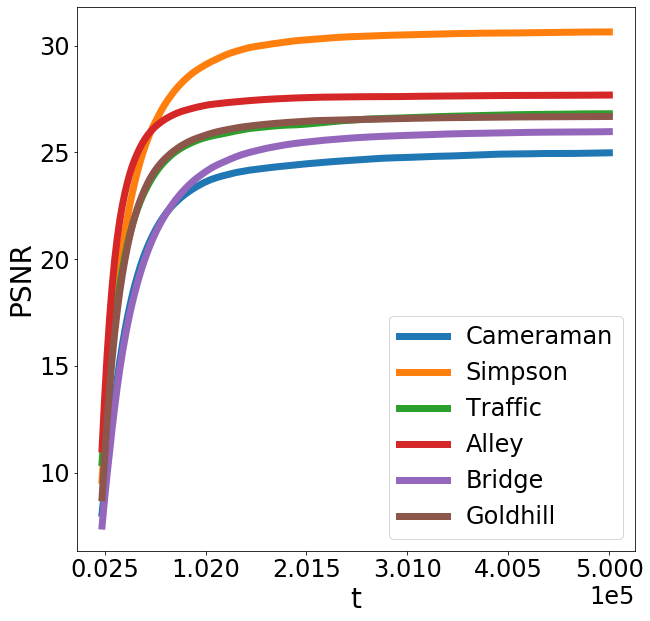}
        &
        \includegraphics[width = 0.28 \textwidth]{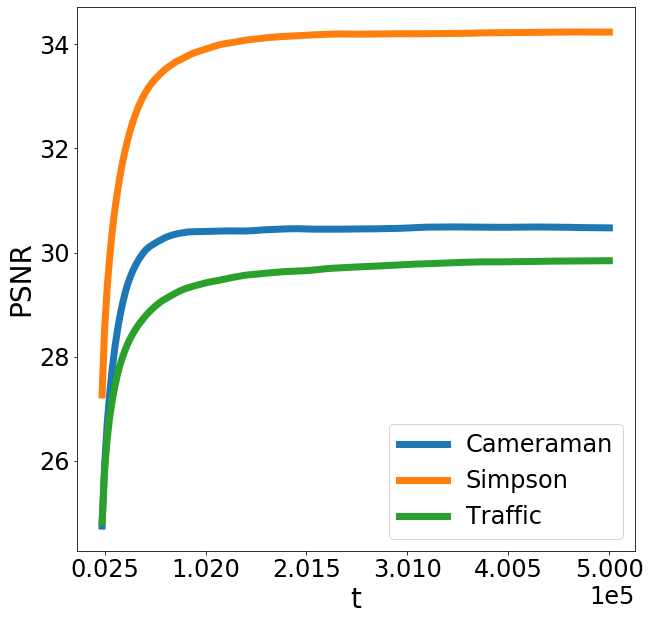} 
        &
        \includegraphics[width = 0.28 \textwidth]{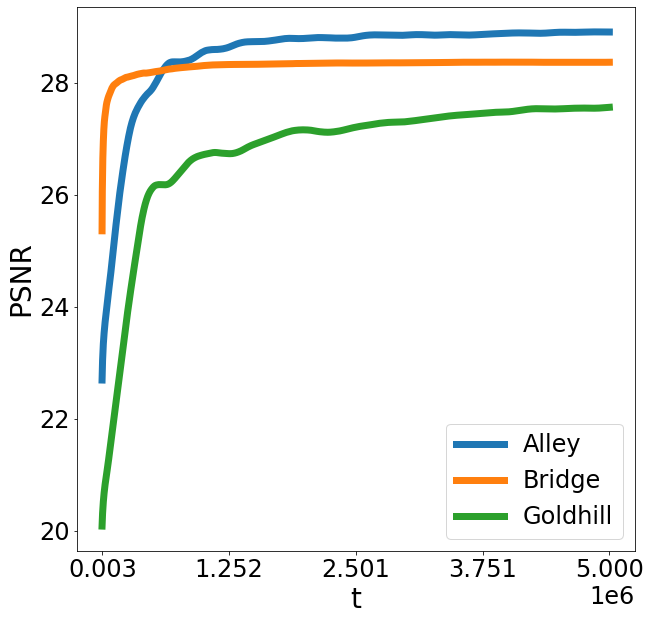} 
        \\
        \CenteredVcell{0.3\textwidth}{P\pnpula, $\delta =6\delta_{th}$.}
        &
        \includegraphics[width = 0.28\textwidth]{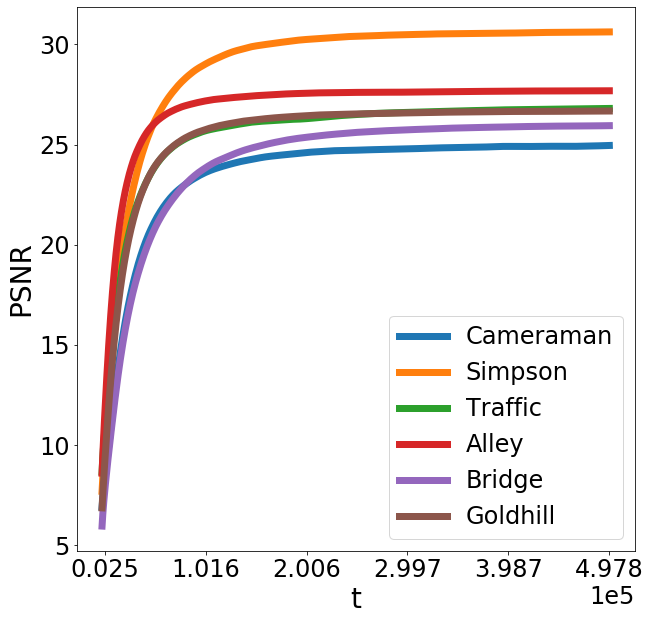}
        &
        \includegraphics[width = 0.28 \textwidth]{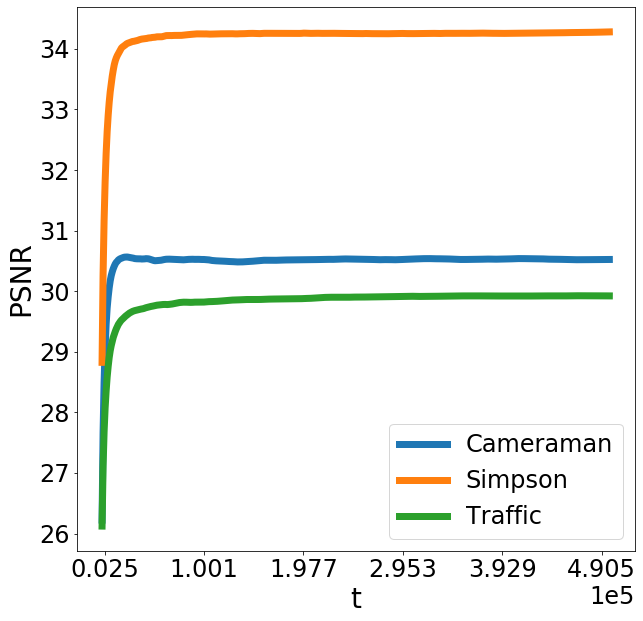} 
        &
        \includegraphics[width = 0.28 \textwidth]{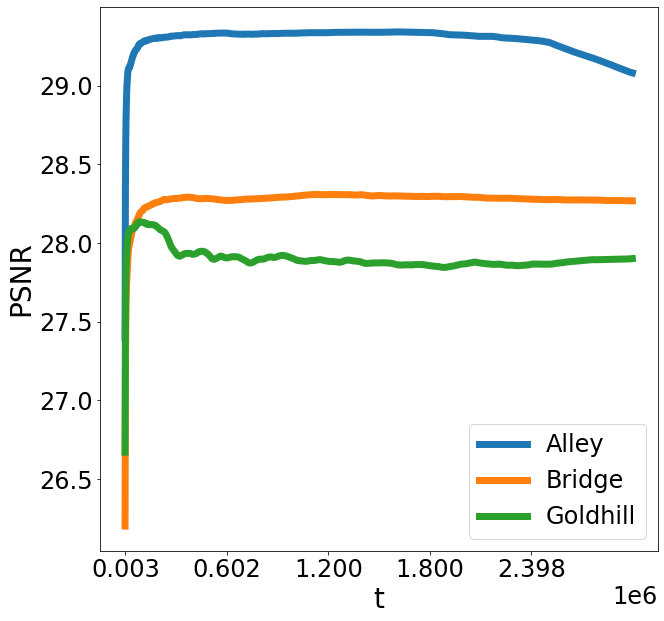} 

    \end{tabular}

    \caption{Left: PSNR evolution of the estimated MMSE for the inpainting problem. After $5e5$ iterations, the convergence of the first order moment of the posterior distribution seems to be achieved for all images. Middle and right: PSNR evolution of the estimated MMSE for the deblurring problem. The convergence for the posterior mean can be fast  for simple images such as \texttt{Cameraman}, \texttt{Simpson}, and \texttt{Traffic} (for these images the PSNR evolution is shown for the first 5e5 iterations). Increasing the $\delta$ increases the convergence speed for these images by a factor close to 2. For more complex images, such as \texttt{Alley} %
    or \texttt{Goldhill}, the convergence is much slower and is still not achieved after 3e6 iterations with P\pnpula\  for $\delta = 6\delta_{th}$.%
    }
    \label{fig:deblurring_evolution_psnr}
\end{figure}

We are now ready to study the quality of the MMSE estimators delivered by \pnpula\ and P\pnpula \ and report comparisons with MAP estimation by \pnpsgd\ \cite{laumont2021maximum}.

\paragraph{Quantitative results}
Figure~\ref{fig:deblurring_evolution_psnr} illustrates the evolution of the PSNR of the mean of the Markov chain (the Monte Carlo estimate of the MMSE solution), as a function of the number of iterations, for the six images of ~\Cref{fig:original_images}.  These plots have been computed by using a step-size $\delta = \delta_{th}$ that is just below the stability limit and a 1-in-2500 thinning. We observe that the PSNR between the MMSE solution as computed by the Markov chain and the truth stabilises in approximately $10^5$ iterations in the experiments where the chain exhibits fast convergence, whereas over $10^6$ are required in experiments that suffer from slow convergence (e.g., deblurring of \texttt{Alley}, \texttt{Bridge} and \texttt{Goldhill}). Moreover, we observe that using P\pnpula\ with a larger step-size can noticeably reduce the number of iterations required to obtain a stable estimate of the posterior mean, particularly in the image deblurring experiments.

\begin{figure}[h!]
	\centering
	\begin{tabular}{cccc}
	    \CenteredVcell{0.2\textwidth}{\pnpula} &
		\includegraphics[width=0.25\textwidth]{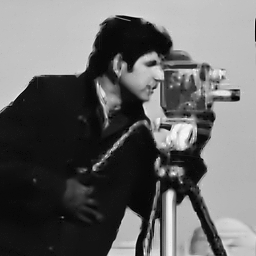} &
		
		\includegraphics[width=0.25\textwidth]{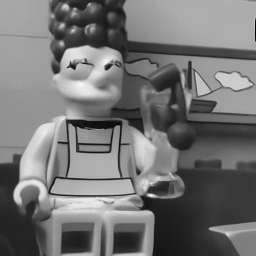} &
		
		\includegraphics[width=0.25\textwidth]{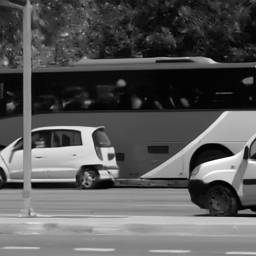} \\
		& \begin{small}PSNR=25.06/SSIM=0.89\end{small} & \begin{small}PSNR=30.62/SSIM=0.93\end{small} & \begin{small}PSNR=26.90/SSIM=0.85\end{small}
		\\
		\CenteredVcell{0.2\textwidth}{\pnpsgd} &
		\includegraphics[width=0.25\textwidth]{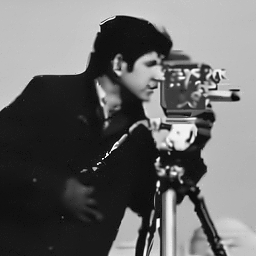} &
		\includegraphics[width=0.25\textwidth]{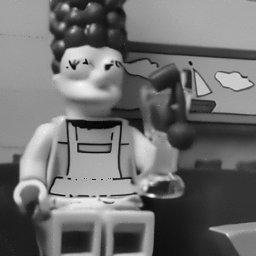} &
	
		\includegraphics[width=0.25\textwidth]{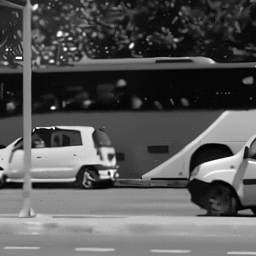}
		\\
		& \begin{small}PSNR=23.94/SSIM=0.88\end{small} & \begin{small}PSNR=28.90/SSIM=0.90\end{small} & \begin{small}PSNR=24.20/SSIM=0.81\end{small}
	\end{tabular}
	\caption{Results comparison for the inpainting task of the images presented in \Cref{fig:toipaint_images} using \pnpula \ (first row) and \pnpsgd \ initialized with a TVL2 restoration (second row).
	}
	\label{fig:inpainting_mmse1}
\end{figure}

\begin{figure}
    \centering
    \begin{tabular}{cccc}
    \CenteredVcell{0.2\textwidth}{\pnpula} &
    \includegraphics[width=0.25\textwidth]{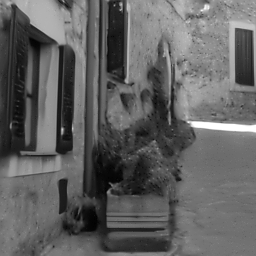} &
    
    \includegraphics[width=0.25\textwidth]{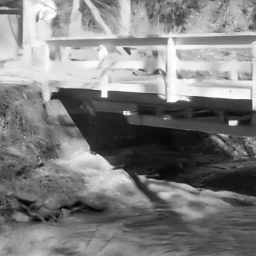} &
    
    \includegraphics[width=0.25\textwidth]{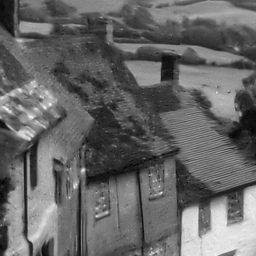} \\
    & \begin{small}PSNR=27.74/SSIM=0.79\end{small} & \begin{small}PSNR=26.16/SSIM=0.80\end{small} & \begin{small}PSNR=26.76/SSIM=0.74\end{small} \\
    \CenteredVcell{0.2\textwidth}{\pnpsgd} &
    \includegraphics[width=0.25\textwidth]{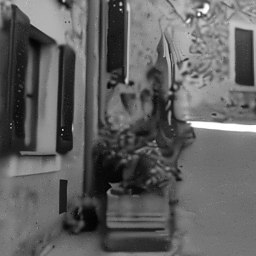} &
    
    \includegraphics[width=0.25\textwidth]{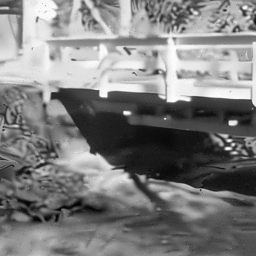} &
    
   \includegraphics[width=0.25\textwidth]{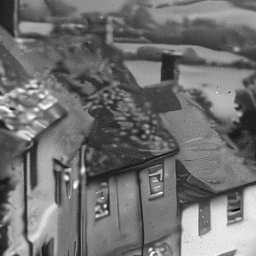} \\
    
    & \begin{small}PSNR=26.45/SSIM=0.75\end{small} & \begin{small}PSNR=24.71/SSIM=0.77\end{small} & \begin{small}PSNR=25.96/SSIM=0.72\end{small} 
    
   \end{tabular}
    \caption{Results comparison for the inpainting task of the images presented in \Cref{fig:toipaint_images} using \pnpula \ (first row) and \pnpsgd \ initialized with a TVL2 restoration (second row).}
    \label{fig:inpainting_mmse2}
\end{figure}

\begin{figure}[h!]
	\centering
	\begin{tabular}{cccc}
	    \CenteredVcell{0.2\textwidth}{\pnpula, $\alpha =1$.} &
	    \includegraphics[width=0.25\textwidth]{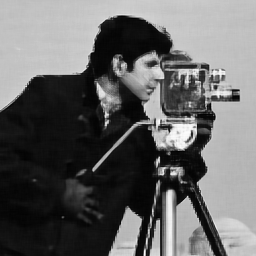} &
		\includegraphics[width=0.25\textwidth]{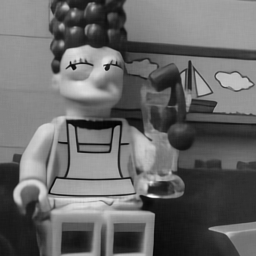} &
		\includegraphics[width=0.25\textwidth]{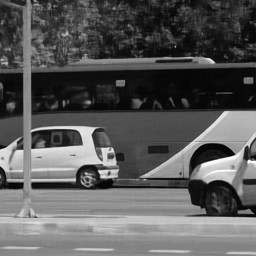} 
        \\
		& \begin{small}PSNR=30.50/SSIM=0.93\end{small} & \begin{small}PSNR=34.26/SSIM=0.94\end{small} & \begin{small}PSNR=29.90/SSIM=0.90\end{small}
		\\
		\CenteredVcell{0.22\textwidth}{\pnpsgd, $\alpha =0.3$.} &
		\includegraphics[width=0.25\textwidth]{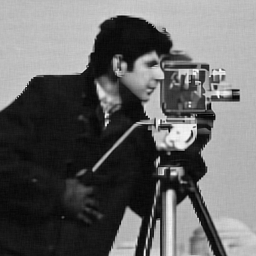} &
		\includegraphics[width=0.25\textwidth]{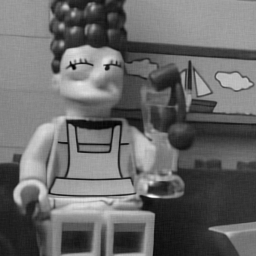} &
		\includegraphics[width=0.25\textwidth]{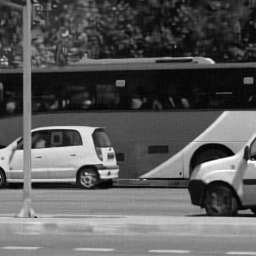} 
        \\
		& \begin{small}PSNR=30.73/SSIM=0.92\end{small} & \begin{small}PSNR=33.52/SSIM=0.92\end{small} & \begin{small}PSNR=29.42/SSIM=0.88\end{small} 
		\\
		\CenteredVcell{0.2\textwidth}{\pnpsgd, $\alpha =1$.} &
		\includegraphics[width=0.25\textwidth]{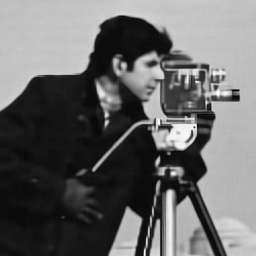}  &
		\includegraphics[width=0.25\textwidth]{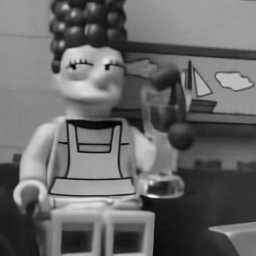} &
		\includegraphics[width=0.25\textwidth]{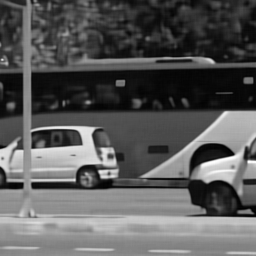}
		\\
		& \begin{small}PSNR=29.39/SSIM=0.93\end{small} & \begin{small}PSNR=33.33/SSIM=0.93\end{small} & \begin{small}PSNR=28.13/SSIM=0.85\end{small} 
	\end{tabular}
	\caption{Results comparison for the deblurring task of the images presented in \Cref{fig:blurred_images} using \pnpula \ with $\alpha=1$ (first row), \pnpsgd \ with $\alpha=0.3$ (second row) and $\alpha=1$  (third row). \pnpula \ was initialized with the observation $y$ (see \Cref{fig:blurred_images}) whereas \pnpsgd \  was initialised with a TVL2 restoration.
	}
	\label{fig:deblurring_res1}
\end{figure}

\begin{figure}[h!]
	\centering
	\begin{tabular}{cccc}
	    \CenteredVcell{0.2\textwidth}{\pnpula, $\alpha =1$.} &
	    \includegraphics[width=0.25\textwidth]{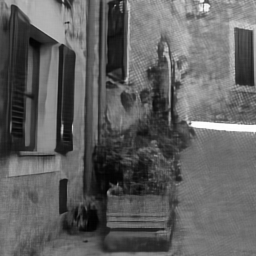} &
		\includegraphics[width=0.25\textwidth]{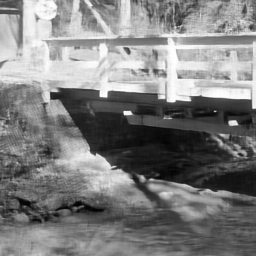} &
		\includegraphics[width=0.25\textwidth]{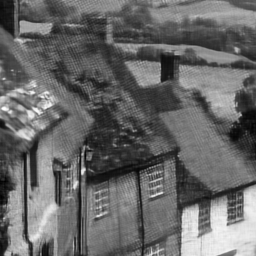}
        \\
		& \begin{small}PSNR=28.98/SSIM=0.80\end{small} & \begin{small}PSNR=28.28/SSIM=0.84\end{small} & \begin{small}PSNR=27.72/SSIM=0.73\end{small}
		\\
		\CenteredVcell{0.24\textwidth}{\pnpsgd, $\alpha =0.3$.} &
		\includegraphics[width=0.25\textwidth]{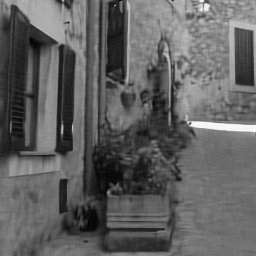} &
		\includegraphics[width=0.25\textwidth]{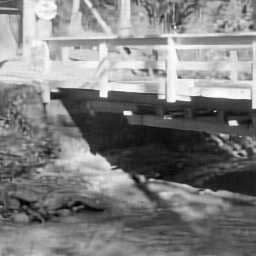} &
		\includegraphics[width=0.25\textwidth]{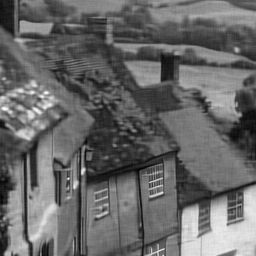} 
        \\
		& \begin{small}PSNR=29.26/SSIM=0.82\end{small} & \begin{small}PSNR=28.04/SSIM=0.84\end{small} & \begin{small}PSNR=28.27/SSIM=0.76\end{small} 
		\\
		\CenteredVcell{0.2\textwidth}{\pnpsgd, $\alpha =1$.} &
		\includegraphics[width=0.25\textwidth]{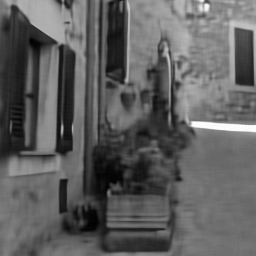}&
		\includegraphics[width=0.25\textwidth]{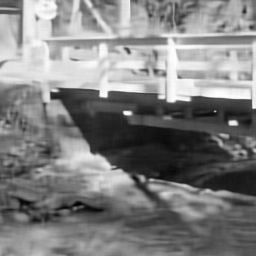} &
		\includegraphics[width=0.25\textwidth]{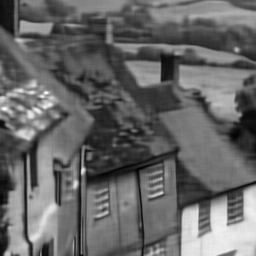}
		\\
		&\begin{small}PSNR=28.28/SSIM=0.76\end{small} & \begin{small}PSNR=27.14/SSIM=0.79\end{small} & \begin{small}PSNR=27.42/SSIM=0.70\end{small} 
	\end{tabular}
	\caption{Results comparison for the deblurring task of the images presented in \Cref{fig:blurred_images} using \pnpula \ with $\alpha=1$ (first row), \pnpsgd \ with $\alpha=0.3$ (second row) and $\alpha=1$  (third row). \pnpula \ was initialized with the observation $y$ (see \Cref{fig:blurred_images}) whereas \pnpsgd \  was initialised with a TVL2 restoration. 
	}
	\label{fig:deblurring_res2}
\end{figure}

\begin{figure}[h!]
	\centering
	\begin{tabular}{ccc}
	   
		\includegraphics[width=0.28\textwidth]{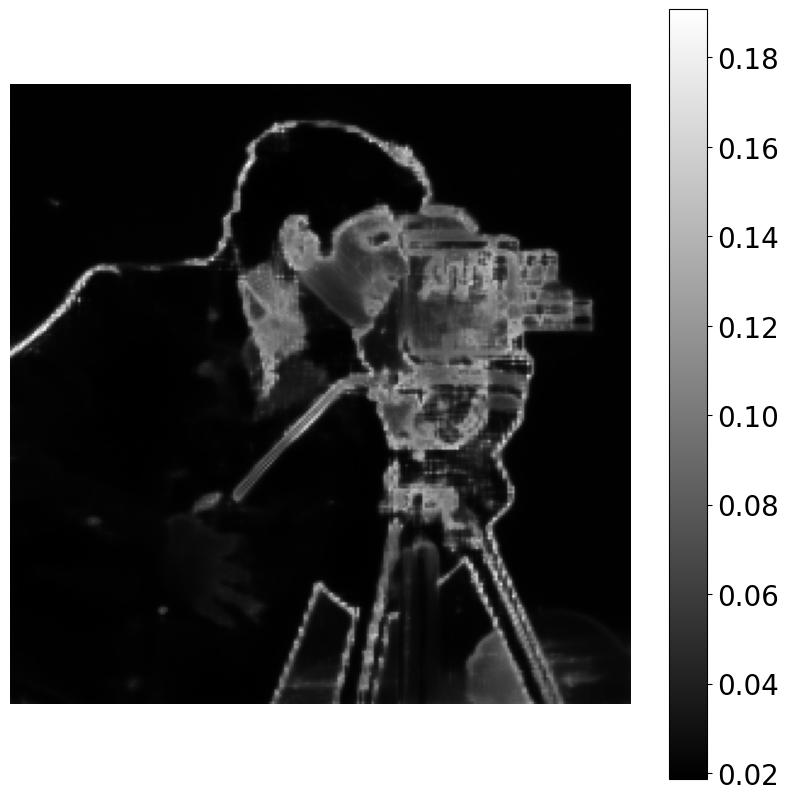} &
		
		\includegraphics[width=0.28\textwidth]{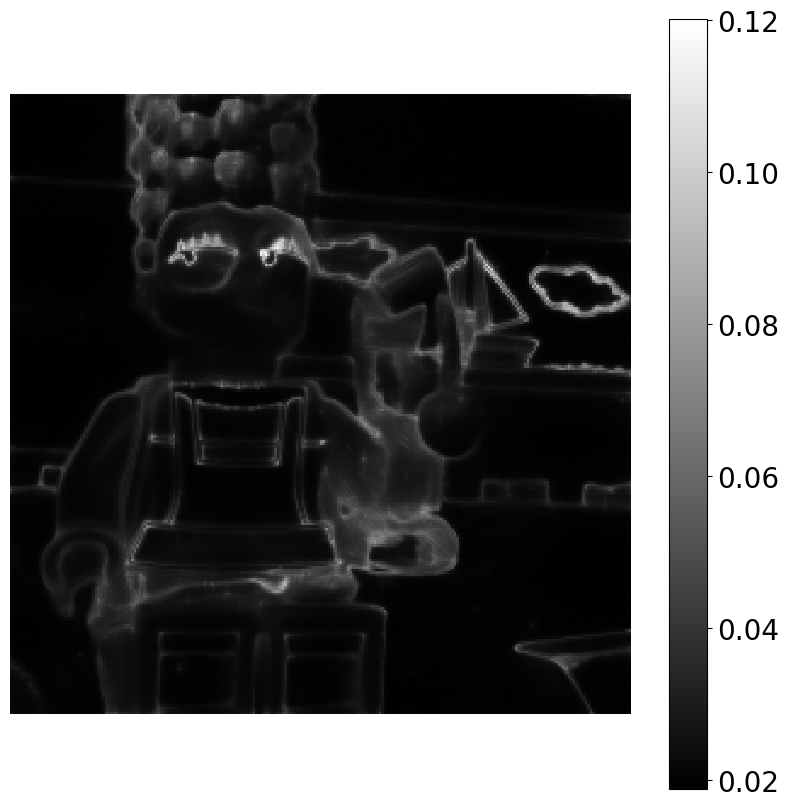} &
	
		\includegraphics[width=0.28\textwidth]{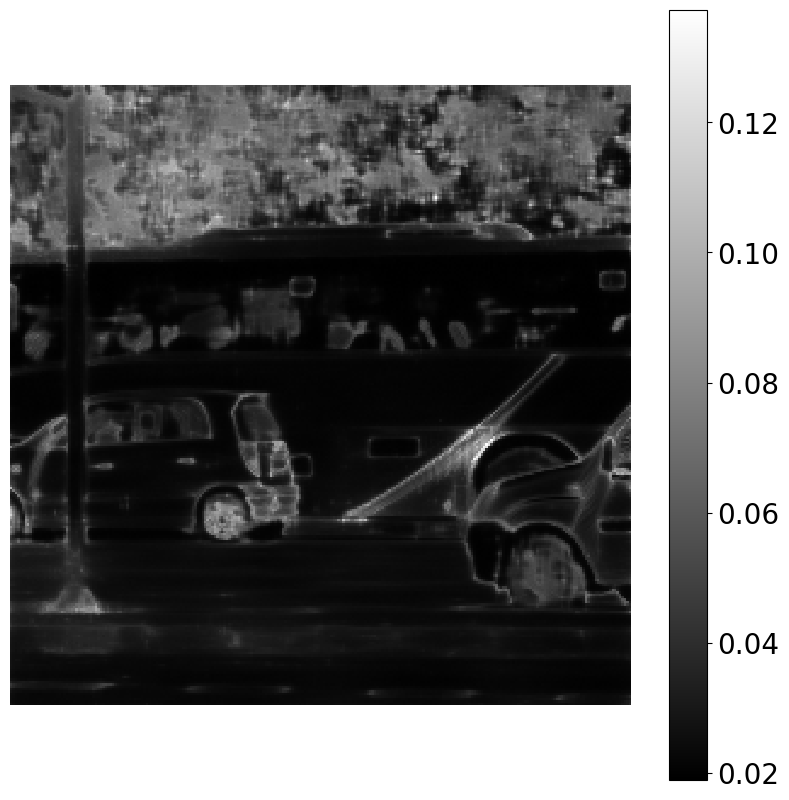} 
		
	    \\
	    \includegraphics[width=0.3\textwidth]{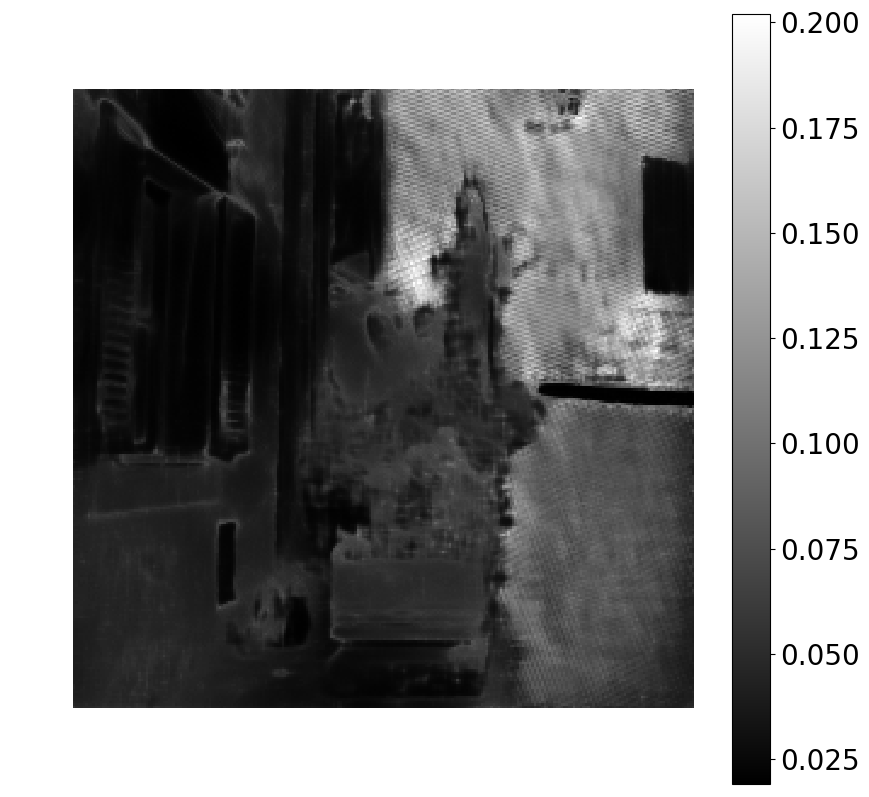} &
		
		\includegraphics[width=0.28\textwidth]{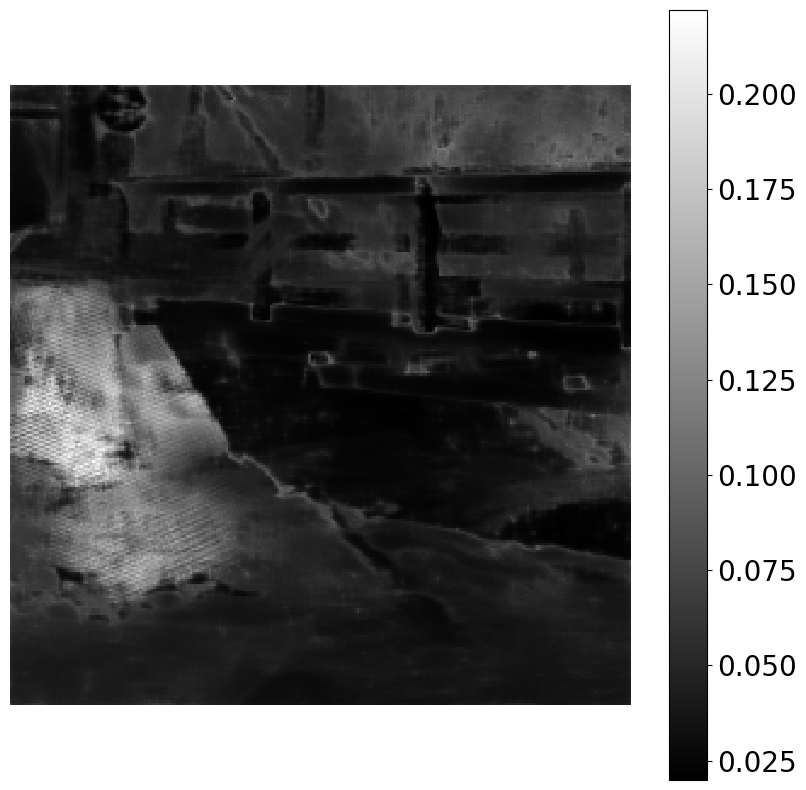} &

		\includegraphics[width=0.28\textwidth]{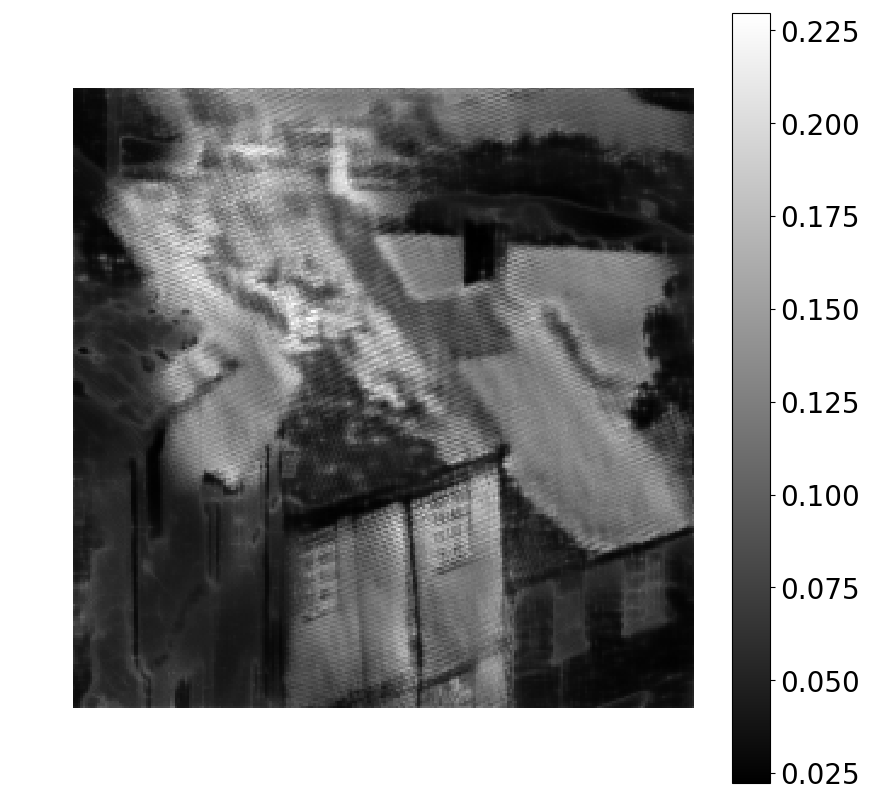}

	\end{tabular}
	\caption{Marginal posterior standard deviation for the deblurring problem. On simple images such as Simpson (see \cref{fig:original_images}), most of the uncertainty is located around the edges. For the images \texttt{Alley}, \texttt{Bridge} and \texttt{Goldhill}, associated with a highly correlated Markov chain in some directions, some areas are very uncertain. They correspond to the zones where the rotated rectangular pattern appears in the MMSE estimate.}
	\label{fig:deblurring_std}
\end{figure}

\paragraph{Visual results}
Figures~\ref{fig:inpainting_mmse1}, ~\ref{fig:inpainting_mmse2}, ~\ref{fig:deblurring_res1} and~\ref{fig:deblurring_res2}  show the MMSE estimate computed by \pnpula \ on the whole chain including the burn-in for the 6 images, for the inpainting and deblurring  experiments. We also provide the MAP estimation results computed by using \pnpsgd \ ~\cite{laumont2021maximum}, which targets the same posterior distributions. We report the \textit{Peak Signal-To Noise Ratio} (PSNR) and the \textit{Structural Similarity Index} (SSIM) \cite{psnr_default,ssim_definition} for all these experiments.

For the inpainting experiments, \pnpsgd \ struggles to converge when initialized with the observed image (see~\cite{laumont2021maximum}). For this reason, we warm start \pnpsgd \ by using an estimate of $x$ obtained by minimizing the Total Variation pseudo-norm under the constraint of the known pixels. For simplicity, \pnpula \ is initialized with the observation $y$. We observe in \Cref{fig:inpainting_mmse1} and ~\Cref{fig:inpainting_mmse2} that the results obtained by computing the MMSE Bayesian estimator with \pnpula \ are visually and quantitatively superior to the ones delivered by MAP estimation with \pnpsgd. In particular, the sampling approach seems to better recover the continuity of fine structures and lines in the different images. 

For the deblurring experiments, the results of \pnpsgd \ are provided by using a regularisation parameter $\alpha = 0.3$ (which was shown to yield optimal results on this set of images in~\cite{laumont2021maximum}) and for $\alpha = 1$, which recovers the model used by \pnpula. Observe that for the three first images (shown on~\Cref{fig:deblurring_res1}), the MMSE result is much sharper than the best MAP result, and the PSNR / SSIM results also show a clear advantage for the MMSE. For the other three images (results are shown on~\Cref{fig:deblurring_res2}), the quality of the MMSE solutions delivered is slightly deteriorated by the slow convergence of the Markov chain and the poor regularisation of some specific frequencies, which leads to a common visual artefact (a rotated rectangular pattern). Using a different denoiser more suitable for handling textures, or combining a learnt denoiser with an analytic regularisation term, might correct this behaviour and will be the topic of future work.

A partial conclusion from this set of comparisons is that the sampling approach of \pnpula, when it samples the space correctly, seems to provide much better results than the MAP estimator for the same posterior. Of course, this increase in quality comes at the cost of a much higher computation time.

\subsection{Deblurring and inpainting: uncertainty visualisation study}
\label{sec:statistical_study_deblurring}

\begin{figure}
    \centering
    \begin{tabular}{c c}
        \includegraphics[width = 0.3\textwidth]{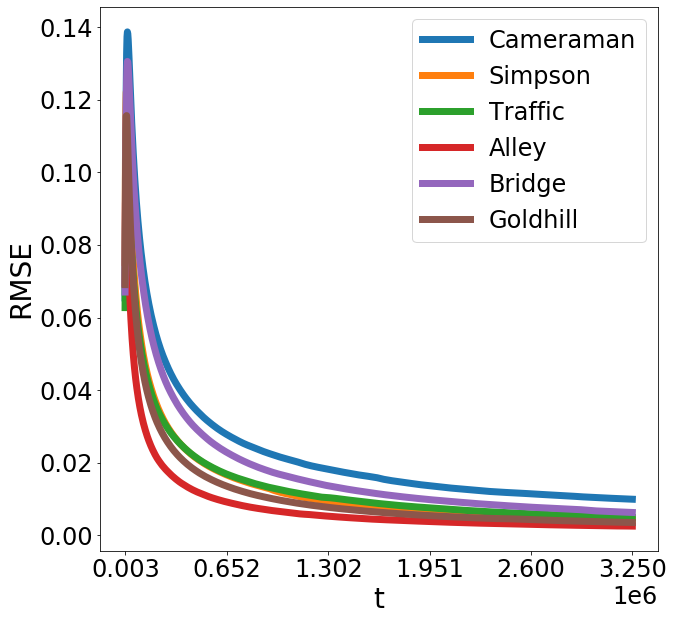} &  
        \includegraphics[width = 0.3\textwidth]{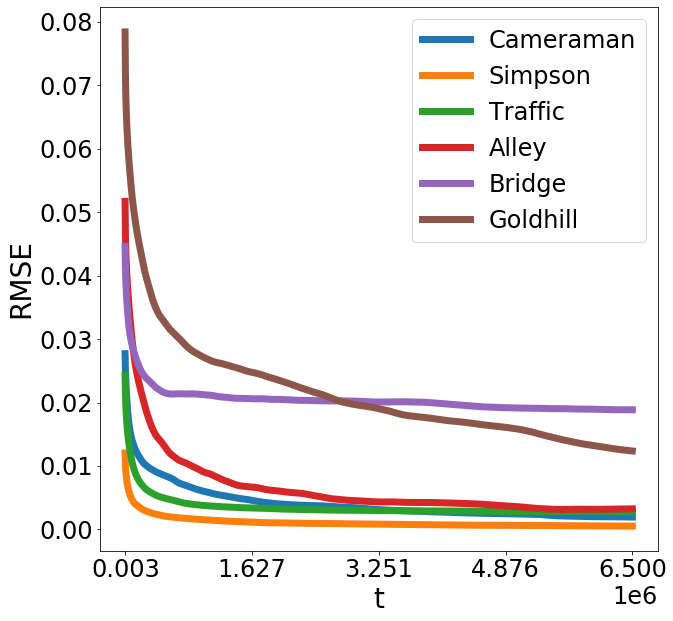}
        \\
        Inpainting. & Deblurring. 
    \end{tabular}
    \caption{Evolution of the Root Mean Squared Error (RMSE) between the final standard deviation and the estimated current standard deviation for the inpainting and deblurring problems.}
    \label{fig:evolution_rmse}
\end{figure}

One of the benefits of sampling from the posterior distribution with \pnpula \ is that we can probe the uncertainty in the delivered solutions. In the following, we present an uncertainty visualisation analysis that is useful for displaying the uncertainty related to image structures of different sizes and located in different regions of the image (see \cite{Cai2018} for more details). The analysis proceeds as follows. First, \Cref{fig:inpainting_std} and \Cref{fig:deblurring_std} show the marginal posterior standard deviation associated with each image pixel, as computed by \pnpula \ over all samples, for the inpainting and deblurring problems. As could be expected,  we observe for both problems that highly uncertain pixels are concentrated around the edges of the reconstructed images, but also on textured areas. The dynamic range of the pixel standard deviations is larger for the inpainting problem than for deblurring, which  suggests that the problem has a higher level of intrinsic uncertainty.

\Cref{fig:evolution_rmse} shows the evolution of the RMSE between the standard deviation computed along the samples and its asymptotic value, respectively for the inpainting and deblurring problems. Estimating these standard deviation maps necessitates to run the chain longer than to estimate the MMSE, as could be expected for second order statistical moment.

\begin{figure}
    \centering
    \begin{tabular}{c c c c}
    \includegraphics[width=0.22\textwidth]{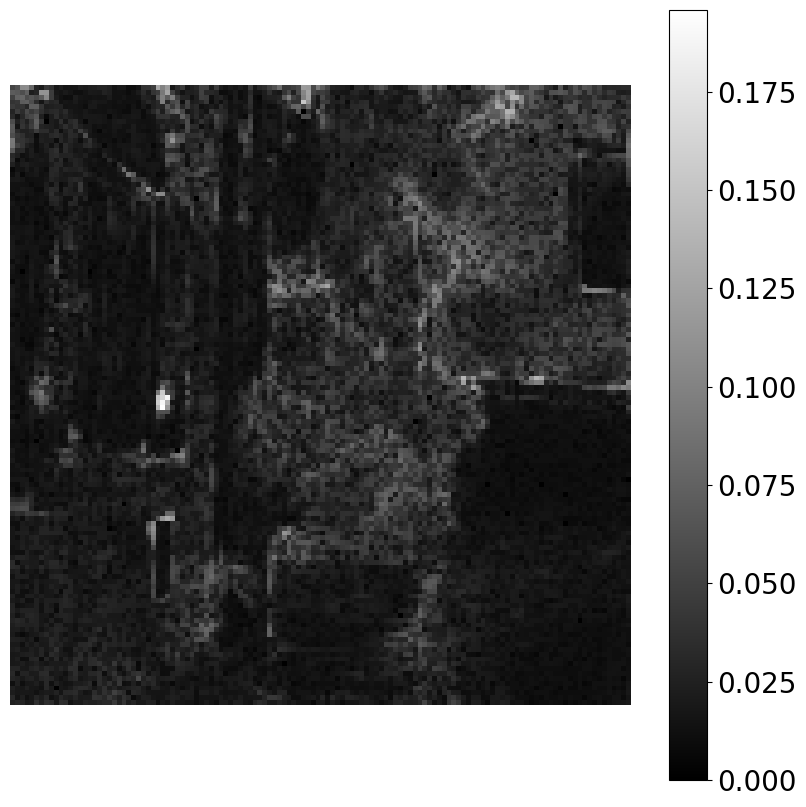} &
    \includegraphics[width=0.22\textwidth]{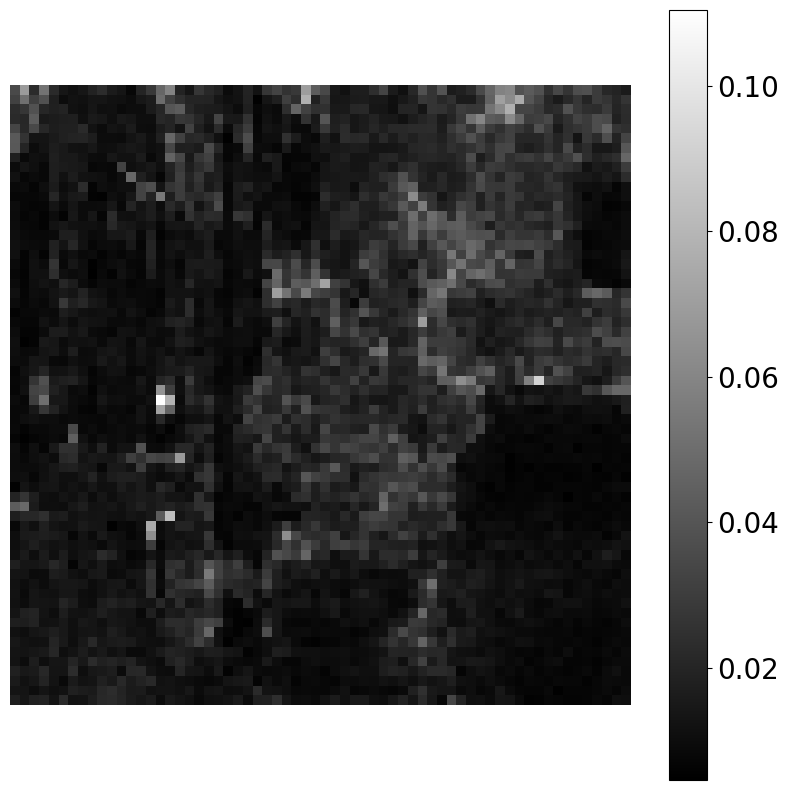} &
    \includegraphics[width=0.22\textwidth]{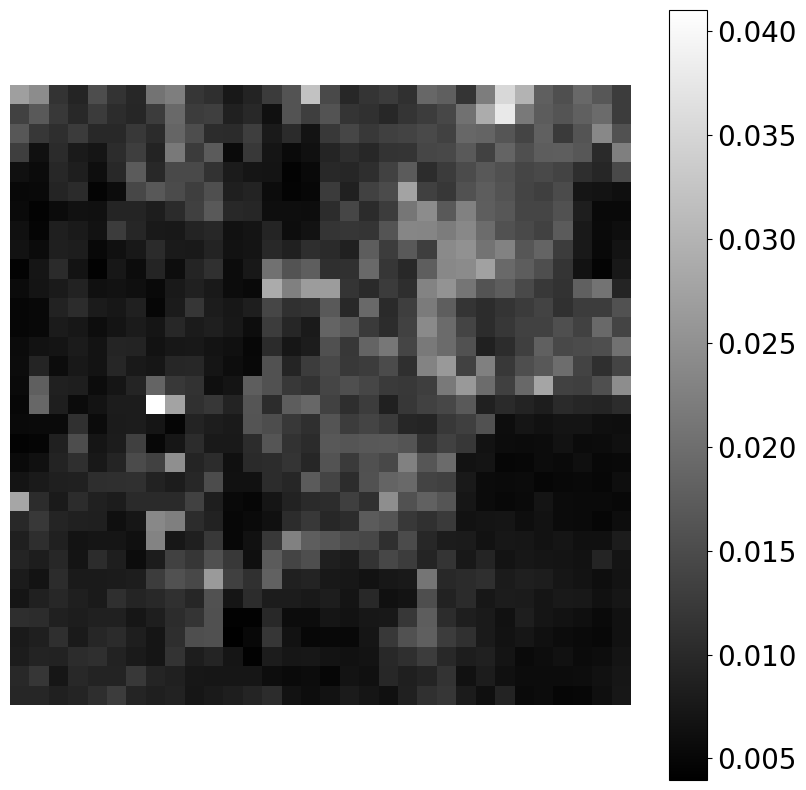} &
    \includegraphics[width=0.22\textwidth]{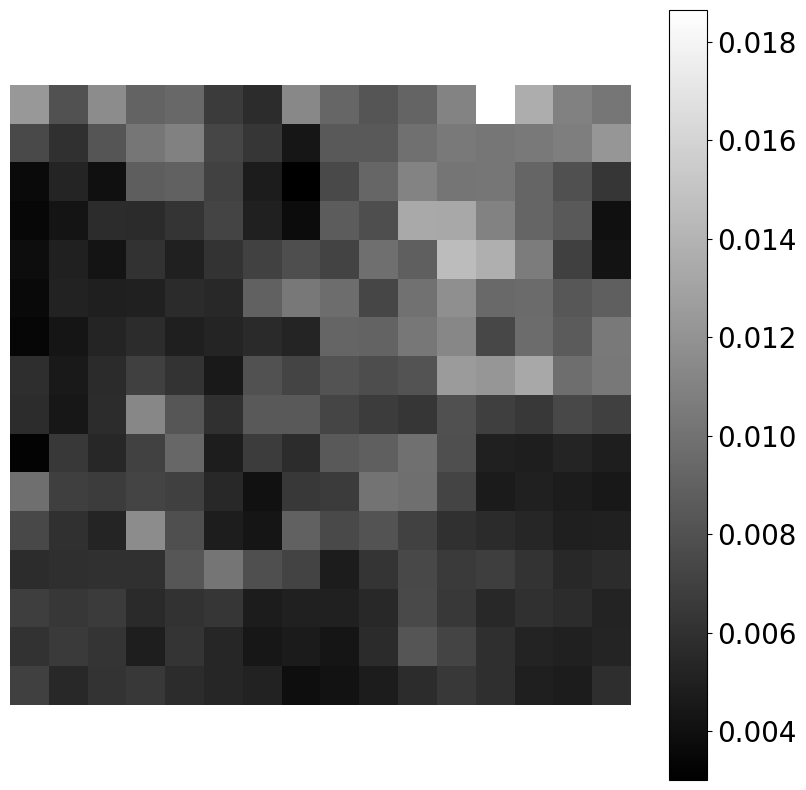}  
     \\
    \includegraphics[width=0.22\textwidth]{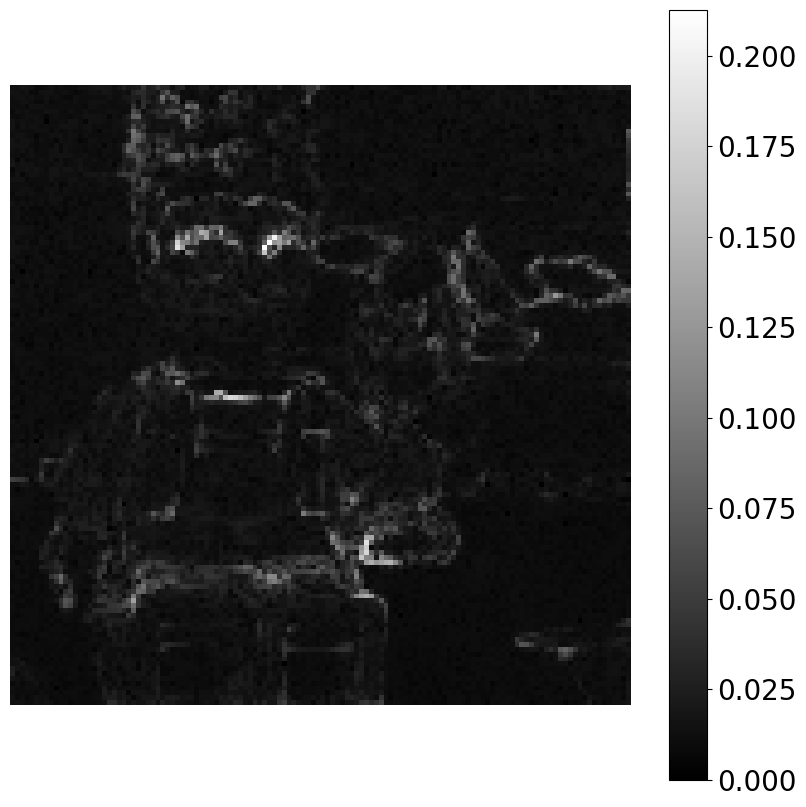} &
    \includegraphics[width=0.22\textwidth]{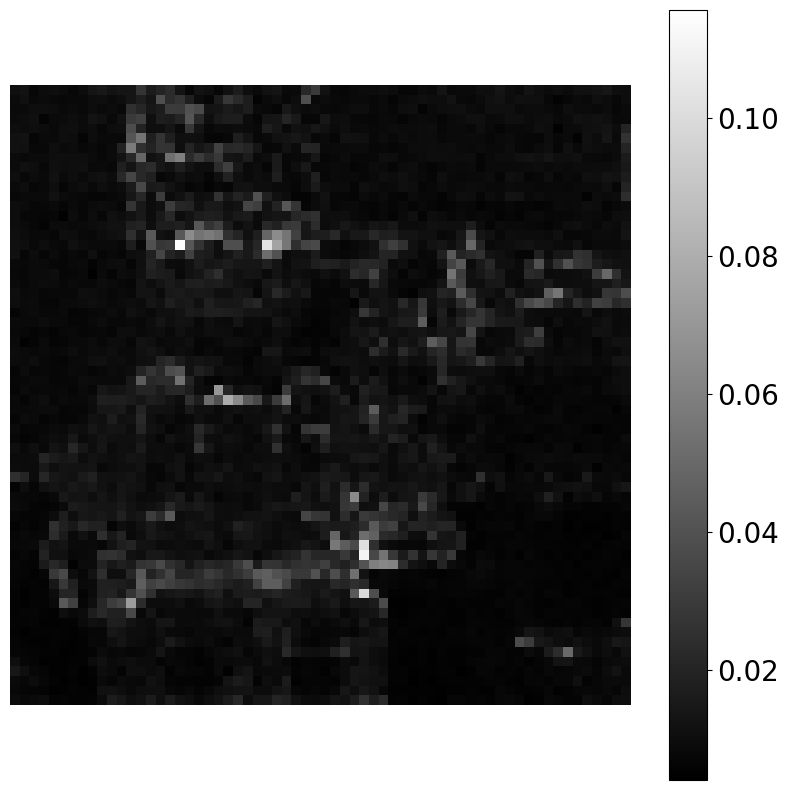} &
    \includegraphics[width=0.22\textwidth]{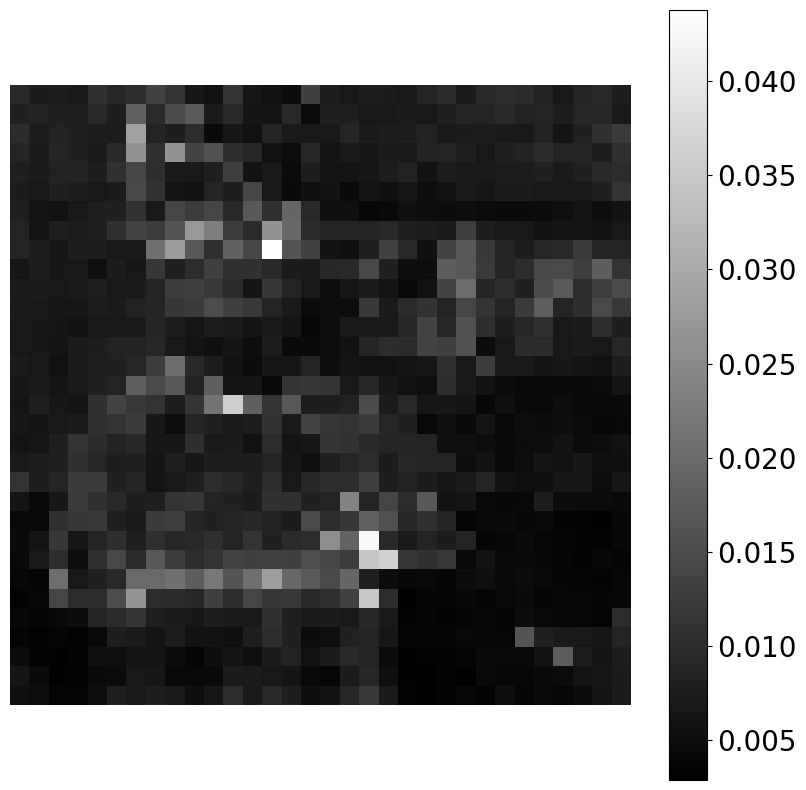} &
    \includegraphics[width=0.22\textwidth]{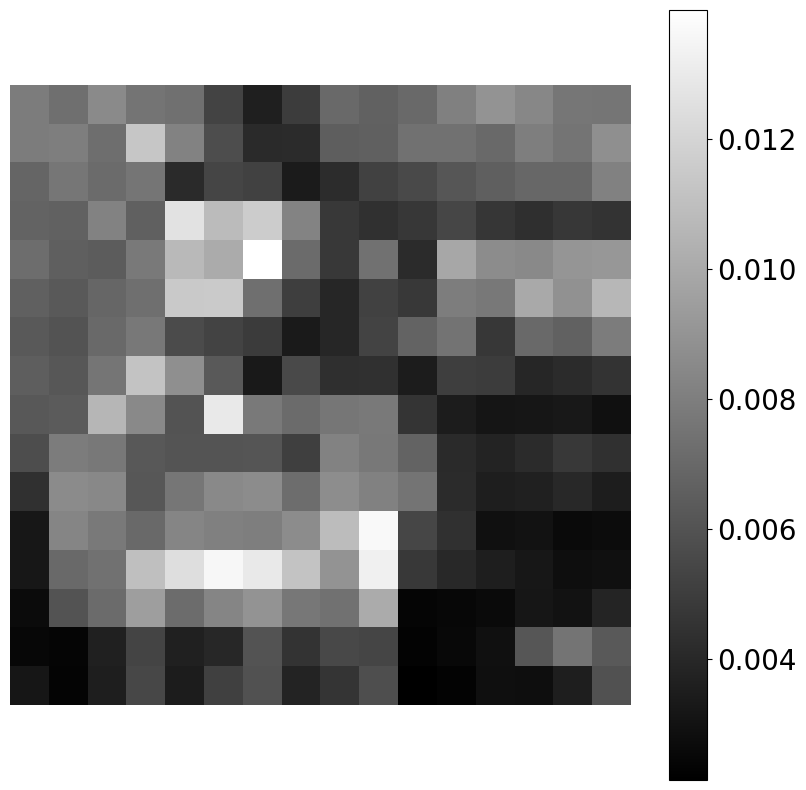} 
    \\
    Scale 1 & Scale 2 & Scale 3 & Scale 4
    \end{tabular}

    \caption{Marginal posterior standard deviation of the \texttt{Alley} and \texttt{Simpson} images for the inpainting problem at different scales. The scale $i$ corresponds to a downsampling by a factor $2 i$ of the original sample size.}
    \label{fig:inpainting_scaled_std}
\end{figure}

\begin{figure}
    \centering
    \begin{tabular}{c c c c}
    \includegraphics[width=0.22\textwidth]{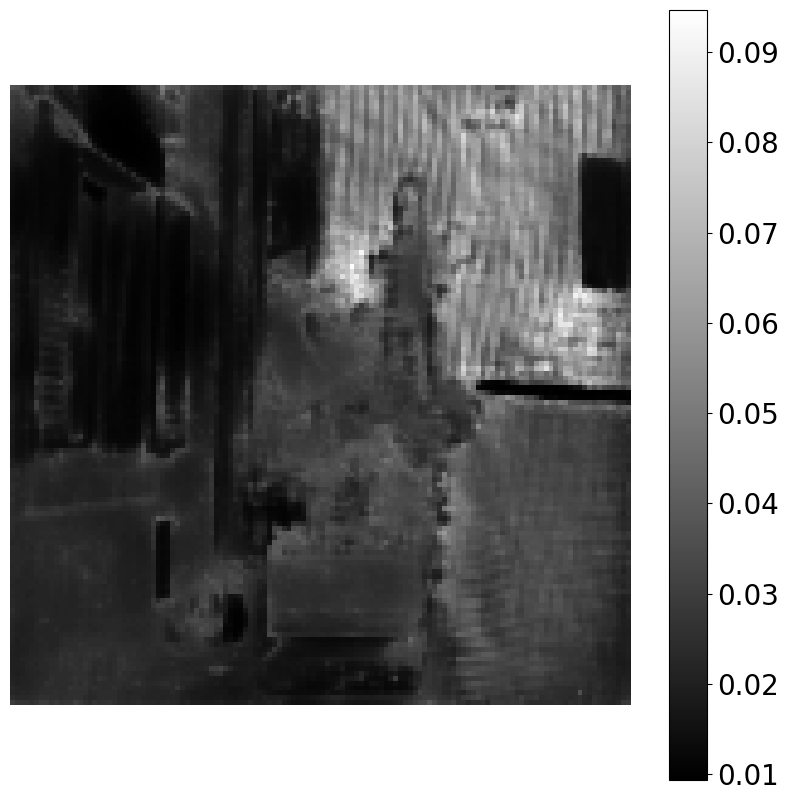} &
    \includegraphics[width=0.22\textwidth]{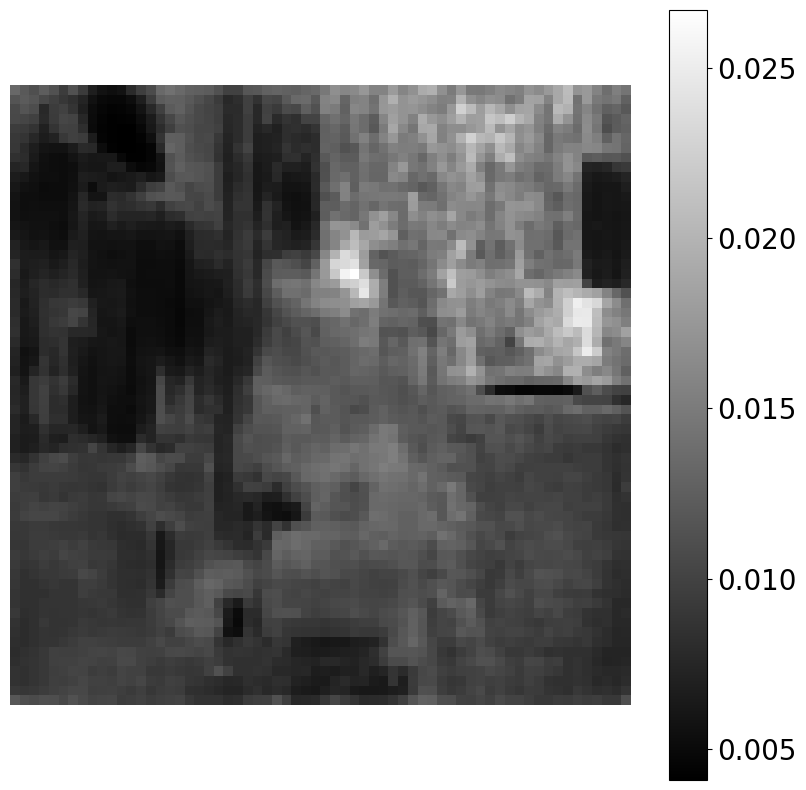} &
    \includegraphics[width=0.22\textwidth]{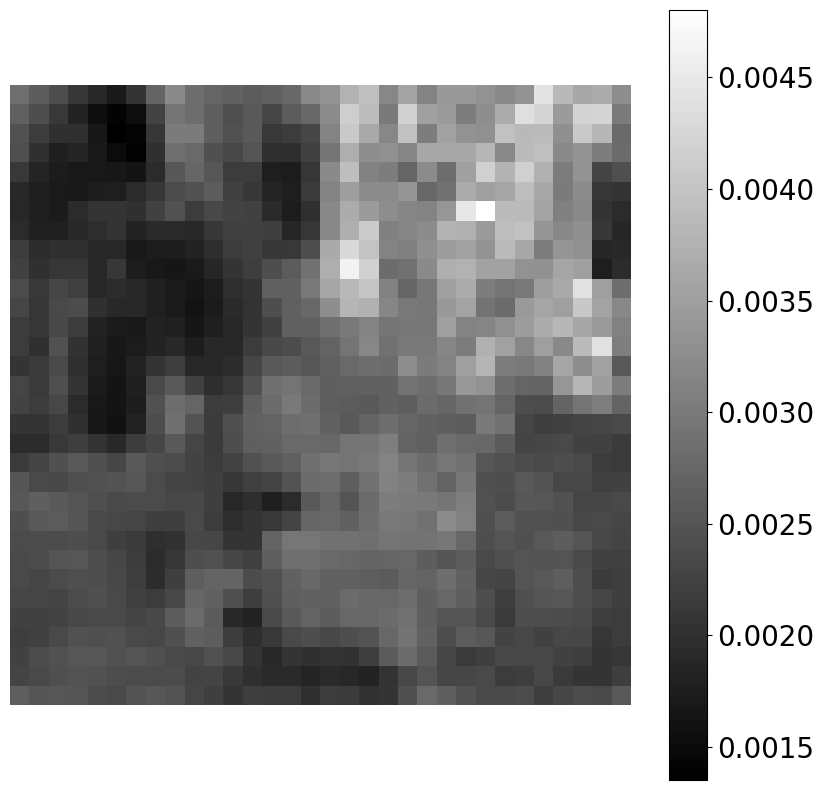} &
    \includegraphics[width=0.22\textwidth]{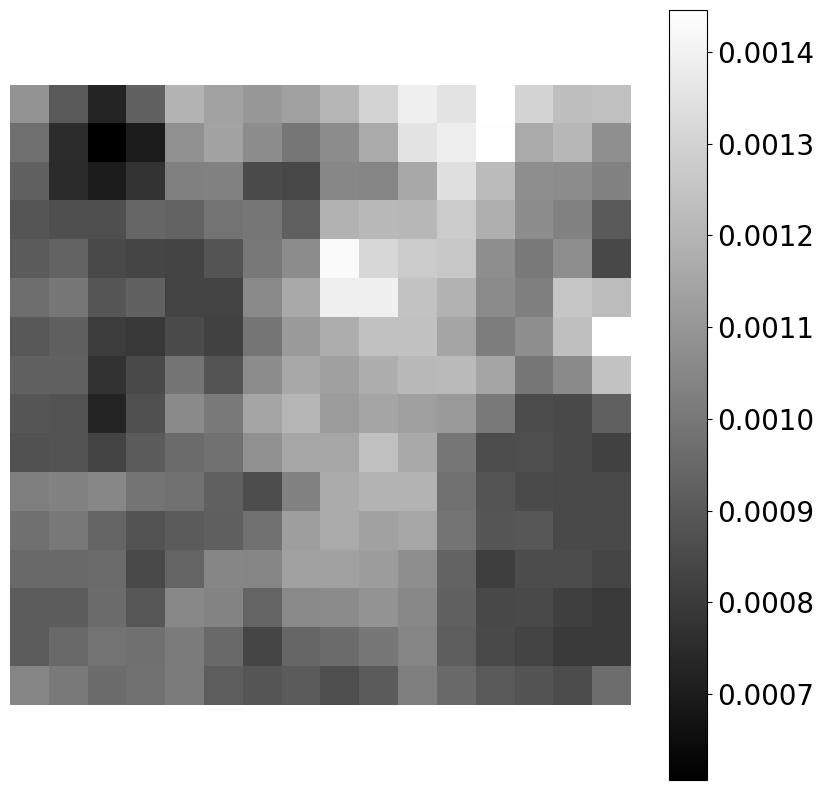}  
     \\
    \includegraphics[width=0.22\textwidth]{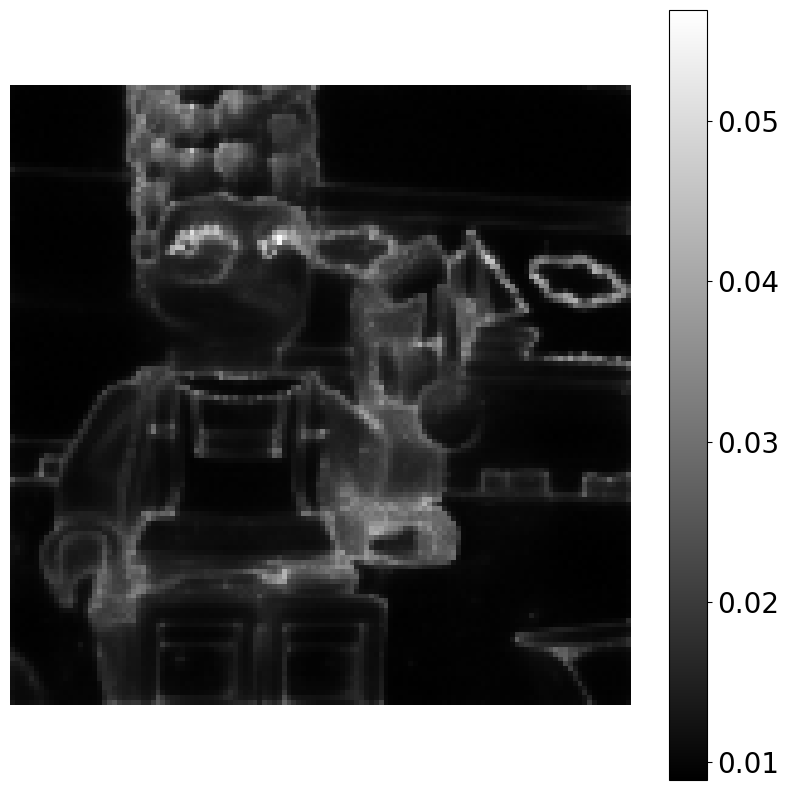} &
    \includegraphics[width=0.22\textwidth]{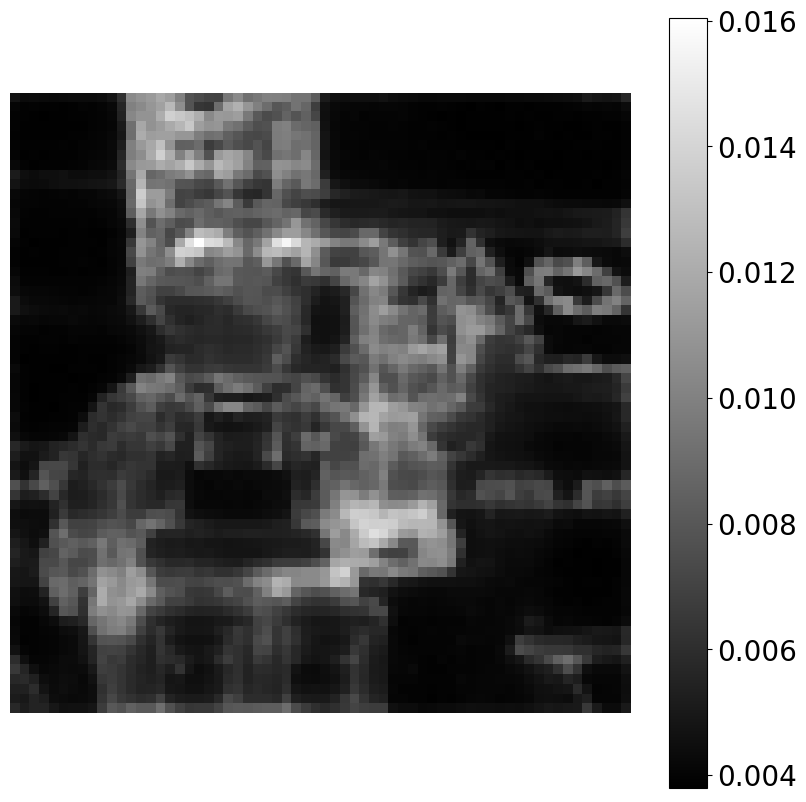} &
    \includegraphics[width=0.22\textwidth]{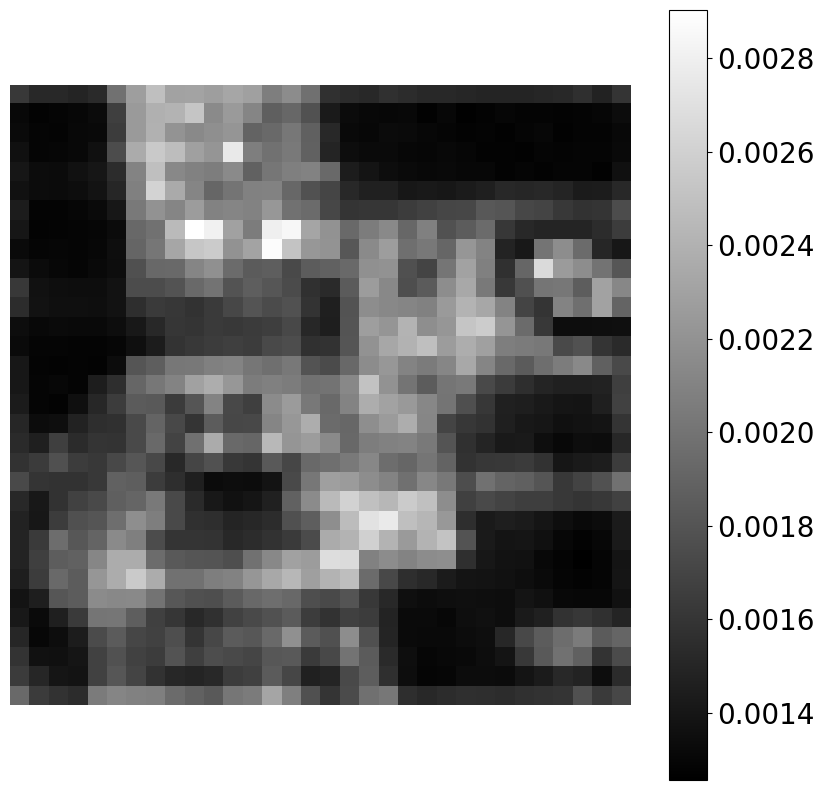} &
    \includegraphics[width=0.22\textwidth]{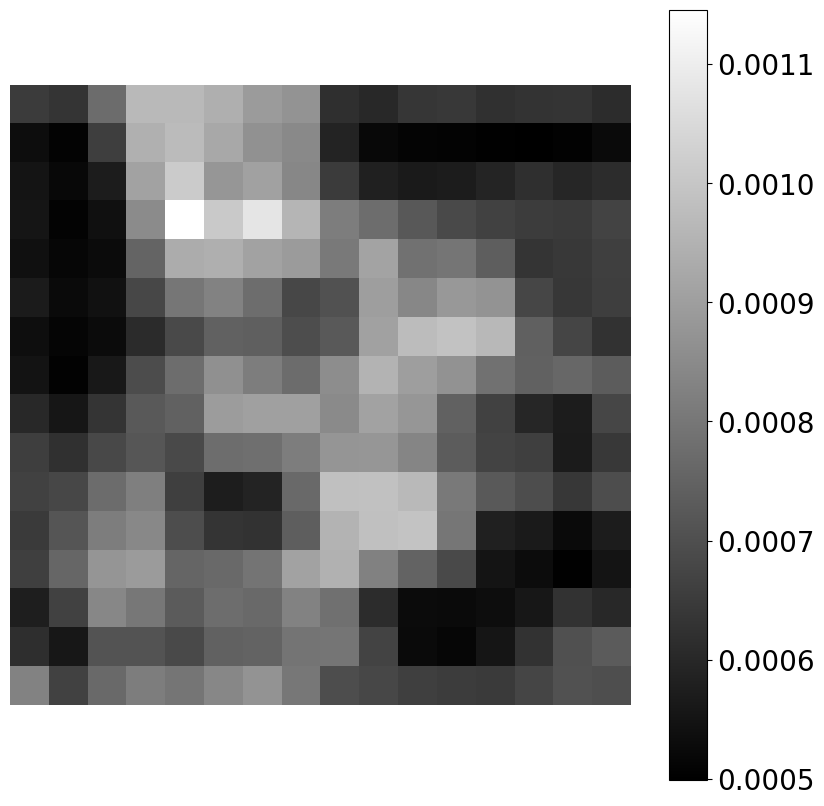} 
    \\
    Scale 1 & Scale 2 & Scale 3 & Scale 4
    \end{tabular}

    \caption{Marginal posterior standard deviation of the images \texttt{Alley} and \texttt{Simpson} for the deblurring problem at different scales. The scale $i$ corresponds to a downsampling by a factor $2 i$ of the original sample size.}
    \label{fig:deblurring_scaled_std}
\end{figure}

Following on from this, to explore the uncertainty for structures that are larger than one pixel, \Cref{fig:inpainting_scaled_std} and~\Cref{fig:deblurring_scaled_std} report the marginal standard deviation associated with higher scales. More precisely, for different values of the scale $i$, we downsample the stored samples by a factor $2i$ before computing the standard deviation. This downsampling step permits quantifying the uncertainty of larger or lower-frequency structures, such as the bottom of the glass in \texttt{Simpson} for the deblurring experiment.
At each scale, we see that the uncertainty of the estimate is much more localized for the inpainting problem (resulting in higher uncertainty values in some specific regions) and more spread out for deblurring, certainly because of the different nature of the degradations involves.

	\section{Conclusion}
\label{sec:conclusion}

This paper presented theory, methods, and computation algorithms for performing Bayesian inference with \emph{Plug \& Play} priors. This mathematical and computational framework is rooted in the Bayesian \emph{M-complete} paradigm and adopts the view that \emph{Plug \& Play} models approximate a regularised oracle model. We established clear conditions ensuring that the involved models and quantities of interest are well defined and well posed. Following on from this, we studied three Bayesian computation algorithms related to biased approximations of a Langevin diffusion process, for which we provide detailed convergence guarantees under easily verifiable and realistic conditions. For example, our theory does not require the denoising algorithms representing the prior to be gradient or proximal operators. We also studied the estimation error involved in using these algorithms and models instead of the oracle model, which is decision-theoretically optimal but intractable. To the best of our knowledge, this is the first Bayesian \emph{Plug \& Play} framework with this level of insight and guarantees on the delivered solutions. We illustrated the proposed framework with two Bayesian image restoration experiments  - deblurring and inpainting - where we computed point estimates as well as uncertainty visualisation and quantification analyses and highlighted how the limitations of the chosen denoiser manifest in the resulting Bayesian model and estimates. 

In future work, we would like to continue our theoretical and empirical
investigation of Bayesian \emph{Plug \& Play} models, methods and algorithms. From a modelling viewpoint, it would be interesting to consider priors that combine a denoiser with an analytic regularisation term, and 
other neural network based priors such as the generative ones used in \cite{bora2017compressed} or the autoencoder-based priors
in~\cite{Gonzalez2021}, as well as to generalise the Gaussian smoothing to other smoothings and investigate their properties in the context of Bayesian inverse problems. We are also very interested in strategies for training denoisers that automatically verify the conditions required for exponentially fast convergence of the Langevin SDE, for example by using the framework recently proposed in \cite{pesquet2020learning} to learn maximally monotone operators, or the data-driven regularisers described in \cite{Kobler_2020_CVPR,mukherjee2021learned}. In addition, we would like to understand when the projected RED estimator \cite{cohen2020regularization} - or its relaxed variant - are the MAP estimators for well-defined Bayesian models, as well as to study the interplay between the geometric aspects of the loss defining this estimator \cite{pereyra2019b} and the geometry of the set of fixed points of the denoiser defining the model. With regards to Bayesian analysis, it would be important to investigate the frequentist accuracy of \emph{Plug \& Play} models, as well as the adoption of robust Bayesian techniques in order to perform inference directly w.r.t. to the oracle model \cite{Watson2016}. From a Bayesian computation viewpoint, a priority is to develop accelerated algorithms similar to \cite{pereyra2020accelerating}. Lastly, with regards to experimental work, we intend to study the application of this framework to uncertainty quantification problems, e.g., in the context of medical
imaging. 


	\bibliographystyle{siamplain}
	\bibliography{pnp}

\begin{thebibliography}{10}

\bibitem{Aguerrebere2014b}
{\sc C.~Aguerrebere, A.~Almansa, J.~Delon, Y.~Gousseau, and P.~Muse}, {\em {A
  Bayesian Hyperprior Approach for Joint Image Denoising and Interpolation,
  With an Application to HDR Imaging}}, IEEE Transactions on Computational
  Imaging, 3 (2017), pp.~633--646,
  \url{https://doi.org/10.1109/TCI.2017.2704439},
  \url{https://nounsse.github.io/HBE_project/},
  \url{https://arxiv.org/abs/1706.03261}.

\bibitem{pnp_ahmad2020}
{\sc R.~Ahmad, C.~A. Bouman, G.~T. Buzzard, S.~Chan, S.~Liu, E.~T. Reehorst,
  and P.~Schniter}, {\em Plug-and-play methods for magnetic resonance imaging:
  Using denoisers for image recovery}, IEEE Signal Processing Magazine, 37
  (2020), pp.~105--116, \url{https://doi.org/10.1109/MSP.2019.2949470}.

\bibitem{Alain2012}
{\sc G.~Alain and Y.~Bengio}, {\em {What Regularized Auto-Encoders Learn from
  the Data-Generating Distribution}}, Journal of Machine Learning Research, 15
  (2014), pp.~3743--3773, \url{http://jmlr.org/papers/v15/alain14a.html},
  \url{https://arxiv.org/abs/1211.4246}.

\bibitem{antun2021can}
{\sc V.~Antun, M.~J. Colbrook, and A.~C. Hansen}, {\em Can stable and accurate
  neural networks be computed?--on the barriers of deep learning and smale's
  18th problem}, arXiv preprint arXiv:2101.08286,  (2021).

\bibitem{antun2020instabilities}
{\sc V.~Antun, F.~Renna, C.~Poon, B.~Adcock, and A.~C. Hansen}, {\em On
  instabilities of deep learning in image reconstruction and the potential
  costs of {AI}}, Proceedings of the National Academy of Sciences, 117 (2020),
  pp.~30088--30095, \url{https://doi.org/10.1073/pnas.1907377117}.

\bibitem{arridge_maass_oktem_schonlieb_2019}
{\sc S.~Arridge, P.~Maass, O.~\"{O}ktem, and C.-B. Sch\"{o}nlieb}, {\em Solving
  inverse problems using data-driven models}, Acta Numerica, 28 (2019),
  p.~1–174, \url{https://doi.org/10.1017/S0962492919000059}.

\bibitem{bach2017breaking}
{\sc F.~Bach}, {\em {Breaking the Curse of Dimensionality with Convex Neural
  Networks}}, The Journal of Machine Learning Research, 18 (2017),
  pp.~629--681, \url{https://jmlr.org/papers/volume18/14-546/14-546.pdf}.

\bibitem{bauschke2011convex}
{\sc H.~H. Bauschke, P.~L. Combettes, et~al.}, {\em {Convex Analysis and
  Monotone Operator Theory in Hilbert Spaces}}, vol.~408, Springer, 2011,
  \url{https://doi.org/10.1007/978-3-319-48311-5}.

\bibitem{bayati2011dynamics}
{\sc M.~Bayati and A.~Montanari}, {\em The dynamics of message passing on dense
  graphs, with applications to compressed sensing}, IEEE Transactions on
  Information Theory, 57 (2011), pp.~764--785,
  \url{https://doi.org/10.1109/TIT.2010.2094817}.

\bibitem{bernardo_smith_bayesian_theory}
{\sc J.~Bernardo and A.~Smith}, {\em Bayesian Theory}, vol.~15, 01 2000,
  \url{https://doi.org/10.2307/2983298}.

\bibitem{Bigdeli2017}
{\sc S.~A. Bigdeli, M.~Jin, P.~Favaro, and M.~Zwicker}, {\em {Deep Mean-Shift
  Priors for Image Restoration}}, in Advances in Neural Information Processing
  Systems, vol.~30, Curran Associates, Inc., sep 2017, pp.~763--772,
  \url{http://papers.nips.cc/paper/6678-deep-mean-shift-priors-for-image-restoration},
  \url{https://arxiv.org/abs/1709.03749}.

\bibitem{Bigdeli2017a}
{\sc S.~A. Bigdeli and M.~Zwicker}, {\em {Image Restoration using Autoencoding
  Priors}}, tech. report, 2017, \url{https://arxiv.org/abs/1703.09964}.

\bibitem{MRF-MIT-2011}
{\sc A.~Blake, P.~Kohli, and C.~Rother}, {\em Markov Random Fields for Vision
  and Image Processing}, EBSCO ebook academic collection, MIT Press, 2011,
  \url{https://doi.org/10.7551/mitpress/8579.003.0001}.

\bibitem{bogachev2007measure}
{\sc V.~I. Bogachev}, {\em Measure Theory}, vol.~Volume 1, Springer, 1~ed.,
  2007,
  \url{http://gen.lib.rus.ec/book/index.php?md5=ffbdc7e3d8e571c6cd5f9e8633cdfdc2}.

\bibitem{bora2017compressed}
{\sc A.~Bora, A.~Jalal, E.~Price, and A.~G. Dimakis}, {\em {Compressed Sensing
  using Generative Models}}, in Proceedings of the 34th International
  Conference on Machine Learning, vol.~70 of Proceedings of Machine Learning
  Research, PMLR, 06--11 Aug 2017, pp.~537--546,
  \url{https://proceedings.mlr.press/v70/bora17a.html},
  \url{https://arxiv.org/abs/arXiv:1703.03208v1}.

\bibitem{debortoli2020convergence}
{\sc V.~D. Bortoli and A.~Durmus}, {\em Convergence of diffusions and their
  discretizations: from continuous to discrete processes and back}, 2020,
  \url{https://arxiv.org/abs/1904.09808}.

\bibitem{debortoli2020maximum}
{\sc V.~D. Bortoli, A.~Durmus, A.~F. Vidal, and M.~Pereyra}, {\em Maximum
  likelihood estimation of regularisation parameters in high-dimensional
  inverse problems: an empirical bayesian approach. part ii: Theoretical
  analysis}, 2020, \url{https://arxiv.org/abs/2008.05793}.

\bibitem{boucheron2013concentration}
{\sc S.~Boucheron, G.~Lugosi, and P.~Massart}, {\em Concentration inequalities:
  A nonasymptotic theory of independence}, Oxford university press, 2013.

\bibitem{buzzard2018plug}
{\sc G.~T. Buzzard, S.~H. Chan, S.~Sreehari, and C.~A. Bouman}, {\em
  {Plug-and-Play unplugged: Optimization-free Reconstruction Using Consensus
  Equilibrium}}, SIAM Journal on Imaging Sciences, 11 (2018), pp.~2001--2020,
  \url{https://doi.org/10.1137/17M1122451}.

\bibitem{Cai2018}
{\sc X.~Cai, M.~Pereyra, and J.~D. McEwen}, {\em Uncertainty quantification for
  radio interferometric imaging {\textendash} {I. Proximal} {MCMC} methods},
  Monthly Notices of the Royal Astronomical Society, 480 (2018),
  pp.~4154--4169, \url{https://doi.org/10.1093/mnras/sty2004}.

\bibitem{Chambolle04}
{\sc A.~Chambolle}, {\em {An algorithm for Total Variation Minimization and
  Applications}}, Journal of Mathematical Imaging and Vision, 20 (2004),
  pp.~89--97, \url{https://doi.org/10.1023/B:JMIV.0000011325.36760.1e}.

\bibitem{chan2017plug}
{\sc S.~H. {Chan}, X.~{Wang}, and O.~A. {Elgendy}}, {\em {Plug-and-Play ADMM
  for Image Restoration: Fixed-Point Convergence and Applications}}, IEEE
  Transactions on Computational Imaging, 3 (2017), pp.~84--98,
  \url{https://doi.org/10.1109/TCI.2016.2629286},
  \url{https://arxiv.org/abs/1605.01710}.

\bibitem{chen2017bm3d}
{\sc S.~Chen, C.~Luo, B.~Deng, Y.~Qin, H.~Wang, and Z.~Zhuang}, {\em {BM3D
  vector approximate message passing for radar coded-aperture imaging}}, in
  2017 Progress in Electromagnetics Research Symposium-Fall (PIERS-FALL), IEEE,
  2017, pp.~2035--2038, \url{https://doi.org/10.1109/PIERS-FALL.2017.8293472}.

\bibitem{chen2014stochastic}
{\sc T.~Chen, E.~Fox, and C.~Guestrin}, {\em {Stochastic gradient Hamiltonian
  Monte Carlo}}, in Proceedings of the 31st International Conference on Machine
  Learning, vol.~32 of Proceedings of Machine Learning Research, PMLR, 22--24
  Jun 2014, pp.~1683--1691,
  \url{https://proceedings.mlr.press/v32/cheni14.html}.

\bibitem{Chen2017}
{\sc Y.~Chen and T.~Pock}, {\em {Trainable Nonlinear Reaction Diffusion: A
  Flexible Framework for Fast and Effective Image Restoration}}, IEEE
  Transactions on Pattern Analysis and Machine Intelligence, 39 (2017),
  pp.~1256--1272, \url{https://doi.org/10.1109/TPAMI.2016.2596743},
  \url{https://arxiv.org/abs/1508.02848}.

\bibitem{cohen2020regularization}
{\sc R.~Cohen, M.~Elad, and P.~Milanfar}, {\em {Regularization by Denoising via
  Fixed-Point Projection (RED-PRO)}}, SIAM Journal on Imaging Sciences, 14
  (2021), pp.~1374--1406, \url{https://doi.org/10.1137/20M1337168},
  \url{https://arxiv.org/abs/2008.00226}.

\bibitem{dalalyan2014theoretical}
{\sc A.~S. Dalalyan}, {\em Theoretical guarantees for approximate sampling from
  smooth and log-concave densities}, J. R. Stat. Soc. Ser. B. Stat. Methodol.,
  79 (2017), pp.~651--676, \url{https://doi.org/10.1111/rssb.12183}.

\bibitem{diamond2017unrolled}
{\sc S.~Diamond, V.~Sitzmann, F.~Heide, and G.~Wetzstein}, {\em Unrolled
  optimization with deep priors},  (2017),
  \url{https://arxiv.org/abs/1705.08041}.

\bibitem{dong2014learning}
{\sc C.~Dong, C.~C. Loy, K.~He, and X.~Tang}, {\em {Learning a Deep
  Convolutional Network for Image Super-Resolution}}, in Computer Vision --
  ECCV 2014, Springer International Publishing, 2014, pp.~184--199.

\bibitem{donoho2009message}
{\sc D.~L. Donoho, A.~Maleki, and A.~Montanari}, {\em Message-passing
  algorithms for compressed sensing}, Proceedings of the National Academy of
  Sciences, 106 (2009), pp.~18914--18919,
  \url{https://doi.org/10.1073/pnas.0909892106}.

\bibitem{douc:moulines:priouret:soulier:2018}
{\sc R.~Douc, E.~Moulines, P.~Priouret, and P.~Soulier}, {\em Markov Chains},
  Springer, 2019.

\bibitem{durmus2017nonasymp}
{\sc A.~Durmus and E.~Moulines}, {\em Nonasymptotic convergence analysis for
  the unadjusted {L}angevin algorithm}, Ann. Appl. Probab., 27 (2017),
  pp.~1551--1587, \url{https://doi.org/10.1214/16-AAP1238}.

\bibitem{durmus2018efficient}
{\sc A.~Durmus, E.~Moulines, and M.~Pereyra}, {\em {Efficient Bayesian
  Computation by Proximal Markov Chain Monte Carlo: When Langevin meets
  Moreau}}, SIAM Journal on Imaging Sciences, 11 (2018), pp.~473--506,
  \url{https://doi.org/10.1137/16M1108340}.

\bibitem{efron2011tweedie}
{\sc B.~Efron}, {\em Tweedie’s formula and selection bias}, Journal of the
  American Statistical Association, 106 (2011), pp.~1602--1614,
  \url{https://doi.org/10.1198/jasa.2011.tm11181}.

\bibitem{fletcher2019plug}
{\sc A.~K. Fletcher, P.~Pandit, S.~Rangan, S.~Sarkar, and P.~Schniter}, {\em
  Plug in estimation in high dimensional linear inverse problems a rigorous
  analysis}, Journal of Statistical Mechanics: Theory and Experiment, 2019
  (2019), p.~124021, \url{https://doi.org/10.1088/1742-5468/ab321a}.

\bibitem{gao2019dynamic}
{\sc H.~Gao, X.~Tao, X.~Shen, and J.~Jia}, {\em {Dynamic Scene Deblurring With
  Parameter Selective Sharing and Nested Skip Connections}}, in 2019 IEEE/CVF
  Conference on Computer Vision and Pattern Recognition (CVPR), 2019,
  pp.~3843--3851, \url{https://doi.org/10.1109/CVPR.2019.00397}.

\bibitem{gharbi2016deep}
{\sc M.~Gharbi, G.~Chaurasia, S.~Paris, and F.~Durand}, {\em {Deep Joint
  Demosaicking and Denoising}}, ACM Transactions on Graphics (TOG), 35 (2016),
  p.~191, \url{https://doi.org/10.1145/2980179.2982399}.

\bibitem{gilton2019neumann}
{\sc D.~Gilton, G.~Ongie, and R.~Willett}, {\em Neumann networks for inverse
  problems in imaging},  (2019), \url{https://arxiv.org/abs/1901.03707}.

\bibitem{girolami2011riemann}
{\sc M.~Girolami and B.~Calderhead}, {\em {Riemann manifold Langevin and
  Hamiltonian Monte Carlo methods}}, Journal of the Royal Statistical Society:
  Series B (Statistical Methodology), 73 (2011), pp.~123--214,
  \url{https://doi.org/10.1111/j.1467-9868.2010.00765.x}.

\bibitem{Gonzalez2021}
{\sc M.~Gonz{\'{a}}lez, A.~Almansa, and P.~Tan}, {\em {Solving Inverse Problems
  by Joint Posterior Maximization with Autoencoding Prior}}, arXiv,  (2021),
  \url{https://arxiv.org/abs/2103.01648}.

\bibitem{gregor2010learning}
{\sc K.~Gregor and Y.~LeCun}, {\em Learning fast approximations of sparse
  coding}, in Proceedings of the 27th International Conference on International
  Conference on Machine Learning, Omnipress, 2010, pp.~399--406,
  \url{https://icml.cc/Conferences/2010/papers/449.pdf}.

\bibitem{guo2019agem}
{\sc B.~Guo, Y.~Han, and J.~Wen}, {\em {AGEM: Solving Linear Inverse Problems
  via Deep Priors and Sampling}}, in Advances in Neural Information Processing
  Systems, Curran Associates, Inc., 2019, pp.~547--558,
  \url{https://proceedings.neurips.cc/paper/2019/file/49182f81e6a13cf5eaa496d51fea6406-Paper.pdf}.

\bibitem{ho2020denoising}
{\sc J.~Ho, A.~Jain, and P.~Abbeel}, {\em {Denoising Diffusion Probabilistic
  Models}}, Advances in Neural Information Processing Systems, 33 (2020),
  pp.~6840--6851,
  \url{https://proceedings.neurips.cc/paper/2020/file/4c5bcfec8584af0d967f1ab10179ca4b-Paper.pdf}.

\bibitem{houdard2018high}
{\sc A.~Houdard, C.~Bouveyron, and J.~Delon}, {\em {High-Dimensional Mixture
  Models For Unsupervised Image Denoising (HDMI)}}, SIAM Journal on Imaging
  Sciences, 11 (2018), pp.~2815--2846,
  \url{https://doi.org/10.1137/17M1135694}.

\bibitem{hurault2021gradient}
{\sc S.~Hurault, A.~Leclaire, and N.~Papadakis}, {\em Gradient step denoiser
  for convergent plug-and-play}, arXiv preprint arXiv:2110.03220,  (2021).

\bibitem{javanmard2013state}
{\sc A.~Javanmard and A.~Montanari}, {\em State evolution for general
  approximate message passing algorithms, with applications to spatial
  coupling}, Information and Inference: A Journal of the IMA, 2 (2013),
  pp.~115--144, \url{https://doi.org/10.1093/imaiai/iat004}.

\bibitem{kadkhodaie2020solving}
{\sc Z.~Kadkhodaie and E.~P. Simoncelli}, {\em {Stochastic Solutions for Linear
  Inverse Problems using the Prior Implicit in a Denoiser}}, Advances in Neural
  Information Processing Systems,  (2021),
  \url{https://proceedings.neurips.cc/paper/2021/file/6e28943943dbed3c7f82fc05f269947a-Paper.pdf}.

\bibitem{kamilov2017plug}
{\sc U.~S. Kamilov, H.~Mansour, and B.~Wohlberg}, {\em A plug-and-play priors
  approach for solving nonlinear imaging inverse problems}, IEEE Signal
  Processing Letters, 24 (2017), pp.~1872--1876,
  \url{https://doi.org/10.1109/LSP.2017.2763583}.

\bibitem{karatzas1991brownian}
{\sc I.~Karatzas and S.~E. Shreve}, {\em Brownian motion and stochastic
  calculus}, vol.~113 of Graduate Texts in Mathematics, Springer-Verlag, New
  York, second~ed., 1991, \url{https://doi.org/10.1007/978-1-4612-0949-2},
  \url{https://doi.org/10.1007/978-1-4612-0949-2}.

\bibitem{kawar2021snips}
{\sc B.~Kawar, G.~Vaksman, and M.~Elad}, {\em Snips: Solving noisy inverse
  problems stochastically}, arXiv preprint arXiv:2105.14951,  (2021).

\bibitem{kawar2021stochastic}
{\sc B.~Kawar, G.~Vaksman, and M.~Elad}, {\em {Stochastic Image Denoising by
  Sampling from the Posterior Distribution}}, in Proceedings of the IEEE/CVF
  International Conference on Computer Vision (ICCV) Workshops, October 2021,
  pp.~1866--1875,
  \url{https://openaccess.thecvf.com/content/ICCV2021W/AIM/html/Kawar_Stochastic_Image_Denoising_by_Sampling_From_the_Posterior_Distribution_ICCVW_2021_paper.html}.

\bibitem{Kobler_2020_CVPR}
{\sc E.~Kobler, A.~Effland, K.~Kunisch, and T.~Pock}, {\em {Total Deep
  Variation for Linear Inverse Problems}}, in 2020 IEEE/CVF Conference on
  Computer Vision and Pattern Recognition (CVPR), June 2020,
  \url{https://doi.org/10.1109/CVPR42600.2020.00757}.

\bibitem{kullback1997information}
{\sc S.~Kullback}, {\em Information theory and statistics}, Courier
  Corporation, 1997.

\bibitem{latzs2020well}
{\sc J.~Latz}, {\em On the well-posedness of bayesian inverse problems},
  SIAM/ASA Journal on Uncertainty Quantification, 8 (2020), pp.~451--482,
  \url{https://doi.org/10.1137/19M1247176}.

\bibitem{laumont2021maximum}
{\sc R.~Laumont, V.~de~Bortoli, A.~Almansa, J.~Delon, A.~Durmus, and
  M.~Pereyra}, {\em On maximum-a-posteriori estimation with plug \& play priors
  and stochastic gradient descent},  (2021),
  \url{https://hal.archives-ouvertes.fr/hal-03348735/document}.

\bibitem{lehtinen2018noise2noise}
{\sc J.~Lehtinen, J.~Munkberg, J.~Hasselgren, S.~Laine, T.~Karras, M.~Aittala,
  and T.~Aila}, {\em Noise2noise: Learning image restoration without clean
  data}, 80 (2018), pp.~2965--2974,
  \url{https://proceedings.mlr.press/v80/lehtinen18a.html}.

\bibitem{lipster2001statistics}
{\sc R.~S. Liptser and A.~N. Shiryaev}, {\em Statistics of random processes.
  {I}}, vol.~5 of Applications of Mathematics (New York), Springer-Verlag,
  Berlin, expanded~ed., 2001.
\newblock General theory, Translated from the 1974 Russian original by A. B.
  Aries, Stochastic Modelling and Applied Probability.

\bibitem{Louchet2013}
{\sc C.~Louchet and L.~Moisan}, {\em {Posterior expectation of the total
  variation model: Properties and experiments}}, SIAM Journal on Imaging
  Sciences, 6 (2013), pp.~2640--2684, \url{https://doi.org/10.1137/120902276}.

\bibitem{meinhardt2017learning}
{\sc T.~Meinhardt, M.~Moller, C.~Hazirbas, and D.~Cremers}, {\em Learning
  proximal operators: Using denoising networks for regularizing inverse imaging
  problems}, in (ICCV) International Conference on Computer Vision, 2017,
  pp.~1781--1790, \url{https://doi.org/10.1109/ICCV.2017.198},
  \url{http://openaccess.thecvf.com/content_iccv_2017/html/Meinhardt_Learning_Proximal_Operators_ICCV_2017_paper.html}.

\bibitem{metzler2016denoising}
{\sc C.~A. Metzler, A.~Maleki, and R.~G. Baraniuk}, {\em From denoising to
  compressed sensing}, IEEE Transactions on Information Theory, 62 (2016),
  pp.~5117--5144, \url{https://doi.org/10.1109/TIT.2016.2556683}.

\bibitem{meyn1993criteria_iii}
{\sc S.~P. Meyn and R.~L. Tweedie}, {\em Stability of {M}arkovian processes.
  {III}. {F}oster-{L}yapunov criteria for continuous-time processes}, Adv. in
  Appl. Probab., 25 (1993), pp.~518--548,
  \url{https://doi.org/10.2307/1427522}, \url{https://doi.org/10.2307/1427522}.

\bibitem{miyato2018spectral}
{\sc T.~Miyato, T.~Kataoka, M.~Koyama, and Y.~Yoshida}, {\em {Spectral
  Normalization for Generative Adversarial Networks}},  (2018),
  \url{https://openreview.net/forum?id=B1QRgziT-}.

\bibitem{mukherjee2021learned}
{\sc S.~Mukherjee, S.~Dittmer, Z.~Shumaylov, S.~Lunz, O.~Öktem, and C.-B.
  Schönlieb}, {\em Learned convex regularizers for inverse problems}, 2021,
  \url{https://arxiv.org/abs/2008.02839}.

\bibitem{Pereyra2016}
{\sc M.~Pereyra}, {\em {Proximal Markov chain Monte Carlo algorithms}},
  Statistics and Computing, 26 (2016), pp.~745--760,
  \url{https://doi.org/10.1007/s11222-015-9567-4},
  \url{https://arxiv.org/abs/1306.0187}.

\bibitem{pereyra2019b}
{\sc M.~Pereyra}, {\em {Revisiting Maximum-A-Posteriori Estimation in
  Log-Concave} models}, SIAM Journal on Imaging Sciences, 12 (2019),
  pp.~650--670, \url{https://doi.org/10.1137/18M1174076}.

\bibitem{pereyra2015survey}
{\sc M.~Pereyra, P.~Schniter, E.~Chouzenoux, J.-C. Pesquet, J.-Y. Tourneret,
  A.~O. Hero, and S.~McLaughlin}, {\em A survey of stochastic simulation and
  optimization methods in signal processing}, IEEE Journal of Selected Topics
  in Signal Processing, 10 (2015), pp.~224--241,
  \url{https://doi.org/10.1109/JSTSP.2015.2496908}.

\bibitem{pereyra2020accelerating}
{\sc M.~Pereyra, L.~Vargas~Mieles, and K.~C. Zygalakis}, {\em Accelerating
  proximal {M}arkov chain {M}onte {C}arlo by using an explicit stabilized
  method}, SIAM J. Imaging Sci., 13 (2020), pp.~905--935,
  \url{https://doi.org/10.1137/19M1283719}.

\bibitem{pesquet2020learning}
{\sc J.-C. Pesquet, A.~Repetti, M.~Terris, and Y.~Wiaux}, {\em {Learning
  Maximally Monotone Operators for Image Recovery}}, 2020,
  \url{https://doi.org/10.1137/20M1387961},
  \url{https://arxiv.org/abs/2012.13247}.

\bibitem{repetti_pereyra_2019}
{\sc A.~Repetti, M.~Pereyra, and Y.~Wiaux}, {\em {Scalable Bayesian Uncertainty
  Quantification in Imaging Inverse Problems via Convex Optimization}}, SIAM
  Journal on Imaging Sciences, 12 (2019), pp.~87--118,
  \url{https://doi.org/10.1137/18M1173629}.

\bibitem{robert2007bayesian}
{\sc C.~Robert}, {\em The Bayesian Choice: From Decision-Theoretic Foundations
  to Computational Implementation}, Springer Texts in Statistics, Springer New
  York, 2007, \url{https://books.google.fr/books?id=6oQ4s8Pq9pYC}.

\bibitem{roberts1996exponential}
{\sc G.~O. Roberts, R.~L. Tweedie, et~al.}, {\em {Exponential convergence of
  Langevin distributions and their discrete approximations}}, Bernoulli, 2
  (1996), pp.~341--363, \url{https://doi.org/10.2307/3318418}.

\bibitem{Rudin1992}
{\sc L.~I. Rudin, S.~Osher, and E.~Fatemi}, {\em {Nonlinear total variation
  based noise removal algorithms}}, Physica D: Nonlinear Phenomena, 60 (1992),
  pp.~259--268, \url{https://doi.org/10.1016/0167-2789(92)90242-F}.

\bibitem{ryu2019plug}
{\sc E.~K. Ryu, J.~Liu, S.~Wang, X.~Chen, Z.~Wang, and W.~Yin}, {\em
  {Plug-and-Play Methods Provably Converge with Properly Trained Denoisers}},
  in Proceedings of the 36th International Conference on Machine Learning,
  {ICML} 2019, 9-15 June 2019, Long Beach, California, {USA}, 2019,
  pp.~5546--5557, \url{http://proceedings.mlr.press/v97/ryu19a.html},
  \url{https://arxiv.org/abs/1905.05406}.

\bibitem{schwartz2018deepisp}
{\sc E.~Schwartz, R.~Giryes, and A.~M. Bronstein}, {\em {DeepISP: Toward
  Learning an End-to-End Image Processing Pipeline}}, IEEE Transactions on
  Image Processing, 28 (2018), pp.~912--923,
  \url{https://doi.org/10.1109/TIP.2018.2872858}.

\bibitem{schwartzdesintegration}
{\sc L.~Schwartz}, {\em D{\'e}sint{\'e}gration d'une mesure}, S{\'e}minaire
  {\'E}quations aux d{\'e}riv{\'e}es partielles (Polytechnique), pp.~1--10,
  \url{http://eudml.org/doc/111551}.

\bibitem{song2019generative}
{\sc Y.~Song and S.~Ermon}, {\em Generative modeling by estimating gradients of
  the data distribution}, in Advances in Neural Information Processing Systems,
  vol.~32, 2019,
  \url{https://proceedings.neurips.cc/paper/2019/file/3001ef257407d5a371a96dcd947c7d93-Paper.pdf}.

\bibitem{stuart_2010}
{\sc A.~M. Stuart}, {\em {Inverse problems: A Bayesian perspective}}, Acta
  Numerica, 19 (2010), p.~451–559,
  \url{https://doi.org/10.1017/S0962492910000061}.

\bibitem{sun2020scalable}
{\sc Y.~Sun, Z.~Wu, B.~Wohlberg, and U.~S. Kamilov}, {\em {Scalable
  Plug-and-Play ADMM with Convergence Guarantees}}, IEEE Transactions on
  Computational Imaging, 7 (2021), pp.~849--863,
  \url{https://doi.org/10.1109/TCI.2021.3094062}.

\bibitem{Teodoro2018scene}
{\sc A.~M. Teodoro, J.~M. Bioucas-Dias, and M.~A.~T. Figueiredo}, {\em
  {Scene-Adapted Plug-and-Play Algorithm with Guaranteed Convergence:
  Applications to Data Fusion in Imaging}}, jan 2018,
  \url{https://arxiv.org/abs/1801.00605}.

\bibitem{venkatakrishnan2013plug}
{\sc S.~V. Venkatakrishnan, C.~A. Bouman, and B.~Wohlberg}, {\em {Plug-and-Play
  priors for model based reconstruction}}, in 2013 IEEE Global Conference on
  Signal and Information Processing, IEEE, 2013, pp.~945--948,
  \url{https://doi.org/10.1109/GlobalSIP.2013.6737048}.

\bibitem{villani2009optimal}
{\sc C.~Villani}, {\em Optimal transport}, vol.~338 of Grundlehren der
  Mathematischen Wissenschaften [Fundamental Principles of Mathematical
  Sciences], Springer-Verlag, Berlin, 2009,
  \url{https://doi.org/10.1007/978-3-540-71050-9}.
\newblock Old and new.

\bibitem{vono2019asymptotically}
{\sc M.~Vono, N.~Dobigeon, and P.~Chainais}, {\em Asymptotically exact data
  augmentation: models, properties and algorithms}, arXiv preprint
  arXiv:1902.05754,  (2019).

\bibitem{psnr_default}
{\sc Z.~{Wang} and A.~C. {Bovik}}, {\em Mean squared error: Love it or leave
  it? a new look at signal fidelity measures}, IEEE Signal Processing Magazine,
  26 (2009), pp.~98--117, \url{https://doi.org/10.1109/MSP.2008.930649}.

\bibitem{ssim_definition}
{\sc Z.~Wang, A.~C. Bovik, H.~R. Sheikh, and E.~P. Simoncelli}, {\em Image
  quality assessment: from error visibility to structural similarity}, IEEE
  transactions on image processing, 13 (2004), pp.~600--612.

\bibitem{Watson2016}
{\sc J.~Watson and C.~Holmes}, {\em {Approximate Models and Robust Decisions}},
  Statistical Science, 31 (2016), pp.~465--489,
  \url{https://doi.org/10.1214/16-sts592}.

\bibitem{Xu2020}
{\sc X.~Xu, Y.~Sun, J.~Liu, B.~Wohlberg, and U.~S. Kamilov}, {\em {Provable
  Convergence of Plug-and-Play Priors with MMSE denoisers}}, IEEE Signal
  Processing Letters, 27 (2020), pp.~1--10,
  \url{https://doi.org/10.1109/LSP.2020.3006390},
  \url{https://arxiv.org/abs/2005.07685}.

\bibitem{yu2011solving}
{\sc G.~Yu, G.~Sapiro, and S.~Mallat}, {\em {Solving Inverse Problems with
  Piecewise Linear Estimators: From Gaussian Mixture Models to Structured
  Sparsity}}, IEEE Transactions on Image Processing, 21 (2011), pp.~2481--2499,
  \url{https://doi.org/10.1109/TIP.2011.2176743}.

\bibitem{zhang2017beyond}
{\sc K.~Zhang, W.~Zuo, Y.~Chen, D.~Meng, and L.~Zhang}, {\em {Beyond a Gaussian
  Denoiser: Residual Learning of Deep CNN for Image Denoising}}, IEEE
  Transactions on Image Processing, 26 (2017), pp.~3142--3155,
  \url{https://doi.org/10.1109/TIP.2017.2662206}.

\bibitem{Zhang2017}
{\sc K.~Zhang, W.~Zuo, S.~Gu, and L.~Zhang}, {\em {Learning Deep CNN Denoiser
  Prior for Image Restoration}}, in (CVPR) IEEE Conference on Computer Vision
  and Pattern Recognition, IEEE, apr 2017, pp.~2808--2817,
  \url{https://doi.org/10.1109/CVPR.2017.300},
  \url{http://openaccess.thecvf.com/content_cvpr_2017/html/Zhang_Learning_Deep_CNN_CVPR_2017_paper.html},
  \url{https://arxiv.org/abs/1704.03264}.

\bibitem{zhang2018ffdnet}
{\sc K.~Zhang, W.~Zuo, and L.~Zhang}, {\em {FFDNet: Toward a Fast and Flexible
  Solution for CNN-based Image Denoising}}, IEEE Transactions on Image
  Processing, 27 (2018), pp.~4608--4622,
  \url{https://doi.org/10.1109/TIP.2018.2839891}.

\bibitem{Zoran2011}
{\sc D.~Zoran and Y.~Weiss}, {\em {From learning models of natural image
  patches to whole image restoration}}, in 2011 International Conference on
  Computer Vision, IEEE, nov 2011, pp.~479--486,
  \url{https://doi.org/10.1109/ICCV.2011.6126278},
  \url{http://people.csail.mit.edu/danielzoran/EPLLICCVCameraReady.pdf}.

\end{thebibliography}
	
	\appendix
	\section{Organization of the supplementary}\label{sec:supplemantary}

In this supplementary document we present some extensions and gather the proofs
of this paper. We first introduce a more general framework in
\Cref{sec:general-framework}. Then in \Cref{sec:strongly-log-concave} we present
our improved convergence results in the case where the log-likelihood is
strongly log-concave. Posterior approximation bounds in our general setting are
gathered in \Cref{sec:posterior_approx}. Then we turn to the proof of these
results. We first derive technical results in
\Cref{sec:technical-results}. Proofs of
\Cref{sec:convergence_pnpula} and
\Cref{sec:proj-altern} are presented in
\Cref{sec:proofs-citecr} and \Cref{sec:proofs-citecr-altern} respectively.
Finally, proofs of \Cref{sec:posterior_approx} are given in
\Cref{sec:proofs-citecr-1}.

\section{A general framework}
\label{sec:general-framework}

We start by considering a slightly more general framework than the one
previously introduced. More precisely, instead of $p^\star$ we
consider a general distribution $p$ and instead of considering $p_\vareps$
as a prior we consider a tamed version of this density by introducing another
hyperparameter $\alpha > 0$. In what follows, we describe this setting in
details. We start by recalling a mild assumption on the
likelihood.

\begingroup
\def\theassumption{\ref{assum:post}}
\begin{assumption}
	For any $y \in \rset^\dimY$, $\sup_{x \in \rset^d} p(y|x) < +\infty$,
	$p(y|\cdot) \in \rmc^1(\rset^d, \ooint{0, +\infty})$ and there exists
	$\Ltt_y > 0$ such that $\nabla \log (p(y|\cdot))$ is $\Ltt_y$ Lipschitz
	continuous.
\end{assumption}
\addtocounter{assumption}{-1}
\endgroup

For any $\vareps > 0$ we recall that
$p_{\vareps}$ is given by the Gaussian smoothing of $p$ with level
$\vareps$, for any $x \in \rset^d$ by 
\begin{equation}
  \textstyle{
    p_{\vareps}(x) = (2 \uppi \vareps)^{-d/2} \int_{\rset^{\dim}} \exp[-\norm{x - \tilde{x}}^2 / (2 \vareps)] \ p(\tilde{x}) \rmd \tilde{x} \eqsp .
    }
\end{equation}
One typical example of likelihood function that we consider in our numerical
illustration, see \Cref{sec:experimental}, is
$p(y|x) \propto \exp[-\norm{\rmA x - y}^2/(2 \sigma^2)]$ for any $x \in \rset^d$ with
$\sigma > 0$ and $\rmA \in \rset^{m \times \dim}$. Before turning to the
analysis of the convergence of the introduced algorithms we state the following
proposition which ensures the regularity of the posterior model w.r.t to the
observation $y$.

We consider the following assumption on $x \mapsto p(y|x)$ and the prior $p$ for some
hyperparameter $\alpha > 0$ and an observation $y \in \rset^\dimY$.
\begin{assumption}
  \label{assum:finite}
  The following hold:
    \begin{enumerate}[label=(\alph*), leftmargin=4em]

    \item \label{item:properr} $\int_{\rset^{\dim}} p(y|\tilde{x})  
      p^{\alpha}(\tilde{x}) \rmd \tilde{x} < + \infty$ and for any $\vareps > 0$, $\int_{\rset^{\dim}} p(y|\tilde{x}) 
  p_{\vareps}^{\alpha}(\tilde{x}) \rmd \tilde{x} < + \infty$.
  \item \label{item:square} $\int_{\rset^d} \normLigne{\tilde{x}}^2 p(x) \rmd x < +\infty$.
 
  \end{enumerate}
\end{assumption}
Note that if $\alpha = 1$, \Cref{assum:finite}-\ref{item:properr} hold under \Cref{assum:post},
see \Cref{prop:prop_eps}.  Under \Cref{assum:finite}-\ref{item:properr}, define $\posteriorw$ the
target probability distribution for any $x \in \rset^d$ by
\begin{equation}
  \label{eq:posterior}
  (\rmd \posteriorw / \rmd \Leb) (x) = \left. p(y|x)  p^{\alpha}(x) \middle/ \int_{\rset^{\dim}} p(y|\tilde{x}) p^{\alpha}(\tilde{x}) \rmd \tilde{x} \right. \eqsp .
\end{equation}
Note that for ease of notation, we do not explicitly highlight the dependency of
the posterior distribution $\pi$ with respect to the hyperparameter
$\alpha > 0$, since it is fixed in the rest of this section. We also consider the family of probability distributions
$\ensembleLigne{\posteriorweps}{\vareps >0}$ given for any $\vareps > 0$ and
$x \in \rset^{\dim}$ by
\begin{equation}
  \label{eq:posterior_eps}
  (\rmd \posteriorweps / \rmd \Leb) (x) = \left. p(y|x) p_{\vareps}^{\alpha}(x) \middle/ \int_{\rset^{\dim}} p(y|\tilde{x}) p_{\vareps}^{\alpha}(\tilde{x}) \rmd \tilde{x} \right. \eqsp .
\end{equation}

We also recall the assumption on the denoiser $D_\vareps$, see \Cref{sec:convergence_pnpula} for details.

\begingroup
\def\theassumption{\ref{assum:neural_net}}
\begin{assumption}
	There exist $\bvareps> 0$, $\Mtt_R \geq 0$ and
	$\Ltt \geq 0$ such that for any $\vareps \in \ocint{0, \bvareps}$,
	$x_{1}, x_{2} \in \rset^d$ and $x \in \cball{0}{R}$ we have
	\begin{equation}
	\norm{(\Id - D_{\vareps})(x_{1}) - (\Id
		- D_{\vareps})(x_{2})} \leq \Ltt \norm{x_{1} - x_{2}} \eqsp , \qquad  
	\norm{D_{\vareps}(x) - D_{\vareps}^{\star}(x)} \leq \Mtt_R \eqsp ,
	\end{equation}
	where we recall that \begin{equation}
	\textstyle{
		D_{\vareps}^{\star}(x_{1}) =  \int_{\rset^{\dim}} \tilde{x} \ g_{\vareps}(\tilde{x} | x_{1}) \rmd \tilde{x} \eqsp .
	}
	\end{equation}
\end{assumption}
\addtocounter{assumption}{-1}
\endgroup

\section{Strongly log-concave case}
\label{sec:strongly-log-concave}

We now present an improvement on the results of \Cref{sec:convergence_pnpula} in
the case where the log-likelihood $x \mapsto \log p(y|x)$ is strongly
concave. We recall that the Markov chain is given by the following
      recursion: $X_0 \in \rset^d$ and for any $k \in \nset$
\begin{align}
  \label{eq:pnpula}
  X_{k+1} &= X_k + \delta b_{\vareps}(X_k)  + \sqrt{2 \delta} Z_{k+1} \eqsp , \\
  b_{\vareps}(x) &= \nabla \log p(y|x) + \alpha  P_{\vareps}(x) + (\prox_{\lambda}(\iota_{\msc})(x)-x)/\lambda  \eqsp , \quad   P_{\vareps}(x) = (D_{\vareps}(x) - x) / \vareps \eqsp ,
\end{align}
In the strongly concave setting we set $\msc = \rset^d$, \ie
$\forall x \in \msc,\ \prox_{\lambda}(\iota_{\msc})(x) = x$. We recall that in our image processing
applications, we have that for any $x \in \rset^d$,
$p(y|x) \propto \exp[-\norm{\rmA x - y}^2 / (2 \sigma^2)]$ and that
$x \mapsto p(y|x)$ is strongly log-concave if and only if $\rmA$ is
invertible. This is the case for denoising tasks where $\rmA = \Id$ and for
deblurring tasks with convolution kernels which have full Fourier support.

We start with the following result which
ensures that the Markov chain \eqref{eq:pnpula} is geometrically
ergodic under \rref{assum:neural_net} for the Wasserstein metric
$\wassersteinD[1]$ and in $V$-norm for $V: \ \rset^{\dim} \to \coint{1, +\infty}$ given for any $x \in \rset^{\dim}$ by
\begin{equation}
  \label{eq:V_def}
V(x) = 1 + \norm{x}^2 \eqsp .  
\end{equation}

The following proposition is the counterpart of \Cref{prop:ergo_C}.

\begin{proposition}
  \label{prop:ergo}
  Assume \tup{\Cref{assum:post}}, \tup{\Cref{assum:finite}} and \tup{\Cref{assum:neural_net}($R$)} for some
  $R > 0$. Let $\alpha >0$ and $\vareps \in \ocint{0, \vareps_0}$.  If there
  exists $\mtt >0$ such that $\log(p(y|\cdot))$ is $\mtt$-concave with
  $\mtt \geq 2 \alpha \Ltt / \vareps$ then there exist $A_1 \geq 0$ and
  $\rho_1 \in \coint{0,1}$ such that for any
  $\delta \in \ocintLigne{0, \bdelta}$, $x_1, x_2 \in \rset^{\dim}$ and
  $k \in \nset$ we have
  \begin{align}
    \Vnorm{\updelta_{x_1} \Rker_{\vareps, \delta}^k-  \updelta_{x_2} \Rker_{\vareps, \delta}^k} &\leq A_1 \rho_1^{k \delta} (V^2(x_1) + V^2(x_2))  \eqsp , \\
    \wassersteinD[1](\updelta_{x_1} \Rker_{\vareps, \delta}^k, \updelta_{x_2} \Rker_{\vareps, \delta}^k) &\leq A_1 \rho_1^{k \delta} \norm{x_1-x_2}  \eqsp ,
  \end{align}
  where $V$ is given in \eqref{eq:V_def} and $\bdelta = \mtt (\Ltt_y + \alpha \Ltt/\vareps)^{-2}/2$.
\end{proposition}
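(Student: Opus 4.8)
The plan is to exploit strong log-concavity of the likelihood to turn the drift $b_{\vareps}$ into a globally Lipschitz and strongly monotone vector field, obtain the Wasserstein contraction by a synchronous coupling, and then upgrade to the $V$-norm bound via a Lyapunov drift condition combined with the same machinery that underpins \Cref{prop:ergo_C}.

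First I would use that $\msc = \rset^d$ in the strongly concave regime, so the proximal term disappears ($\prox_{\lambda}(\iota_{\msc})(x) = x$) and $b_{\vareps}(x) = \nabla \log p(y|x) - (\alpha/\vareps)(\Id - D_{\vareps})(x)$. By \Cref{assum:post} and \Cref{assum:neural_net}($R$), the field $b_{\vareps}$ is Lipschitz with constant $L_b = \Ltt_y + \alpha \Ltt /\vareps$. The $\mtt$-concavity gives $\ps{\nabla \log p(y|x_2) - \nabla \log p(y|x_1)}{x_2 - x_1} \leq -\mtt \norm{x_2-x_1}^2$, while the $\Ltt$-Lipschitz bound on $\Id - D_{\vareps}$ gives $\ps{-(\alpha/\vareps)[(\Id - D_{\vareps})(x_2) - (\Id - D_{\vareps})(x_1)]}{x_2 - x_1} \leq (\alpha \Ltt/\vareps)\norm{x_2 - x_1}^2$. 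Summing, $b_{\vareps}$ is strongly monotone, $\ps{b_{\vareps}(x_2) - b_{\vareps}(x_1)}{x_2 - x_1} \leq -\tilde{\mtt}\norm{x_2 - x_1}^2$, with $\tilde{\mtt} = \mtt - \alpha \Ltt/\vareps \geq \mtt/2 > 0$, the last inequality being exactly where the hypothesis $\mtt \geq 2\alpha \Ltt/\vareps$ is used.

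Next I would run the synchronous coupling: starting from $X_0 = x_1$, $X_0' = x_2$ and driving both recursions with the same $Z_{k+1}$, one has $X_{k+1} - X_{k+1}' = (X_k - X_k') + \delta(b_{\vareps}(X_k) - b_{\vareps}(X_k'))$. Expanding $\norm{X_{k+1} - X_{k+1}'}^2$ and inserting the strong monotonicity and the Lipschitz bound gives $\norm{X_{k+1} - X_{k+1}'}^2 \leq (1 - 2\delta \tilde{\mtt} + \delta^2 L_b^2)\norm{X_k - X_k'}^2$. For $\delta \leq \bdelta = \mtt/(2 L_b^2)$ we have $\delta^2 L_b^2 \leq \delta \mtt/2$ and $2\tilde{\mtt} \geq \mtt$, so the contraction factor is at most $1 - \delta \mtt/2$; iterating and taking square roots yields $\wassersteinD[1](\updelta_{x_1}\Rker_{\vareps,\delta}^k, \updelta_{x_2}\Rker_{\vareps,\delta}^k) \leq (1-\delta\mtt/2)^{k/2}\norm{x_1 - x_2} \leq \rho_1^{k\delta}\norm{x_1 - x_2}$ with $\rho_1 = \mre^{-\mtt/4}$, which is the second display (with $A_1 = 1$). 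For the $V$-norm bound I would first derive the dissipativity $\ps{b_{\vareps}(x)}{x} \leq -(\tilde{\mtt}/2)\norm{x}^2 + c_0$ by setting $x_1 = 0$ above and using Young's inequality on $\ps{b_{\vareps}(0)}{x}$; computing $\Rker_{\vareps,\delta}V$ and $\Rker_{\vareps,\delta}V^2$ for $V = 1 + \norm{\cdot}^2$ (the Gaussian increment contributes a $2\delta d$ term to the second moment) then gives, for $\delta \leq \bdelta$, a geometric drift $\Rker_{\vareps,\delta}V^2 \leq (1 - c\delta)V^2 + C\delta$. Combining this drift with the $\wassersteinD[1]$-contraction through the quantitative coupling-plus-drift argument already used to prove \Cref{prop:ergo_C} produces the $V$-total-variation bound $A_1 \rho_1^{k\delta}(V^2(x_1) + V^2(x_2))$, with constants depending only on $\mtt, \Ltt, \Ltt_y, \vareps, \alpha$ and not on $d$.

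The main obstacle is this last step. The synchronous coupling controls the transport distance but not directly the $V$-weighted total variation, since a test function merely bounded by $V$ need not be Lipschitz; reconciling the trajectory-level contraction with the $V$-growth of admissible test functions is precisely what forces the $V^2$-drift to dominate the tails and is what accounts for the $V^2$ factor on the right-hand side. The delicate part of the bookkeeping is ensuring that the resulting constants remain dimension-free, and here the strong-concavity hypothesis $\mtt \geq 2\alpha\Ltt/\vareps$ is essential: it is exactly what makes $\tilde{\mtt} > 0$, so that the contraction and the drift hold globally on $\rset^d$ rather than only on the compact set $\msc$ as in \Cref{prop:ergo_C}.
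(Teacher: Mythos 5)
Your proposal is correct and follows essentially the same route as the paper: the paper's proof likewise reduces the hypothesis $\mtt \geq 2\alpha\Ltt/\vareps$ to the two conditions $\langle b_{\vareps}(x_1)-b_{\vareps}(x_2), x_1-x_2\rangle \leq -(\mtt/2)\norm{x_1-x_2}^2$ and $\norm{b_{\vareps}(x_1)-b_{\vareps}(x_2)} \leq (\Ltt_y+\alpha\Ltt/\vareps)\norm{x_1-x_2}$, and then concludes by citing \cite[Corollary 2]{debortoli2020convergence} together with the interpolation inequality $\Vnorm{\nu_1-\nu_2}\leq \tvnormsq{\nu_1-\nu_2}(\nu_1[V^2]+\nu_2[V^2])^{1/2}$. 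The synchronous-coupling computation and the drift-plus-contraction argument you sketch for the $V$-norm bound are precisely the content of that cited corollary (and of \Cref{lemma:drift_plus}), so nothing essential is missing.
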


\begin{proof}
  The proof is postponed to \Cref{prop:ergo:proof}.
\end{proof}

We recall the assumption on $g_\vareps$ which ensures that
$x \ \mapsto \log(p_{\vareps}(x))$ has Lipschitz gradients.

\begingroup
\def\theassumption{\ref{assum:cov}}
\begin{assumption}
	For any $\vareps > 0$, there exists $\Ktt_{\vareps} \geq 0$ such that for any $x \in \rset^d$,
	\begin{align}
	&\int_{\rset^d} \norm{\tilde{x} - \int_{\rset^d} \tilde{x}' g_{\vareps}(\tilde{x}'| x) \rmd \tilde{x}'}^2 g_{\vareps}(\tilde{x}| x) \rmd \tilde{x} \leq \Ktt_{\vareps} \eqsp ,
	\end{align}
	with $g_{\vareps}$ given in \eqref{eq:cond_prior}.
\end{assumption}
\addtocounter{assumption}{-1}
\endgroup

The following proposition is the counterpart of \Cref{prop:bias_C}.
  
\begin{proposition}
  \label{prop:bias}
  Assume \rref{assum:post}, \tup{\Cref{assum:finite}},
  \tup{\Cref{assum:neural_net}($R$)} for some $R > 0$ and \rref{assum:cov}. Moreover, let $\alpha > 0$,
  $\vareps \in \ocint{0, \vareps_0}$ and assume that
  $\int_{\rset^{\dim}} (1+\norm{\tilde{x}}^4) p_{\vareps}^{\alpha}(\tilde{x})
  \rmd \tilde{x} < + \infty$.  In addition, if there exists $\mtt >0$ such that
  $\log(p(y|\cdot))$ is $\mtt$-concave with
  $\mtt \geq (2\alpha/\vareps) \max( \Ltt, 1 + \Ktt_{\vareps}/\vareps)$ and
  $\bdelta = \mtt (\Ltt_y + \alpha \Ltt/\vareps)^{-2}/2$, then for any
  $\delta \in \ocintLigne{0, \bdelta}$, $\Rker_{\vareps, \delta}$ admits an
  invariant probability measure $\posteriorwepsdelta$ and there exists
  $B_1 \geq 0$ such that for any $\delta \in \ocintLigne{0, \bdelta}$
  \begin{equation}
    \label{eq:control}
    \Vnorm{\posteriorwepsdelta -  \posteriorweps} \leq B_1 (\delta^{1/2} + \Mtt_R + \exp[-R]) \eqsp ,
  \end{equation}
  where $V$ is given in \eqref{eq:V_def} and $B_1$ does not depend on $R$.
\end{proposition}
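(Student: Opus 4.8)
The plan is to compare $\posteriorwepsdelta$ with $\posteriorweps$ through an auxiliary chain driven by the \emph{exact} denoiser $D_{\vareps}^{\star}$. Introduce the Markov kernel $\tilde{\Rker}_{\vareps, \delta}$ associated with the recursion $X_{k+1} = X_k + \delta \tilde{b}_{\vareps}(X_k) + \sqrt{2\delta} Z_{k+1}$, where $\tilde{b}_{\vareps}(x) = \nabla \log p(y|x) + (\alpha / \vareps)(D_{\vareps}^{\star}(x) - x)$. By Tweedie's identity \eqref{eq:Tweedie} one has $\tilde{b}_{\vareps} = \nabla \log p(y|\cdot) + \alpha \nabla \log p_{\vareps}$, so that $\tilde{\Rker}_{\vareps, \delta}$ is exactly the unadjusted Langevin discretisation targeting $\posteriorweps$. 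Existence of its invariant measure, denoted $\tilde{\pi}_{\vareps, \delta}$, and its geometric ergodicity follow as in \Cref{prop:ergo}, since \rref{assum:cov} ensures that $\Id - D_{\vareps}^{\star}$ is $(1 + \Ktt_{\vareps}/\vareps)$-Lipschitz and hence $\tilde{b}_{\vareps}$ satisfies the same one-sided/Lipschitz conditions as $b_{\vareps}$. The triangle inequality then gives
\[
\Vnorm{\posteriorwepsdelta - \posteriorweps} \leq \Vnorm{\posteriorwepsdelta - \tilde{\pi}_{\vareps, \delta}} + \Vnorm{\tilde{\pi}_{\vareps, \delta} - \posteriorweps} \eqsp ,
\]
and I would bound the two terms separately: the second is a pure discretisation error (the $\delta^{1/2}$ contribution), the first the error incurred by replacing $D_{\vareps}^{\star}$ with $D_{\vareps}$ (the $\Mtt_R + \exp[-R]$ contribution).

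For the discretisation term, I would invoke standard quantitative bounds for ULA on strongly log-concave targets \cite{durmus2017nonasymp}. The hypothesis $\mtt \geq (2\alpha/\vareps)\max(\Ltt, 1 + \Ktt_{\vareps}/\vareps)$ together with \rref{assum:cov} guarantees that $x \mapsto -\log p(y|x) - \alpha \log p_{\vareps}(x)$ is strongly convex: the likelihood contributes modulus $\mtt$, while the Hessian of $-\alpha \log p_{\vareps}$ is controlled in operator norm by $\alpha(1 + \Ktt_{\vareps}/\vareps)/\vareps$ via the conditional covariance bound, and its gradient is $(\Ltt_y + \alpha \Ltt /\vareps)$-Lipschitz. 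With $\bdelta = \mtt (\Ltt_y + \alpha \Ltt/\vareps)^{-2}/2$, the Wasserstein bound yields $\wassersteinD[1](\tilde{\pi}_{\vareps, \delta}, \posteriorweps) \leq C \delta^{1/2}$. Combined with the fourth-moment hypothesis $\int_{\rset^{\dim}} (1 + \norm{\tilde{x}}^4) p_{\vareps}^{\alpha}(\tilde{x}) \rmd \tilde{x} < +\infty$ (which, with the exponential tails of the strongly log-concave measures, makes $\int V^2 \rmd \tilde{\pi}_{\vareps, \delta}$ finite), this transfers to the $V$-norm estimate $\Vnorm{\tilde{\pi}_{\vareps, \delta} - \posteriorweps} \leq B \delta^{1/2}$.

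For the denoiser-approximation term, both kernels are geometrically ergodic in $V$-norm with the same rate, and they differ only through their drifts, $\tilde{b}_{\vareps} - b_{\vareps} = (\alpha/\vareps)(D_{\vareps}^{\star} - D_{\vareps})$. I would use the perturbation identity for invariant measures,
\[
\tilde{\pi}_{\vareps, \delta} - \posteriorwepsdelta = \sum_{k=0}^{\infty} \tilde{\pi}_{\vareps, \delta}\, (\tilde{\Rker}_{\vareps, \delta} - \Rker_{\vareps, \delta})\, \Rker_{\vareps, \delta}^{k} \eqsp ,
\]
whose validity rests on $\tilde{\pi}_{\vareps, \delta} \Rker_{\vareps, \delta}^{n} \to \posteriorwepsdelta$. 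Applying $\Vnorm{\cdot}$ and bounding $\Rker_{\vareps, \delta}^{k}$ by the geometric decay $\rho_{1}^{k\delta}$ of \Cref{prop:ergo}, the summation $\sum_{k} \rho_1^{k \delta} \lesssim (\delta \log(1/\rho_1))^{-1}$ absorbs the $\delta$ coming from the one-step drift discrepancy, leaving a factor $O(1)$ times $(\alpha/\vareps)\,\norm{D_{\vareps} - D_{\vareps}^{\star}}$ weighted by $V$. On $\cball{0}{R}$ the second part of \rref{assum:neural_net} bounds this by $\Mtt_R$; on $\cball{0}{R}^{\complementary}$, where no such bound is available, the Lipschitz bounds on $\Id - D_{\vareps}$ and $\Id - D_{\vareps}^{\star}$ give at most linear growth of the discrepancy, while the sub-Gaussian tails of $\tilde{\pi}_{\vareps, \delta}$ make the mass of $\cball{0}{R}^{\complementary}$ of order $\exp[-R]$. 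Summing gives $\Vnorm{\posteriorwepsdelta - \tilde{\pi}_{\vareps, \delta}} \leq B (\Mtt_R + \exp[-R])$, with all constants independent of $R$.

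The main obstacle is this last term: propagating the per-step drift error through the geometrically ergodic dynamics in the stronger $V$-norm rather than plain total variation, and above all handling $\cball{0}{R}^{\complementary}$, where $\Mtt_R$ is unavailable. This forces one to pair the at-most-linear growth of $\norm{D_{\vareps} - D_{\vareps}^{\star}}$ with the exponential concentration of the invariant measures to produce the $\exp[-R]$ term, and to check that the geometric-ergodicity constants, the moment bounds and the discretisation constant all stay independent of $R$, so that the resulting $B_1$ does not depend on $R$ as asserted.
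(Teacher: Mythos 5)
Your overall architecture is close in spirit to the paper's: an auxiliary dynamics driven by the exact MMSE denoiser $D_\vareps^\star$ (which by Tweedie's identity is the Langevin dynamics for $\posteriorweps$), a split into a discretisation error and a denoiser-approximation error, and a ball/tail decomposition pairing $\Mtt_R$ with exponential concentration of the invariant laws. The second term of your triangle inequality (exact-score ULA versus $\posteriorweps$) is indeed standard. The gap is in the first term. The single-step perturbation series $\posteriorwepsdelta - \tilde{\pi}_{\vareps,\delta} = \sum_{k\ge 0}\tilde{\pi}_{\vareps,\delta}(\Rker_{\vareps,\delta}-\tilde{\Rker}_{\vareps,\delta})\Rker_{\vareps,\delta}^k$ does not close in the $V$-norm (nor in total variation): for fixed $x$, $\updelta_x\Rker_{\vareps,\delta}$ and $\updelta_x\tilde{\Rker}_{\vareps,\delta}$ are Gaussians with covariance $2\delta\Id$ whose means differ by $\delta\,\Delta b(x)$ with $\Delta b(x)=(\alpha/\vareps)(D_\vareps(x)-D_\vareps^\star(x))$, so their total variation distance is of order $\delta\norm{\Delta b(x)}/\sqrt{2\delta}=\sqrt{\delta}\,\norm{\Delta b(x)}$, not $\delta\norm{\Delta b(x)}$ as your ``absorbs the $\delta$'' step assumes. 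The geometric sum $\sum_k\rho_1^{k\delta}\asymp(\delta\log(1/\rho_1))^{-1}$ then produces a bound of order $\delta^{-1/2}(\Mtt_R+\rme^{-R})$, which blows up as $\delta\to 0$ and cannot give the $\delta$-uniform estimate \eqref{eq:control}. The mean-shift-to-noise-scale mismatch is avoided only in $\wassersteinD[1]$, where the one-step discrepancy genuinely is $O(\delta\norm{\Delta b})$ under a synchronous coupling; but converting the resulting $\wassersteinD[1]$ bound back into $\Vnorm{\cdot}$ requires regularising through $\asymp 1/\delta$ kernel steps, which is precisely the block structure your argument is missing.

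This is why the paper does not use a one-step perturbation expansion: in \Cref{prop:bias:proof} it compares the piecewise-interpolated chain with the continuous diffusion $\rmd\bbfX_t=\bar{b}_\vareps(\bbfX_t)\rmd t+\sqrt{2}\,\rmd\bfB_t$ (whose invariant law is exactly $\posteriorweps$ in the strongly concave case) over blocks of unit length, i.e.\ $m\asymp 1/\delta$ steps at a time, via Girsanov's theorem (\Cref{lemma:girsanov}). Over a unit-time block the accumulated drift discrepancy enters as $(\int_0^1\E[\normLigne{b_1-b_2}^2(\bfX_t^{(1)})]\rmd t)^{1/2}$ with no adverse power of $\delta$: the discretisation part contributes $O(\sqrt{\delta})$ through It\^o's isometry, the denoiser part contributes $\Mtt_R$ on $\cball{0}{R}$ and $\rme^{-R}$ off it through the exponential drift condition, and telescoping over blocks against the geometric ergodicity of the continuous semigroup yields \eqref{eq:control} with constants independent of $R$. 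If you wish to keep your auxiliary discrete chain, you would still need to run the same Girsanov (or coupling) estimate over $\asymp 1/\delta$-step blocks rather than single steps; the rest of your outline (Lipschitz control of $\Id-D_\vareps^\star$ via \rref{assum:cov}, the fourth-moment hypothesis to control $\int V^2$, the linear-growth-times-exponential-tail treatment of $\cball{0}{R}^{\complementary}$) then goes through essentially as in the paper.
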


\begin{proof}
  The proof is postponed to \Cref{prop:bias:proof}.
\end{proof}

The bound appearing in \eqref{eq:control} depends on an extra hyperparameter
$R > 0$ which may be optimized if \tup{\Cref{assum:neural_net}($R$)} holds for
any $R > 0$ and $\ensembleLigne{\Mtt_R}{R > 0}$ can be expressed in a closed
form. In particular if there exists $\Mtt \in \ooint{0,1}$ such that for any
$R > 0$, $\Mtt_R = \Mtt \times R$ then there exists $B_1 \geq 0$ such that for
any $\delta \in \ocintLigne{0, \bdelta}$ and $R > 0$
  \begin{equation}
    \Vnorm{\posteriorwepsdelta -  \posteriorweps} \leq B_1 (\delta^{1/2} + \Mtt \log(1/\Mtt)) \eqsp ,
  \end{equation}
  by setting $R = \log(1/\Mtt)$. Similarly if there exists $\Mtt > 0$ such that for any $R > 0$, $\Mtt_R = \Mtt$ then  there exists $B_1 \geq 0$ such that for any
  $\delta \in \ocintLigne{0, \bdelta}$ and $R > 0$
  \begin{equation}
    \Vnorm{\posteriorwepsdelta -  \posteriorweps} \leq B_1 (\delta^{1/2} + \Mtt) \eqsp ,
  \end{equation}
  by letting $R \to +\infty$.

  We now combine \Cref{prop:ergo} and \Cref{prop:bias} in order to control the
  bias of the Monte Carlo estimator obtained using \pnpula . This proposition is
  the counterpart of \Cref{prop:bias_control_final_C}.

\begin{proposition}
  \label{prop:bias_control_final}
  Assume \rref{assum:post}, \tup{\Cref{assum:finite}}, \tup{\Cref{assum:neural_net}($R$)} for some
  $R > 0$ and \rref{assum:cov}.
  Moreover, let $\alpha >0$, $\vareps \in \ocint{0, \vareps_0}$ and assume that
  $\int_{\rset^{\dim}} (1+\norm{\tilde{x}}^4) p_{\vareps}^{\alpha}(\tilde{x})
  \rmd \tilde{x} < + \infty$.  In addition, if there exists $\mtt >0$ such that
  $\log(p(y|\cdot))$ is $\mtt$-concave with
  $\mtt \geq (2 \alpha / \vareps) \max( \Ltt , 1 + \Ktt_{\vareps}/\vareps)$ and
  $\bdelta = \mtt (\Ltt_y + \alpha \Ltt/\vareps)^{-2}/2$, then there exists
  $C_{1, \vareps} \geq 0$ such that for any $h : \ \rset^{\dim} \to \rset$
  measurable with $\sup_{x \in \rset^d} \{\abs{h(x)}(1 + \norm{x}^{2})^{-1}\} \leq 1$, $n \in \nsets$,
  $\delta \in \ocintLigne{0, \bdelta}$ we have 
    \begin{equation}
    \abs{n^{-1}\sum_{k=1}^n \expe{h(X_k)} - \int_{\rset^{\dim}} h(\tilde{x}) \rmd \posteriorweps(\tilde{x}) } \leq C_{1, \vareps} (\delta^{1/2} + \Mtt_R + \exp[-R]  + (n \delta)^{-1})  (1 + \norm{x}^4)  \eqsp .
  \end{equation}

\end{proposition}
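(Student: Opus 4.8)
The plan is to bound the target error by a triangle inequality that separates the Monte Carlo time-averaging error from the stationary bias. Writing $\posteriorwepsdelta$ for the invariant measure of $\Rker_{\vareps, \delta}$ furnished by \Cref{prop:ergo}, I would decompose
\begin{equation*}
\abs{n^{-1}\sum_{k=1}^n \expe{h(X_k)} - \posteriorweps(h)} \leq \abs{n^{-1}\sum_{k=1}^n \expe{h(X_k)} - \posteriorwepsdelta(h)} + \abs{\posteriorwepsdelta(h) - \posteriorweps(h)} \eqsp .
\end{equation*}
The second term is immediately handled by \Cref{prop:bias}: since the test function satisfies $\Vnorm[V]{h} \leq 1$ with $V(x) = 1 + \norm[2]{x}$, we get $\abs{\posteriorwepsdelta(h) - \posteriorweps(h)} \leq \Vnorm{\posteriorwepsdelta - \posteriorweps} \leq B_1 (\delta^{1/2} + \Mtt_R + \exp[-R])$, which already accounts for three of the four terms in the announced bound.

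For the first (time-averaging) term I would exploit the geometric contraction of \Cref{prop:ergo}. Extending the pointwise estimate $\Vnorm{\updelta_{x_1}\Rker_{\vareps,\delta}^k - \updelta_{x_2}\Rker_{\vareps,\delta}^k} \leq A_1 \rho_1^{k\delta}(V^2(x_1)+V^2(x_2))$ to general measures and specialising one argument to $\posteriorwepsdelta$, I obtain $\Vnorm{\updelta_x \Rker_{\vareps,\delta}^k - \posteriorwepsdelta} \leq A_1 \rho_1^{k\delta}(V^2(x) + \posteriorwepsdelta(V^2))$. Since $\Vnorm[V]{h}\leq 1$ this controls $\abs{\expe{h(X_k)} - \posteriorwepsdelta(h)}$ term by term, and averaging over $k=1,\dots,n$ reduces matters to summing the geometric series $n^{-1}\sum_{k=1}^n \rho_1^{k\delta}$. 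Here I would use the elementary inequality $(1-\rme^{-t})^{-1}\leq 1 + t^{-1}$ with $t = \delta\log(1/\rho_1)$, together with $\delta \leq \bdelta$, to bound this sum by $(n\delta)^{-1}(\bdelta + \log^{-1}(1/\rho_1))$ --- exactly the derivation already carried out in the discussion following \Cref{prop:ergo_C}. This produces the $(n\delta)^{-1}$ term and the prefactor $V^2(x) + \posteriorwepsdelta(V^2)$.

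The one point requiring care is the uniform control of $\posteriorwepsdelta(V^2)$, i.e. the fourth moment of the invariant measure, which must be bounded independently of $\delta \in \ocintLigne{0,\bdelta}$. This does not follow from the contraction estimate alone, but from the Foster--Lyapunov drift inequality underlying the geometric ergodicity of \Cref{prop:ergo} (established in its proof), combined with the hypothesis $\int_{\rset^{\dim}}(1+\norm[4]{\tilde{x}})p_{\vareps}^{\alpha}(\tilde{x})\rmd\tilde{x} < +\infty$; these yield a finite constant $M_\vareps$ with $\sup_{\delta \in \ocintLigne{0,\bdelta}}\posteriorwepsdelta(V^2) \leq M_\vareps$. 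Finally, bounding $V^2(x) = (1+\norm[2]{x})^2 \leq 2(1+\norm[4]{x})$ to match the $(1+\norm[4]{x})$ factor on the right-hand side, and collecting $A_1,\rho_1,B_1,\bdelta$ and $M_\vareps$ --- all depending only on $\vareps$ and the fixed model parameters --- into a single constant $C_{1,\vareps}$ completes the argument. The proof is thus essentially mechanical once \Cref{prop:ergo} and \Cref{prop:bias} are in hand, the only genuinely delicate ingredient being the $\delta$-uniform moment bound.
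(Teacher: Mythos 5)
Your proposal is correct and follows exactly the route the paper takes: the paper's proof is the one-line remark that the result follows by combining \Cref{prop:ergo} and \Cref{prop:bias}, and your triangle-inequality decomposition, the geometric-series bound via $(1-\rme^{-t})^{-1}\leq 1+t^{-1}$ (already spelled out in the discussion after \Cref{prop:ergo_C}), and the $\delta$-uniform moment bound on $\posteriorwepsdelta(V^2)$ from the Foster--Lyapunov drift condition are precisely the ingredients that make that combination work.
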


\begin{proof}
  The proof is straightforward upon combining \Cref{prop:ergo} and \Cref{prop:bias}.
\end{proof}

In particular, applying \Cref{prop:bias_control_final} to the family
$\{h_i\}_{i=1}^d$ where for any $i \in \{1, \dots, d\}$, $h_i(x) = x_i$ we get
that
    \begin{equation}
    \norm{n^{-1}\sum_{k=1}^n \expe{X_k} - \int_{\rset^{\dim}} \tilde{x} \rmd \posteriorweps(\tilde{x}) } \leq C_{1, \vareps} (\delta^{1/2} + \Mtt_R + \exp[-R]  + (n \delta)^{-1})  (1 + \norm{x}^4)  \eqsp ,
  \end{equation}
and $n^{-1}\sum_{k=1}^n X_k$ is an approximation of the MMSE given by $\int_{\rset^{\dim}} \tilde{x} \rmd \posteriorweps(\tilde{x})$.

\section{Posterior approximation}
\label{sec:posterior_approx}
We consider the following general regularity assumption.

\begin{assumption}[$\alpha$]
  \label{assum:posterior}
  There
  exist $\constanteM \geq 0$, $\upbeta >0$ and $q : \ \rset^{\dim} \to \ooint{0,+\infty}$ such that $\int_{\rset^{\dim}} q(\tilde{x}) \rmd \tilde{x} = 1$, $\|q\|_{\infty} < +\infty$ and for  almost every
  $x \in \rset^{\dim}$,
  $\int_{\rset^{\dim}} \abs{p(\tilde{x}) - p(x- \tilde{x})} q^{\min(1 -1/\alpha, 0)}(\tilde{x}) \rmd \tilde{x}  \leq \rme^{\constanteM(1
    + \norm{x}^2)} \norm{x}^{\upbeta}$.
\end{assumption}

In the case where $\alpha \geq 1$, \Cref{assum:posterior}($\alpha$) is
equivalent to the following assumption: there exist $\constanteM \geq 0$ and
$\upbeta >0$ such that for almost every $x \in \rset^{\dim}$,
$\tvnorm{\prior -(\tau_x)_{\#} \prior} \leq \rme^{\constanteM(1 + \norm{x}^2)}
\norm{x}^{\upbeta}$, where we recall that $\prior$ is the probability distribution
with density with respect to the Lebesgue measure proportional to $p$ and that
for any $\tilde{x} \in \rset^{\dim}$, $\tau_x(\tilde{x}) = \tilde{x} - x$.  Note
that since $p \in \mathrm{L}^1(\rset^{\dim})$ we have
$\lim_{x \to 0} \tvnorm{\prior - (\tau_{x})_{\#} \prior} = 0$. In
\Cref{assum:posterior}($\alpha$) for $\alpha < 1$ we assume more regularity
for $x \mapsto (\tau_x)_{\#} \prior$ in total variation in order to obtain explicit
bounds between $\posteriorweps$ and $\posteriorw$.

In the following proposition we provide easy-to-check conditions on the density
of the prior distribution $\prior$ so that \Cref{assum:posterior}($\alpha$)
holds.

\begin{proposition}
  \label{lemma:H1_check}
  Assume that there exists $U : \ \rset^{\dim} \to \rset$ such that
  for any $x \in \rset^{\dim}$,
  $p(x) = \rme^{-U(x)} / \int_{\rset^{\dim}} \rme^{-U(\tilde{x})} \rmd
  \tilde{x}$. Assume that $U$ is $\upgamma$-Hölder, \ie \ there exists
  $C_{\upgamma} > 0$ such that for any $x_1, x_2 \in \rset^{\dim}$,
  \ie
  $\norm{U(x_1) - U(x_2)} \leq C_{\upgamma} \norm{x_1 -
    x_2}^{\upgamma}$. Then \tup{\Cref{assum:posterior}($\alpha$)} is
  satisfied for $\alpha \geq 1$. In addition, assume that
  $\upgamma \leq 2$ and that there exist $c_1, \varpi > 0$ and $c_2 \in \rset$
  such that for any $x \in \rset^{\dim}$,
  $U(x) \geq c_1 \norm{x}^{\varpi} + c_2$ then
  \tup{\Cref{assum:posterior}($\alpha$)} holds for any $\alpha > 0$.
\end{proposition}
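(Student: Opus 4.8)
The plan is to write $p = \rme^{-U}/Z$ with $Z = \int_{\rset^{\dim}} \rme^{-U(\tilde x)}\,\rmd\tilde x < +\infty$, and to reduce the left-hand side of \Cref{assum:posterior}($\alpha$) to an integral of $\abs{\rme^{-U(\tilde x)} - \rme^{-U(\tilde x - x)}}$ against the weight $q^{\min(1-1/\alpha,0)}$. The elementary bound $\abs{\rme^{-a} - \rme^{-b}} \le \rme^{-\min(a,b)}\abs{a-b} \le (\rme^{-a} + \rme^{-b})\abs{a-b}$, valid for all real $a,b$ by the mean value theorem, together with the $\upgamma$-Hölder continuity of $U$ (which gives $\abs{U(\tilde x) - U(\tilde x - x)} \le C_{\upgamma}\norm{x}^{\upgamma}$), will pull the factor $C_{\upgamma}\norm{x}^{\upgamma}$ out of the integral. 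This already identifies the polynomial exponent $\upbeta = \upgamma$ and reduces the claim to controlling
\[
I(x) \defeq \int_{\rset^{\dim}} \left(\rme^{-U(\tilde x)} + \rme^{-U(\tilde x - x)}\right) q^{\min(1-1/\alpha,0)}(\tilde x)\, \rmd\tilde x \eqsp.
\]

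For the first assertion ($\alpha \ge 1$), one has $\min(1-1/\alpha,0)=0$, so the weight is $1$ and $I(x)=2Z$ is independent of $x$; one may drop $q$ entirely. The left-hand side is then bounded by $2C_{\upgamma}\norm{x}^{\upgamma}$, which is of the required form on setting $\upbeta=\upgamma$ and $\constanteM=\max(0,\log(2C_{\upgamma}))$, since $\rme^{\constanteM(1+\norm{x}^2)}\ge \rme^{\constanteM}$. Here no growth condition on $U$ is needed. The stated equivalence with $\tvnorm{\prior - (\tau_x)_{\#}\prior} \le \rme^{\constanteM(1+\norm{x}^2)}\norm{x}^{\upbeta}$ is immediate, because for $\alpha\ge 1$ the weighted integral equals $\int_{\rset^{\dim}}\abs{p(\tilde x) - p(\tilde x - x)}\,\rmd\tilde x = \tvnorm{\prior - (\tau_x)_{\#}\prior}$.

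For the second assertion I allow any $\alpha>0$; the new case is $\alpha<1$, where $s\defeq 1/\alpha-1>0$ and the weight is $q^{-s}$. The plan is to choose $q(\tilde x)=C_q\rme^{-c_q\norm{\tilde x}^{\tau}}$ with $\tau\defeq\min(\varpi,2)$ and $c_q>0$ small, normalised so that $\int q=1$; this $q$ is a genuine density with $\VnormLigne[\infty]{q}=C_q<+\infty$, and $q^{-s}(\tilde x)=C_q^{-s}\rme^{sc_q\norm{\tilde x}^{\tau}}$. Using $U(\tilde x)\ge c_1\norm{\tilde x}^{\varpi}+c_2$, hence $\rme^{-U(\tilde x)}\le \rme^{-c_2}\rme^{-c_1\norm{\tilde x}^{\varpi}}$, the untranslated contribution $\int \rme^{-U(\tilde x)}q^{-s}(\tilde x)\,\rmd\tilde x$ is finite: when $\varpi\le 2$ one has $\tau=\varpi$ and convergence holds as soon as $sc_q<c_1$, while when $\varpi>2$ the term $-c_1\norm{\tilde x}^{\varpi}$ dominates $+sc_q\norm{\tilde x}^{2}$ and convergence holds for every $c_q$. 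For the translated contribution, substituting $\tilde u=\tilde x-x$ and using $\norm{\tilde u+x}^{\tau}\le c_3(\norm{\tilde u}^{\tau}+\norm{x}^{\tau})$ with $c_3=2^{(\tau-1)_+}$ splits off the factor $\rme^{sc_qc_3\norm{x}^{\tau}}$, leaving a $\tilde u$-integral of the same type, finite for $c_q$ small. Since $\tau\le 2$ gives $\norm{x}^{\tau}\le 1+\norm{x}^2$, this factor is bounded by $\rme^{\constanteM(1+\norm{x}^2)}$, and folding the remaining multiplicative constants into $\constanteM$ yields the assumption with $\upbeta=\upgamma$.

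The main obstacle is the translated integral and, above all, matching its $x$-dependence to the prescribed envelope $\rme^{\constanteM(1+\norm{x}^2)}$. This is precisely what dictates the choice of $q$: the negative power $q^{-s}$ must not grow faster than the tails of $\rme^{-U}$ decay (forcing $c_q$ below the threshold $c_1/(sc_3)$, binding only when $\varpi\le 2$), yet the induced growth of $q^{-s}(\cdot+x)$ in $x$ must stay at most quadratic, which is why one caps the exponent at $\tau=\min(\varpi,2)$ and uses a Gaussian weight whenever $\varpi>2$. Balancing these competing requirements through the single small parameter $c_q$ is the crux; once $q$ is fixed, the remaining estimates are routine applications of the triangle inequality for $\norm{\cdot}^{\tau}$ and of the finiteness of the Gaussian-type integrals guaranteed by the growth of $U$.
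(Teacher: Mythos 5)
Your proof is correct, but it follows a genuinely different route from the paper's on both halves. For $\alpha \geq 1$ the paper does not argue pointwise: it invokes Pinsker's inequality and bounds $\KL((\tau_x)_{\#}\mu|\mu)$ by $\int \abs{U(\tilde x + x)-U(\tilde x)}\,\rmd\mu(\tilde x) \leq C_{\upgamma}\norm{x}^{\upgamma}$, which yields the assumption with exponent $\upbeta = \upgamma/2$; your mean-value-theorem bound $\abs{\rme^{-a}-\rme^{-b}}\leq(\rme^{-a}+\rme^{-b})\abs{a-b}$ is more elementary and gives the sharper $\upbeta=\upgamma$. For general $\alpha$ the paper instead factors the untranslated density out, writing $\abs{p(\tilde x)-p(\tilde x - x)} \leq p(\tilde x)\,\abs{1-\rme^{t}}$ with $t = U(\tilde x)-U(\tilde x - x)$ and $\abs{1-\rme^t}\leq\abs{t}\rme^{\abs{t}}$; this produces the factor $\rme^{C_{\upgamma}\norm{x}^{\upgamma}}$ (this is exactly where the hypothesis $\upgamma\leq 2$ is consumed, to absorb it into $\rme^{\constanteM(1+\norm{x}^2)}$) and leaves only $\int (1+\norm{\tilde x})^{(d+1)(1/\alpha-1)}p(\tilde x)\,\rmd\tilde x$, which is finite by the polynomial moments guaranteed by $U(x)\geq c_1\norm{x}^{\varpi}+c_2$; accordingly the paper can take the untuned polynomial weight $q(\tilde x)\propto(1+\norm{\tilde x})^{-(d+1)}$ and never has to touch the translated integral. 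Your symmetric bound forces you to control $\int \rme^{-U(\tilde x - x)}q^{-s}(\tilde x)\,\rmd\tilde x$ as well, which is why you need the stretched-exponential weight with the rate $c_q$ tuned below $c_1/(sc_3)$ and the cap $\tau=\min(\varpi,2)$; this is more delicate but it works, and as a by-product your argument does not actually use $\upgamma\leq 2$ and again delivers $\upbeta=\upgamma$ rather than the paper's $\upgamma/2$. Both proofs are valid; the paper's is shorter because the asymmetric factorisation sidesteps the translated term, while yours trades that for a small-parameter choice of $q$ and slightly stronger conclusions.
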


Under \Cref{assum:posterior}($\alpha$) we establish the following result which
ensures that $\posteriorweps$ is close to $\posteriorw$ in total variation for
small values of $\vareps$.

\begin{proposition}
  \label{prop:posterior_eps}
  Assume \rref{assum:post}, then the following hold:
  \begin{enumerate}[label=(\alph*), leftmargin=4em]
  \item \label{item:alpha1} If $\alpha = 1$, then $\lim_{\vareps \to 0} \tvnorm{\posteriorweps - \posteriorw} = 0$ .
  \item \label{item:alpha2} Assume that $\normLigne{p}_{\infty} < +\infty$ then
  for any $\alpha \geq 1$,
  $\lim_{\vareps \to 0} \tvnorm{\posteriorweps - \posteriorw} = 0$.
\item \label{item:alpha3} Assume that $\normLigne{p}_{\infty} < +\infty$ and
  \tup{\Cref{assum:posterior}($\alpha$)} then there exist $\vareps_1 > 0$ and
  $A_0 \geq 0$ such that for any $\vareps \in \ocint{0, \vareps_1}$ we have
  $\tvnorm{\posteriorweps - \posteriorw} \leq A_0 \vareps^{\upbeta\min(\alpha,
    1)/2}$.
  \end{enumerate}
\end{proposition}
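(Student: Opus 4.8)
The plan is to reduce the whole statement to a single $\mathrm{L}^1$ estimate on the \emph{unnormalised} densities and then transfer it to the normalised posteriors. Write $f_\vareps = p(y|\cdot)\,p_\vareps^\alpha$ and $f = p(y|\cdot)\,p^\alpha$, with normalising constants $Z_\vareps = \int_{\rset^\dim} f_\vareps$ and $Z = \int_{\rset^\dim} f$, so that $\posteriorweps = f_\vareps/Z_\vareps$ and $\posteriorw = f/Z$. The standard ratio estimate, together with $\absLigne{Z - Z_\vareps} \leq \norm{f_\vareps - f}_1$, gives
\begin{equation}
  \tvnorm{\posteriorweps - \posteriorw} \leq Z_\vareps^{-1}\,\norm{f_\vareps - f}_1 \eqsp ,
\end{equation}
so that once $\norm{f_\vareps - f}_1 \to 0$ one has $Z_\vareps \to Z > 0$ and $Z_\vareps$ is bounded below for small $\vareps$. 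Since $\sup_x p(y|x) < +\infty$ by \rref{assum:post}, we have $\norm{f_\vareps - f}_1 \leq (\sup_x p(y|x))\,\norm{p_\vareps^\alpha - p^\alpha}_1$, so it suffices to estimate $\norm{p_\vareps^\alpha - p^\alpha}_1$, where $p_\vareps = \varphi_\vareps * p$ and $\varphi_\vareps(z) = (2\uppi\vareps)^{-\dim/2}\exp[-\norm{z}^2/(2\vareps)]$.

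For item \ref{item:alpha1} ($\alpha = 1$) I would invoke the classical approximate-identity property: for $p \in \mathrm{L}^1(\rset^\dim)$, $\norm{\varphi_\vareps * p - p}_1 \to 0$ as $\vareps \to 0$ (continuity of translation in $\mathrm{L}^1$ combined with dominated convergence). For item \ref{item:alpha2} ($\alpha \geq 1$, $\normLigne{p}_\infty < +\infty$) I would use $\norm{\varphi_\vareps * p}_\infty \leq \normLigne{p}_\infty$ and the elementary inequality $\absLigne{a^\alpha - b^\alpha} \leq \alpha\max(a,b)^{\alpha-1}\absLigne{a-b}$ to obtain
\begin{equation}
  \norm{p_\vareps^\alpha - p^\alpha}_1 \leq \alpha\,\normLigne{p}_\infty^{\alpha-1}\,\norm{p_\vareps - p}_1 \longrightarrow 0 \eqsp ,
\end{equation}
where the convergence again follows from item \ref{item:alpha1}.

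For the quantitative bound \ref{item:alpha3} the new ingredient is the translation modulus. Starting from $p_\vareps - p = \int_{\rset^\dim}\varphi_\vareps(z)\,(p(\cdot - z) - p)\,\rmd z$ and using Tonelli (all integrands being nonnegative), any weighted $\mathrm{L}^1$ norm of $p_\vareps - p$ is dominated by $\int_{\rset^\dim}\varphi_\vareps(z)\,\omega(z)\,\rmd z$, where $\omega(z)$ is the corresponding weighted translation increment of $p$, which \tup{\Cref{assum:posterior}($\alpha$)} controls by $\omega(z) \leq \rme^{\constanteM(1+\norm{z}^2)}\norm{z}^\upbeta$. The key computation is then $\int_{\rset^\dim}\varphi_\vareps(z)\,\rme^{\constanteM\norm{z}^2}\norm{z}^\upbeta\,\rmd z \leq C\,\vareps^{\upbeta/2}$, valid for $\vareps < 1/(2\constanteM)$: completing the square turns $\varphi_\vareps\,\rme^{\constanteM\norm{\cdot}^2}$ into a Gaussian of variance $\vareps/(1-2\constanteM\vareps) \leq 2\vareps$ (for $\vareps$ small), whose $\upbeta$-th moment is of order $\vareps^{\upbeta/2}$. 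When $\alpha \geq 1$ the weight in \tup{\Cref{assum:posterior}($\alpha$)} is trivial ($q^{\min(1-1/\alpha,0)} = 1$), so this directly yields $\norm{p_\vareps - p}_1 \leq C\vareps^{\upbeta/2}$, and the Lipschitz estimate for $t\mapsto t^\alpha$ used in item \ref{item:alpha2} gives $\norm{p_\vareps^\alpha - p^\alpha}_1 \leq C'\vareps^{\upbeta/2} = C'\vareps^{\upbeta\min(\alpha,1)/2}$.

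The main obstacle is the regime $\alpha < 1$, where $t\mapsto t^\alpha$ is only $\alpha$-Hölder. Here I would first use $\absLigne{a^\alpha - b^\alpha} \leq \absLigne{a-b}^\alpha$ to reduce to $\int_{\rset^\dim}\absLigne{p_\vareps - p}^\alpha$, and then apply Hölder with exponents $1/\alpha$ and $1/(1-\alpha)$ against $q$, choosing the exponent precisely so that the auxiliary factor collapses to $\int q = 1$:
\begin{equation}
  \int_{\rset^\dim}\absLigne{p_\vareps - p}^\alpha \rmd\tilde x \leq \left(\int_{\rset^\dim}\absLigne{p_\vareps - p}\,q^{1-1/\alpha}\rmd\tilde x\right)^{\alpha}\left(\int_{\rset^\dim} q\,\rmd\tilde x\right)^{1-\alpha} \eqsp .
\end{equation}
This is exactly why the weight $q^{\min(1-1/\alpha,0)} = q^{1-1/\alpha}$ appears in \tup{\Cref{assum:posterior}($\alpha$)}: the inner integral is the weighted translation increment, bounded by $C\vareps^{\upbeta/2}$ through the Tonelli and Gaussian computations above. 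Raising to the power $\alpha$ gives $\norm{p_\vareps^\alpha - p^\alpha}_1 \leq C\vareps^{\upbeta\alpha/2} = C\vareps^{\upbeta\min(\alpha,1)/2}$, which combined with the likelihood bound and the ratio estimate closes the proof. The delicate points to verify are the interchange of integrals (legitimate since all integrands are nonnegative) and that $Z,Z_\vareps$ are finite and bounded away from $0$ for small $\vareps$, which relies on $\normLigne{p}_\infty < +\infty$ together with \rref{assum:post}.
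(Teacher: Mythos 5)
Your proposal is correct and follows essentially the same route as the paper: reduce to the $\mathrm{L}^1$ distance of the unnormalised densities via the likelihood sup-bound and a normalisation lemma, treat $\alpha\geq 1$ with a Lipschitz-type estimate on $t\mapsto t^\alpha$, treat $\alpha<1$ with the $\alpha$-H\"older bound plus H\"older/Jensen against $q$ (which is exactly why the weight $q^{1-1/\alpha}$ appears in the assumption), and obtain the rate from the $\upbeta$-th moment of the rescaled Gaussian. The only cosmetic difference is that the paper packages the two power-of-$\alpha$ cases into a single elementary lemma and invokes Jensen where you invoke H\"older, which are equivalent here.
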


Note that a related result in the case where
$p(x) = \rme^{-U(x)} / \int_{\rset^{\dim}} \rme^{-U(\tilde{x})} \rmd \tilde{x}$
with $U$ Lipschitz continuous and $\alpha = 1$ can be found in \cite[Corollary
1]{vono2019asymptotically} with explicit dependency with respect to the
dimension $d$. However, note that \Cref{prop:posterior_eps} differs from
\cite[Corollary 1]{vono2019asymptotically} since the Gaussian smoothing
approximation is applied to the prior distribution and the estimate is given on
the posterior distribution in \Cref{prop:posterior_eps}, whereas in
\cite[Corollary 1]{vono2019asymptotically} the Gaussian smoothing approximation
is applied to the posterior distribution and the estimate is given on the
posterior distribution as well.

The following proposition is an extension of \Cref{prop:bias_control_final} and \Cref{prop:bias_control_final_C}. The main difference
is that the approximation is expressed with respect to the true posterior
$\posteriorw$ and not $\posteriorweps$ for some value $\vareps > 0$. Let
$\vareps_1 > 0$ be given by \Cref{prop:posterior_eps}. In order to state this
proposition, we recall the following assumption which is a relaxation of the
strongly log-concave condition.

\begingroup
\def\theassumption{\ref{assum:one_sided}}
\begin{assumption}
	There exists $\mtt \in \rset$ such that for any $x_1, x_2 \in \rset^d$ we have
	\begin{equation}
	\langle \nabla \log p(y|x_2) - \nabla \log p(y|x_1) , x_2 - x_1 \rangle \leq -\mtt \norm{x_2 - x_1}^2 \eqsp . 
	\end{equation}
\end{assumption}
\addtocounter{assumption}{-1}
\endgroup

Note that the posterior is strongly log-concave if and only if $\mtt > 0$.

\begin{proposition}
  \label{prop:bias_control_final_final}
  Assume \rref{assum:post}, \tup{\Cref{assum:finite}}, \rref{assum:neural_net},
  \rref{assum:cov} and \rref{assum:one_sided}. Let $\alpha >0$ and assume that
  for any $\vareps \in \ocint{0, \min(\vareps_0, \vareps_1)}$,
  $\int_{\rset^{\dim}} (1+\norm{\tilde{x}}^4) (p_{\vareps}^{\alpha} +
  p^{\alpha})(\tilde{x}) \rmd \tilde{x} < + \infty$ and
  \tup{\Cref{assum:posterior}($\alpha$)}. Then there exists $C_0 \geq 0$ such
  that for any $\vareps > 0$ and $\lambda >0$ such that
  $2\lambda (\Ltt_y + (\alpha /\vareps) \max(\Ltt, 1 +\Ktt_{\vareps}/\vareps)  - \min(\mtt,0)) \leq 1$
  and $\bdelta = (1/3)(\Ltt_y + \alpha \Ltt /\vareps + 1 / \lambda)^{-1}$, there
  exists $C_{1, \vareps} \geq 0$ such that for any $\msc$ convex compact with
  $\cball{0}{R_{\msc}} \subset \msc$ and $R_{\msc} > 0$, there exists
  $C_{2, \vareps, \msc} \geq 0$ such that for any $h : \ \rset^{\dim} \to \rset$
  measurable with $\sup_{x \in \rset^d} \{\abs{h(x)}(1 + \norm{x}^{2})^{-1}\} \leq 1$, $n \in \nsets$,
  $\delta \in \ocintLigne{0, \bdelta}$ and $R > 0$ we have
  \begin{multline}
    \abs{n^{-1}\sum_{k=1}^n \expe{h(X_k)} - \int_{\rset^{\dim}} h(\tilde{x}) \rmd \posteriorw(\tilde{x}) } \\\leq \defEns{C_0 \vareps^{\upbeta \min(\alpha, 1) /4} + C_{1, \vareps} R_{\msc}^{-1} + C_{2, \vareps, \msc} (\delta^{1/2} + \Mtt_R + \exp[-R]  + (n \delta)^{-1})  } (1 + \norm{x}^4)  \eqsp .
  \end{multline}
  In addition, if there exists $\mtt >0$ such that $\log(p(y|\cdot))$ is $\mtt$-concave
  with $\mtt \geq 2 (\alpha /\vareps)\max(\Ltt, 1 + \Ktt_{\vareps} / \vareps)$ and
  $\bdelta = \mtt (\Ltt_y + \alpha \Ltt/\vareps)^{-2}/2$, then there exists
  $C_{1, \vareps} \geq 0$ such that for any $h : \ \rset^{\dim} \to \rset$
  measurable with $\sup_{x \in \rset^d} \{\abs{h(x)}(1 + \norm{x}^{2})^{-1}\} \leq 1$, $n \in \nsets$,
  $\delta \in \ocintLigne{0, \bdelta}$ and $R >0$ we have
    \begin{multline}
    \abs{n^{-1}\sum_{k=1}^n \expe{h(X_k)} - \int_{\rset^{\dim}} h(\tilde{x}) \rmd \posteriorw(\tilde{x}) } \\ \leq C_0 \vareps^{\upbeta \min(\alpha, 1) /4} + C_{1, \vareps} (\delta^{1/2} + \Mtt_R + \exp[-R] + (n \delta)^{-1})  (1 + \norm{x}^4)  \eqsp .
  \end{multline}
\end{proposition}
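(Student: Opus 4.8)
The plan is to reduce the statement to the bias bounds already established with respect to the regularised posterior $\posteriorweps$, together with a single new approximation estimate between $\posteriorweps$ and $\posteriorw$. First I would note that any admissible test function satisfies $\Vnorm[V]{h}\le 1$ for $V = 1 + \norm{\cdot}^2$, and split via the triangle inequality
\[
\abs{n^{-1}\sum_{k=1}^n \expe{h(X_k)} - \posteriorw(h)} \le \abs{n^{-1}\sum_{k=1}^n \expe{h(X_k)} - \posteriorweps(h)} + \abs{\posteriorweps(h) - \posteriorw(h)} \eqsp .
\]
The first summand is exactly what \Cref{prop:bias_control_final_C} controls under \rref{assum:one_sided}, and what \Cref{prop:bias_control_final} controls in the strongly $\mtt$-concave case; invoking the appropriate one yields the $\{C_{1,\vareps} R_\msc^{-1} + C_{2,\vareps,\msc}(\delta^{1/2}+\Mtt_R+\exp[-R]+(n\delta)^{-1})\}(1+\norm{x}^4)$ term, respectively its concave analogue without the $R_\msc^{-1}$ and $\msc$ dependence. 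It then remains only to bound $\abs{\posteriorweps(h) - \posteriorw(h)}$ uniformly over $\{\Vnorm[V]{h}\le 1\}$, independently of the starting point $x$, of $\delta$, $n$, $R$ and $\msc$.

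For this second summand I would interpolate between the total-variation estimate of \Cref{prop:posterior_eps} and a second-moment bound on $V$. Fixing a truncation level $M>0$ and writing $h = h\indi{V\le M} + h\indi{V>M}$, the bounded part satisfies $\abs{\int h\indi{V\le M}\,\rmd(\posteriorweps - \posteriorw)} \le M\,\tvnorm{\posteriorweps - \posteriorw}$ since $\abs{h}\le V \le M$ on $\{V\le M\}$, while on $\{V>M\}$ the elementary inequality $V\indi{V>M}\le V^2/M$ gives $\abs{\int h\indi{V>M}\,\rmd(\posteriorweps-\posteriorw)} \le M^{-1}(\posteriorweps(V^2)+\posteriorw(V^2))$. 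Optimising over $M$ yields
\[
\abs{\posteriorweps(h) - \posteriorw(h)} \le 2\{\tvnorm{\posteriorweps - \posteriorw}\,(\posteriorweps(V^2)+\posteriorw(V^2))\}^{1/2} \eqsp .
\]
Since $V^2 \le 2(1+\norm{\cdot}^4)$, and since $p(y|\cdot)$ is bounded by \rref{assum:post} while $\int(1+\norm{\tilde x}^4)(p_{\vareps}^{\alpha} + p^{\alpha})(\tilde x)\,\rmd\tilde x < +\infty$ by hypothesis, the moments $\posteriorweps(V^2)$ and $\posteriorw(V^2)$ are finite; because $p_{\vareps}^{\alpha} \to p^{\alpha}$ and the normalising constants stay bounded away from $0$ as $\vareps \to 0$, they are bounded uniformly for $\vareps \in \ocintLigne{0,\min(\vareps_0,\vareps_1)}$. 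Combining this with the quantitative bound $\tvnorm{\posteriorweps - \posteriorw}\le A_0\,\vareps^{\upbeta\min(\alpha,1)/2}$ from \Cref{prop:posterior_eps} gives $\abs{\posteriorweps(h) - \posteriorw(h)} \le C_0\,\vareps^{\upbeta\min(\alpha,1)/4}$ with $C_0$ independent of $\vareps$, which is precisely the extra term in the claim. Adding the two summands and using $1+\norm{x}^4 \ge 1$ to absorb the $\vareps$-term into the common prefactor finishes both parts.

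The two triangle-inequality invocations are routine; the main obstacle is the second summand, and specifically the halving of the exponent from $\upbeta\min(\alpha,1)/2$ to $\upbeta\min(\alpha,1)/4$. This loss is unavoidable: the test functions are only $V$-bounded rather than bounded, so a pure total-variation estimate does not suffice and one must trade the TV mass against the $V^2$-tails, the square-root being exactly the price of this interpolation. The only genuinely delicate point to verify is the $\vareps$-uniform control of $\posteriorweps(V^2)$, which requires checking that the Gaussian smoothing inflates the fourth moment of the prior by at most $O(\vareps)$ and that $\int p(y|\tilde x)p_{\vareps}^{\alpha}(\tilde x)\,\rmd\tilde x$ is bounded below on the relevant range of $\vareps$.
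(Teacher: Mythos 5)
Your proposal is correct and follows essentially the same route as the paper: a triangle-inequality split into the $\posteriorweps$-bias term (handled by \Cref{prop:bias_control_final_C}, respectively \Cref{prop:bias_control_final} in the strongly concave case) plus the approximation term $\abs{\posteriorweps(h)-\posteriorw(h)}$, which is then controlled by combining the total-variation estimate of \Cref{prop:posterior_eps} with the interpolation $\Vnorm[V]{\nu_1-\nu_2}\leq \tvnormsq{\nu_1-\nu_2}(\nu_1[V^2]+\nu_2[V^2])^{1/2}$. The only cosmetic difference is that you derive this interpolation by truncating at level $M$ and optimising (costing a harmless factor of $2$) where the paper obtains it via Cauchy--Schwarz, and you are in fact more explicit than the paper about the $\vareps$-uniform control of the second moments.
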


\begin{proof}
  In the general case where $\log(p(y|\cdot))$ is not assumed to be $\mtt$-concave with
  $\mtt > 0$, the proof is completed upon combining
  \Cref{prop:bias_control_final_C},
  \Cref{prop:posterior_eps} and the fact that for any probability distribution
  $\nu_1, \nu_2$,
  $\Vnorm{\nu_1 - \nu_2} \leq \tvnormsq{\nu_1 - \nu_2} (\nu_1[V^2] +
  \nu_2[V^2])^{1/2}$.  The proof is similar in the case where $\log(p(y|\cdot))$ is
  $\mtt$-concave upon replacing
  \Cref{prop:bias_control_final_C} by
  \Cref{prop:bias_control_final}.
\end{proof}

\section{Technical results}
\label{sec:technical-results}

In this section, we gather technical results which will be used throughout our
analysis. Let $b \in \rmc(\rset^d, \rset^d)$ such that for any $x \in \rset^d$,
the following Stochastic Differential Equation admits a unique strong solution
\begin{equation}
  \label{eq:langevin}
  \rmd \bfX_t = b(\bfX_t) \rmd t + \sqrt{2} \rmd \bfB_t \eqsp ,
\end{equation}
where $(\bfB_t)_{t \geq 0}$ is a $d$-dimensional Brownian motion and
$\bfX_0 = x$.  In this case, \eqref{eq:langevin} defines a Markov semi-group
$(\Pker_{t})_{t \geq 0}$ for any $x \in \rset^d$ and $\msa \in \mcb{\rset^d}$ by
$\Pker_{t}(x,\msa) = \probaLigne{\bfX_{t} \in \msa}$ where
$(\bfX_{t})_{t \geq 0}$ is the solution of \eqref{eq:langevin} with
$\bfX_{0} = x$. Consider now the generator of $(\Pker_{t})_{t \geq 0}$,
defined for any $f \in \rmc^2(\rset^d, \rset)$ by
\begin{equation}
\label{eq:def_generator}
  \generator f = \ps{\nabla f}{b(x)} + \Delta f \eqsp.
\end{equation}
We say that a Markov semi-group $(\Pker_t)_{t \geq 0}$ on $\rset^d \times \mcb{\rset^d}$ with extended infinitesimal generator $(\mathcal{A},\domain(\generator))$ (see \eg~\cite{meyn1993criteria_iii} for the definition of $(\generator,\domain(\generator))$) satisfies a continuous drift condition \hypertarget{assum:drift_continuous}{$\bfDc(W,\zeta,\beta)$} if there exist $\zeta >0$, $\beta \geq 0$ and a measurable function $W: \rset^d \to \coint{1,+\infty}$ with $W \in \domain(\generator)$ such that for all $x \in \rset^d$
\begin{equation}
  \label{eq:continuous_drift}
  \mathcal{A}W(x) \leq - \zeta W(x) + \beta \eqsp .
\end{equation}

Similarly, we consider the Markov chain $(X_k)_{k \in \nset}$ given by the
following recursion for any $k \in \nset$ and $x \in \rset^d$
\begin{equation}
  X_{k+1} = X_k + \gamma b(X_k) + \sqrt{2 \gamma} Z_k \eqsp ,
\end{equation}
with $X_0 = x$, $\gamma > 0$ and $\ensemble{Z_k}{k \in \nset}$ a family of i.i.d
Gaussian random variables with zero mean and identity covariance matrix. We
define its associated Markov kernel
$\Rker_{\gamma}: \rset^d \times \mcb{\rset^d} \to \ccint{0,1}$ as follows for any
$x \in \rset^d$ and $\msa \in \mcb{\rset^d}$

\begin{equation}
  \Rker_{\gamma}(x, \msa) = \int_{\rset^d} \1_{\msa}(x + \gamma b(x) + \sqrt{2 \gamma} z) \exp[-\norm{z}^2/2] \rmd z \eqsp .
\end{equation}
We say that $\Rker_{\gamma}$ satisfies a discrete drift condition
\hypertarget{assum:drift_discrete}{$\bfDd(W,\lambda,c)$} if there exist
$\lambda \in \coint{0, 1}$, $c \geq 0$ and a measurable function
$W: \rset^d \to \coint{1,+\infty}$ such that for all $x \in \rset^d$
\begin{equation}
  \Rker_{\gamma} W(x) \leq \lambda W(x) + c \eqsp .
\end{equation}

The following two lemmas are classical, see for instance
\cite[Lemma 18, Lemma 19]{debortoli2020maximum}. We recall these results and their proofs for the
sake of completeness.

\begin{lemma}
  \label{lemma:drift_plus}
  Assume that there exist $\Ltt, \cc \geq 0$ and $\mtt > 0$ such that for any
  $x_1, x_2 \in \rset^d$ we have
  \begin{equation}
    \label{eq:ineq_curv}
    \langle b(x_1), x_1 \rangle \leq - \mtt \norm{x_1}^2 + \cc \eqsp, \qquad \norm{b(x_1) - b(x_2)} \leq \Ltt \norm{x_1 - x_2} \eqsp .
  \end{equation}
  Let $\bgamma =  \mtt / \Ltt^2$. Then the following results hold:
  \begin{enumerate}[wide, labelwidth=!, labelindent=0pt, label=(\alph*)]
  \item For any $\varpi \in \nsets$ there exist $\lambda \in \ocint{0,1}$,
  $c, \beta \geq 0$ and $\zeta > 0$ such that for any
  $\gamma \in \ocint{0, \bgamma}$, $\Rker_{\gamma}$ satisfies
  \hyperlink{assum:drift_discrete}{$\bfDd(W,\lambda^{\gamma},c \gamma)$} and
  $(\Pker_t)_{t \geq 0}$ satisfies \hyperlink{assum:drift_continuous}{$\bfDc(W,\zeta,\beta)$} with
  $W(x) = 1 + \norm{x}^{2 \varpi}$.
\item For any $\varpi > 0$, there exist $\lambda \in \ocint{0,1}$,
  $c, \beta \geq 0$ and $\zeta > 0$ such that for any
  $\gamma \in \ocint{0, \bgamma}$, $\Rker_{\gamma}$ satisfies
  \hyperlink{assum:drift_discrete}{$\bfDd(W,\lambda^{\gamma},c \gamma)$} and
  $(\Pker_t)_{t \geq 0}$ satisfies \hyperlink{assum:drift_continuous}{$\bfDc(W,\zeta,\beta)$} with
  $W(x) = \exp[\varpi \sqrt{1+ \normLigne{x}^2}]$.
  \end{enumerate}
\end{lemma}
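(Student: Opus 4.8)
The plan is to verify the four conditions in parallel, treating the continuous generator $\generator$ and the discrete kernel $\Rker_{\gamma}$ on the one hand, and the polynomial Lyapunov function $W(x)=1+\norm{x}^{2\varpi}$ and the exponential one $W(x)=\exp[\varpi\sqrt{1+\normLigne{x}^2}]$ on the other. The single structural input is the dissipativity estimate \eqref{eq:ineq_curv}, combined with the linear growth bound $\norm{b(x)}\le\norm{b(0)}+\Ltt\norm{x}$ that follows from the Lipschitz hypothesis.

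For the continuous condition $\bfDc$ I would compute $\generator W$ directly from \eqref{eq:def_generator}. For the polynomial $W$ one has $\nabla W(x)=2\varpi\norm{x}^{2\varpi-2}x$ and $\Delta W(x)=2\varpi(2\varpi-2+d)\norm{x}^{2\varpi-2}$, so that by \eqref{eq:ineq_curv}
\begin{equation*}
\generator W(x)=2\varpi\norm{x}^{2\varpi-2}\ps{x}{b(x)}+2\varpi(2\varpi-2+d)\norm{x}^{2\varpi-2}\le -2\varpi\mtt\norm{x}^{2\varpi}+C\norm{x}^{2\varpi-2}.
\end{equation*}
Since $2\varpi-2<2\varpi$, Young's inequality absorbs the lower-order term into the leading negative one, giving $\generator W\le-\zeta W+\beta$. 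The exponential case is analogous: with $\phi(x)=\sqrt{1+\norm{x}^2}$ one gets $\generator W=\varpi W\bigl[\ps{x}{b(x)}/\phi+\varpi\norm{x}^2/\phi^2+d/\phi-\norm{x}^2/\phi^3\bigr]$, and \eqref{eq:ineq_curv} makes the bracket behave like $-\mtt\norm{x}+\varpi$ for large $\norm{x}$, which is eventually $\le-\zeta/\varpi$, while the bracket stays bounded on compacts.

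For the discrete condition $\bfDd$ the key preliminary step is the one-step contraction of the mean $m=x+\gamma b(x)$. Expanding $\norm{m}^2=\norm{x}^2+2\gamma\ps{x}{b(x)}+\gamma^2\norm{b(x)}^2$ and inserting \eqref{eq:ineq_curv} together with the growth bound yields $\norm{m}^2\le(1-2\gamma\mtt+\gamma^2\Ltt^2)\norm{x}^2+C\gamma(1+\gamma\norm{x})$. The choice $\gamma\le\bgamma=\mtt/\Ltt^2$ forces $\gamma^2\Ltt^2\le\gamma\mtt$, so the quadratic coefficient is at most $1-\gamma\mtt$, and after absorbing the cross term by Young's inequality (shrinking $\bgamma$ by a universal factor if needed) one obtains $\norm{m}^2\le(1-\gamma\mtt/2)\norm{x}^2+C\gamma$. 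For $\varpi=1$ this is immediate because $\Rker_{\gamma}W(x)=1+\norm{m}^2+2\gamma d$ and $1-\gamma\mtt/2\le\rme^{-\gamma\mtt/2}=:\lambda^{\gamma}$. For general $\varpi$ I would take the expectation of $\norm{m+\sqrt{2\gamma}Z}^{2\varpi}$ with $Z\sim\NN(0,\Id)$ through a binomial expansion: odd powers of $\ps{m}{Z}$ vanish, even ones contribute powers of $\gamma$ times $\norm{m}^{2\varpi-2j}$, and combining the contraction on $\norm{m}^{2\varpi}$ with Young's inequality on the strictly lower-degree remainders recovers $\Rker_{\gamma}W\le\lambda^{\gamma}W+c\gamma$.

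The step I expect to be the main obstacle is the exponential Lyapunov function in the discrete setting. The naive bound $\sqrt{1+\norm{m+\sqrt{2\gamma}Z}^2}\le\sqrt{1+\norm{m}^2}+\sqrt{2\gamma}\norm{Z}$ is too lossy, since $\E[\exp(\varpi\sqrt{2\gamma}\norm{Z})]=1+O(\sqrt{\gamma})$ would destroy the $\rme^{-c\gamma}$ rate. Instead I would expand $\sqrt{1+\norm{Y}^2}$ around $\sqrt{1+\norm{x}^2}$ while keeping the noise contribution in the form of the one-dimensional Gaussian $\ps{x/\norm{x}}{Z}$, whose exponential moment is exactly $\rme^{\varpi^2\gamma}$, i.e.\ only $O(\gamma)$ in the exponent. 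The dissipativity simultaneously supplies a drift of order $-c\gamma\norm{x}$ in the exponent, which dominates for $\norm{x}$ large; on the complementary compact region $W$ is bounded, and balancing the two regions gives $\Rker_{\gamma}W\le\lambda^{\gamma}W+c\gamma$. Since the statement only asserts the existence of the constants $\lambda,c,\zeta,\beta$, the repeated shrinking of $\bgamma$ throughout is harmless.
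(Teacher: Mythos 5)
Your treatment of the continuous drift conditions and of the polynomial discrete drift is essentially the paper's own argument: a one-step contraction $\norm{x+\gamma b(x)}\le \rme^{-c\gamma}\norm{x}+C\gamma$ obtained from \eqref{eq:ineq_curv} and the Lipschitz growth bound, followed by a binomial expansion of $\E[\norm{m+\sqrt{2\gamma}Z}^{2\varpi}]$ in which the top term contracts and the lower-order terms are absorbed by Young's inequality; and a direct computation of $\generator W$ for $\bfDc$. One caveat: the lemma fixes $\bgamma=\mtt/\Ltt^2$, so "shrinking $\bgamma$ by a universal factor if needed" is not literally free; the paper avoids this by splitting $2\gamma\ps{b(x)}{x}\le -2\gamma\mtt\norm{x}^2+2\gamma\cc$ and using $(a+b)^2\le(1+\vareps)a^2+(1+\vareps^{-1})b^2$ with $\vareps<1/2$, which keeps the contraction valid on all of $\ocint{0,\bgamma}$ (the constant-factor discrepancy is tolerated downstream, and the paper's own proof is in fact not fully consistent on this point, so this is minor).

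The one genuine divergence is part (b) in the discrete setting. You correctly identify that the naive bound via $\E[\exp(\varpi\sqrt{2\gamma}\norm{Z})]$ is too lossy, but your replacement has a soft spot. Writing $\Phi(y)=\sqrt{1+\norm{y}^2}$, a first-order expansion $\Phi(m+\sqrt{2\gamma}Z)\le\Phi(m)+\sqrt{2\gamma}\ps{\nabla\Phi(m)}{Z}$ is false in that direction because $\Phi$ is convex; you must keep the second-order remainder, $\Phi(m+\sqrt{2\gamma}Z)\le\Phi(m)+\sqrt{2\gamma}\ps{\nabla\Phi(m)}{Z}+\gamma\norm{Z}^2$, and then the exponential moment you need is not just that of the one-dimensional Gaussian $\ps{\nabla\Phi(m)}{Z}$ but also $\E[\exp(\varpi\gamma\norm{Z}^2)]=(1-2\varpi\gamma)^{-d/2}$. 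This is still $\exp(O(d\varpi\gamma))$, so the argument can be completed, but only under the additional restriction $\gamma\lesssim 1/\varpi$ and with dimension-dependent constants in the exponent. The paper sidesteps this entirely: it first derives a discrete drift for $\Phi$ itself by applying Jensen's inequality to the drift for $W_2(x)=1+\norm{x}^2$ from part (a), and then invokes the Gaussian concentration (logarithmic Sobolev) inequality, which for the $\varpi\sqrt{2\gamma}$-Lipschitz map $z\mapsto\varpi\Phi(x+\gamma b(x)+\sqrt{2\gamma}z)$ gives $\Rker_{\gamma}W(x)\le\exp[\Rker_{\gamma}\varpi\Phi(x)+\gamma\varpi^2]$ with no remainder term, no dimension dependence in the exponent, and no extra constraint on $\gamma$. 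If you intend to keep your Taylor route, you must state and control the $\gamma\norm{Z}^2$ term explicitly; otherwise the concentration inequality is the cleaner tool.
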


\begin{proof}
  We divide the proof into two parts.
  \begin{enumerate}[wide, labelwidth=!, labelindent=0pt, label=(\alph*)]
  \item Let $\varpi \in \nsets$ and $\gamma \in \ocint{0, \bgamma}$ with
    $\bgamma = \mtt / (4 \Ltt^2)$.  Let $\Tg(x) = x- \gamma b(x)$. In the
    sequel, for any $k \in \{1, \dots, \varpi \}$, $c , \tilde{c}_k \geq 0$ and
    $\lambda, \tilde{\lambda}_k \in \coint{0,1}$ are constants independent of
    $\gamma$ which may take different values at each appearance.  Let
    $\vareps \in \ooint{0,1/2}$. Using \eqref{eq:ineq_curv}, the fact that for
    any $a, b\geq 0$, $(a+b)^2 \leq (1 + \vareps)a^2 + (1 + \vareps^{-1})b^2$
    and the fact that for any $a, b \geq 0$ we have
    $(a+b)^{1/2} \leq a^{1/2} + b^{1/2}$, we get that for any $x \in \rset^d$
    with $\norm{x} \geq (2 \cc / (\vareps \mtt))^{1/2}$
    \begin{align}
      \label{eq:ineq_un_tgamma}
        \norm{\Tg(x)} &= \parenthese{\norm{x}^2 + 2\gamma \langle b(x), x \rangle + \gamma^2 \norm{b(x)}^2}^{1/2} \\
                        &\leq \parenthese{(1 - 2\gamma \mtt+ (1 + \vareps) \gamma^2 \Ltt^2) \norm{x}^2 + 2 \gamma \cc  + (1 + \vareps^{-1}) \gamma^2 \norm{b(0)}^2}^{1/2} \\
                      &\leq \parenthese{(1 - \gamma \mtt+ (1 + \vareps)\gamma^2 \Ltt^2) \norm{x}^2 +  (1 + \vareps^{-1}) \gamma^2 \norm{b(0)}^2}^{1/2} \\
        &\leq \exp[-\gamma ((2-\vareps)\mtt - (1 + \vareps)\Ltt^2\bgamma)/2] \norm{x} + (1 + \vareps^{-1/2}) \gamma \norm{b(0)} \eqsp .
    \end{align}
    Note that $(2 - \vareps)\mtt - (1 +\vareps) \Ltt^2 \bgamma < 0$ since
    $\vareps \in \ooint{0,1/2}$ and $\bgamma = \mtt / \Ltt^2$.  On the other
    hand using \eqref{eq:ineq_curv} and the fact that for any $a,b \geq 0$ with
    $a \geq b$ and $\rme^a - \rme^b \leq \rme^a (a - b)$, we have for any
    $x \in \rset^d$ with $\norm{x} \leq (2 \cc / (\vareps\mtt))^{1/2}$
      \begin{align}
        \norm{\Tg(x)} &\leq (1 + \gamma \Ltt) \norm{x} + \gamma \norm{b(0)} \\
                      &\leq \exp[-\gamma ((2-\vareps)\mtt - (1 + \vareps)\Ltt^2\bgamma)/2] \norm{x} \\
        & \qquad + (2 \cc / (\vareps\mtt))^{1/2}\defEns{\exp[\gamma \Ltt] - \exp[-\gamma ((2-\vareps)\mtt - (1 + \vareps)\Ltt^2\bgamma)/2]} + \gamma\norm{b(0)} \\
                      &\leq \exp[-\gamma ((2-\vareps)\mtt - (1 + \vareps)\Ltt^2\bgamma)/2] \norm{x} + \gamma (2 \cc / (\vareps\mtt))^{1/2} \exp[\bgamma \Ltt](\Ltt + 2\mtt) + \gamma \norm{b(0)} \eqsp .
                                \label{eq:ineq_deux_tgamma}
      \end{align}
      Combining \eqref{eq:ineq_un_tgamma} and \eqref{eq:ineq_deux_tgamma}, there
      exist $\lambda \in \coint{0,1}$ and $c \geq 0$ such that for any
      $\gamma \in \ocint{0, \bgamma}$ and $x \in \rset^d$,
      \begin{equation}
        \label{eq:ineq_fin_tgamma}
        \norm{\Tg(x)} \leq \lambda^{\gamma} \norm{x} + \gamma c \eqsp .
      \end{equation}
      Note that  using \eqref{eq:ineq_fin_tgamma}, for any $k \in \{1, \dots, 2\varpi\}$ there exist $\tilde{\lambda}_k \in \ooint{0,1}$ and $\tilde{c}_k \geq 0$ such that
  \begin{align}
    \label{eqII:1}
    \norm[k]{\Tg(x)} &\leq \{\tilde{\lambda}_k^{\gamma} \norm{x} + \gamma \tilde{c}_k\}^{k} \\ & \leq \tilde{\lambda}_k^{\gamma k} \norm{x}^{k} + \gamma 2^{k} \max(\tilde{c}_k, 1)^{k} \max(\bgamma, 1)^{k - 1} \defEnsLigne{1  + \norm{x}^{k-1}} \\ & \leq \tilde{\lambda}_k^{\gamma} \norm[k]{x} + \tilde{c}_k \gamma  \defEnsLigne{1+\norm[k-1]{x}} \leq (1 + \norm{x}^{k}) (1 + \tilde{c}_k \gamma) \eqsp. 
  \end{align}
  Therefore, combining \eqref{eqII:1} and the Cauchy-Schwarz inequality we
  obtain that for any $\gamma \in \ocint{0, \bgamma}$ and $x \in \rset^d$
  \begin{align}
    &\int_{\rset^{\dim}} (1 + \norm{y}^{2\varpi}) \Rker_{\gamma}(x , \rmd y)  = 1 + \expeLigne{(\norm{\Tg(x)}^2 + 2 \sqrt{2\gamma}  \langle \Tg(x), Z \rangle + 2 \gamma \norm{Z}^2)^{\varpi}} \\
                         & \quad  =1+ \sum_{k=0}^\varpi \sum_{\ell =0}^{k} {\varpi \choose k} {k \choose \ell}  \norm{\Tg(x)}^{2(\varpi-k)} 2^{(3k-\ell)/2}  \gamma^{(k+\ell)/2} \expeLigne{\langle \Tg(x), Z \rangle^{k -\ell}\norm{Z}^{2\ell}} \\
    &\quad \leq  1 + \norm{\Tg(x)}^{2\varpi} \\
    & \quad \quad 
      + 2^{3\varpi/2} \sum_{k=1}^\varpi \sum_{\ell =0}^{k} {\varpi \choose k} {k \choose \ell}  \norm{\Tg(x)}^{2(\varpi-k)}  \gamma^{(k+\ell)/2} \expeLigne{\langle \Tg(x), Z \rangle^{k -\ell}\norm{Z}^{2\ell}} \1_{\{(1,0)\}^{\complementary}}(k,\ell) \\
    &\quad \leq  1 + \norm{\Tg(x)}^{2\varpi} \\
    & \quad \quad 
      + \gamma 2^{3\varpi/2} \sum_{k=1}^\varpi \sum_{\ell =0}^{k} {\varpi \choose k} {k \choose \ell}  \norm{\Tg(x)}^{2\varpi - k-\ell}  \bgamma^{(k+\ell)/2 - 1} \expeLigne{\norm{Z}^{k + \ell}} \1_{\{(1,0)\}^{\complementary}}(k,\ell) \\
    &\quad \leq 1 + \tilde{\lambda}_{2\varpi}^{\gamma} \norm{x}^{2 \varpi} + \tilde{c}_{2\varpi} \gamma \defEnsLigne{1 + \norm{x}^{2 \varpi - 1}} \\ & \quad \quad + \gamma 2^{3\varpi/2}  2^{2\varpi} \max(\bgamma, 1)^{2\varpi}  \sup_{k \in \{1, \dots, \varpi\}} \defEnsLigne{(1 + \tilde{c}_k \bgamma) \expeLigne{\norm{Z}^k}}  (1 + \norm{x}^{2\varpi - 1}) \\ 
                         & \quad \leq 1 + \lambda^{\gamma} \norm{x}^{2\varpi} + \gamma c (1 + \norm{x}^{2\varpi-1}) \\ & \quad \leq \lambda^{\gamma/2} (1 +  \norm{x}^{2\varpi}) + \gamma c(1 + \norm{x}^{2 \varpi -1}) + \lambda^{\gamma} (1 +  \norm{x}^{2\varpi}) - \lambda^{\gamma/2} (1 +  \norm{x}^{2\varpi}) \eqsp .
  \end{align}
  Using that
  $\lambda^{\gamma} - \lambda^{\gamma /2} \leq -\log(1/\lambda) \gamma
  \lambda^{\gamma/2} / 2$, we get that for any $\gamma \in \ocint{0, \bgamma}$,
  $\Rker_{\gamma}$ satisfies
  \hyperlink{assum:drift_discrete}{$\bfDd(W,\lambda^{\gamma},c \gamma)$}.  We
  now show that there exist $\zeta > 0$ and $\beta \geq 0$ such that
  $(\Pker_t)_{t \geq 0}$ satisfies
  \hyperlink{assum:drift_continuous}{$\bfDc(W,\zeta,\beta)$}.  First, for any
  $x \in \rset^d$ we have
  \begin{equation}
    \nabla W(x) = 2 \varpi \norm{x}^{2(\varpi - 1)} x \eqsp , \qquad \Delta W(x) = 2 \varpi (2 \varpi - 1) \norm{x}^{2(\varpi - 1)}
  \end{equation}
  Combining this result, the Cauchy-Schwarz inequality and \eqref{eq:ineq_curv},
  we obtain that for any $x \in \rset^d$
  \begin{align}
    \generator W(x) &= \langle \nabla W(x), b(x) \rangle + \Delta W(x) \\
                    &\leq-2\mtt \varpi \norm{x}^{2 \varpi} + 2 \varpi c \norm{x}^{2 \varpi - 1} + 2 \varpi (2 \varpi - 1) \norm{x}^{2(\varpi - 1)} \\
                    &\leq-\mtt \varpi \norm{x}^{2 \varpi} + \sup_{x \in \rset^d} \defEnsLigne{2 \varpi (c + 2 \varpi - 1) \norm{x}^{2 \varpi - 1}  -\mtt \varpi \norm{x}^{2 \varpi}} \\
        &\leq-\mtt \varpi W(x) + \sup_{x \in \rset^d} \defEnsLigne{2 \varpi (c + 2 \varpi - 1) \norm{x}^{2 \varpi - 1}  -\mtt \varpi \norm{x}^{2 \varpi}} + \mtt \varpi \eqsp .
  \end{align}
  Hence letting $\zeta = \mtt \varpi$ and $\beta = \sup_{x \in \rset^d} \defEnsLigne{2 \varpi (c + 2 \varpi - 1) \norm{x}^{2 \varpi - 1}  -\mtt \varpi \norm{x}^{2 \varpi}} + \mtt \varpi $, we obtain that $(\Pker_t)_{t \geq 0}$ satisfies
  \hyperlink{assum:drift_continuous}{$\bfDc(W,\zeta,\beta)$}.
\item First, we show that for any
  $\gamma \in \ocint{0, \bgamma}$, $\Rker_{\gamma}$ satisfies
  \hyperlink{assum:drift_discrete}{$\bfDd(\Phi,\lambda^{\gamma},c)$}, where
  $\Phi(x) = \parentheseLigne{1 + \norm{x}^2}^{1/2} = W_2^{1/2}(x)$ and
  $W_2(x) = 1+ \norm{x}^2$. Using the first part of the proof, there exist
  $\lambda_0 \in \coint{0,1}$ and $c_0 \geq 0$ such that for any
  $\gamma \in \ocint{0, \bgamma}$ with $\bgamma = \mtt / (4 \Ltt^2)$ we have that
  $\Rker_{\gamma}$ satisfies
  \hyperlink{assum:drift_discrete}{$\bfDd(W_2,\lambda_0^{\gamma},c_0 \gamma)$}.
  Using Jensen's inequality we obtain that for any
  $\gamma \in \ocint{0, \bgamma}$ and $x \in \rset^d$ with $\norm{x} \geq R$
  and $R = \max(1, ((2 c_0 \lambda_0^{-\bgamma})/\log(1/\lambda_0))^{1/2})$ we
  have
  \begin{equation}
    \Rker_{\gamma} \Phi(x) \leq \parenthese{\Rker_{\gamma} W_2(x)}^{1/2} \leq \exp[(\gamma/2) \defEnsLigne{\log(\lambda_0) + \lambda_0^{-\bgamma}c_0 R^{-2}}] \Phi(x) \leq \lambda_0^{\gamma/4} \Phi(x) \eqsp .
  \end{equation}
  In addition, using that for any $a, b \geq 0$ with $a \geq b$ we have
  $\rme^a - \rme^b \leq \rme^a (b-a)$, we get for any $x \in \rset^d$ with
  $\norm{x} \leq R$
  \begin{align}
    \Rker_{\gamma} \Phi(x) \leq \parenthese{\Rker_{\gamma} W_2(x)}^{1/2} &\leq \exp[(\gamma/2) \defEnsLigne{\log(\lambda_0) + \lambda_0^{-\bgamma}c_0 }] \Phi(x) \\
    &\leq \exp[(\gamma/2) \defEnsLigne{\log(\lambda_0) + \lambda_0^{-\bgamma}c_0 R^{-2}}] \Phi(x) \\
    &\qquad + \lambda_0^{-\bgamma}c_0 (1 - R^{-2}) \exp[(\gamma/2) \defEnsLigne{\log(\lambda_0) + \lambda_0^{-\bgamma}c_0 R^{-2}}]\Phi(R) \eqsp .
  \end{align}
  Hence, there exist $\lambda_1 \in \coint{0,1}$ and $c_1 \geq 0$ such that for
  any $\gamma \in \ocint{0, \bgamma}$ we have that
  $\Rker_{\gamma}$ satisfies
  \hyperlink{assum:drift_discrete}{$\bfDd(\varpi \Phi,\lambda_1^{\gamma},c_1 \gamma)$}.
  Now let $W(x) = \exp[\Phi(x)]$. Using the logarithmic Sobolev inequality
  \cite[Theorem 5.5]{boucheron2013concentration} we get for any
  $\gamma \in \ocint{0, \bgamma}$ and $x \in \rset^d$ with $\norm{x} \geq R$
  and $R = 1 + (\varpi^2 + c_1)^{-1}\log(1/\lambda_1)$ 
  \begin{align}
    \Rker_{\gamma} W(x) \leq \exp[\Rker_{\gamma} \varpi \Phi(x) + \gamma \varpi^2 ] &\leq  \exp[-(1 - \lambda_1^{\gamma})\Phi(x)  + \gamma(\varpi^2 + c_1)] W(x) \\
    &\leq \exp[-\gamma  \log(1/\lambda_1) R  + \gamma(\varpi^2 + c_1)] W(x) \leq \lambda_1^{\gamma} W(x) \eqsp .
  \end{align}
  In addition, using that for any $a, b \geq 0$ with $a \geq b$ we have
  $\rme^a - \rme^b \leq \rme^a (b-a)$, we get for any $x \in \rset^d$ with
  $\norm{x} \leq R$
  \begin{align}
    \Rker_{\gamma} W(x) &\leq \exp[\Rker_{\gamma} \varpi \Phi(x) + \gamma] \leq  \exp[\gamma(\varpi^2 + c_1)] W(x) \\ &\leq \lambda_1^{\gamma} W(x) + \gamma \exp[\bgamma(\varpi^2 + c_1)] ((1 + c_1) + \log(1/\lambda_1)) W(R) \eqsp .
  \end{align}
  Therefore, there exist $\lambda \in \coint{0,1}$ and $c \geq 0$ such that for
  any $\gamma \in \ocint{0, \bgamma}$ we have that
  $\Rker_{\gamma}$ satisfies
  \hyperlink{assum:drift_discrete}{$\bfDd(W,\lambda^{\gamma},c \gamma)$}.
    We now show that there exist $\zeta > 0$ and $\beta \geq 0$ such that
  $(\Pker_t)_{t \geq 0}$ satisfies
  \hyperlink{assum:drift_continuous}{$\bfDc(W,\zeta,\beta)$}.
  First, for any $x \in \rset^d$ we have
  \begin{equation}
    \nabla W(x) = \varpi x\Phi^{-1}(x)W(x) \eqsp , \quad \Delta W(x) = \defEnsLigne{\varpi \Phi^{-1}(x)(1 - \norm{x}^2/\Phi^{2}(x)) + \varpi^2 \norm{x}^2/\Phi^{2}(x)} W(x) \eqsp .
  \end{equation}
  Therefore using \eqref{eq:ineq_curv} we obtain that for any $x \in \rset^d$ with
  $\norm{x} \geq \sqrt{2}(1 + (c+1 + \varpi)/\mtt)$
  \begin{equation}
    \generator W(x) \leq \varpi (-\mtt \Phi^{-1}(x)\norm{x}^2 + c + 1 + \varpi)W(x) \leq -(\mtt/2)W(x) \eqsp ,
  \end{equation}
  which concludes the proof.
  \end{enumerate}
\end{proof}

\begin{lemma}
  \label{lemma:bound_plus}
  Assume that there exist $\lambda \in \ocint{0,1}$, $c, \beta \geq 0$,
  $\zeta, \bgamma > 0$ such that for any $\gamma \in \ocint{0, \bgamma}$,
  $\Rker_{\gamma}$ satisfies
  \hyperlink{assum:drift_discrete}{$\bfDd(W,\lambda^{\gamma},c \gamma)$} and
  $(\Pker_t)_{t \geq 0}$ satisfies
  \hyperlink{assum:drift_continuous}{$\bfDc(W,\zeta,\beta)$}. Then, there exists
  $C \geq 0$ such that for any $x \in \rset^d$, $t \geq 0$ and $k \in \nsets$ we have
  \begin{equation}
    \Rker_{\gamma}^k W(x) + \Pker_tW(x) \leq C W(x) \eqsp .
  \end{equation}
\end{lemma}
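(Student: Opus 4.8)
The plan is to bound the two summands $\Rker_{\gamma}^k W(x)$ and $\Pker_t W(x)$ separately, since the first depends only on the discrete drift condition $\bfDd(W,\lambda^{\gamma},c\gamma)$ and the second only on the continuous drift condition $\bfDc(W,\zeta,\beta)$. In both halves the mechanism is identical: a one-step (respectively infinitesimal) contraction towards a bounded set, iterated (respectively integrated) in time, produces a geometric-sum (respectively Grönwall) bound whose additive constant is uniform in $k$, $\gamma$ and $t$. Using $W \geq 1$ one then absorbs those additive constants into a multiple of $W(x)$.

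First I would treat the discrete term by iterating $\bfDd(W,\lambda^{\gamma},c\gamma)$. A straightforward induction gives, for any $k \in \nsets$ and $\gamma \in \ocint{0,\bgamma}$,
\[
\Rker_{\gamma}^k W(x) \leq \lambda^{\gamma k} W(x) + c\gamma \sum_{j=0}^{k-1} \lambda^{\gamma j} \leq \lambda^{\gamma k} W(x) + \frac{c\gamma}{1 - \lambda^{\gamma}} \eqsp .
\]
The only delicate point is that $c\gamma/(1-\lambda^{\gamma})$ is bounded \emph{uniformly} in $\gamma \in \ocint{0,\bgamma}$, not merely for each fixed $\gamma$. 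Writing $a = \log(1/\lambda) > 0$ so that $1 - \lambda^{\gamma} = 1 - \rme^{-a\gamma}$, the elementary inequality $1 - \rme^{-u} \geq u\rme^{-u}$ (valid for $u \geq 0$) gives $1 - \lambda^{\gamma} \geq a\gamma \lambda^{\gamma} \geq a\gamma \lambda^{\bgamma}$, whence $c\gamma/(1-\lambda^{\gamma}) \leq c\lambda^{-\bgamma}/\log(1/\lambda)$. Since $\lambda^{\gamma k} \leq 1$ and $W \geq 1$, this yields $\Rker_{\gamma}^k W(x) \leq \bigl(1 + c\lambda^{-\bgamma}/\log(1/\lambda)\bigr) W(x)$ for all $k$ and all $\gamma \leq \bgamma$.

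Next I would handle the continuous term using Dynkin's formula for the extended generator. Since $W \in \domain(\generator)$, the map $t \mapsto \Pker_t W(x)$ is absolutely continuous with $\tfrac{\rmd}{\rmd t}\Pker_t W(x) = \Pker_t(\generator W)(x)$; plugging in $\bfDc(W,\zeta,\beta)$, i.e. $\generator W \leq -\zeta W + \beta$, and setting $v(t) = \Pker_t W(x)$ gives the differential inequality $v'(t) \leq -\zeta v(t) + \beta$. Considering $g(t) = \rme^{\zeta t} v(t)$ one obtains $g'(t) \leq \beta \rme^{\zeta t}$, so integrating and dividing by $\rme^{\zeta t}$ yields $\Pker_t W(x) \leq \rme^{-\zeta t} W(x) + (\beta/\zeta)(1 - \rme^{-\zeta t})$. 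Using $\rme^{-\zeta t} \leq 1$ and $W \geq 1$ gives $\Pker_t W(x) \leq (1 + \beta/\zeta) W(x)$. Adding the two bounds then proves the claim with $C = 2 + c\lambda^{-\bgamma}/\log(1/\lambda) + \beta/\zeta$.

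The hard part will be the rigorous justification of the continuous-time step: passing from the infinitesimal inequality $\generator W \leq -\zeta W + \beta$ to the semigroup bound requires Dynkin's formula to hold for $W$ in the domain of the \emph{extended} generator, together with enough integrability of $s \mapsto \Pker_s(\generator W)(x)$ to legitimately differentiate under $\Pker_s$ and run the Grönwall argument. The discrete half, by contrast, is purely algebraic once the uniform bound on $c\gamma/(1-\lambda^{\gamma})$ is secured. This is precisely why the statement is invoked as a recalled classical fact (cf. \cite{meyn1993criteria_iii,debortoli2020maximum}), the underlying semigroup regularity being standard.
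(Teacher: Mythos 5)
Your proof is correct and follows essentially the same route as the paper: the discrete term is bounded by iterating the drift condition and controlling the geometric sum $c\gamma/(1-\lambda^{\gamma})$ uniformly in $\gamma \in \ocint{0,\bgamma}$ (the paper uses $(1-\rme^{-t})^{-1} \leq 1 + 1/t$ where you use $1-\rme^{-u} \geq u\rme^{-u}$, to the same effect). For the continuous term the paper simply cites \cite[Lemma 25-(b)]{debortoli2020convergence}, whose content is precisely the Dynkin--Gr\"onwall argument you spell out, so your version just makes the outsourced step explicit.
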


\begin{proof}
  There exists $C_c \geq 0$ such that for any $x \in \rset^d$ and $t \geq 0$,
  $\Pker_tW(x) \leq C_c W(x)$ using \cite[Lemma 25-(b)]{debortoli2020convergence}.
  Using that for any $t \geq 0$, $(1 - \rme^{-t})^{-1} \leq 1 + 1/t$ we get that
  for any $\gamma \in \ocint{0, \bgamma}$, $x \in \rset^d$ and $k \in \nsets$
  \begin{equation}
    \Rker_{\gamma}^k W(x) \leq W(x) + c \gamma \sum_{k \in \nset} \lambda^{k \gamma} \leq (1 + c(\bgamma + \log(1/\lambda))) W(x) \eqsp ,
  \end{equation}
  which concludes the proof upon letting $C = C_c + 1 + c(\bgamma + \log(1/\lambda)$.
\end{proof}

\begin{proposition}
  \label{prop:stability_observation}
  Assume that there exist $\Phi_1: \ \rset^d \to \coint{0, +\infty}$ and $\Phi_2: \ \rset^\dimY \to \coint{0,+\infty}$
  such that for any $x \in \rset^d$ and $y_1, y_2 \in \rset^\dimY$
  \begin{equation}
    \label{eq:condition_reg}
    \norm{\log(q_{y_1}(x)) - \log(q_{y_2}(x))} \leq (\Phi_1(x) + \Phi_2(y_1) + \Phi_2(y_2)) \norm{y_1 - y_2}\eqsp ,
  \end{equation}
  and for any $c>0$,
  $\int_{\rset^d} (1+\Phi_1(\tilde{x})) \exp[c\Phi_1(\tilde{x})] p(x) \rmd x <
  +\infty$. Then $y \mapsto \pi_y$ is locally Lipschitz w.r.t the total
  variation $\tvnorm{\cdot}$, where for any $x \in \rset^d, y \in \rset^\dimY$
  we have
  \begin{equation}
    \label{eq:def_pi_y}
    (\rmd \pi_y / \rmd \Leb)(x)  = \left. q_y(x) p(x) \middle/ \int_{\rset^d} q_y(\tilde{x}) p(\tilde{x}) \rmd \tilde{x} \right. \eqsp .
      \end{equation}
\end{proposition}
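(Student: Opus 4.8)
The plan is to fix an arbitrary compact set $\msk \subset \rset^\dimY$ and to bound $\tvnorm{\pi_{y_1} - \pi_{y_2}}$ by a constant multiple of $\norm{y_1 - y_2}$ for all $y_1, y_2 \in \msk$; since every observation has a compact neighbourhood, this gives the local Lipschitz property. I would write $f_i = q_{y_i} p$ and $Z_i = \int_{\rset^d} f_i(\tilde{x}) \rmd \tilde{x}$, so that $\rmd \pi_{y_i} / \rmd \Leb = f_i / Z_i$. Starting from the identity $f_1/Z_1 - f_2/Z_2 = (f_1 - f_2)/Z_1 + f_2 (Z_2 - Z_1)/(Z_1 Z_2)$ and using $\absLigne{Z_1 - Z_2} \leq \int_{\rset^d} \absLigne{f_1 - f_2}$, I obtain
\[
\tvnorm{\pi_{y_1} - \pi_{y_2}} = \int_{\rset^d} \absLigne{f_1/Z_1 - f_2/Z_2} \leq 2 Z_1^{-1} \int_{\rset^d} \absLigne{q_{y_1}(\tilde{x}) - q_{y_2}(\tilde{x})} p(\tilde{x}) \rmd \tilde{x} \eqsp .
\]
It then remains to control this numerator relative to $Z_1$, uniformly over $y_1, y_2 \in \msk$.

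For the numerator, I would set $r(x) = \log q_{y_1}(x) - \log q_{y_2}(x)$, so that $q_{y_2} = q_{y_1} \rme^{-r}$ and, by the assumed regularity bound, $\absLigne{r(x)} \leq (\Phi_1(x) + \Phi_2(y_1) + \Phi_2(y_2)) \norm{y_1 - y_2}$. The elementary inequality $\absLigne{1 - \rme^{-t}} \leq \absLigne{t} \rme^{\absLigne{t}}$ then yields
\[
\int_{\rset^d} \absLigne{q_{y_1} - q_{y_2}} p = \int_{\rset^d} q_{y_1} \absLigne{1 - \rme^{-r}} p \leq \int_{\rset^d} q_{y_1} \absLigne{r} \rme^{\absLigne{r}} p = Z_1 \int_{\rset^d} \absLigne{r} \rme^{\absLigne{r}} \rmd \pi_{y_1} \eqsp ,
\]
so that $\tvnorm{\pi_{y_1} - \pi_{y_2}} \leq 2 \int_{\rset^d} \absLigne{r} \rme^{\absLigne{r}} \rmd \pi_{y_1}$.

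To make this uniform, write $\bar{\Phi} = \sup_{y \in \msk} \Phi_2(y)$ and $D = \diam(\msk)$, giving $\absLigne{r(x)} \leq (\Phi_1(x) + 2\bar{\Phi}) \norm{y_1 - y_2} \leq (\Phi_1(x) + 2\bar{\Phi}) D$ and hence $\int \absLigne{r}\rme^{\absLigne{r}} \rmd \pi_{y_1} \leq \norm{y_1 - y_2} \int (\Phi_1 + 2\bar{\Phi}) \rme^{(\Phi_1 + 2\bar{\Phi}) D} \rmd \pi_{y_1}$. I would then transfer this integral from $\pi_{y_1}$ to the prior $p$ through $\rmd \pi_{y_1} / \rmd \Leb = q_{y_1} p / Z_1$: using that the likelihood is bounded, $B_\msk = \sup_{x,\, y \in \msk} q_y(x) < +\infty$, and that $Z_1 \geq \underline{Z} := \inf_{y \in \msk} Z(y)$, one has $\int g \, \rmd \pi_{y_1} \leq (B_\msk/\underline{Z}) \int g\, p$ for every $g \geq 0$. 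Taking $g = (\Phi_1 + 2\bar{\Phi}) \rme^{(\Phi_1 + 2\bar{\Phi}) D}$ and invoking the hypothesis $\int_{\rset^d} (1 + \Phi_1) \rme^{c \Phi_1} p < +\infty$ with $c = D$ shows $\int g\, p < +\infty$, whence $\int g \, \rmd \pi_{y_1} \leq C_\msk$ uniformly in $y_1 \in \msk$ and therefore $\tvnorm{\pi_{y_1} - \pi_{y_2}} \leq 2 C_\msk \norm{y_1 - y_2}$.

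Two auxiliary points need to be settled, and the second is the main obstacle. First, I must establish $\underline{Z} > 0$: the map $y \mapsto Z(y) = \int q_y p$ is strictly positive (positive integrand) and continuous, since $\absLigne{Z(y_1) - Z(y_2)} \leq \int \absLigne{q_{y_1} - q_{y_2}} p \to 0$ by the second-paragraph bound and dominated convergence, so positivity and continuity on the compact $\msk$ force $\inf_\msk Z > 0$. Second, and this is the crux, the uniformity in $y_1$ hinges on transferring exponential integrability from the fixed prior $p$ to the observation-dependent posterior $\pi_{y_1}$; this is precisely where boundedness of the likelihood and local boundedness of $\Phi_2$ are needed, and where the exponential-integrability hypothesis — rather than mere integrability of $\Phi_1$ — is indispensable, because linearising $q_{y_2} = q_{y_1} \rme^{-r}$ unavoidably produces the factor $\rme^{\absLigne{r}}$ that must be absorbed against $\pi_{y_1}$.
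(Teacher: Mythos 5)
Your overall route is the same as the paper's: reduce the total-variation distance to the $\mathrm{L}^1$ distance between the unnormalised densities $q_{y_i}\,p$ via the elementary normalisation lemma (\Cref{lemma:tv_norm}), linearise the exponential through $\abs{1-\rme^{-t}}\leq\abs{t}\rme^{\abs{t}}$ (the paper uses the equivalent $\abs{\rme^a-\rme^b}\leq\abs{a-b}\max(\rme^a,\rme^b)$), and absorb the resulting factor $\rme^{\abs{r}}$ with the hypothesis $\int(1+\Phi_1)\rme^{c\Phi_1}p<+\infty$. You are in fact more explicit than the paper on two points it leaves implicit: the uniform lower bound $\inf_{y\in\msk}Z(y)>0$ on the normalising constants, and the need for $\Phi_2$ to be bounded on $\msk$.

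The one step that does not follow from the stated hypotheses is the transfer $\int g\,\rmd\pi_{y_1}\leq(B_{\msk}/\underline{Z})\int g\,p$ with $B_{\msk}=\sup_{x\in\rset^d,\,y\in\msk}q_y(x)$. The proposition nowhere assumes the likelihood is bounded, and the regularity condition \eqref{eq:condition_reg} only yields $q_y(x)\leq q_{y_0}(x)\exp[(\Phi_1(x)+\Phi_2(y)+\Phi_2(y_0))\norm{y-y_0}]$, whose right-hand side is unbounded in $x$ whenever $\Phi_1$ is; so $B_{\msk}<+\infty$ is a genuinely additional assumption (it does hold in the paper's application, where \rref{assum:post} gives $\sup_x p(y|x)<+\infty$, but the general statement is meant to stand without it). The need for it is created by your detour through $\pi_{y_1}$: once you divide by $Z_1$ to produce $\int\abs{r}\rme^{\abs{r}}\rmd\pi_{y_1}$, you must convert back to $p$ in order to invoke the integrability hypothesis, and that conversion costs you a uniform bound on the likelihood. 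The paper instead keeps the estimate in the form $c_{y_1}^{-1}\int\abs{q_{y_1}-q_{y_2}}\,p\leq c_{y_1}^{-1}\norm{y_1-y_2}\int(\Phi_1+\Phi_2(y_1)+\Phi_2(y_2))\exp[(2\Phi_1+\Phi_2(y_1)+\Phi_2(y_0)+\Phi_2(y_2))D_{\msk}]\,p$, so the exponential-integrability hypothesis — which holds for \emph{every} $c>0$ and therefore tolerates the doubled $\Phi_1$ in the exponent — applies directly against $p$; the only residual control on the likelihood it needs is at the single reference point $y_0$, not uniformly over $x$ and $y$. To repair your argument without strengthening the hypotheses, replace the uniform $B_{\msk}$ by the pointwise bound $q_{y_1}(x)\leq q_{y_0}(x)\exp[(\Phi_1(x)+2\bar{\Phi})D_{\msk}]$ and fold the extra $\rme^{\Phi_1 D_{\msk}}$ into the hypothesis with $c=2D_{\msk}$, at which point your computation coincides with the paper's.
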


\begin{proof}
  Let $y_1, y_2 \in \msk$ with $\msk$ a compact set. Let $y_0 \in \msk$ and
  $D_{\msk}$ be the diameter of $\msk$. Using \Cref{lemma:tv_norm} we get that
  \begin{equation}
    \tvnorm{\pi_{y_1} - \pi_{y_2}} \leq 2 c_{y_1} \int_{\rset^d} \abs{q_{y_1}(x) - q_{y_2}(x)} p(x) \rmd x \eqsp ,
  \end{equation}
  with $c_{y_1} = \int_{\rset^d} q_{y_1}(x) p(x) \rmd x$. 
  Combining this result with the fact that for any $a, b \in \rset$ we have
  $\abs{\rme^a - \rme^b} \leq \abs{a - b} \max(\rme^a, \rme^b)$ we get that
  \begin{align}
    \tvnorm{\pi_{y_1} - \pi_{y_2}} &\leq 2 c_{y_1} \int_{\rset^d} \abs{q_{y_1}(x) - q_{y_2}(x)} p(x) \rmd x\\
                                   &\leq 2 c_{y_1} \int_{\rset^d} (\Phi_1(x) + \Phi_2(y_1) + \Phi_2(y_2)) \norm{y_1 - y_2} \\
                                   & \qquad \qquad \times \exp[(2\Phi_1(x) + \Phi_2(y_1) + \Phi_2(y_0) + \Phi_2(y_2)) D_{\msk}] p(x) \rmd x \\
                                   &\leq 2 c_{y_1} (\Phi_2(y_1) + \Phi_2(y_2)) \exp[\Phi_2(y_1) + \Phi_2(y_0) + \Phi_2(y_2)] \\
    & \qquad \qquad \times \int_{\rset^d} (1 + \Phi_1(x)) \exp[2D_{\msk}\Phi_1(x)] p(x) \rmd x \times \norm{y_1 - y_2} \eqsp ,
  \end{align}
  which concludes the proof.
\end{proof}

\section{Proofs of \Cref{sec:convergence_pnpula}}
\label{sec:proofs-citecr}

We recall that the Markov chain $(X_k)_{k \in \nset}$, defined in \eqref{eq:pnpula}, is given by
\begin{align}
X_{k+1} &= X_k + \delta b_{\vareps}(X_k)  + \sqrt{2 \delta} Z_{k+1} \eqsp , \\
b_{\vareps}(x) &= \nabla \log(p(y|x)) + \alpha (D_{\vareps}(x) - x) / \vareps + (x - \Pi_{\msc}(x))/ \lambda \eqsp ,
\end{align}
where $\delta > 0$ is a stepsize, $\alpha, \vareps, \lambda >0$ are
hyperparameters of the algorithm, $\msc \subset \rset^d$ is a closed
convex set with $0 \in \msc$, $\Pi_{\msc}$ is the projection on $\msc$ and
$\ensembleLigne{Z_k}{k \in \nset}$ a family of i.i.d. Gaussian random
variables with zero mean and identity covariance matrix.

In this section, we prove the convergence of \pnpula \ and control the bias of
its invariant measure in the general framework introduced in
\Cref{sec:general-framework} (\ie \ $\alpha \neq 1$) under two different
assumptions on the posterior: either the posterior is log-concave as in
\Cref{sec:strongly-log-concave} or the posterior satisfies a more general
one-sided Lipschitz condition as in \Cref{sec:convergence_pnpula}. Note that in \Cref{sec:convergence_pnpula} the results are only
stated for $\alpha =1$. The statements of the propositions can be generalized to
$\alpha > 0$ by replacing
$2\lambda (\Ltt_y + \Ltt /\vareps - \min(\mtt, 0)) \leq 1$ and
$\bdelta = (1/3)(\Ltt_y + \Ltt /\vareps + 1 / \lambda)^{-1}$ by
$2\lambda (\Ltt_y + \alpha \Ltt /\vareps - \min(\mtt, 0)) \leq 1$ and
$\bdelta = (1/3)(\Ltt_y + \alpha \Ltt /\vareps + 1 / \lambda)^{-1}$ in \Cref{prop:ergo_C} and
$2\lambda (\Ltt_y + (a/\vareps)\max(\Ltt, 1 + \Ktt_{\vareps}/ \vareps)
-\min(\mtt,0)) \leq 1$ and
$\bdelta = (1/3)(\Ltt_y + \Ltt /\vareps + 1 / \lambda)^{-1}$ by
$2\lambda (\Ltt_y + (\alpha /\vareps)\max(\Ltt, 1 + \Ktt_{\vareps}/ \vareps)
-\min(\mtt,0)) \leq 1$ and
$\bdelta = (1/3)(\Ltt_y + \alpha \Ltt /\vareps + 1 / \lambda)^{-1}$ in \Cref{prop:bias_C} and \Cref{prop:bias_control_final_C}.

\subsection{Proof of \Cref{prop:fun_res}}
\label{sec:fun_res_proof}

Let $R > 0$. Let $X$ and $Z$ be random variables with distribution $\mu$ and
zero mean Gaussian with identity covariance matrix. Let
$X_{\vareps} = X + \vareps^{1/2} Z$. We recall that the distributions of $X$ and
$X_{\vareps}$ have density with respect to the Lebesgue measure given by $p$ and
$p_{\vareps}$ respectively. In addition, the conditional density of $X$ given
$X_{\vareps}$ is given by $g_{\vareps}$. By definition
$D_{\vareps}^{\star}(X_{\vareps}) = \mathbb{E}[X|X_{\vareps}]$ and therefore we have
\begin{align}
&\ell_{\vareps}(w^{\dagger})  = \expe{\norm{X - f_{w^{\dagger}}(X_{\vareps})}^2} \\
& = \expe{\norm{X - D_{\vareps}^{\star}(X_{\vareps})}^2}  + 2 \expe{\langle X - D_{\vareps}^{\star}(X_{\vareps}) ,  D_{\vareps}^{\star}(X_{\vareps}) - f_{w^{\dagger}}(X_{\vareps}) \rangle} +  \expe{\norm{f_{w^{\dagger}}(X_{\vareps}) - D_{\vareps}^{\star}(X_{\vareps})}^2} \\
& = \expe{\norm{X - D_{\vareps}^{\star}(X_{\vareps})}^2}  +   \expe{\norm{f_{w^{\dagger}}(X_{\vareps}) - D_{\vareps}^{\star}(X_{\vareps})}^2}  = \ell_{\vareps}^{\star}  +   \expe{\norm{f_{w^{\dagger}}(X_{\vareps}) - D_{\vareps}^{\star}(X_{\vareps})}^2} \eqsp . 
\end{align}
Combining this result, the condition that
$\ell_{\vareps}(w^{\dagger}) \leq \ell_{\vareps}^{\star} + \eta$ and the
Cauchy-Schwarz inequality we get that
\begin{equation}
\label{eq:expe_bound}
\expeLigne{\norm{f_{w^{\dagger}}(X_{\vareps}) - D_{\vareps}^{\star}(X_{\vareps})}} \leq \sqrt{\eta} \eqsp .
\end{equation}
Since $f_{w^{\dagger}}$ and $D_{\vareps}^\star$ are locally
Lipschitz, there exists $C_R \geq 0$ such that for any
$x_1, x_2 \in \cball{0}{2R}$ we have
\begin{equation}
\label{eq:loc_lip}
\abs{\norm{f_{w^{\dagger}}(x_2) - D_{\vareps}^\star(x_2)}-  \norm{f_{w^{\dagger}}(x_1) - D_{\vareps}^\star(x_1)}} \leq C_R \norm{x_2 - x_1} \eqsp .
\end{equation}
Assume that
$\sup_{\tilde{x} \in \cball{0}{R}}\normLigne{f_{w^{\dagger}}(\tilde{x}) - D_{\vareps}^\star(\tilde{x})} >
\eta^{\varpi}$ with $\varpi = (2d + 2)^{-1}$ and denote $x_R \in \cball{0}{R}$ such that we have 
$\sup_{\tilde{x} \in \cball{0}{R}} \normLigne{f_{w^{\dagger}}(\tilde{x}) - D_{\vareps}^\star(x)} = \normLigne{f_{w^{\star}}(x_R) - D_{\vareps}^\star(x_R)}$. Using \eqref{eq:loc_lip} we have 
\begin{align}
\expeLigne{\norm{f_{w^{\dagger}}(X_{\vareps}) - D_{\vareps}^{\star}(X_{\vareps})}} &\geq \int_{\cball{0}{2R} \cap \cball{x_R}{C_R^{-1} \eta^{\varpi}}} \normLigne{f_{w^{\dagger}}(\tilde{x}) - D_{\vareps}^{\star}(\tilde{x})} p_{\vareps}(\tilde{x})  \rmd \tilde{x} \\
&\geq 
(\normLigne{f_{w^{\dagger}}(x_R) - D_{\vareps}^{\star}(x_R)} - \eta^{\varpi}) \int_{\cball{0}{2R} \cap \cball{x_R}{C_R^{-1} \eta^{\varpi}}} p_{\vareps}(\tilde{x})  \rmd \tilde{x} \eqsp .
\end{align}
Combining this result and \eqref{eq:expe_bound} we obtain that
\begin{equation}
\normLigne{f_{w^{\dagger}}(x_R) - D_{\vareps}^{\star}(x_R)} \leq \eta^{1/2} \parenthese{\int_{\cball{0}{2R} \cap \cball{x_R}{C_R^{-1} \eta^{\varpi}}} p_{\vareps}(\tilde{x})  \rmd \tilde{x}}^{-1} + \eta^{\varpi} \eqsp ,
\end{equation}
Setting
$\Mtt_R = \eta^{1/2} \parentheseLigne{\int_{\cball{0}{2R} \cap \cball{x_R}{C_R^{-1} \eta^{\varpi}}} p_{\vareps}(\tilde{x})  \rmd \tilde{x}}^{-1} + \eta^{\varpi}$ concludes the
first part of the proof. 
Denote $v_d$ the volume of the unit $d$-dimensional ball. We have that
$\Leb ( \cball{x_R}{C_R^{-1} \eta^{\varpi}})  = C_R^{-d} \eta^{\varpi
	d} v_d$. Using the Fubini theorem, the Lebesgue
differentiation theorem \cite[Theorem 5.6.2]{bogachev2007measure}, the dominated convergence theorem and the fact
that for $\eta \in \ocintLigne{0, (C_R R)^{1/\varpi}}$, 
$\cball{0}{2R} \cap \cball{x_R}{C_R^{-1}
	\eta^{\varpi}} = \cball{x_R}{C_R^{-1} \eta^{\varpi}}$ we get that
\begin{align}
&\lim_{\eta \to 0} \Leb(\cball{x_R}{C_R^{-1} \eta^{\varpi}})^{-1} \int_{\rset^d} \1_{\cball{x_R}{C_R^{-1} \eta^{\varpi}} \cap \cball{x_R}{C_R^{-1} \eta^{\varpi}}}(x) p_{\vareps}(x) \rmd x \\
& \qquad = \lim_{\eta \to 0} \int_{\rset^d} \vert \cball{x_R}{C_R^{-1} \eta^{\varpi}} \vert^{-1}  (2\uppi\vareps)^{-d/2}\int_{\rset^d}  \1_{\cball{x_R}{C_R^{-1} \eta^{\varpi}}}(x) \exp[-\normLigne{x - \tilde{x}}^2/ (2 \vareps)]p(\tilde{x}) \rmd x \rmd \tilde{x}  \\
& \qquad =  \int_{\rset^d}  (2\uppi\vareps)^{-d/2} \exp[-\normLigne{x_R - \tilde{x}}^2/ (2 \vareps)]p(\tilde{x}) \rmd x \rmd \tilde{x} = p_{\vareps}(x_R) > 0 \eqsp .
\end{align}
Using this result we have, 
\begin{align}
\limsup_{\eta \to 0} \eta^{-\varpi} \Mtt_R &= 1 + \limsup_{\eta \to 0} \eta^{1/2 - \varpi(d+1)} \eta^{\varpi d} \parenthese{\int_{\cball{0}{2R} \cap \cball{x_R}{C_R^{-1} \eta^{\varpi}}} p_{\vareps}(\tilde{x})  \rmd \tilde{x}}^{-1} \\
&= 1 + C_R^d v_d p_{\vareps}^{-1}(x_R) <+\infty \eqsp ,
\end{align}
which concludes the proof.

\subsection{Proof of \Cref{prop:ergo_C} and \Cref{prop:ergo}}
\label{prop:ergo:proof}
We divide this section into two parts.  First, we prove the general case where
$\log(p(y|\cdot))$ is not assumed to be strongly concave but only satisfying a
one-sided Lipschitz condition, \ie \ \Cref{prop:ergo_C}. Then we turn to the
proof of \Cref{prop:ergo}.
\begin{enumerate}[wide, labelwidth=!, labelindent=0pt, label=(\alph*)]    
	\item 
	Let $\lambda >0$ such that
	$2\lambda (\Ltt_y + \alpha \Ltt
	/\vareps) \leq 1$ and
	$\bdelta = (1/3)(\Ltt_y + \alpha \Ltt
	/\vareps + 1/\lambda)^{-1}$. Let $\msc$ be a compact convex set
	with $0 \in \msc$. Using \rref{assum:neural_net}, \eqref{eq:pnpula}
	and that $\Id -\Pi_{\msc}$ is non-expansive we have for any
	$x_1,x_2 \in \rset^{\dim}$
	\begin{align}
	\norm{b_{\vareps}(x_1) - b_{\vareps}(x_2)} \leq (\Ltt_y + \alpha \Ltt/\vareps + 1 / \lambda) \norm{x_1-x_2} \eqsp .
	\end{align}
	Denote
	$R_{\msc} = \sup \ensembleLigne{\normLigne{x_1-x_2}}{x_1, x_2 \in \msc}$.
	Using \eqref{eq:pnpula}, the Cauchy-Schwarz inequality and that
	$ 2\lambda (\alpha\Ltt/\vareps - \mtt) \leq 1$ we have for any
	$x_1, x_2 \in \rset^{\dim}$
	\begin{align}
	\langle b_{\vareps}(x_1) - b_{\vareps}(x_2), x_1 - x_2 \rangle &\leq (-\mtt  + \alpha \Ltt/\vareps )\norm{x_1-x_2}^2 - \norm{x_1-x_2}^2/\lambda + R_{\msc}\norm{x_1-x_2}/\lambda \\
	&\leq - \norm{x_1-x_2}^2/(2\lambda) + R_{\msc}\norm{x_1-x_2}/\lambda \eqsp .
	\end{align}
	Hence, for any $x_1, x_2 \in \rset^{\dim}$ with
	$\normLigne{x_1 - x_2} \geq 4 R_{\msc}$ we obtain that
	$\langle b_{\vareps}(x_1) - b_{\vareps}(x_2), x_1 - x_2 \rangle \leq
	- \normLigne{x_1 - x_2}^2/(4\lambda)$. We also have that for any $x \in \rset^d$
	\begin{equation}
	\langle b_{\vareps}(x), x \rangle \leq -  \norm{x}^2/ (4 \lambda) + \sup_{\tilde{x} \in \rset^{d}} \defEns{(R_{\msc}/\lambda + \norm{b(0)})\norm{\tilde{x}} - \norm{\tilde{x}}^2/(4 \lambda)} \eqsp .
	\end{equation}
	We conclude the proof of \Cref{prop:ergo_C} upon using \Cref{lemma:drift_plus},
	\Cref{lemma:bound_plus}, \cite[Corollary 2]{debortoli2020convergence} with
	$\bgamma \leftarrow (4\lambda)^{-1} (\Ltt_y + \alpha \Ltt/\vareps + 1 /
	\lambda)^{-2} \geq \bdelta$ and the fact that for any probability distribution
	$\nu_1, \nu_2$,
	\begin{equation}
	\label{eq:cauchy_tv}
	\Vnorm{\nu_1 - \nu_2} \leq \tvnormsq{\nu_1 - \nu_2} (\nu_1[V^2] + \nu_2[V^2])^{1/2} \eqsp .
	\end{equation}
	\item 
	Using that $\log(p(y|\cdot))$ is
	$\mtt$-concave with $2 \alpha \Ltt /(\mtt \vareps) \leq 1$, we obtain that for
	any $x_1, x_2 \in \rset^{\dim}$
	\begin{align}
	\langle b_{\vareps}(x_1) - b_{\vareps}(x_2) , x_1 - x_2 \rangle &\leq - \mtt \norm{x_1 - x_2}^2 / 2 \eqsp , \\
	\norm{b_{\vareps}(x_1) - b_{\vareps}(x_2)} &\leq (\Ltt_y + \alpha\Ltt/\vareps) \norm{x_1 - x_2} \eqsp .    
	\end{align}
	This concludes the proof of \Cref{prop:ergo} upon using \cite[Corollary
	2]{debortoli2020convergence} with
	$\bgamma \leftarrow \mtt (\Ltt_y +
	\alpha \Ltt/\vareps)^{-2} \geq \bdelta$ and \eqref{eq:cauchy_tv}.
\end{enumerate}

\subsection{Proof of \Cref{prop:bias_C} and \Cref{prop:bias}}
\label{prop:bias:proof}

Before proving \Cref{prop:bias_C} and
\Cref{prop:bias}, we show the following lemma which is a straightforward
consequence of Girsanov's theorem \cite[Theorem 7.7]{lipster2001statistics}. A
similar version of this lemma can be found in the proof of \cite[Proposition
2]{durmus2017nonasymp}.
\begin{lemma}
	\label{lemma:girsanov}
	Let $T > 0$,
	$b_1, b_2: \ \coint{0, +\infty} \times \rset^{\dim} \to \rset^{\dim}$
	measurable such that for any $i \in \{1, 2\}$ and $x \in \rset^{\dim}$,
	$\rmd \bfX_t^{(i)} = b_i(t, \bfX_t^{(i)}) \rmd t + \sqrt{2} \rmd \bfB_t$
	admits a unique strong solution with $\bfX_0^{(i)} = x$ with Markov semigroup
	$(\Pker_t^{(i)})_{t \geq 0}$ and where $(\bfB_t)_{t \geq 0}$ is a
	$d$-dimensional Brownian motion. In addition, assume that for any
	$x \in \rset^{\dim}$ and
	$\probaLigne{\int_0^T \defEnsLigne{ \normLigne{b_i(t, \bfX_t^{(i)})}^2 +
			\normLigne{b_i(t, \bfB_t)}^2 } \rmd t < + \infty} = 1$.  Let
	$V: \ \rset^{\dim} \to \coint{0,+\infty}$ measurable, then for any
	$x \in \rset^{\dim}$ we have
	\begin{multline}
	\Vnorm{\updelta_x \Pker_T^{(1)} - \updelta_x \Pker_T^{(2)}} \\ \leq  \parenthese{\updelta_x \Pker_t^{(1)}[V^2] + \updelta_x \Pker_t^{(2)}[V^2]}^{1/2} \parenthese{\int_0^T \expe{\normLigne{b_1(t, \bfX_t^{(1)}) - b_2(t, \bfX_t^{(1)})}^2} \rmd t}^{1/2} \eqsp .
	\end{multline}
\end{lemma}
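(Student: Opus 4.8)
The plan is to work on the canonical path space and to split the estimate into a purely information-theoretic inequality and a Girsanov computation. Denote by $\mathbb{P}^{(1)}$ and $\mathbb{P}^{(2)}$ the laws on $\rmc(\ccint{0,T}, \rset^{\dim})$ of the strong solutions started at $x$, so that $\updelta_x \Pker_T^{(i)}$ is the pushforward of $\mathbb{P}^{(i)}$ under the evaluation map $\omega \mapsto \omega_T$. Since $\Vnorm{\cdot}$ is the integral probability metric associated with $\{f : \Vnorm{f} \leq 1\}$, i.e. with the functions satisfying $\abs{f} \leq V$, the first step is to prove the general weighted Pinsker-type inequality $\Vnorm{\mu - \nu} \leq \sqrt{2}\, (\mu[V^2] + \nu[V^2])^{1/2}\, \mathrm{KL}(\mu \| \nu)^{1/2}$ for arbitrary probability measures $\mu, \nu$, and then to control the Kullback--Leibler divergence of the time-$T$ marginals by the path-space divergence.

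For the weighted inequality I would pick a common dominating measure with densities $p, q$ and use the factorisation $p - q = (\sqrt{p} - \sqrt{q})(\sqrt{p} + \sqrt{q})$. For $\abs{f} \leq V$, Cauchy--Schwarz gives $\abs{\mu(f) - \nu(f)} \leq \int V \abs{\sqrt{p} - \sqrt{q}} (\sqrt{p} + \sqrt{q}) \leq (\int V^2 (\sqrt{p} + \sqrt{q})^2)^{1/2} (\int (\sqrt{p} - \sqrt{q})^2)^{1/2}$. Bounding $(\sqrt{p} + \sqrt{q})^2 \leq 2(p+q)$ controls the first factor by $\sqrt{2}\,(\mu[V^2] + \nu[V^2])^{1/2}$, while the Hellinger--KL comparison $\int (\sqrt{p}-\sqrt{q})^2 = 2(1 - \int \sqrt{pq}) \leq \mathrm{KL}(\mu\|\nu)$ — itself a consequence of Jensen's inequality applied to $\int \sqrt{pq} = \mathbb{E}_{\mu}[\exp(-\tfrac12 \log(p/q))] \geq \exp(-\tfrac12 \mathrm{KL}(\mu\|\nu))$ — controls the second. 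Taking the supremum over $\abs{f} \leq V$ then yields the desired inequality, with the symmetric $(\mu[V^2]+\nu[V^2])$ weight matching the statement.

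For the second step, I would invoke Girsanov's theorem \cite[Theorem 7.7]{lipster2001statistics} under the stated a.s. integrability of the drifts to identify $\densityLigne{\mathbb{P}^{(2)}}{\mathbb{P}^{(1)}}$ with the stochastic exponential of the drift difference, namely $\exp(\tfrac{1}{\sqrt{2}} \int_0^T (b_2 - b_1)(t, \omega_t) \cdot \rmd \bfB_t - \tfrac14 \int_0^T \normLigne{(b_2 - b_1)(t, \omega_t)}^2 \rmd t)$. Taking $\mathbb{E}_{\mathbb{P}^{(1)}}$ of the logarithm and using that the Itô integral is a centred martingale (which holds whenever the right-hand side of the lemma is finite, the only case to treat) gives $\mathrm{KL}(\mathbb{P}^{(1)} \| \mathbb{P}^{(2)}) = \tfrac14 \mathbb{E}_{\mathbb{P}^{(1)}}[\int_0^T \normLigne{b_1 - b_2}^2(t, \bfX_t^{(1)}) \rmd t]$. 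The data-processing inequality for KL along $\omega \mapsto \omega_T$ then yields $\mathrm{KL}(\updelta_x \Pker_T^{(1)} \| \updelta_x \Pker_T^{(2)}) \leq \mathrm{KL}(\mathbb{P}^{(1)} \| \mathbb{P}^{(2)})$, and combining the three facts with $\sqrt{2}\cdot \tfrac12 = 1/\sqrt{2} \leq 1$ closes the argument. The main obstacle I anticipate is the rigorous application of Girsanov without Novikov's condition: one must rely on the weaker sufficient conditions of \cite[Theorem 7.7]{lipster2001statistics} to ensure that the exponential is a genuine density (a true martingale) rather than merely a supermartingale, so that the change of measure — and hence the KL identity — is exact; the weighted Pinsker step and the divergence computation are otherwise routine.
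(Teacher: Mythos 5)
Your proposal is correct and follows essentially the same route as the paper: a weighted Pinsker-type inequality (which the paper cites from \cite[Lemma 24]{durmus2017nonasymp} rather than reproving), the transfer/data-processing theorem to pass from path measures to time-$T$ marginals, and Girsanov's theorem under the stated a.s. integrability condition to identify the path-space Kullback--Leibler divergence as $\tfrac14\int_0^T \expeLigne{\normLigne{b_1-b_2}^2(t,\bfX_t^{(1)})}\rmd t$. The only cosmetic difference is that the paper factors the Radon--Nikodym derivative through the Wiener measure as an intermediate reference, whereas you compute $\densityLigne{\mathbb{P}^{(2)}}{\mathbb{P}^{(1)}}$ directly; both give the same KL identity and the same final constant $\sqrt{2}\cdot\tfrac12\leq 1$.
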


\begin{proof}
	Let $T > 0$ and $x \in \rset^{\dim}$. For any $i \in \{1, 2\}$,
	denote $\mu_{(i)}^x$ the distribution of
	$(\bfX_t^{(i)})_{t \in \ccint{0,T}}$ on the Wiener space
	$(\rmc(\ccint{0,T}, \rset), \mcb{\rmc(\ccint{0,T}, \rset)})$ with
	$\bfX_0^{(i)} = x$. Similarly denote $\mu_B^x$ the distribution of
	$(\bfB_t)_{t \in \ccint{0,T}}$ witgh $\bfB_0 = x$. Using the
	generalized Pinsker inequality \cite[Lemma 24]{durmus2017nonasymp}
	and the transfer theorem \cite[Theorem 4.1]{kullback1997information}
	we get that
	\begin{equation}
	\Vnorm{\updelta_x \Pker_T^{(1)} - \updelta_x \Pker_T^{(2)}} \leq \sqrt{2} \parenthese{\updelta_x \Pker_t^{(1)}[V^2] + \updelta_x \Pker_t^{(2)}[V^2]}^{1/2} \KLsqrt(\mu_{(1)}|\mu_{(2)}) \eqsp .
	\end{equation}
	Since for any $i \in \{1, 2\}$ we have
	$\probaLigne{\int_0^T \defEnsLigne{ \normLigne{b_i(\bfX_t^{(i)})}^2 +
			\normLigne{b_i(\bfB_t)}^2 } \rmd t < + \infty} = 1$, we can apply
	Girsanov's theorem \cite[Theorem 7.7]{lipster2001statistics} and
	$\mu_B$-almost surely for any $w \in \rmc(\ccint{0,T}, \rset)$ we get 
	\begin{align}
	(\rmd \mu_{(1)}^x / \rmd \mu_B^x)((w_t)_{t \in \ccint{0,T}}) &= \exp \parentheseDeux{(1/2) \int_0^T \langle b_1(w_t), \rmd w_t \rangle - (1/4) \int_0^T  \norm{b_1(w_t)}^2 \rmd t} \eqsp , \\
	(\rmd \mu_B^x / \rmd \mu_{(2)}^x)((w_t)_{t \in \ccint{0,T}}) &= \exp \parentheseDeux{-(1/2) \int_0^T \langle b_2(w_t), \rmd w_t \rangle + (1/4) \int_0^T  \norm{b_2(w_t)}^2 \rmd t} \eqsp .
	\end{align}
	Hence, we obtain that
	\begin{equation}
	\KL(\mu_{(1)}^x | \mu_{(2)}^x) = \expe{\log((\rmd \mu_{(1)}^x / \rmd \mu_{(2)}^x)(\bfX_t^{(1)}))} = (1/4) \int_0^T \expe{\norm{b_1(\bfX_t^{(1)}) - b_2(\bfX_t^{(2)})}^2} \rmd t \eqsp ,
	\end{equation}
	which concludes the proof.
\end{proof}

In the following lemma, we show that under \Cref{assum:cov},
$\nabla \log(p_{\vareps})$ is Lipschitz continuous.

\begin{lemma}
	\label{lemma:lip_peps}
	Assume \rref{assum:cov}. Then for any $x_1, x_2 \in \rset^d$ we have
	\begin{equation}
	\norm{\nabla \log(p_{\vareps}(x_1)) - \nabla \log(p_{\vareps}(x_2))} \leq (1 + \Ktt_{\vareps}/\vareps) \norm{x_1 - x_2} / \vareps \eqsp .
	\end{equation}
	Reciprocally, if there $x \mapsto \nabla \log(p_{\vareps}(x))$ is
	Lipschitz-continuous then \rref{assum:cov}.
\end{lemma}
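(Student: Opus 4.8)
The plan is to reduce the Lipschitz continuity of $\nabla \log p_{\vareps}$ to an operator-norm bound on the Hessian $\nabla^2 \log p_{\vareps}$, and then to identify this Hessian with the posterior covariance that \rref{assum:cov} controls. By \rref{prop:prop_eps}, $p_{\vareps} \in \rmc^{\infty}(\rset^d)$, so $\log p_{\vareps}$ is twice continuously differentiable and it suffices to bound $\sup_{x \in \rset^d} \opnorm{\nabla^2 \log p_{\vareps}(x)}$; the global Lipschitz estimate then follows from the mean value inequality applied to $\nabla \log p_{\vareps}$ along the segment joining $x_1$ and $x_2$. Tweedie's identity \eqref{eq:Tweedie} gives $\vareps \nabla \log p_{\vareps}(x) = D_{\vareps}^{\star}(x) - x$, so that $\nabla^2 \log p_{\vareps}(x) = \vareps^{-1}(\nabla D_{\vareps}^{\star}(x) - \Id)$, and everything reduces to computing the Jacobian of the posterior mean $D_{\vareps}^{\star}$.

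First I would differentiate $D_{\vareps}^{\star}(x) = \int_{\rset^d} \tilde{x}\, g_{\vareps}(\tilde{x}\mid x)\,\rmd \tilde{x}$ under the integral sign, using the elementary identity $\partial_{x_j} \exp[-\norm{x - \tilde{x}}^2/(2\vareps)] = \vareps^{-1}(\tilde{x}_j - x_j)\exp[-\norm{x-\tilde{x}}^2/(2\vareps)]$ together with the quotient rule on the ratio defining $g_{\vareps}$ in \eqref{eq:cond_prior}. Collecting terms, the contributions linear in $x_j$ cancel and one obtains the second-order Stein/Tweedie identity
\begin{equation}
  \nabla D_{\vareps}^{\star}(x) = \vareps^{-1} \cov(x) \eqsp , \qquad \cov(x) = \int_{\rset^d} (\tilde{x} - D_{\vareps}^{\star}(x))(\tilde{x} - D_{\vareps}^{\star}(x))^{\top} g_{\vareps}(\tilde{x}\mid x) \rmd \tilde{x} \eqsp ,
\end{equation}
where $\cov(x)$ is the covariance matrix of $g_{\vareps}(\cdot\mid x)$. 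The interchange of differentiation and integration is the one genuinely delicate point: it is justified by dominated convergence once the first and second conditional moments are finite locally uniformly in $x$, which holds because \rref{assum:cov} bounds $\trace(\cov(x)) \leq \Ktt_{\vareps}$ uniformly and the Gaussian kernel supplies a local uniform dominating function.

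With the identity in hand the estimate is immediate. Since $\cov(x)$ is symmetric positive semidefinite, its operator norm is dominated by its trace, so $\opnorm{\cov(x)} \leq \trace(\cov(x)) \leq \Ktt_{\vareps}$ by \rref{assum:cov}. Hence $\nabla^2 \log p_{\vareps}(x) = \vareps^{-1}(\vareps^{-1}\cov(x) - \Id)$ obeys $\opnorm{\nabla^2 \log p_{\vareps}(x)} \leq \vareps^{-1}(\vareps^{-1}\Ktt_{\vareps} + 1) = (1 + \Ktt_{\vareps}/\vareps)/\vareps$, and integrating $\nabla^2 \log p_{\vareps}$ over $[x_1,x_2]$ yields exactly the claimed bound. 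For the converse, if $x \mapsto \nabla \log p_{\vareps}(x)$ is $L$-Lipschitz then $\opnorm{\nabla^2 \log p_{\vareps}(x)} \leq L$ for every $x$; rearranging the same identity gives $\cov(x) = \vareps \Id + \vareps^2 \nabla^2 \log p_{\vareps}(x)$, whence $\trace(\cov(x)) \leq d\,\vareps(1 + \vareps L)$, i.e. \rref{assum:cov} holds with $\Ktt_{\vareps} = d\,\vareps(1 + \vareps L)$. The main obstacle throughout is the rigorous justification of differentiating twice under the integral sign; once that is settled, the algebraic cancellations producing the covariance identity and the subsequent norm bounds are routine.
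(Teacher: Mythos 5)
Your proof is correct and follows essentially the same route as the paper: both arguments reduce to the identity $\nabla^2 \log p_{\vareps}(x) = -\vareps^{-1}\Id + \vareps^{-2}\cov(x)$ with $\cov(x)$ the covariance of $g_{\vareps}(\cdot\mid x)$, bound the operator norm by $\vareps^{-1} + \vareps^{-2}\Ktt_{\vareps}$ using \rref{assum:cov}, and obtain the converse by reading off the (diagonal of the) same identity. Passing through Tweedie's formula and the Jacobian of $D_{\vareps}^{\star}$ rather than differentiating $\log p_{\vareps}$ twice directly is only a cosmetic reorganization of the same computation.
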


\begin{proof}
	Let $\vareps > 0$. We recall that for any $x \in \rset^d$ we have
	\begin{equation}
	p_{\vareps}(x) = \int_{\rset^d} \exp[-\norm{x - \tilde{x}}^2/(2 \vareps)] p(\tilde{x}) \rmd \tilde{x} \eqsp .
	\end{equation}
	Using the dominated convergence theorem we obtain that $\log(p_{\vareps}) \in \rmc^{\infty}(\rset^d, \rset)$.
	In particular we have for any $x \in \rset^d$
	\begin{align}
	\label{eq:formula_nabla2}
	\nabla^2 \log(p_{\vareps}(x)) &= - \vareps^{-1} \Id  + \vareps^{-2}\int_{\rset^{\dim}} (x - \tilde{x})^{\otimes 2} g_{\vareps}(\tilde{x}| x) \rmd \tilde{x}  - \vareps^{-2}\parenthese{\int_{\rset^{\dim}} (x - \tilde{x}) g_{\vareps}(\tilde{x} |x) \rmd \tilde{x} }^{\otimes 2}  \\
	&= -\vareps^{-1} \Id + \vareps^{-2} \int_{\rset^{\dim}} \parenthese{\tilde{x} - \int_{\rset^d} \tilde{x}' g_{\vareps}(\tilde{x}' | x) \rmd \tilde{x}'}^{\otimes 2} g_{\vareps}(\tilde{x} | x) \rmd \tilde{x}
	\end{align}
	Therefore, using \Cref{assum:cov} we obtain that for any $x \in \rset^d$ we have
	\begin{equation}
	\normLigne{\nabla^2 \log(p_{\vareps}(x))}_2 \leq \vareps^{-1} + \vareps^{-2} \Ktt_{\vareps} \eqsp ,
	\end{equation}
	which concludes the first part of the proof.  Reciprocally, since
	$x \mapsto \nabla \log(p_{\vareps}(x))$ is Lipschitz-continuous with constant
	$\Ktt \geq 0$ we get that for any basis vector
	$(\rme_i)_{i \in \{1, \dots, d\}}$ we have that
	$\rme_i^\top \nabla^2 \log(p_{\vareps}(x)) \rme_i \leq \Ktt$. Combining this
	result with \eqref{eq:formula_nabla2}, we get that
	\begin{equation}
	\vareps^{-2} \int_{\rset^{\dim}} \norm{\tilde{x} - \int_{\rset^d} \tilde{x}' g_{\vareps}(\tilde{x}' | x) \rmd \tilde{x}'}^{2} g_{\vareps}(\tilde{x} | x) \rmd \tilde{x} \leq \Ktt d  + \vareps^{-1} d \eqsp ,
	\end{equation}
	which concludes the proof.
\end{proof}

In what follows we prove \Cref{prop:bias_C}. The proof
of \Cref{prop:bias} is similar and left to the reader.

\begin{proof}[Proof of \Cref{prop:bias_C}]
	Let $\lambda >0$ such that
	$2\lambda (\Ltt_y + \alpha \Ltt
	/\vareps - \mtt) \leq 1$ and
	$\bdelta = (1/3)(\Ltt_y + \alpha \Ltt
	/\vareps + 1 / \lambda)^{-1}$. We divide the proof into two parts. First, we show that for any
	$\msc$ convex compact with $0 \in \msc$ there exists $B_{1, \msc} \geq 0$ such
	that for any $\delta \in \ocintLigne{0, \bdelta}$ and $R > 0$
	\begin{equation}
	\Vnorm{\pi_{\vareps, \delta}- \tilde{\pi}_{\vareps}} \leq B_{1, \msc} (\delta^{1/2} + \Mtt_R + \exp[-R])  \eqsp ,    
	\end{equation}
	with $\tilde{\pi}_{\vareps}$ given by
	\begin{equation}
	(\rmd \tilde{\pi}_{\vareps} / \rmd \Leb)(x) \propto \exp\parentheseDeuxLigne{- d^2(x, \msc)/(2 \lambda)}p(y|x)p_{\vareps}^{\alpha}(x) \eqsp ,
	\end{equation}
	Second, we show that there exists $B_0 \geq 0$ such that for any $\msc$ convex
	compact with $0 \in \msc$
	\begin{equation}
	\Vnorm{\pi_{\vareps}- \tilde{\pi}_{\vareps}} \leq B_0 \diameter^{-1/4}(\msc) \eqsp ,
	\end{equation}
	which concludes the proof upon using the triangle inequality.
	
	\begin{enumerate}[wide, labelwidth=!, labelindent=0pt, label=(\alph*)]    
		\item Let $\msc$
		convex compact with $0 \in \msc$. %
		We introduce
		$(\bbfX_t)_{t \geq 0}$ solution of the following Stochastic
		Differential Equation (SDE): $\bbfX_0 = X_0$ and
		\begin{align}
		\label{eq:pnpula_sde}
		\rmd \bbfX_t &= \bar{b}_{\vareps}(\bbfX_t) \rmd t + \sqrt{2}\rmd \bfB_t \eqsp , \\ 
		\bar{b}_{\vareps}(x) &= \nabla \log(p(y|x)) + \alpha \nabla \log(p_{\vareps}(x)) + \prox_{\lambda}(\iota_{\msc})(x) \eqsp ,
		\end{align}
		with $(\bfB_t)_{t \geq 0}$ a $d$-dimensional Brownian motion. 
		$\bar{b}_{\vareps}$ is Lipschitz continuous using \Cref{lemma:lip_peps}, hence 
		this SDE admits a unique strong solution for any initial condition $\bfX_0$
		with $\expeLigne{\normLigne{\bfX_0}^2} < + \infty$, see \cite[Chapter 5,
		Theorem 2.9]{karatzas1991brownian}. We denote by
		$(\Pker_{t, \vareps})_{t \geq 0}$ the semigroup associated with the
		strong solutions of \eqref{eq:pnpula_sde}. %
		Similarly to the proof of \Cref{prop:ergo}, replacing
		\cite[Corollary 2]{debortoli2020convergence} by \cite[Corollary
		22]{debortoli2020convergence}, there exist $\tilde{A}_{\msc} \geq 0$
		and $\tilde{\rho}_{\msc} \in \coint{0,1}$ such that that for any
		$x_1, x_2 \in \rset^{\dim}$ and $t \geq 0$
		\begin{align}
		\label{eq:ergo_cont}
		\Vnorm{\updelta_{x_1} \Pker_{t, \vareps}-  \updelta_{x_2} \Pker_{t, \vareps}} &\leq \tilde{A}_{\msc} \tilde{\rho}_{\msc}^t (V^2(x_1) + V^2(x_2))  \eqsp , \\
		\wassersteinD[1](\updelta_{x_1} \Pker_{t, \vareps}, \updelta_{x_2} \Pker_{t, \vareps}) &\leq \tilde{A}_{\msc} \tilde{\rho}_{\msc}^t\norm{x_1-x_2} \eqsp .
		\end{align}
		Combining \eqref{eq:ergo_cont}, \Cref{prop:ergo}, the fact that
		$(\Pens_1(\rset^{\dim}), \wassersteinD[1])$ is a complete metric
		space and the Picard fixed point theorem we obtain that for any
		$\delta \in \ocintLigne{0, \bdelta}$ there exist
		$\pi_{\vareps, \delta}, \tilde{\pi}_{\vareps} \in \Pens_1(\rset^d)$ such that
		$\pi_{\vareps, \delta} \Rker_{\vareps, \delta, \msc} = \pi_{\vareps, \delta}$ and
		for any $t \geq 0$, $\tilde{\pi}_{\vareps} \Pker_{t, \vareps} = \tilde{\pi}_{\vareps}$. Note
		that by \cite[Theorem 2.1]{roberts1996exponential} we have for any
		$x \in \rset^{\dim}$
		\begin{equation}
		(\rmd \tilde{\pi}_{\vareps} / \rmd \Leb)(x) \propto \exp\parentheseDeuxLigne{- d^2(x, \msc)/(2 \lambda)}p(y|x)p_{\vareps}^{\alpha}(x) \eqsp ,
		\end{equation}
		since $\prox_{\lambda}(\iota_{\msc})= \nabla d^2(\cdot, \msc)/(2 \lambda)$. 
		Let $f: \ \rset^{\dim} \to \rset$ measurable and such that for any
		$x \in \rset^{\dim}$, $\abs{f(x)} \leq V(x)$. Let $m \in \nsets$ such
		that $m \geq \bdelta^{-1}$, $x \in \rset^{\dim}$ and $k \in \nset$
		we have
		\begin{equation}
		\label{eq:decompo}
		\norm{\updelta_x \Rker_{\vareps, 1/m}^{k m}[f] - \updelta_x \Pker_{km, \vareps}^{k m}[f]} = \norm{\sum_{j=0}^{k-1} \updelta_x \Rker_{\vareps, 1/m}^{j m} (\Rker_{\vareps, 1/m}^{m} - \Pker_{1, \vareps}) \Pker_{k-j-1, \vareps}[f] }
		\end{equation}
		Using \eqref{eq:ergo_cont}, \Cref{lemma:drift_plus} and
		\Cref{lemma:bound_plus} there exists $B_a \geq 0$ such that for any
		$x \in \rset^{\dim}$ and $k \in \nset$ we have
		\begin{equation}
		\label{eq:inter}
		\norm{ \updelta_{x} \Pker_{k, \vareps, \msc}[f] - \tilde{\pi}_{\vareps}[f]}\leq B_a \tilde{\rho}_{\msc}^k V^2(x) \eqsp .
		\end{equation}  
		Let $T = 1$,
		$b_{1}(t, (w_t)_{t \in \ccint{0,T}}) = \sum_{j=0}^{m-1}
		\1_{\coint{j/m, (j+1)/m}}(t) b_{\vareps}(w_{j \delta})$ and
		$b_{2}(t, (w_t)_{t \in \ccint{0,T}} = \bar{b}_{\vareps}(w_t)$.  Let
		$\bfX_t^{(1)}$ and $\bfX_t^{(2)}$ the unique strong solution of
		$\rmd \bfX_t = b(t, (\bfX_t)_{t \in \ccint{0,1}}) + \sqrt{2} \bfB_t$ with $\bfX_0 =x$ with
		$x \in \rset^{\dim}$ and $b = b_1$, respectively $b= b_2$.
		Note that $(\bfX_t^{(2)})_{t \geq 0} = (\bar{\bfX}_t)_{t \geq 0}$
		and $(\bfX_{k/m}^{(1)}) = (X_k)_{k \in \nset}$. For any
		$i \in \{1, 2\}$, denote $\Pker_t^{(i)}$ the Markov semigroup
		associated with $\bfX_t^{(i)}$. For any $x \in \rset^{\dim}$ we have
		\begin{equation}
		\label{eq:ineq_duo}
		\tvnorm{\updelta_x \Rker_{\vareps, 1/m, \msc}^m - \updelta_x \Pker_{1, \vareps, \msc}} = \tvnorm{\updelta_x \Pker_{1}^{(1)} - \updelta_x \Pker_{1}^{(2)}} \eqsp .
		\end{equation}
		Using \rref{assum:neural_net}($R$) and the fact that for any $a, b\geq 0$,
		$(a+b)^2 \leq 2(a^2 + b^2)$, we have for any 
		$t \in \coint{j/m, (j+1)/m}$, $j \in \{0, \dots, m-1\}$ and
		$(w_t)_{t \in \ccint{0,1}} \in \rmc(\ccint{0,1}, \rset^d)$
		\begin{align}
		&\norm{b_1(t, (w_t)_{t \in \ccint{0,1}}) - b_2(t, (w_t)_{t \in \ccint{0,1}}) }^2 = \norm{b_{\vareps}(w_{j/m}) - \bar{b}_{\vareps}(w_t)}^2 \\
		& \qquad \qquad \leq 2 \norm{b_{\vareps}(w_{j/m}) - b_{\vareps}(w_{t})}^2 + 2 \norm{\bar{b}_{\vareps}
			(w_{t}) - b_{\vareps}(w_{t})}^2 \\
		& \qquad \qquad \leq 2 \Ltt_{b}^2 \norm{w_{j/m} - w_t}^2 + 4 \alpha^2 \Mtt_R^2 / \vareps^2 + 4 \alpha^2 \1_{\cball{0}{R}^{\complementary}}(\norm{w_t}) / \vareps^2  \eqsp ,
		\label{eq:ineq_uno}
		\end{align}
		where $\Ltt_{b}$ is the Lipschitz constant associated with $b_{\vareps}$.
		In addition using Itô's isometry we have for any $t \in \coint{j/m, (j+1)/m}$
		\begin{equation}
		\label{eq:ineq_trio}
		\textstyle{\expeLigne{\normLigne{\bfX_t^{(1)} - \bfX_{j/m}^{(1)}}^2} = 2\expeLigne{\normLigne{\int_{j/m}^t \rmd \bfB_t}^2} \leq 2 \dim \delta \eqsp .}
		\end{equation}
		Finally, using \Cref{lemma:drift_plus}, \Cref{lemma:bound_plus}, the
		logarithmic Sobolev inequality \cite[Theorem 5.5]{boucheron2013concentration},
		the Cauchy-Schwarz inequality and the Markov inequality, there exists $\tilde{B}_b \geq 0$
		such that for any $t \geq 0$ and $x \in \rset^d$
		\begin{align}
		\probaLigne{\normLigne{\bfX_t^{(1)}} \geq R} &\leq \exp[-2 R] \expe{\exp[2 \normLigne{\bfX_t^{(1)}}} \\
		& \textstyle{\leq \exp[-2 R] \expesq{\exp\parentheseDeux{4 \sqrt{2}\normLigne{\int_{\ell_t/m}^t \rmd \bfB_t }}} \expesq{\exp[4 \normLigne{X_{\ell_t}}}} \\
		&\leq \tilde{B}_b \exp[-2R] \exp[2 \Phi(x)] \eqsp ,
		\end{align}
		where $\ell_t = \floor{tm}$ and $\Phi(x) = \sqrt{1 +
			\normLigne{x}^2}$. Combining this result, \eqref{eq:ineq_uno},
		\eqref{eq:ineq_duo}, \eqref{eq:ineq_trio} and \Cref{lemma:girsanov}, we obtain
		that there exists $B_b \geq 0$ such that for any $x \in \rset^{\dim}$ and
		$R > 0$
		\begin{align}
		\Vnorm{\updelta_x \Rker_{1/m, \msc}^m - \updelta_x \Pker_{1, \msc}} &\leq 2 B_b (\sqrt{\delta} + \Mtt_R+ \exp[-R])(1 + \norm{x}^{4})\exp[\Phi(x)] \\
		& \leq 48 B_b (\sqrt{\delta} + \Mtt_R+ \exp[-R])\exp[2 \Phi(x)] \eqsp , 
		\end{align}
		Combining this result and \eqref{eq:inter} we obtain that for any
		$k \in \nset$, $j \in \{0, \dots, k-1\}$, $x \in \rset^d$ and $R > 0$ we have
		\begin{equation}
		\abs{(\updelta_x \Rker_{1/m, \msc}^{m} - \updelta_x \Pker_{1, \msc}) \Pker_{k-j-1, \msc}[f]} \leq B_aB_b (\sqrt{\delta} + \Mtt_R + \exp[-R]) \tilde{\rho}_{\msc}^{k-j-1} \exp[2\Phi(x)]  \eqsp .
		\end{equation}
		Using this result, \Cref{lemma:drift_plus}, \Cref{lemma:bound_plus} and
		\eqref{eq:decompo} we obtain that there exists $B_c \geq 0$ such that for any
		$m \in \nsets$ with $m^{-1} \geq \bdelta$
		\begin{equation}
		\Vnorm{\pi_{\vareps, 1/m, \msc} - \tilde{\pi}_{\vareps}} \leq \limsup_{k \to +\infty} \Vnorm{\updelta_0 \Rker_{\vareps, 1/m, \msc}^{k m} - \updelta_0 \Pker_{km, \vareps, \msc}^{k m}}  \leq B_c (\sqrt{\delta} + \Mtt_R + \exp[-R]) \eqsp .
		\end{equation}
		The proof in the general case where
		$\delta \in \ocintLigne{0, \bdelta}$ is similar and  we obtain that there exists $B_c \geq 0$ such that for
		any $\delta \in \ocintLigne{0, \bdelta}$
		\begin{equation}
		\Vnorm{\pi_{\vareps, \delta} - \tilde{\pi}_{\vareps}} \leq B_c (\sqrt{\delta} + \Mtt_R + \exp[-R]) \eqsp .
		\end{equation}
		\item For any
		$\msc$ compact convex with $0 \in \msc$ we define $\tilde{\pi}_{\vareps}$ and
		$\rho_{\vareps, \msc}$ such that for any $x \in \rset^{\dim}$
		\begin{equation}
		\rho_{\vareps, \msc}(x) =  \exp\parentheseDeuxLigne{ - d^2(x, \msc)/(2 \lambda)}p(y|x)p_{\vareps}^{\alpha}(x) \eqsp , \qquad (\rmd \tilde{\pi}_{\vareps} / \rmd \Leb)(x) = \left. \rho_{\vareps, \msc}(x) \middle/ \int_{\rset^{\dim}} \rho_{\vareps, \msc}(\tilde{x}) \rmd \tilde{x} \right. \eqsp .
		\end{equation}
		Similarly, define $\rho_{\vareps}$ and $\pi_{\vareps}$ such that for any
		$x \in \rset^{\dim}$
		\begin{equation}
		\rho_{\vareps}(x) = p(y|x) p_{\vareps}^{\alpha}(x) \eqsp , \qquad
		(\rmd \pi_{\vareps} / \rmd \Leb)(x) = \left. \rho_{\vareps}(x) \middle/ \int_{\rset^{\dim}} \rho_{\vareps}(\tilde{x}) \rmd \tilde{x} \right.   \eqsp .
		\end{equation}
		Since for any $x \in \rset^{\dim}$,
		$\rho_{\vareps, \msc}(x) \leq \rho_{\vareps}(x)$ we get
		$\int_{\rset^{\dim}} \rho_{\vareps, \msc}(\tilde{x}) \rmd \tilde{x} \leq
		\int_{\rset^{\dim}} \rho_{\vareps}(\tilde{x}) \rmd \tilde{x}$. Hence we obtain
		using the Cauchy-Schwarz inequality and the Markov inequality
		\begin{align}
		\KL(\pi_{\vareps} | \pi_{\msc}) &\leq \int_{\rset^{\dim}} \log(\rho_{\vareps}(\tilde{x}) / \rho_{\vareps, \msc}(\tilde{x})) \rmd \pi_{\vareps}(\tilde{x}) \\ &  \leq \int_{\msc^{\complementary}} \norm{\tilde{x}}^2 \rmd \pi_{\vareps}(\tilde{x}) \leq \probasq{X \notin \msc} \expeLignesq{\normLigne{X}^4} \leq \expeLigne{\normLigne{X}^4} R_{\msc}^{-2} \eqsp .
		\end{align}
		with $X$ a random variable with distribution $\pi_{\vareps}$. We
		conclude using the generalized Pinsker inequality \cite[Lemma
		24]{durmus2017nonasymp}.
	\end{enumerate}
\end{proof}

\section{Proofs of \Cref{sec:proj-altern}}
\label{sec:proofs-citecr-altern}

\subsection{Proof of \Cref{prop:ergo_C_proj}}
\label{prop:ergo_C_proj:proof}

Let $\alpha, \lambda, \vareps, \bdelta >0$, $\delta \in \ocintLigne{0, \bdelta}$
and $\msc \subset \rset^d$ convex and compact with $0 \in \msc$. For any
$x_1, x_2 \in \rset^d$ we have
\begin{equation}
  \norm{b_{\vareps}(x_1) - b_{\vareps}(x_2)} \leq (\Ltt_y + \alpha \Ltt / \vareps) \norm{x_1 - x_2} \eqsp .
\end{equation}
Denote $(X_n, Y_n)_{n \in \nset}$ the Markov chain obtained using the coupling
described in \cite[Section 3]{debortoli2020convergence} with initial condition
$(x_1, x_2) \in \msc$. Using \cite[Corollary 7-(b)]{debortoli2020convergence} we get
that for any $\ell \in \nset$
\begin{equation}
  \label{eq:contraction}
  \expe{\1_{\diag^\complementary}(X_{(\ell+1) \itgamma}, Y_{(\ell+1) \itgamma})} \leq (1 - \beta) \expe{\1_{\diag^\complementary}(X_{\ell \itgamma}, Y_{\ell \itgamma})} \eqsp ,
\end{equation}
where $\diag = \ensembleLigne{(x,x)}{x \in \rset^d}$ and $\beta \in \ooint{0,1}$ with
\begin{equation}
  \beta = 2\Phibf\defEnsLigne{-(1 + \bar{\delta})(1 + \Ltt_y + (\alpha \Ltt / \vareps))\diam(\msc)} \eqsp ,
\end{equation}
where $\Phibf$ is the cumulative distribution function of the univariate
Gaussian distribution with zero mean and unit variance. In addition, using that
the coupling is absorbing, we have that for any $k \in \nset$,
\begin{equation}
  \expe{\1_{\diag^\complementary}(X_{k}, Y_{k})} \leq \expe{\1_{\diag^\complementary}(X_{\floor{k/ \itgamma} \itgamma}, Y_{\floor{k/ \itgamma} \itgamma})} \eqsp ,
\end{equation}
Combining this result and \eqref{eq:contraction}, we get that for any $k \in \nset$
\begin{equation}
  \tvnorm{\updelta_{x_1} \Qker_{\vareps, \delta}^k - \updelta_{x_2} \Qker_{\vareps, \delta}^k} \leq 
  \expe{\1_{\diag^\complementary}(X_{k}, Y_{k})} \leq (1 - \beta)^{\floor{k/ \itgamma}} \eqsp .
\end{equation}
Using that $\floor{k/ \itgamma} \geq k\delta / (1 + \delta) -1$ concludes the
proof upon letting $\tilde{\rho}_{\msc} = (1 - \beta)^{1/(1+ \bdelta)}$ and
$\tilde{A}_{\msc} = (1 - \beta)^{-1}$.

\subsection{Proof of \Cref{prop:disc_invariant_K}}
\label{prop:disc_invariant_K:proof}

Let $\alpha, \lambda >0$, $\vareps \in \ocint{0, \vareps_0}$ such that
$2\lambda (\Ltt_y + \alpha \Ltt /\vareps - \min(\mtt, 0)) \leq 1$ and
$\bdelta_1 = (1/3)(\Ltt_y + \alpha \Ltt /\vareps + 1 / \lambda)^{-1}$. Recall that
for any $x_1, x_2 \in \rset^d$
\begin{equation}
  \norm{b_{\vareps}(x_1) - b_{\vareps}(x_2)} \leq (\Ltt_y + \alpha \Ltt / \vareps + 1/\lambda) \norm{x_1 - x_2} \eqsp . 
\end{equation}
Using this result, the fact that for any $x \in \rset^d$,
$\langle b_\vareps(x), x \rangle \leq -\tilde{\mtt} \norm{x}^2 + c$ and
\cite[Theorem 19.4.1]{douc:moulines:priouret:soulier:2018} there exist
$\bdelta_2 > 0$, $\tilde{B} \geq 0$ and $\tilde{\rho} \in \ocint{0,1}$ such that
for any $\delta \in \ocintLigne{0, \bdelta_2}$, $x \in \rset^d$ and
$k \in \nset$
\begin{equation}
  \Vnorm{\updelta_{x} \Rker_{\vareps, \delta}^k - \pi_{\vareps, \delta}} + \Vnorm{\updelta_{x} \Qker_{\vareps, \delta}^k - \pi_{\vareps, \delta}^{\msc}} \leq \tilde{B} \tilde{\rho}^{k \delta} V(x) \eqsp ,
\end{equation}
with $\tilde{B}$ and $\tilde{\rho}$ which do not depend on $R$.
In addition, using \Cref{lemma:drift_plus}, for any $k \in \nset$ and $\delta \in \ocintLigne{0, \bdelta_2}$ we have
\begin{equation}
  \Rker_{\vareps, \delta}^k V(x) \leq \tilde{\lambda}^{k \delta} V(x) + \tilde{c} \delta \eqsp ,
\end{equation}
with $\tilde{\lambda} \in \coint{0,1}$ and $\tilde{c} > 0$ which do not depend
on $R \geq 0$. For any $\delta \in \ocintLigne{0, \bdelta_2}$ we have
\begin{equation}
  \lambda^\delta + c\delta \leq \lambda^\delta (1 + c\delta \lambda^{-\bdelta_2}) \leq (\lambda \exp[c \lambda^{-\bdelta_2}])^{\delta} \eqsp .
\end{equation}
Let $A =\lambda \exp[c \lambda^{-\bdelta_2}]$, we have that for any
$x \in \rset^d$, $\Rker_{\vareps, \delta} V(x) \leq A^\delta V(x)$. Therefore we get that 
$(V(X_n) A^{-n})_{n \in \nset}$ is a supermartingale. Hence using Doob maximal
inequality and Markov inequality we get that
\begin{equation}
  \proba{\sup_{k \in \{0, \dots, n\}} \norm{X_k} \geq R} \leq V(x) A^{n\delta } \exp[-R] \eqsp . 
\end{equation}
Therefore, we get that for any $k  \in \nset$
\begin{equation}
\tvnorm{\pi_{\vareps, \delta} - \pi_{\vareps, \delta}^\msc} \leq (V(0) + \tilde{c} \bdelta_2) A^{k\delta}\exp[-R] + \tilde{B} \tilde{\rho}^{k \delta} V(0) \eqsp .
\end{equation}
We conclude upon letting $k = \floor{r / (2 \log(A) \delta)}$.

\section{Proofs of \Cref{sec:posterior_approx}}
\label{sec:proofs-citecr-1}

\subsection{Proof of \Cref{lemma:H1_check}}
\label{lemma:H1_check:proof}

The first part of the proposition is straightforward. Using Pinsker's inequality
\cite[Theorem 4.19]{boucheron2013concentration} we have for any 
$x \in \rset^{\dim}$
\begin{equation}
  \textstyle{\tvnorm{\mu - (\tau_{x})_{\#} \mu}^2 \leq 2 \KL((\tau_x)_{\#}\mu|\mu) \leq 2 \int_{\rset^{\dim}} \norm{U(\tilde{x} + x) - U(\tilde{x})} \rmd \mu(\tilde{x}) \leq 2 C_{\upgamma} \norm{x}^{\upgamma}  \eqsp .}
\end{equation}

For the second part of the proof, since there exist $c_1, \varpi > 0$
and $c_2 \in \rset$ such that for any $x \in \rset^{\dim}$,
$U(x) \geq c_1 \norm{x}^{\varpi} + c_2$ then for any $k \in \nsets$
and $\alpha >0$,
$\int_{\rset^{\dim}} (1 + \norm{x})^k p(x) < +\infty$.  Let
$q(x) = (1 + \norm{x})^{-(d+1)} / \int_{\rset^{\dim}} (1 +
\norm{\tilde{x}})^{-(d+1)} \rmd \tilde{x}$. Then using that for any
$t \geq 0$, $\abs{\rme^{t} - 1} \leq \abs{t} \rme^{\abs{t}}$ we get
that for any $x \in \rset^{\dim}$
\begin{multline}
  \int_{\rset^{\dim}} \abs{p(\tilde{x}) - p(x-\tilde{x})} q^{1 - 1/\alpha}(\tilde{x}) \rmd \tilde{x} \\ \leq C_{\upgamma} \norm{x}^{\upgamma} \exp[C_{\upgamma} \norm{x}^{\upgamma}] \int_{\rset^{\dim}} (1+\norm{\tilde{x}})^{(d+1)(1/\alpha-1)}p(\tilde{x}) \rmd \tilde{x} \parenthese{\int_{\rset^{\dim}}  (1 +
\norm{\tilde{x}})^{-(d+1)} \rmd \tilde{x}}^{1 - 1/\alpha} \eqsp ,
\end{multline}
which concludes the proof.

\subsection{Proof of \Cref{prop:posterior_eps}}
\label{prop:posterior_eps:proof}

First we show the following technical lemma.

\begin{lemma}
  \label{lemma:tech}
  For any $x,y \geq 0$ and $\beta > 0$,
  $(x+y)^{\beta} - x^{\beta} \leq 2^{\beta}(y^{\beta} +
  x^{(\beta-1)\wedge 0}y)$.
\end{lemma}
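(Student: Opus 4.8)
The plan is to prove the bound by splitting on whether $\beta \le 1$ or $\beta \ge 1$, since the exponent $(\beta-1)\wedge 0$ behaves qualitatively differently in the two regimes: it equals $\beta - 1$ when $\beta \le 1$ and $0$ when $\beta \ge 1$. First I would dispose of the degenerate cases. If $y = 0$ both sides vanish. If $\beta < 1$ and $x = 0$, then $(\beta - 1)\wedge 0 = \beta - 1 < 0$ and $x^{\beta-1}y = +\infty$, so the right-hand side is $+\infty$ and the inequality is immediate. Hence it suffices to treat $x, y > 0$.

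The core of the argument is the regime $\beta \in (0,1]$, in which the $\wedge$ exponent is active, i.e.\ $(\beta-1)\wedge 0 = \beta - 1$. Here $t \mapsto t^{\beta}$ is concave, so it lies below its tangent line at $x$: for every $h \ge 0$ one has $(x+h)^{\beta} \le x^{\beta} + \beta x^{\beta - 1} h$. Taking $h = y$ gives $(x+y)^{\beta} - x^{\beta} \le \beta x^{\beta-1} y$. Since $\beta \le 1 \le 2^{\beta}$ and $\beta - 1 = (\beta-1)\wedge 0$ in this regime, this is exactly
\[
(x+y)^{\beta} - x^{\beta} \le \beta\, x^{(\beta-1)\wedge 0} y \le 2^{\beta}\bigl(y^{\beta} + x^{(\beta-1)\wedge 0} y\bigr),
\]
the last step because $y^{\beta} \ge 0$. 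It is worth noting that on this range the inequality is homogeneous of degree $\beta$ in $(x,y)$, so one could equivalently normalise $y = 1$ and reduce to a one-variable estimate; the tangent-line bound is the cleaner route and is precisely the form that feeds into the companion exponent $\min(1 - 1/\alpha, 0)$ of \Cref{assum:posterior}.

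For $\beta \ge 1$ the exponent collapses to $(\beta-1)\wedge 0 = 0$, so the claim reads $(x+y)^{\beta} - x^{\beta} \le 2^{\beta}(y^{\beta} + y)$, and this is where I expect the real obstacle. The function is now convex, so the tangent-line bound reverses and is unavailable; the natural replacement is the monotone-derivative increment bound $(x+y)^{\beta} - x^{\beta} \le \beta (x+y)^{\beta-1} y \le \beta\, 2^{\beta-1}\bigl(x^{\beta-1} y + y^{\beta}\bigr)$, which produces the term $x^{\beta-1} y$ rather than the stated $y$. These two are not comparable for $x$ large, so closing this case as literally worded cannot be done without an extra hypothesis tying $x$ to $y$; the homogeneous form with exponent $\beta - 1$ is the one that actually holds. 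The decisive step in my plan is therefore to check the range of $\beta$ used in the application (\Cref{prop:posterior_eps}): the $\min(\alpha,1)$ appearing there together with the $\wedge 0$ in \Cref{assum:posterior} strongly suggests that only exponents $\beta \le 1$ are ever invoked, in which case the concavity argument of the previous paragraph supplies the full conclusion. I would finish by recording the boundary value $\beta = 1$, where the tangent bound is an equality and $(\beta-1)\wedge 0 = 0$, so that the statement holds for all relevant $x, y \ge 0$.
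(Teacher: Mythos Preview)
You have correctly diagnosed a genuine problem: the lemma as printed is false for $\beta > 1$. A concrete counterexample is $\beta = 2$, $x = 10$, $y = 1$, for which the left-hand side is $21$ while $2^{\beta}(y^{\beta} + x^{0}y) = 8$. The exponent $(\beta-1)\wedge 0$ is a typo for $(\beta-1)\vee 0$; with $\vee$ the right-hand side for $\beta > 1$ becomes $2^{\beta}(y^{\beta} + x^{\beta-1}y)$, which is exactly the ``homogeneous form'' you already identified as the one that actually holds. The paper's own proof confirms this: for $\beta > 1$ it splits on $y \geq x$ (where $(x+y)^{\beta} \leq (2y)^{\beta}$) and $y \leq x$ (where convexity of $t \mapsto (1+t)^{\beta}-1$ on $[0,1]$ gives $f(t) \leq f(1)t$, hence $(x+y)^{\beta}-x^{\beta} = x^{\beta}f(y/x) \leq 2^{\beta}x^{\beta-1}y$). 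So your instinct about the convex regime was right; the discrepancy is a misprint, not a missing idea.

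Your escape route, however, does not work. You conjecture that only $\beta \leq 1$ is ever invoked downstream, but in the proof of \Cref{prop:posterior_eps} the lemma is applied with $\beta = \alpha$ to bound $|p^{\alpha} - p_{\vareps}^{\alpha}|$, and $\alpha > 0$ is arbitrary. The case $\alpha > 1$ is genuinely needed, and the factor $\|p\|_{\infty}^{(\alpha-1)\wedge 0}$ appearing there carries the same typo. With the corrected $\vee$ version, that factor becomes $\|p\|_{\infty}^{(\alpha-1)\vee 0}$, which is exactly what one gets by bounding $\min(p,p_{\vareps})^{\alpha-1} \leq \|p\|_{\infty}^{\alpha-1}$ when $\alpha \geq 1$.

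For $\beta \leq 1$ your tangent-line argument is fine but slightly heavier than necessary: the paper simply uses subadditivity $(x+y)^{\beta} \leq x^{\beta} + y^{\beta}$ to obtain $(x+y)^{\beta} - x^{\beta} \leq y^{\beta}$, which already sits below the right-hand side without needing the second term at all.
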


\begin{proof}
  The result is straightforward if $\beta \in \ocint{0,1}$, since in
  this case $(x+y)^{\beta} \leq x^{\beta} + y^{\beta}$. Assume that
  $\beta > 1$. If $x = 0$ the result holds. Now assume that $x >
  0$. If $y \geq x$ then
  $(x+y)^{\beta} - x^{\beta} \leq 2^{\beta} y^{\beta}$.  Assume that
  $y \leq x$. Since $\rmf: \ t \mapsto (1+t)^{\beta} - 1$ is convex we
  obtain that for any $t \in \ccint{0,1}$, $\rmf(t) \leq
  2^{\beta}t$. Using this result we have
  \begin{equation}
    (x+y)^{\beta} - x^{\beta} \leq x^{\beta}f(y/x) \leq 2^{\beta} x^{\beta-1} y \eqsp ,
  \end{equation}
  which concludes the proof.
\end{proof}

Before proving \Cref{prop:posterior_eps} we state the following lemma.
\begin{lemma}
  \label{lemma:tv_norm}
  Let $\pi_1, \pi_2$ two probability measures and
  $q_1, q_2: \ \rset^{\dim} \to \coint{0, +\infty}$ two measurables
  functions such that for any $x \in \rset^{\dim}$,
  $(\rmd \pi_i / \rmd \Leb)(x) = q_i(x) / c_i$ with
  $c_i = \int_{\rset^{\dim}} q_i(\tilde{x}) \rmd \tilde{x}$.  Denote
  $\mathrm{D} = \int_{\rset^{\dim}} |q_1(x) - q_2(x)|$. We have
  \begin{equation}
    \tvnorm{\pi_1 - \pi_2} \leq 2 c_1^{-1} \mathrm{D}   \eqsp .
  \end{equation}
\end{lemma}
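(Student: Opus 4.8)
The plan is to reduce the total variation distance to an $L^1$ distance between the two densities and then split that difference into a ``density'' term and a ``normalising constant'' term, each of which is controlled by $\mathrm{D}$. Since the $V$-total variation with $V = 1$ is just the $L^1$ norm of the difference of densities (by the definition in \Cref{sec:notation}), the starting point is
\[
\tvnorm{\pi_1 - \pi_2} = \int_{\rset^{\dim}} \abs{q_1(x)/c_1 - q_2(x)/c_2} \rmd x \eqsp .
\]

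First I would write the integrand as
\[
\frac{q_1(x)}{c_1} - \frac{q_2(x)}{c_2} = \frac{q_1(x) - q_2(x)}{c_1} + q_2(x)\parenthese{\frac{1}{c_1} - \frac{1}{c_2}} \eqsp ,
\]
and apply the triangle inequality under the integral. This produces two terms: the first is $c_1^{-1}\int_{\rset^{\dim}} \abs{q_1 - q_2} = c_1^{-1}\mathrm{D}$, and the second is $\abs{c_1^{-1} - c_2^{-1}}\int_{\rset^{\dim}} q_2 = \abs{c_1^{-1} - c_2^{-1}}\, c_2$, using $\int q_2 = c_2$.

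Next I would bound the normalising-constant term. Simplifying, $\abs{c_1^{-1} - c_2^{-1}}\, c_2 = \abs{c_2 - c_1}/c_1$, and since $\abs{c_1 - c_2} = \abs{\int_{\rset^{\dim}}(q_1 - q_2)} \leq \int_{\rset^{\dim}}\abs{q_1 - q_2} = \mathrm{D}$, this second term is itself at most $c_1^{-1}\mathrm{D}$. Summing the two contributions yields the claimed bound $\tvnorm{\pi_1 - \pi_2} \leq 2 c_1^{-1}\mathrm{D}$.

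I expect no genuine obstacle here: the argument is a two-line computation. The only point deserving attention is the deliberate asymmetry in factoring out $c_1$ (rather than $c_2$), which is why the bound carries $c_1^{-1}$ and not a symmetric normalisation; this is the form needed downstream, where $q_1$ plays the role of the reference (un-truncated) density whose mass $c_1$ is bounded below, while $q_2$ is the perturbed density.
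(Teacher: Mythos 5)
Your proof is correct and follows essentially the same route as the paper: the paper's one-line bound $\tvnorm{\pi_1-\pi_2}\leq c_1^{-1}(\mathrm{D}+\abs{c_2-c_1})$ is exactly the result of your decomposition $q_1/c_1-q_2/c_2=(q_1-q_2)/c_1+q_2(c_1^{-1}-c_2^{-1})$, followed by the same observation $\abs{c_1-c_2}\leq\mathrm{D}$. You have simply made explicit the intermediate step the paper leaves implicit.
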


\begin{proof}
  We have
  \begin{equation}
    \tvnorm{\pi_1 - \pi_2} = \int_{\rset^{\dim}} \abs{\dfrac{q_1(x)}{c_1} - \dfrac{q_2(x)}{c_2}} \rmd x 
                           \leq c_1^{-1} (\mathrm{D} + \abs{c_2 - c_1}) \eqsp ,
  \end{equation}
  which concludes the proof using that $\abs{c_2 - c_1} \leq \mathrm{D}$.
\end{proof}
We now give the proof  of \Cref{prop:posterior_eps}.

\begin{proof}
  Let $\alpha > 0$. For any $\vareps >0$ and $x \in \rset^{\dim}$ denote
  $\bar{p}(x) = p(y|x) p^{\alpha}(x)$ and
  $\bar{p}_{\varepsilon}(x) = (py|x) p_{\vareps}^{\alpha}(x)$, where we recall that for any $x \in \rset^{\dim}$
  \begin{equation}
    p_{\vareps}(x) = (2 \uppi \vareps)^{-d/2} \int_{\rset^{\dim}} p(\tilde{x}) \exp[-\norm{x - \tilde{x}}^2/(2\vareps)] \rmd \tilde{x} \eqsp .
  \end{equation} For any
  $\vareps >0$ we have
  \begin{equation}
    \label{eq:bound_easy}
    \int_{\rset^{\dim}} \abs{\bar{p}(x) - \bar{p}_{\varepsilon}(x)} \rmd x \leq \norm{p(y|\cdot)}_{\infty} \int_{\rset^{\dim}} \abs{p^{\alpha}(x) - p_{\vareps}^{\alpha}(x)} \rmd x \eqsp .
  \end{equation}
  Using \Cref{lemma:tech} and that
  $\| p_{\vareps}\|_{\infty} \leq \| p \|_{\infty} < + \infty$, we have for any
  $\vareps >0$ and $x \in \rset^{\dim}$
  \begin{align}
    \label{eq:ineq_uno_alpha}
    \int_{\rset^{\dim}} \abs{\bar{p}(x) - \bar{p}_{\varepsilon}(x)} \rmd x & \leq 2^{\alpha} \norm{p(y|\cdot)}_{\infty} (1 + \| p \|_{\infty}^{(\alpha - 1) \wedge 0}) \\
    & \qquad \qquad \times \defEns{\int_{\rset^{\dim}} \abs{p(x) - p_{\vareps}(x)} \rmd x + \int_{\rset^{\dim}} \abs{p(x) - p_{\vareps}(x)}^{\alpha}\rmd x   } \eqsp .
  \end{align}
  Using Jensen's inequality, for any
  $q: \ \rset^{\dim} \to \ooint{0, +\infty}$ with
  $\int_{\rset^{\dim}} q(\tilde{x}) \rmd \tilde{x} = 1$ we have 
  \begin{equation}
    \int_{\rset^{\dim}} \abs{p(x) - p_{\vareps}(x)}^{\alpha} \rmd x \leq \parenthese{\int_{\rset^{\dim}} \abs{p(x) - p_{\vareps}(x) q^{1 - 1/\alpha}(x) } \rmd
  x}^{\alpha} \eqsp .
\end{equation}
Combining this result with \eqref{eq:ineq_uno_alpha} we get that
\begin{multline}
  \int_{\rset^{\dim}} \abs{\bar{p}(x) - \bar{p}_{\varepsilon}(x)} \rmd x \leq 2^{\alpha} \norm{p(y|\cdot)}_{\infty} (1 + \| p \|_{\infty}^{(\alpha - 1) \wedge 0}) \\
    \times \defEns{\int_{\rset^{\dim}} \abs{p(x) - p_{\vareps}(x)} \rmd x + \parenthese{\int_{\rset^{\dim}} \abs{p(x) - p_{\vareps}(x)} q^{1 - 1/\alpha}(x) \rmd x }^{\alpha}  } \eqsp .
  \end{multline}
  If $\alpha \geq 1$, choosing $q$ such that $\| q \|_{\infty} \leq 1$ we get
  \begin{align}
    \int_{\rset^{\dim}} \abs{\bar{p}(x) - \bar{p}_{\varepsilon}(x)} \rmd x &\leq 2^{\alpha} \norm{p(y|\cdot)}_{\infty} (1 + \| p \|_{\infty}^{(\alpha - 1) \wedge 0}) \\
    & \qquad \qquad \times \defEns{\int_{\rset^{\dim}} \abs{p(x) - p_{\vareps}(x)}
      \rmd x + \parenthese{\int_{\rset^{\dim}} \abs{p(x) -
          p_{\vareps}(x)} (x) \rmd x }^{\alpha} } \eqsp .     \label{eq:ineq_p_bar}
  \end{align}
Hence since 
  $p \in \mathrm{L}^{1}(\rset^{\dim})$ and
  $\ensembleLigne{\tilde{x} \mapsto (2\uppi \vareps)^{-d/2}\exp[-
    \norm{\tilde{x}}^2/ (2 \vareps)]}{\vareps > 0}$ is a family of
  mollifiers, we have
  $\lim_{\vareps \to 0} \int_{\rset^{\dim}}\abs{p(x) -
    p_{\vareps}} \rmd x = 0 $. Combining this result,
  \eqref{eq:ineq_p_bar} and \Cref{lemma:tv_norm} concludes the first
  part of the proof.

  Now let $\alpha > 0$ and assume \rref{assum:posterior}($\alpha$). If
  $\alpha \geq 1$ then using \eqref{eq:ineq_uno_alpha} we have
  \begin{equation}
    \int_{\rset^{\dim}} \abs{\bar{p}(x) - \bar{p}_{\varepsilon}(x)} \rmd x \leq 2^{\alpha}(1+2^{\alpha -1}) \norm{p(y|\cdot)}_{\infty} (1 + \| p \|_{\infty}^{(\alpha - 1) \wedge 0}) \int_{\rset^{\dim}} \abs{p(x) - p_{\vareps}(x)} \rmd x \eqsp .
  \end{equation}
  If $\alpha < 1$ then using that $\| q \|_{\infty} < +\infty$, we get that
    \begin{multline}
    \int_{\rset^{\dim}} \abs{\bar{p}(x) - \bar{p}_{\varepsilon}(x)} \rmd x \leq 2^{\alpha} \norm{p(y|\cdot)}_{\infty} (1 + \norm{q}_{\infty}^{1/\alpha - 1})(1 + \| p \|_{\infty}^{(\alpha - 1) \wedge 0}) \\ \times \defEns{\int_{\rset^{\dim}} \abs{p(x) - p_{\vareps}(x)}q^{1-1/\alpha}(x) \rmd x + \parenthese{\int_{\rset^{\dim}} \abs{p(x) - p_{\vareps}(x)}q^{1-1/\alpha}(x) \rmd x}^{\alpha}} \eqsp .
  \end{multline}
  Hence, in any case, there exists $\tilde{C}_0 \geq 0$ such that
  \begin{multline}
    \int_{\rset^{\dim}} \abs{\bar{p}(x) - \bar{p}_{\varepsilon}(x)} \\
    \leq \tilde{C}_0 \defEns{\int_{\rset^{\dim}} \abs{p(x) -
        p_{\vareps}(x)}q^{\min(1-1/\alpha, 0)}(x) \rmd x +
      \parenthese{\int_{\rset^{\dim}} \abs{p(x) -
          p_{\vareps}(x)}q^{\min(1-1/\alpha, 0)}(x) \rmd x}^{\alpha}}
    \eqsp .
  \end{multline}
  Using Jensen's inequality and the
  change of variable $\tilde{x} \mapsto \vareps^{1/2} \tilde{x}$, we
  have for any $\vareps \in \ocintLigne{0, (4\constanteM)^{-1}}$
  \begin{align}
    &\int_{\rset^{\dim}} \abs{p(x) - p_{\vareps}(x)} q^{\min(1-1/\alpha, 0)}(x)\rmd x \\
    & \qquad \qquad \leq  \int_{\rset^{\dim}} \int_{\rset^{\dim}} \abs{p(x) - p(x - \tilde{x})}q^{\min(1-1/\alpha, 0)}(x) (2 \uppi \vareps)^{-d/2} \exp[-\norm{\tilde{x}}^2 / (2 \vareps)] \rmd x \rmd \tilde{x} \eqsp \\
                                                    &\qquad \qquad\leq \int_{\rset^{\dim}} \exp[\constanteM \norm{\tilde{x}}^2] \norm{\tilde{x}}^{\upbeta}  (2 \uppi \vareps)^{-d/2} \exp[-\norm{\tilde{x}}^2 / (2 \vareps)] \rmd \tilde{x} \\
                                                    &\qquad \qquad\leq \vareps^{\upbeta/2} (2\uppi)^{-d/2} \int_{\rset^{\dim}} \exp[\constanteM \vareps \norm{\tilde{x}}^2] \norm{\tilde{x}}^{\upbeta} \exp[-\norm{\tilde{x}}^2/2 ] \rmd \tilde{x} \\
                                                    &\qquad \qquad\leq \vareps^{\upbeta/2} (2\uppi)^{-d/2} \int_{\rset^{\dim}} \norm{\tilde{x}}^{\upbeta} \exp[-\norm{\tilde{x}}^2/4 ]  \rmd \tilde{x} \leq C_0 \vareps^{\upbeta/2} \eqsp , 
  \end{align}
  with
  $C_0 = (2\uppi)^{-d/2} \int_{\rset^{\dim}} \norm{\tilde{x}}^{\beta}
  \exp[-\norm{\tilde{x}}^2/4] \rmd
  \tilde{x}$. Hence, we have
  \begin{equation}
    \label{eq:duo_epsilon}
    \int_{\rset^{\dim}} \abs{\bar{p}(x) - \bar{p}_{\varepsilon}(x)} \rmd x  \leq C_1 (\vareps^{\upbeta/2} + \vareps^{\upbeta \alpha/2}) \eqsp ,
  \end{equation}
  with $C_1 = \tilde{C}_0 (C_0 + C_0^{\alpha})$.  Let
  $\vareps_1 = \min((c C_1)^{-2 / \upbeta}/2, (c C_1)^{-2 / (\upbeta
    \alpha)}/2, (4\constanteM)^{-1})$ and
  $c= \int_{\rset^{\dim}} \bar{p}(x) \rmd x$. Combining
  \eqref{eq:duo_epsilon} with \Cref{lemma:tv_norm}, we get that for any
  $\vareps \in \ocint{0, \vareps_1}$
  \begin{equation}
    \tvnorm{\pi - \pi_{\vareps}} \leq 2 c^{-1} C_1 (\vareps^{\upbeta/2} + \vareps^{\upbeta\alpha/2})\eqsp ,
  \end{equation}
which concludes the proof upon letting $A_0 = 2 c^{-1} C_1 $.
\end{proof}

\end{document}